\documentclass[acmsmall]{acmart}

\usepackage[utf8]{inputenc} 
\usepackage[T1]{fontenc} 
\usepackage{microtype}

\usepackage[shortlabels]{enumitem}
 \usepackage{wrapfig}
 

\DeclareSymbolFont{arrows}{U}{FdSymbolC}{m}{n}



\setcopyright{none}

\bibliographystyle{ACM-Reference-Format}
\citestyle{acmnumeric}


\usepackage{booktabs}   
\usepackage{subcaption} 
 \usepackage{ stmaryrd }                       

\usepackage{relsize}
\usepackage{mathpartir}
\usepackage{amsmath}
\usepackage{amsthm}
\usepackage{listings}
\usepackage{xspace}
\usepackage{definitions}
\usepackage{multirow,bigdelim}
\usepackage{pbox}
\usepackage{courier}
\usepackage{soul}
\usepackage{centernot}

\newcommand{\aref}[2]{#2}
\newcommand{\afull}[2]{#2}


\definecolor{ferngreen}{rgb}{0.31, 0.47, 0.26}
\newcommand{\kjx}[1]{{}}
\newcommand{\scd}[1]{{}}

\newcommand{\sdcomment}[1]{{}}
\newcommand{\secomment}[1]{{}}
\newcommand{\jncomment}[1]{{}}

\newcommand{\se}[1] {#1} 

\definecolor{amber}{rgb}{1.0, 0.75, 0.0}
\definecolor{amethyst}{rgb}{0.6, 0.4, 0.8}


\marginparwidth=1.6cm \marginparsep=0cm
\newcommand{\TODO}[1]{} 
\newcommand{\sophia}[1]{{#1}}
\newcommand{\toby}[1]{} 
\newcommand{\susan}[2][]{{#2}\xspace}

\newcommand{\sdN}[1]{{{#1}}}

\renewcommand{\sophia}[2][]{}

\newcommand{\sdfootnote}[1]{}

\begin{document}

\title{Reasoning about External Calls -- Extended Version}

\titlenote{This extended version of our work contains an appendix with full formal definitions, examples, more explanations, and worked out proofs. Citations should use the original version of the paper, \cite{ExternalCallsOOPSLA22}.} 



 
\author{Sophia Drossopoulou}
\orcid{0000-0002-1993-1142}             
\affiliation{
  \institution{Imperial College London}            
  \country{UK} 
}
\email{scd@imperial.ac.uk}          

\author{Julian Mackay}
\orcid{0000-0003-3098-3901}             
\affiliation{
  \institution{Kry10 Ltd, and Victoria University of Wellington}           
   \country{NZ}   
}
\email{julian.mackay@ecs.vuw.ac.nz}         

\author{Susan Eisenbach}
\orcid{0000-0001-9072-6689}             
\affiliation{
  \institution{Imperial College London}           
  \country{UK} 
}
\email{susan@imperial.ac.uk}         

\author{James Noble}
\orcid{0000-0001-9036-5692}             
\affiliation{
  \institution{Creative Research \& Programming}           
  \city{Wellington}
  \postcode{6012}
   \country{NZ}   
}
\email{kjx@acm.org}         




\newcommand{\paragraphsd}[1]{\vspace{.02cm}{\textit{#1}}}





   \begin{abstract}

 In today's complex software, internal trusted code  is tightly intertwined  
 with external untrusted code. 
%
%
%
%
To reason about internal code, programmers must reason about 
the potential effects of calls to external code, even though that
code is not trusted and may not even be available.
%
%
%
The effects of external calls can be \emph{limited}
if internal code 
is programmed defensively,
limiting potential effects by limiting access to the capabilities necessary to cause those effects.
%
%
%

This paper
addresses the specification and verification of internal code
that relies on
encapsulation and object capabilities to limit the effects of external calls.
We propose new assertions for access to capabilities,
new specifications for limiting effects,
and a Hoare logic to verify that a module satisfies its specification, even while making external calls.
We illustrate the approach though a running example with mechanised proofs,
and prove soundness of the Hoare logic.

\end{abstract}


\maketitle


\section{Introduction}
\label{s:intro}

\begin{minipage}{.64\textwidth}
This paper addresses reasoning about \emph{external calls} --- when
trusted internal code calls out to untrusted, unknown external code:
In the code sketch to the right, 
an internal module, $M_{intl}$, has two methods. 
Method \prg{m2} takes an untrusted parameter \prg{untrst},
and calls an unknown external method \prg{unkn} passing itself as an argument. 
The challenge is: 
What effects will that external call have?
What if \prg{untrst} calls back into $M_{intl}$? 
\end{minipage}
\hfill
\begin{minipage}{.35\textwidth}
\begin{lstlisting}[mathescape=true, language=Chainmail,frame=none,numbers=none]
 module $M_{intl}$        
   method m1 ..
      ...  $\mbox{trusted code}$ ...  
   method m2(untrst:external) 
      ... $\mbox{trusted code}$ ...
      $\mbox{\red{untrst.unkn(this)}}$   
      ... $\mbox{trusted code}$ ...
\end{lstlisting}
\end{minipage}

%
%
%

 {\em{External calls  need not have  arbitrary   effects.}}
If the programming language supports encapsulation (\eg no address forging, privacy,
 \etc) then internal modules can be  written 
so  that effects are  

 \begin{customquote}
$\bullet$\ \ \emph{Precluded}, \ie  guaranteed to \emph{never happen}.
 E.g., a  correct  
 implementation of the DAO  \cite{Dao}  can ensure that  
 its balance never falls below the sum of the balances of its subsidiary accounts, \\ \emph{or}

 $\bullet$\ \  \emph{Limited}, \ie  they  \emph{may happen}, but
 only in well-defined circumstances.
E.g., while the DAO does not preclude that a signatory's balance will decrease, it  does ensure that the balance decreases only
as a direct consequence of calls from the signatory.
 
 \end{customquote}

\noindent
Precluded effects are special case of limited effects,
and have been studied extensively in the context of object invariants   \cite{staticsfull,DrossoFrancaMuellerSummers08,BarDelFahLeiSch04,objInvars,MuellerPoetzsch-HeffterLeavens06}.
In this paper, we tackle the more general, 
and more subtle case of reasoning about limited effects for external calls.

\paragraphsd{The Object Capability Model}
combines the capability model
of operating system security \cite{levy:capabilities,CAP}
with pure object-oriented programming
\cite{selfpower,selfexp95,agha_actors_1987}.  Capability-based
operating systems reify resources
as \textit{capabilities} ---
unforgeable, distinct, duplicable, attenuable, communicable bitstrings
which both denote a resource and grant rights over that resource.
Effects can only be caused by invoking capabilities:
controlling effects reduces to controlling capabilities.

Mark Miller's \cite{MillerPhD}
\textit{object}-capability model treats
object references
as capabilities. 
Building on early object-capability languages such as E
\cite{MillerPhD,ELang} and Joe-E \cite{JoeE}, 
a range of recent programming languages and web systems
\cite{CapJavaHayesAPLAS17,CapNetSocc17Eide,DOCaT14} including Newspeak
\cite{newspeak17},
AmbientTalk \cite{ambientTalk}
Dart \cite{dart15}, Grace \cite{grace,graceClasses},
JavaScript (aka Secure EcmaScript \cite{miller-esop2013}),
and Wyvern \cite{wyverncapabilities} have adopted the object
capability model.
Security  and encapsulation 
is encoded in the relationships between the objects, and the interactions between them.
 As argued in  \cite{capeFTfJP}, 
 object capabilities
 make it possible to write secure
  programs,
  but cannot by themselves guarantee that any particular program
  will be secure.
%



\paragraphsd{Logics for Reasoning with Capabilities}
 {
have been proposed in
 \cite{BirkedalL:caps-mmio-conf,ddd,vmsl-pldi2023,irisWasm23},   
 {primarily in object-based settings -- our work operates in a class-based setting.}
 More importantly, in these  works,  capabilities are statically embedded within objects, and authority is conferred through the object’s interface. In contrast, our approach models capabilities in a manner that is, in some sense, the dual of theirs: capabilities are not intrinsic to objects but may be 
 {dynamically accessible to the callee—typically via ambient authority in the execution environment.}}
 
 {This distinction carries important implications for reasoning  about   effects. 
 In  \cite{BirkedalL:caps-mmio-conf,ddd,vmsl-pldi2023,irisWasm23}, one can argue that if an object is exported with an interface that omits certain capabilities, then the corresponding effects are  ruled out. 
 However, this argument hinges on the assumption that the exporting site has precise knowledge of the object’s embedded capabilities.
 Our model requires no such assumption. Instead, we reason about potential effects based on the capabilities available in the receiving context.  
{As long as the target context lacks access to specific capabilities}, we can guarantee that the corresponding effects are unobservable —regardless of the object’s interface.
 }


{Our earlier work}  
{\cite{FASE,OOPSLA22}   follows a
similar dynamic model of object capabilities, though without addressing external effects.}
Specifically, \cite{FASE}  introduces “holistic specifications” to capture a module’s emergent behavior, while
\cite{OOPSLA22}  proposes
a logic for proving that modules devoid of external calls conform to their holistic specifications.


\paragraphsd{This paper contributes}
(1) \emph{protection assertions} to limit access to object-capabilities, 
(2) a specification language to define how limited access to capabilities should limit effects, 
(3) a Hoare logic to reason about external calls and to prove that modules satisfy their 
specifications,
(4) proof of soundness,
(5) a worked illustrative example {with a mechanised proof in Coq}.

  \paragraphsd{Structure of this paper:}
Sect.\ \ref{s:outline}   outlines the main ingredients of our approach in terms of an {example}.
Sect.\ \ref{sect:underlying} outlines a simple object-oriented language used for our work. 
Sect.\ \ref{s:assertions} and 
\ref{sect:spec}  give syntax and semantics of assertions  and  specifications.
Sect.\ \ref{sect:proofSystem} develops a Hoare logic  
 to prove external calls, and modules' adherance to  specifications, and summarises our  proof
  of  the running example 
 (Coq  at \cite{CoqOOPSLA25}).
Sect.\ \ref{sect:sound:proofSystem} outlines 
proof of soundness of
the Hoare logic. 
 Sect.\ \ref{sect:conclusion} concludes. 
Full  technical details can be found in the appendices 
\afull{in the extended version of this work \cite{externalCallsFull}.}{of this paper.}

\section{The problem and our approach}
\label{s:outline}  
 \newcommand{\pwd}{key}

\renewcommand{\password}{key\xspace}

We introduce the problem  through an example, and outline our
approach.  We work with a  small, class-based object-oriented, sequential language similar to Joe-E \cite{JoeE} with modules,   module-private fields
({accessible} only from   methods {from} the same module),
and unforgeable, un-enumerable addresses.
We distinguish between  \emph{\internalO  objects} --- instances of our internal module $M$'s classes ---
and \emph{\externalO  objects} defined in
\emph{any} number of external modules $\overline M$.%
\!\footnote{We use the notation $\overline z$ for a sequence of $z$, \ie for $z_1,z_2,...z_n$ }
{\prg{Private} methods  {may only be} called by objects of the same
  module,  while \prg{public}  methods  may be {called} by \emph{any}
  object with a reference to the method receiver, {and with
  actual arguments of  dynamic types that match} the declared formal parameter types.}%
\!\footnote{As in Joe-E, we leverage  module-based privacy to restrict propagation of capabilities, and reduce the need for reference monitors etc, \cf Sect 3 in  \cite{JoeE}.}   

\label{s:concepts}
 
We are concerned with guarantees made in an \emph{open} setting; 
Our internal module
$M$ must be programmed so that 
  execution of $M$  together with \emph{any} unknown, arbitrary, external modules $\overline M$
will satisfy these guarantees --
without relying on any assumptions about $\overline M$'s code.\footnote{This is a critical distinction from e.g.\
cooperative approaches such as rely/guarantee
\cite{relyGuarantee-HayesJones-setss2017,relyGuarantee-vanStaden-mpc2015}.}
All we can rely on, is the guarantee that external code interacts with the internal code only through public methods; such a guarantee may be given by the programming language or by the underlying platform.\footnote{Thus our approach is also applicable to inter-language safety.}

\subsection{\prg{Shop} -- Illustrating Limited Effects}  
\label{sec:how}
\label{sec:shop}

Consider below the internal module \Mshop, 
with classes \prg{Item}, \prg{Shop}, \prg{Account}, and \prg{Inventory}. 
Classes \prg{Inventory} and \prg{Item} are straightforward: we elide
their details. 
\prg{Account}s hold a balance and have a \password. 
Access to an \prg{Account},  allows one  to pay money into it, 
and  access to an \prg{Account}  and its \prg{Key}, allows one to withdraw money from it.
A \prg{Shop} has an \prg{Account},
and a public method \prg{buy} to allow a 
\prg{buyer} --- an \prg{external}   object --- to buy
and pay for an \prg{Item}:
 

\begin{lstlisting}[mathescape=true, language=Chainmail, frame=none,
  numbers=left,
  numberstyle=\tiny,
  numbersep=5pt,
  xleftmargin=2em
]
module M$_{shop}$
  ...   
  class Shop
    field accnt:Account, invntry:Inventory, clients:external     
     
    public method buy(buyer:external, anItem:Item)
      int price = anItem.price
      int oldBlnce = this.accnt.blnce
      $\red{\mbox{buyer.pay(this.accnt, price)}}$      
      if (this.accnt.blnce == oldBlnce+price)  
         this.send(buyer,anItem)
      else
         buyer.tell("you have not paid me") 
                       
    private method send(buyer:external, anItem:Item)   ....
\end{lstlisting}

\noindent
The sketch below shows a possible heap snippet.

\begin{minipage}{.23\textwidth}
\includegraphics[width=30mm]{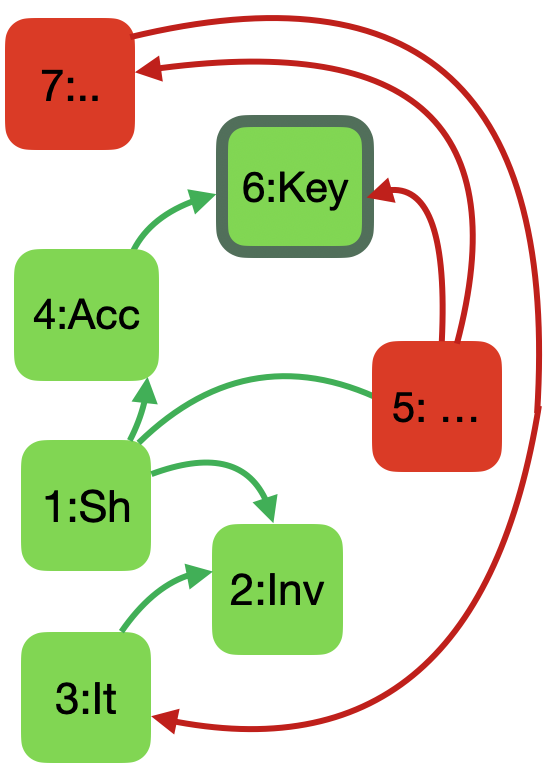}
\end{minipage}
\hfill
\hspace{0.01\textwidth}
\begin{minipage}{.73\textwidth}
External objects are red; internal objects are green.
Each object has a number,
and an abbreviated class name: $o_1$, $o_2$ and $o_5$ are a \prg{Shop}, an \prg{Inventory}, and an external object. 
Curved arrows
indicate field values: $o_1$ has three fields, pointing to 
$o_4$, $o_5$ and $o_2$.
Fields denote direct access. The transitive closure of direct access
gives indirect (transitive) access: $o_1$ has direct access to $o_4$, and indirect access to $o_6$.
Object $o_6$ ---
highlighted with a dark outline
---
is the key capability that allows withdrawal from $o_4$.
The critical point in our code is the external call on line 9,   where the \prg{Shop} asks the \prg{buyer} to pay the price of that item,
by calling  \prg{pay} on \prg{buyer} and passing the \prg{Shop}'s account as an argument.
\end{minipage}

As \prg{buyer} is an external object, the module \Mshop has no method specification for \prg{pay}, and no 
certainty about what its implementation might do. 
What are the possible effects of that external call?
At line 9, the \prg{Shop} hopes
the buyer will have deposited the price into its account, 
but needs to be certain the \prg{buyer} cannot
have emptied that account instead. 
Can the \prg{Shop}  be certain?

 Indeed, if

\begin{enumerate}[(A)]
\item   Prior to the call of  \prg{buy}, the \prg{buyer}  has no eventual access to the account's \password, \ \ \ \emph{----} \ and
\item  \Mshop ensures that 
\begin{enumerate}[(a)]
\item access to keys is not leaked to external objects, \ \ \ \emph{----} and
\item   funds cannot be withdrawn unless the external entity responsible for the withdrawal (eventually) has  access to the account's \password,
\end{enumerate}

\ \emph{----} \ then

\item  The external  call on line 9 can never result in a decrease in the shop's account balance.
\end{enumerate}

 \noindent
The remit of this paper is to provide specification and verification tools that support arguments like the one above.
This gives rise to the following two challenges:\  1$^{st}$:  A specification language which describes access to capabilities and limited effects,\ 
2$^{nd}$: A  Hoare Logic for adherence to such specifications.


\subsection{1$^{st}$ Challenge: Specification Language} 

We want to give a formal meaning to the guarantee that for some effect, $E$, and an object $o_c$ which is the capability for $E$:

\vspace{.05cm}

  \begin{minipage}{.05\textwidth}
   \textbf{(*)}
\end{minipage}
\hfill
\begin{minipage}{.95\textwidth}
\begin{flushleft}
$E$  (\eg the account's balance decreases)  can be caused only  by external objects calling methods on internal objects, \\
and only if the causing object has access  to $o_c$ (\eg the key).
\end{flushleft}
\end{minipage}

\vspace{.05cm}


\noindent 
The first task is to describe that effect  $E$ took place: if we  find  some assertion $A$ (\eg  balance is $\geq$ some value $b$)
which is invalidated by $E$, then, (*) can be described by something like:

\vspace{.05cm}

  \begin{minipage}{.05\textwidth}
   \textbf{(**)}
\end{minipage}
\hfill
\begin{minipage}{.95\textwidth}
\begin{flushleft}
If $A$ holds, \  and \    if no external access to  $o_c$ \ \ \ \ then\  \ \  \ $A$ holds in the future. 
\end{flushleft}
\end{minipage}

\vspace{.05cm}

\noindent 
We next make more precise that ``no external access to  $o_c$'', and that ``$A$ holds in the future''.

In a first attempt, we could say that ``no external access to  $o_c$'' means  that no external object exists, nor will any external objects be created.
This is too strong, however: it defines away the problem we are aiming to solve.

In a second attempt, we could say that ``no external access to  $o_c$'' means that no external object has access to $o_c$, nor will ever get access to $o_c$. This is also too strong, as it would preclude $E$ from ever happening, while our remit is that $E$ may happen but only under certain conditions. 

This discussion indicates that the lack of external access to $o_c$ is not a global property, and that the future in which  $A$ will hold is not permanent. 
Instead, they are both defined \emph{from the perspective of the current point of execution}.

Thus:

\vspace{.05cm}

  \begin{minipage}{.05\textwidth}
   \textbf{(***)\ \ }
\end{minipage}
\hfill
\begin{minipage}{.9\textwidth}
  If $A$ holds, \  and \ if no external object  reachable from the current point of execution  has access to $o_c$,   
\   and\   if no  internal objects pass $o_c$ to external objects,  \\
  then  \\
 $A$ holds in  \emph{the future scoped by the current point of execution}.  
\end{minipage}

\vspace{.1cm}

\noindent We will shortly formalize ``reachable from the current point of execution'' as
\emph{protection} in \S \ref{sect:approach:protection},
and then  ``future scoped by the current point of execution''
as  \emph{scoped invariants} in \S \ref{sect:approach:scoped}.
Both of these definitions are in terms of the ``current point of execution'':

\noindent\emph{The Current Point of Execution}
is characterized by the heap, and   the activation frame of the currently executing method. 
Activation frames (frames for short) consist of a variable map and a continuation -- the statements remaining to be executed in that method.
Method calls push frames onto the stack of frames; method returns pop frames off.
The frame on top of the stack (the most recently pushed frame) 
belongs to the currently executing method.

Fig.\ \ref{f:CurrentPoint} illustrates the current point of execution.  The left pane, $\sigma_1$, shows a state with the same heap as earlier, and where the top frame is $\phi_1$ -- it could be the state before a call to \prg{buy}.
  The middle pane, $\sigma_2$, is a state where we have pushed  $\phi_2$ on top of the stack of $\sigma_1$ -- it could be a state during execution of \prg{buy}.
   The right pane, $\sigma_3$, is a state where we have pushed  $\phi_3$ on top of the stack of $\sigma_2$ -- it could be a state during execution of \prg{pay}. 

States whose top frame has a  receiver (\prg{this}) which is an internal object are called  \emph{internal states}, the other states are called  \emph{external states}. In Fig \ref{f:CurrentPoint}, states $\sigma_1$ and $\sigma_2$ are internal, and  $\sigma_3$ is external.

\begin{figure}[th]
\begin{tabular}{|c|c|c|}
\hline
\resizebox{3,5cm}{!}{
\includegraphics[width=\linewidth]{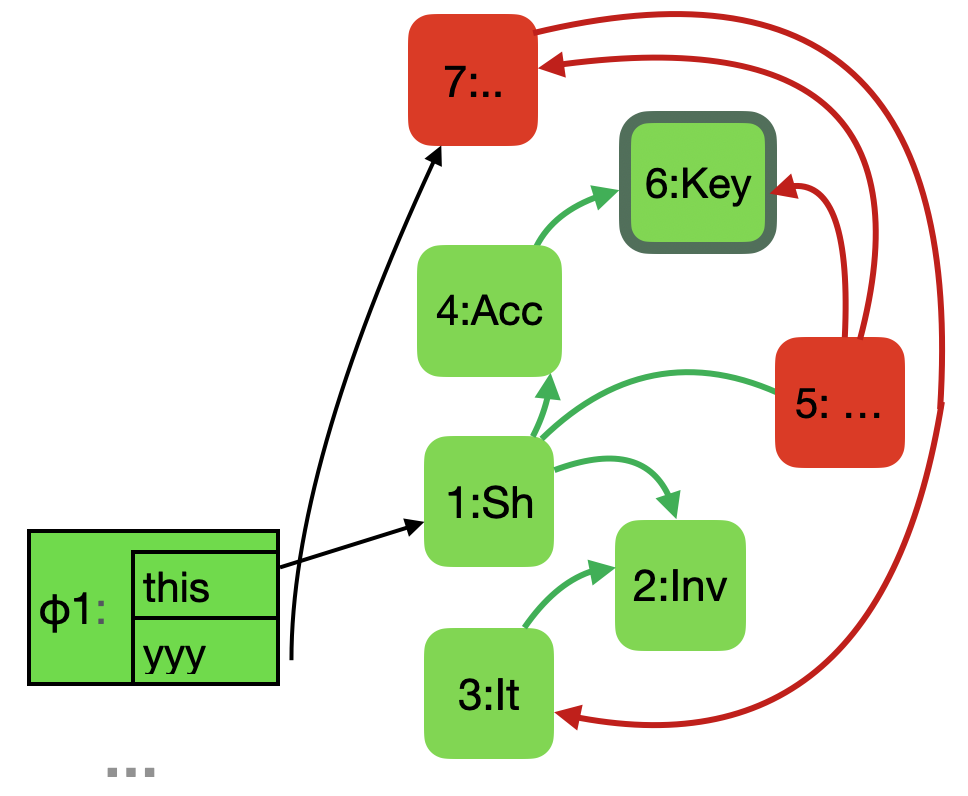}
} 
&
\resizebox{3,5cm}{!}{
\includegraphics[width=\linewidth]{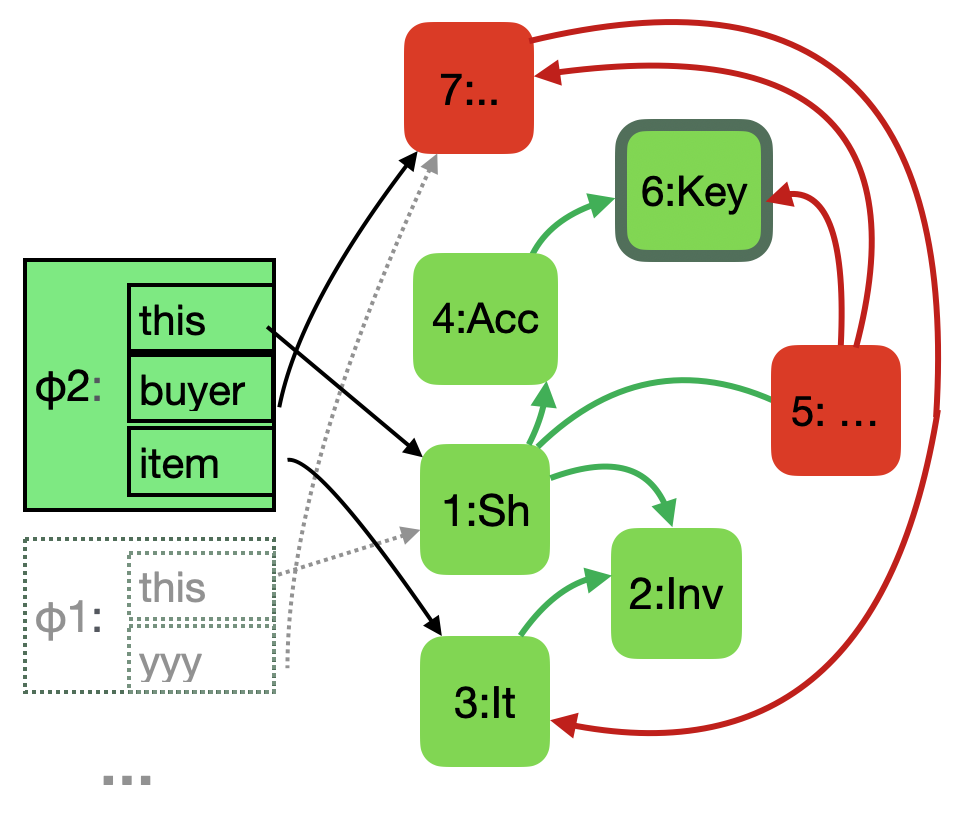}
}
&
\resizebox{3.5cm}{!}{
\includegraphics[width=\linewidth]{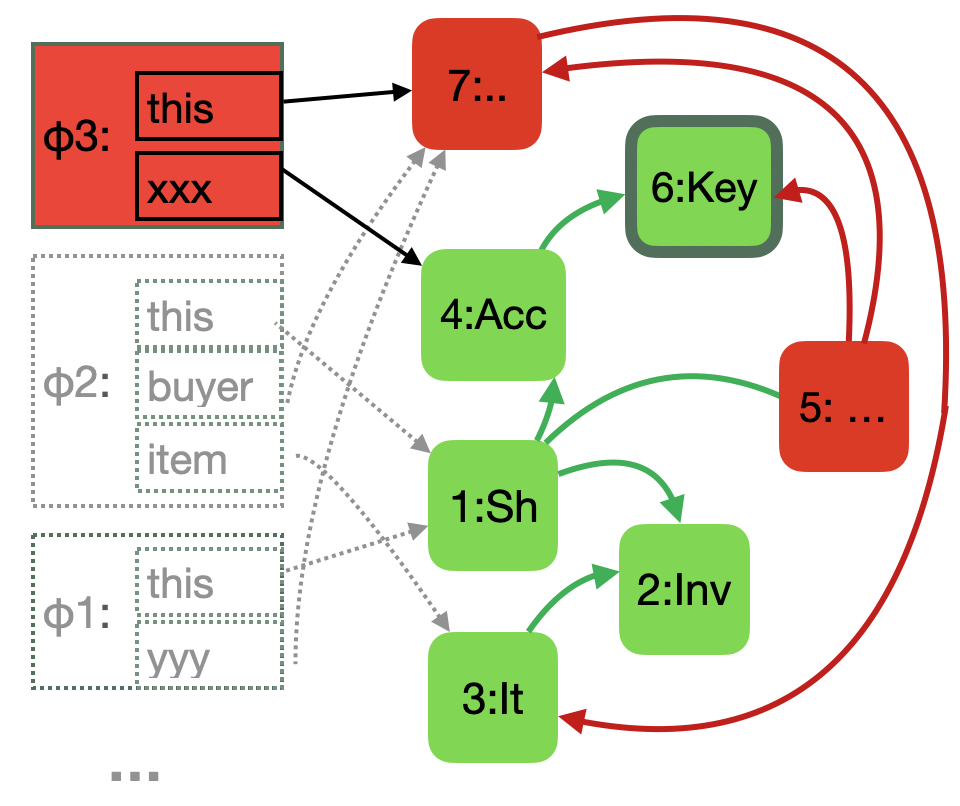}
}
\\
\hline
$\sigma_1$, top frame $\phi_1$  
&
$\sigma_2$, top frame $\phi_2$ 
&
$\sigma_3$, top frame $\phi_3$ 
\\
\hline 
\end{tabular}
\caption{\textit{The current point of execution} before \prg{buy}, during \prg{buy}, and during \prg{pay}.  Frames $\phi_1$, $\phi_2$ are green  as their receiver (\prg{this}) is internal;  $\phi_3$ is red as its receiver is external. Continuations are omitted.}
 \label{f:CurrentPoint}
 \end{figure}

\subsubsection{Protection}
\label{sect:approach:protection}

 \begin{description}
\item[Protection] 
Object $o$ is \emph{protected  from} object $o'$, formally $\protectedFrom {o} {o'}$,  
if no external object indirectly accessible from $o'$ has direct access to $o$.
Object $o$ is \emph{protected}, formally ${\inside{\prg{\it{o}}}}$,  
if no external object indirectly accessible from the current frame has direct access to $o$,
 and  if the receiver is external then $o$ is not an argument.\footnote{An object has direct access to another object if it has a field pointing to the latter;
 it has indirect access to another object if there exists a sequence of field accesses (direct references) leading to the other object; 
 an object is indirectly accessible from the frame if one of the frame's variables has indirect access to it.} 
 More in Def.\ \ref{def:chainmail-protection-from}.   
\end{description}

Fig.\ \ref{fig:ProtectedBoth} illustrates \emph{protected} and \emph{protected
from}. Object $o_6$ is not protected in states $\sigma_1$ and
$\sigma_2$, but \emph{is} protected in state $\sigma_3$.
This is so, because the external object $o_5$ is indirectly accessible from the top frame in $\sigma_1$ and in  $\sigma_2$, but not
from the top frame in $\sigma_3$ --
in general, calling a method (pushing a frame) can only ever \emph{decrease} the set of indirectly accessible objects. 
Object $o_4$  is  protected in  states $\sigma_1$ and  $\sigma_2$,  and not  protected in state $\sigma_3$
because though neither object $o_5$ nor $o_7$ have direct access to $o_4$, in state $\sigma_3$ the receiver is external and $o_4$ is one of the arguments.

\begin{figure}[htb]
\begin{tabular}{|c|c|c|c|}
\hline
 heap
&
$\sigma_1$   
&
$\sigma_2$ 
&
$\sigma_3$ 
\\
\hline 
$... \models \protectedFrom {o_6} {o_4}$
&
$\sigma_1   \not\models \inside{o_6}$
&
$\sigma_2   \not\models \inside{o_6}$
&
$\sigma_3 \models \inside{o_6}$
\\
$... \not\models  \protectedFrom {o_6} {o_5}$
&
$\sigma_1  \models \inside{o_4}$
&
$\sigma_2 \models \inside{o_4}$
&
$\sigma_3 \not \models \inside{o_4}$
\\
\hline  
\end{tabular}
\caption{\textit{Protected from and Protected}. -- continuing from Fig, \ref{f:CurrentPoint}.
 }
   \label{fig:ProtectedBoth}
 \end{figure}

If a protected object $o$ is never passed to external objects (\ie never leaked)  then $o$ will remain protected during the whole execution of the current method,
including during any nested calls.
This is the case even if $o$ was not protected before the call to the current method.
We define  \emph{scoped invariants} to  describe 
{property  preservation within the current  and all its nested calls}.

\subsubsection{Scoped Invariants}
 
\label{sect:approach:scoped}
We build on the concept of history invariants \cite{liskov94behavioral,usinghistory,Cohen10} and define:

\begin{description}
\item[{Scoped invariants}]  
{$\TwoStatesN  {\overline{x:C}}  {A}$} expresses that if an external {state} $\sigma$ 
 has objects $\overline x$ of class $\overline C$, and satisfies $A$, then all  {external} states which are part of
the \emph{scoped  future} of $\sigma$   will  {also} satisfy  {$A$}. 
The scoped future contains all  states which can be reached through any program execution steps, including further method calls and returns, but stopping just before returning  from the call active in $\sigma$ \footnote{{Here lies the difference to history invariants, which consider \emph{all} future states, including returning from the call active in $\sigma$.}}  --  \cf Def  \ref{def:shallow:term}.
\end{description}

Fig.\ \ref{fig:illusrPreserve} shows the  states of an unspecified execution starting at internal state $\sigma_3$ and terminating at internal state $\sigma_{24}$.
It distinguishes between steps within the same method ($\rightarrow$),
method calls ($\uparrow$), and method returns ($\downarrow$). 
The scoped future of $\sigma_6$ consists of $\sigma_6$-$\sigma_{21}$, but does not contain $\sigma_{22}$ onwards, since  scoped future stops before returning. 
  Similarly, the scoped future of $\sigma_9$ consists of $\sigma_9$, $\sigma_{10}$,  $\sigma_{11}$,  
  $\sigma_{12}$,   $\sigma_{13}$,  and $\sigma_{14}$, and does not include, \eg,  $\sigma_{15}$, or $\sigma_{18}$.
 
\begin{figure}[htb]
\begin{tabular}{|c|}
\hline  
\hspace{2cm}
\resizebox{9cm}{!}{
\includegraphics[width=\linewidth]{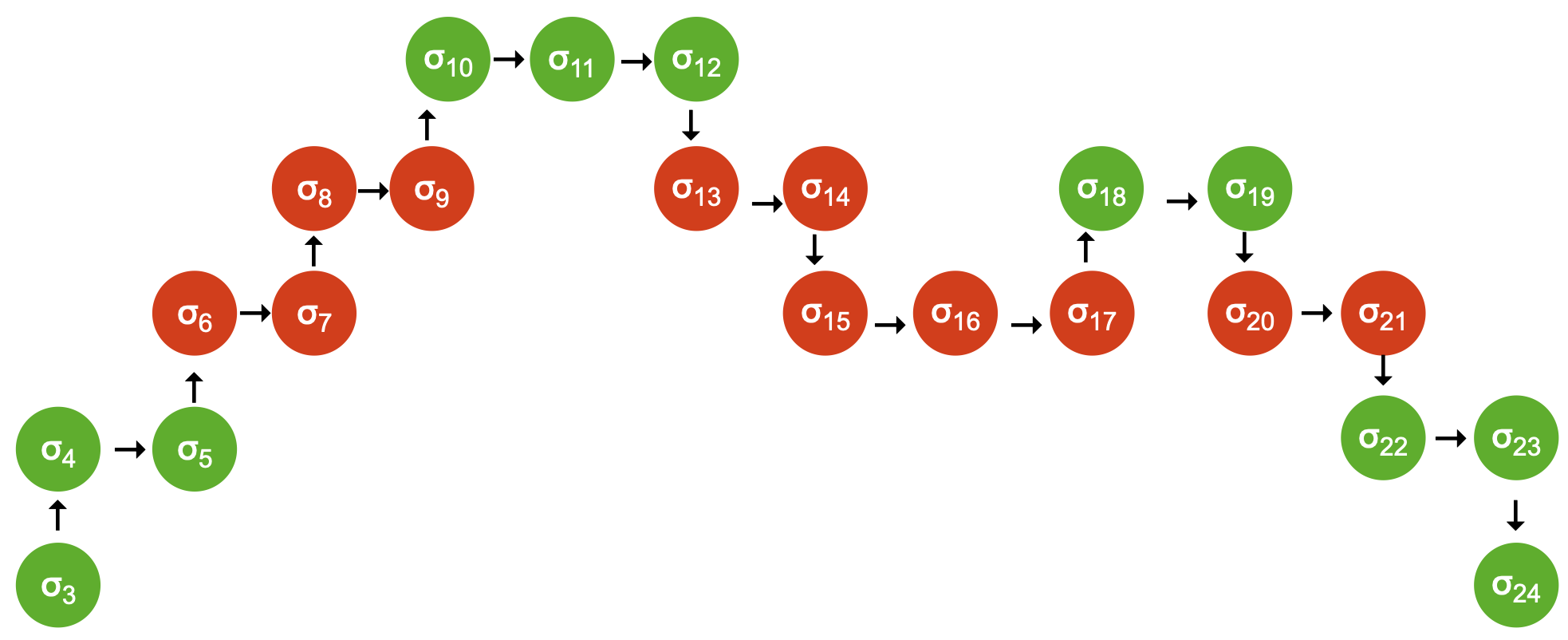}  
}
\hspace{2cm}
\\
\hline
\end{tabular}
   \caption{ \textit{Execution}.
Green disks represent internal states; red disks external states.
}
 \label{fig:illusrPreserve} 
 \end{figure}
 
The scoped invariant   ${\TwoStatesN {\overline {x:C}} {A_0}}$   guarantees that  if $A_0$ holds in $\sigma_8$, then it will also hold in $\sigma_9$,    $\sigma_{13}$, and $\sigma_{14}$;
 it   doesn't have to hold in $\sigma_{10}$, $\sigma_{11}$,  and $\sigma_{12}$ as these are internal states.
 Nor does it have have to hold at $\sigma_{15}$ as it is not part of $\sigma_9$'s scoped future.

    
\label{s:bank}

\begin{example}
\label{s:bankSpecEx}
The following scoped invariants\\
$\strut \SPSP\ \ \   S_1\ \  \triangleq \ \ \TwoStatesN {\prg{a}:\prg{Account}}  {\inside{\prg{a}}} $ 
\hspace{1.1cm}
$\strut  \SPSP\ \ \   S_2\ \  \triangleq \ \ \TwoStatesN  {\prg{a}:\prg{Account}}  {\inside{\prg{a.key}}} $ 
\\
$\strut  \SPSP\ \ \   S_3\ \  \triangleq \ \ \TwoStatesN{ \prg{a}:\prg{Account},\prg{b}:\prg{int} } {\inside{\prg{a.key}} \wedge \prg{a.\balance} \geq \prg{b} } $ 

\noindent
 guarantee that   accounts are not leaked  ($S_1$), \ \ keys are not leaked  ($S_2$), \ \  and that the balance does not decrease unless there is unprotected access to the key  ($S_3$).
%
\end{example} 

\noindent
This example illustrates three crucial properties of our invariants:
 \begin{customquote}
 \vspace{.05cm}
\noindent
\textbf{\emph{Conditional}}: They 
are \emph{preserved}, but unlike object invariants, they do not 
always hold.
\ \Eg   \prg{buy} cannot assume $\inside{a.\prg{key}}$ holds on entry, but   guarantees that if it holds on entry, then  it will still hold on exit.

\vspace{.05cm}
\noindent
\textbf{\emph{Scoped}}:  
They 
 are preserved during  execution of a specific method but not beyond its return. It is, in fact, expected that the invariant will eventually cease to hold after its completion. For instance, while $\inside{a.\prg{key}}$ may currently hold, it is possible that an earlier (thus quiescent) method invocation frame has direct access to $a.\prg{key}$ -- without such access, $a$ would not be usable for payments. Once control flow returns to the quiescent method (\ie enough frames are popped from the stack) $\inside{a.\prg{key}}$  will no longer hold.

 \vspace{.05cm}
\noindent
\textbf{\emph{Modular}}: They 
describe  externally observable effects (\eg  \prg{key} stays protected) across whole modules,
rather than  the individual methods (\eg\, \prg{set}) making up a module's interface.
Our example specifications will
characterize \emph{any}
module defining accounts with a  
 \balance~and a \prg{\password} -- even as ghost fields --  irrespective of their APIs. 
 \end{customquote}
 
 \begin{example}
 We   now use the features from the previous section to specify methods. 

{\sprepostShort
		{\strut \ \ \ \ S_4} 
		{   \protectedFrom{\prg{this}.\prg{\myAccount}.\prg{key}} {\prg{buyer}} \wedge \prg{this}.\prg{\myAccount}.\prg{\balance}=b
		 }
		{\prg{public Shop}} {\prg{buy}} {\prg{buyer}:\prg{external}, \prg{anItem}:\prg{Item} }
		{ 
		  \prg{this}.\prg{\myAccount}.\prg{\balance}\geq b
		} 
		}

\noindent
$S_4$  guarantees that if the  key was protected from \prg{buyer} before the call, then the balance will not decrease\footnote{We ignore the ... for the time being.}.
 It does \emph{not} guarantee \prg{buy} will only be called when $\protectedFrom{\prg{this}.\prg{\myAccount}.\prg{key}} {\prg{buyer}}$ holds: 
as a  public method,  \prg{buy}  can be invoked by external code that ignores all specifications.
\end{example}


\begin{example}
\label{e:versions}
We illustrate the meaning of our specifications using three versions 
($\ModA$,  $\ModB$, and $\ModC$) of the \Mshop module \cite{OOPSLA22};
these all share the same \prg{transfer} method to withdraw money: 

\begin{lstlisting}[mathescape=true, language=Chainmail, frame=none,numbers=none]
module $\ModA$      
  class Shop   ... as earlier ...
  class Account
    field blnce:int 
    field key:Key
    public method transfer(dest:Account, key':Key, amt:nat)
       if (this.key==key')  this.blnce-=amt; dest.blnce+=amt
    public method set(key':Key)
       if (this.key==null)  this.key=key'
\end{lstlisting}

\noindent Now consider modules \ModB and \ModC, which differ from \ModA
only in their \prg{set} methods. Whereas \ModA 's key is fixed once it is \prg{set},
\ModB allows any client to \prg{set} an account's key at any time, while
\ModC requires the existing key in order to replace it.

\vspace*{2mm}

\begin{tabular}{lll}
\begin{minipage}[b]{0.40\textwidth}

\begin{lstlisting}[mathescape=true, language=Chainmail, frame=none,numbers=none]
module $\ModB$
  ... as earlier ...
  public method set(key':Key)
    this.key=key'
\end{lstlisting}
\end{minipage}
&\ \ \  \ \   &%
\begin{minipage}[b]{0.48\textwidth}
\begin{lstlisting}[mathescape=true, language=chainmail, frame=none,numbers=none]
module $\ModC$
  ... as earlier ...
  public method set(key',key'':Key)
    if (this.key==key')  this.key=key''
\end{lstlisting}
\end{minipage} 
\end{tabular}

Thus, in all three modules, the key is a capability which \emph{enables} the withdrawal of the money. 
Moreover, in $\ModA$ and $\ModC$, the key capability
is a necessary precondition for withdrawal of money, while in 
 in $\ModB$ it is not. 
Using $\ModB$, it is possible to start in a state where the account's key is unknown, modify the key, and then withdraw the money. 
Code   such as 
\\ 
$\ \strut \hspace{.2in} $ \prg{k=new Key;  acc.set(k); acc.transfer(rogue\_accnt,k,1000)} 
\\ 
is enough to drain  \prg{acc} in \ModB without knowing the \password.\footnoteSD{CAREFUL: we had 
$\ \strut \hspace{.01in} $ \prg{an\_account.set(42); an\_account.transfer(rogue\_accnt,42)} but this was type incorrect!}
Even though  \prg{transfer} in  \ModB is ``safe'' when considered in isolation, it is not safe when considered in conjunction with other methods from the same module. 

$\ModA$ and $\ModC$ satisfy $S_2$ and $S_3$, while $\ModB$ satisfies neither. 
So if $\ModB$ was required to satisfy either $S_2$ or $S_3$ 
then it would  be rejected by our inference system as not safe.
None of the three versions satisfy $S_1$ because \prg{pay}  could leak 
an \prg{Account}.

\end{example}
 

  \subsection{2$^{nd}$ Challenge:  A Hoare Logic for Adherence to Specifications}  
 \label{sec:howSecond}

\subsubsection{Hoare Quadruples} Scoped invariants require quadruples, rather than  classical triples.
Specifically, \\
  $\strut \ \hspace{4cm} {\TwoStatesN  {\overline{x:C}}  {A}}$\\
 asserts that if an external {state} $\sigma$ 
 satisfies  $\overline {x:C} \wedge A$, then all its \emph{scoped} external future  states will   also  satisfy  $A$. 
For example, if $\sigma$ was an external state executing a call to \prg{Shop::buy}, then a \emph{scoped} external future  state
 could be reachable during execution of the   call \prg{pay}.
This implies that we consider not only states at termination but also external states reachable
 \emph{during} execution of  statements. 
To  capture this, we extend   traditional Hoare triples to quadruples of  form\\
 $\strut \ \hspace{4cm} \quadruple {A} {\, stmt\, }{A'} {A''}$\\  
 promising that if a state satisfies $A$ and executes $stmt$, any terminating state will satisfy $A'$, and 
 and  any intermediate external states reachable during execution of $stmt$ satisfy    $A''$ -- \cf Def. \ref{def:hoare:sem}.
 
\vspace{.1cm}

We assume an  underlying   Hoare logic  of  triples  
$ M \vdash_{ul} \{ \ A\ \} {\ stmt\ }\{\ A'\ \} $,
which does not {have} the concept of protection -- 
that is, the assertions $A$ in the $\vdash_{ul}$-triples do not mention protection.
We then embed  the $\vdash_{ul}$-logic into the quadruples logic 
( $\vdash_{ul}$-triples  whose   statements do not contain method calls give rise to quadruples in our logic -- see rule below).
We  extend assertions $A$ so they may mention protection and add rules about protection
 (\eg newly created objects are protected -- see rule below), and
 add   usual conditional and substructural rules.
 More in Fig.\ref{f:protection}  and \aref{Fig. 15}{\ref{f:substructural:app}}.

\noindent
\small
\noindent
\begin{center}
$  
\begin{array}{lcr}
\inferruleNoName 
	{  
	  M \vdash_{ul} \{ \ A\ \} {\ stmt\ }\{\ A'\ \}   \ \ \ \ stmt\  \mbox{calls no methods}  
	}
	{\hprovesN{M}  {A} {\ stmt \ } {A'} {A''} } 
&   &
\inferruleNoName 
	{ 
	 	
	}  	 
	{	 
 	\hprovesN  {M}  
	                {  true  }  
 			   {\  u = \prg{new}\ C \ }
 			   {\  \inside{u}\  }  { \ A \ }
	}
\end{array}
$
\end{center}

\normalsize

\subsubsection{Well-formed modules and External Calls} A module is well-formed, if  its specification is well-formed,   its public methods preserve   the module's scoped invariants, and  all  methods satisfy their specifications - \cf  Fig.  \ref{f:wf}.
{\Eg to prove  that  \prg{Shop::buy} satisfies {$S_4$}, taking   $stmts_{b}$ 
for the    body of \prg{buy},  we  have to prove:}
\\
%
%
 \vspace{.05cm}
  \begin{minipage}{.05\textwidth}
   \textbf{(0)}\ \ 
\end{minipage}
\hfill
\begin{minipage}{.95\textwidth}
\begin{flushleft}
$\{\  \   \external{\prg{buyer}} \ \wedge\ 
  \protectedFrom {\prg{this.\myAccount.key}} {\prg{buyer} } 
 \ \wedge\ \prg{this.\myAccount.\balance}= b  \ \  \}$\\
$\ \ \ \ \ \ \ \ \ \ \ \ {\ \ stmts_{b}   \ } $\\
$  \{\  \ \  {\prg{this.\myAccount.\balance}} \geq  b \  \  \} \ \ ||\ \  \{ \ ... \ \} $ 
\end{flushleft}
\end{minipage}


 \label{sec:howThird}

{The proof}  {proceeds using}   {our Hoare quadruples logic}. 
The treatment of external calls is of special interest. 
{Here,} the challenge is 
the external call  {on line 9}. 
We need to establish the Hoare quadruple:

 \vspace{.05cm}
  \begin{minipage}{.05\textwidth}
   \textbf{(1)}\ \ 
\end{minipage}
\hfill
\begin{minipage}{.95\textwidth}
\begin{flushleft}
$\{\  \   \external{\prg{buyer}} \ \wedge\ 
 {\color{red} {\protectedFrom {\prg{this.\myAccount.key}}  {\prg{buyer} } }}
 \ \wedge\ \prg{this.\myAccount.\balance}= b  \ \  \}$\\
$\ \ \ \ \ \ \ \ \ \ \ \ {\ \prg{buyer.pay(this.accnt,price)}   \ } $\\
$  \{\  \ \  {\prg{this.\myAccount.\balance}} \geq  b \  \  \} \ \ ||\ \  \{ \ ... \ \} $ 
\end{flushleft}
\end{minipage}
\\
\noindent
which says that  if the shop's account's key is protected from \prg{buyer}, then the account's balance will not decrease after the call {of
\prg{pay}.}
 \vspace{.05cm}
 
To prove \textbf{(1)}, we aim to use $S_3$, but this is not straightforward: 
 $S_3$  requires 
 {\color{red} {$\inside{\prg{this.\myAccount.key}}$}}, 
 which is not provided by the precondition of \textbf{(1)} .
 More alarmingly,  
$\inside{\prg{this.\myAccount.key}}$ may \emph{not hold} at the time of the call.
For example, in state $\sigma_2$ (Fig. \ref{fig:ProtectedBoth}), which could initiate the call to \prg{pay}, we have $\sigma_2 \models \protectedFrom{o_4\prg{.key}} {o_7}$, but $\sigma_2 \not\models \inside{o_4\prg{.key}}$.

Fig. \ref{fig:ProtectedBoth} provides insights into addressing this issue. Upon entering the call, in state $\sigma_3$, 
we find that $\sigma_3 \models \inside{o_4\prg{.key}}$. More generally, if $\protectedFrom{\prg{this.\myAccount.key}}{\prg{buyer}}$ holds before the call to \prg{pay}, then $\inside{\prg{this.\myAccount.key}}$ holds upon entering the call.
This is because any objects indirectly accessible during \prg{pay} must have been indirectly accessible from the call's
receiver (\prg{buyer}) or  arguments (\prg{this.\myAccount} and \prg{price}) when \prg{pay} was called.
 
In general, if   $\protectedFrom{\prg{x}}{\prg{y}_i}$ holds  for all 
 $\prg{y}_i$, before a call $\prg{y}_0.m(\prg{y}_1, ..., \prg{y}_n)$, then $\inside{\prg{x}}$ holds  upon entering the call. 
Here we have  $\protectedFrom{\prg{this.accnt.key}}{\prg{buyer}}$ by precondition. We also  have that \prg{price} is a scalar and therefore 
$\protectedFrom{\prg{this.accnt.key}}{\prg{price}}$.  And the type information gives that
all fields transitively accessible from an \prg{Account} are scalar or internal; this gives that
$\protectedFrom{\prg{this.accnt.key}}{\prg{this.accnt}}$. 
This enables the application of $S_3$ in \textbf{(1)}. The corresponding Hoare logic rule is shown in Fig. \ref{f:external:calls}.

\subsection{Summary}

In  an open world, external objects can execute arbitrary code, invoke any public internal methods, access any other external objects, and even collude with each another.
The external code may be written in the same or a different programming language than the internal code -- all we need is that the platform protects direct external read/write of  the internal private fields, while allowing indirect manipulation through calls of public methods.

The conditional and scoped nature of our invariants is critical to their applicability.
Protection is a local condition, constraining accessible objects rather than imposing a structure across the whole heap.
Scoped invariants are likewise local: they do not preclude some effects from the whole execution of a program,
rather the effects are precluded only in some local contexts.
While $a.\prg{\balance}$ may decrease in the future, this can only happen in contexts where an external object has direct access to  $a.\prg{key}$. 
Enforcing these local conditions is the responsibility of the internal module:
precisely because these conditions are local, they can be enforced locally within a module,
irrespective of all the other modules in the open world.

\renewcommand{\LangOO}{\ensuremath{{\mathcal{L}}_{ul}}\xspace }

\section{The Underlying Programming Language \LangOO}  
\label{sect:underlying}

\subsection{\LangOO Syntax and Runtime Configurations}
\label{sub:Loo} 
This work is based on \LangOO, a {minimal}, imperative, sequential,  class based, typed, object-oriented language. 
We believe, however, that the work can easily be adapted to any capability safe language with some form of encapsulation,
and that it can also support inter-language security, provided that the 
platform offers means to protect a module’s private state; cf capability-safe hardware as in Cheri \cite{davis2019cheriabi}.
Wrt to encapsulation and  capability safety,  \LangOO supports private fields, private and public methods, unforgeable addresses, and no ambient authority (no static methods, no address manipulation).
To reduce the complexity of our formal models, as is usually done, \eg \cite{IgaPieWadTOPLAS01,DietlDrossopoulouMueller07a,ParBiePOPL05},  \LangOO lacks many common languages features, omitting static fields and methods, interfaces,
inheritance, subsumption, exceptions, and control flow.  
{In our examples, we use numbers and booleans -- these can be encoded.}
 
Fig. \ref{f:loo-syntax} shows the   \LangOO syntax. {Statements, $stmt$, are three-address instructions,   method calls, or empty, $\epsilon$.}  
Expressions, $\re$, are ghost code;  as such, they may appear in assertions but not in statements, and have no side-effects \cite{ghost,ghost:spirit}.
Expressions  may contain fields, $\re.f$, or ghost-fields, ${\re_0.gf(\overline {\re})}$.
The meaning of $\re$ is module-dependent; \eg$\prg{a}.\prg{\balance}$   is a field lookup  in \ModA , but in a module which stores balances in a table it would be a recursive lookup through that table  -- \cf example   in \S \aref{A.3 }{app:BankAccount:ghost}. 
  \footnote
{For convenience,   $\re.gf$ is short for $\re.gf()$. Thus,  $\re.gf$ may be simple field lookup  in some modules, or  ghost-field  in others. }
In line with most tools, we support ghost-fields, but they are not central to our work.

\LangOO states, $\sigma$,  consist of a  heap $\chi$ and a stack. 
{A stack  is a sequence of frames, $\phi_1\!\cdot\!...\!\cdot\! \phi_n$.}
A  frame, $\phi$, consists of a local variable map and a continuation, \ie {the}  
statements to be executed.
The top frame, \ie  the frame most recently pushed onto the stack,  in a state $(\phi_1\!\cdot\!...\!\cdot\! \phi_n, \chi)$ is $\phi_n$.

\begin{figure}[t]
\footnotesize{
$
 \begin{array}{ll}
 \begin{syntax}
\syntaxElement{Mdl}{Module Def.}
		{
		\syntaxline{\overline{C\ \mapsto\ CDef}}\endsyntaxline
		}
\endSyntaxElement\\
\syntaxElement{CDef}{Class Def.}
		{
		 \prg{class}\ C\ 
		\{\  \overline{fld}; \overline{mth};\  \overline{gfld};\  \}		
		}
\endSyntaxElement 
\\
\syntaxElement{mth}{Method Def.}
		{
		\syntaxline
		{ {p}\  \prg{method}\ m\ (\overline{x : T}){:T}\{\ s\ \} }
		\endsyntaxline
		}
\endSyntaxElement
\end{syntax}
 &    
\begin{syntax}
\syntaxElement{fld}{Field Def.}
		{\syntaxline
			{\prg{field}\ f\ :\ T}
		\endsyntaxline}
\endSyntaxElement 
\\
\syntaxElement{T}{Type}
		{
		\syntaxline
				{C}
		\endsyntaxline
		}
\endSyntaxElement
\\
\syntaxElement{p}{Privacy}
		{
		\syntaxline
		{\prg{private}}
		{\prg{public}}
		\endsyntaxline
		}
\endSyntaxElement 
\end{syntax}

 \end{array}
 $
\\
\[
\begin{syntax}
\syntaxElement{stmt}{ } 
		{
				\syntaxline
				{{x:=y}}
				{{x:=v}}
				{x:=y.f}
				{x.f:=y}
				{x:=y_0.m(\overline{y})}
				{x:=\prg{new} \ {C} }
				{ stmt;\ stmt }
				  { \epsilon }
			       \endsyntaxline
		}
\endSyntaxElement
\\
\syntaxElement{gfld}{Ghosts}
		{\syntaxline
			{\prg{ghost}\ gf(\overline{x : T})\{\ \re\ \} : T}
		\endsyntaxline}
\endSyntaxElement\\
\syntaxElement{{\re}}{{{ }}} 
		{
		\syntaxline
				{x}
				{v}
				{\re.f}
				{{\re.gf(\overline {\re})}}
		\endsyntaxline
		}
\endSyntaxElement
\\
\end{syntax}
\]
\\
 $
 \begin{array}{lcl}
 \begin{syntax}
\syntaxElement{\sigma}{Program {State}}
		{
		\syntaxline
		{( \overline \phi, \chi )}
		\endsyntaxline
		}
\endSyntaxElement 
\\
\syntaxElement{\phi}{Frame}
		{
		\syntaxline
		{  (\  \overline{x\mapsto v};\ s \ ) }
		\endsyntaxline
		}
\endSyntaxElement\\
\syntaxElement{\chi}{Heap}
		{(\  \overline{\alpha \mapsto o}\ )}
\endSyntaxElement
\end{syntax}
 & \hspace{1cm} & 
\begin{syntax}
\syntaxElement{{C, f, m, gf, x, y}}{ }
		{{Identifier}}
\endSyntaxElement\\
\syntaxElement{o}{Object}
		{(\ C;\  \overline{f \mapsto v} \ )}
\endSyntaxElement\\
\syntaxElement{v}{Value}
		{
		\syntaxline
				{\alpha} 
				{\nul}
		\endsyntaxline
		}
\endSyntaxElement
\end{syntax}

 \end{array}
 $
 }
\caption{\LangOO Syntax. We use $x$, $y$, $z$ for variables, \ $C$, $D$ for class identifiers, $f$ for field identifier, ${gf}$ for ghost field identifiers, $m$ for method identifiers, $\alpha$ for addresses.
}
\label{f:loo-syntax}
\end{figure}

\paragraphsd{Notation.} We adopt the following unsurprising notation:
\label{s:notation}
\begin{itemize}
\item
{An object is uniquely identified by the address that points to it. We shall be talking of objects $o$, $o'$ when talking less formally, and of addresses, $\alpha$, $\alpha'$, $\alpha_1$, ...  when more formal.}
\item
$x$, $x'$, $y$, $z$, $u$, $v$, $\va$  are {variables}; \ 
$dom(\phi)$ and $Rng(\phi)$ indicate the variable map in $\phi$; \ \ $dom(\sigma)$ and $Rng(\sigma)$ indicate the variable map in the top frame in $\sigma$
\item
$\alpha \in \sigma$ means that $\alpha$ is defined in the heap of $\sigma$, and $x\in \sigma$ means that $x\in dom(\sigma)$.
Conversely,  $\alpha\notin\sigma$ and $x\notin\sigma$ 
 have the obvious meanings.
$\interpret{\sigma}{\alpha}$  is $\alpha$; and $\interpret{\sigma}{x}$  is the value to which  $x$  is mapped in the top-most frame of $\sigma$'s stack, 
and $\interpret{\sigma}{e.f}$ looks up in $\sigma$'s heap the value of $f$ for the object  $\interpret{\sigma}{e}$.
\item 
$\phi[x \mapsto \alpha]$ updates  the variable map  of $\phi$,  
and  $\sigma[x \mapsto \alpha]$ updates the top frame of $\sigma$. \
$A[\re/x]$ is textual substitution where we replace all occurrences of $x$ in $A$ by $\re$. 
\item
As usual, $\overline q$ stands for  sequence $q_1$, ... $q_n$, where $q$ can be an address, a variable,    a frame, an update or a substitution.
Thus,   $\sigma[\overline{x \mapsto \alpha}]$ and $A[ \overline{e/y}]$ 
have the expected meaning.
\item
$\phi.\prg{cont}$ is the continuation of frame $\phi$, and  $\sigma.\prg{cont}$ is the continuation in the top frame.
\item
$text_1 \txteq text_2$ expresses that $text_1$ and $text_2$ are  the same text.
\item
We define the depth of a stack as $\DepthFs {\phi_1...\phi_n} \triangleq n$. For states, $\DepthSt {(\overline \phi, \chi)} \triangleq  \DepthFs {\overline \phi}$.
The  operator $\RestictTo  \sigma k$ truncates the stack up to the $k$-th frame: 
 $\RestictTo {(\phi_1...\phi_k...\phi_n,\chi)} {k}  \triangleq   (\phi_1...\phi_k,\chi)$
\item
{ $\vs(stmt)$ returns the variables which appear in $stmt$. For example, $\vs(u:=y.f)$=$\{u,y\}$.}
\end{itemize}

\subsection{\LangOO Execution}
\label{sect:execution}

{Fig. \ref{f:loo-semantics} describes \LangOO execution}  by a small steps operational semantics with shape  $\leadstoOrig  {\Mtwo} {\sigma}   {\sigma'}$.
  $\Mtwo$ stands for one or more modules, where a
  module,  $M$, maps class names to class definitions. 
  {
  The functions $\class{\sigma}{x}$, $\meth{\Mtwo}{C}{m}$,
  { $fields(\Mtwo,C)$,}
    $\Same {x} {y} {\sigma}{\Mtwo}$, and $\Formals {\sigma}  {\Mtwo}$,
return the class of $x$, the method $m$ for class $C$, {the fields for class $C$,} whether $x$ and $y$ belong to the same module, and 
 the formal parameters of the method currently executing in $\sigma$ -- \cf Defs
\aref{A.2 -- A.7}{\ref{def:class-lookup} -- \ref{def:params}}. 
Initial states, $\initial{\sigma}$, contain a single frame 
with single variable \prg{this} pointing to a single object 
in the heap 
and a continuation, \cf \aref{A.8}{def:initial}.
}

\begin{figure}[bt]
\begin{minipage}{\textwidth}
\footnotesize{
\begin{mathpar}
\infer
	{
	{\sigma.\prg{cont}  \txteq  x := y.f; stmt} \ \ \ \ \ \ \ 
	 x \notin \Formals {\sigma} {\Mtwo} \\  
	 \Same {\prg{this}}  {y}  {\sigma} {\Mtwo}
	}
	{\exec{\Mtwo}{\sigma}{\sigma[x\mapsto  \interpret{\sigma}{y.f} \} ][\prg{cont} \mapsto stmt ] }}
	\quad(\textsc{Read})
	\and
\infer
	{
	\sigma .\prg{cont}  \txteq  x.f := y; stmt \ \ \ \ \ \ \ 
	\Same {\prg{this}}  {x}  {\sigma} {\Mtwo} 
	}
	{\exec{\Mtwo}{\sigma}{\sigma[\interpret{\sigma}{x}.f \mapsto\ \interpret{\sigma}{y} ][\prg{cont}\mapsto stmt]}}
	{}
	\quad(\textsc{Write})
	\and
\infer
	{
	\sigma.\prg{cont}\  \txteq\  x := {\prg{new}\ C }; stmt \ \ \ \ \ \ \ 
	 x \notin \Formals {\sigma} {\Mtwo} \\ 
	\textit{fields}(\Mtwo,C)=\overline{f} \\
	\alpha \mbox{ fresh in } \sigma
	}
	{\exec{\Mtwo}{\sigma}{\sigma[x\mapsto \alpha][\alpha  \mapsto  (\ C;\  \overline{f\, \mapsto \, \nul} \  ) ] [\prg{cont}\mapsto stmt] }}
	\quad(\textsc{New})
\and
\infer
	{
	   {\phi_n}.\prg{cont}  \txteq   u := y_0.m(\overline{y}); \_ \ \ \ \ \ \ \ 
	   u \notin \Formals {( {\overline{\phi}\cdot\phi_n},  \chi)} {\Mtwo} 	   \\
    \meth{\Mtwo}{\class{(\phi_n,\chi)}{y_0}}{m} = p \ C\!::\!m(\overline{x : T}){:T}\{\, stmt\, \}\ \ \ \\
        	{{p=\prg{public}  \vee   \Same{\prg{this}} {y_0} {(\phi_n,\chi)}{\Mtwo} }} 
	}
	{\exec{\Mtwo}{ ( {\overline{\phi}\cdot\phi_n},  \chi) }{{(\overline{\phi}\cdot\phi_n\cdot(\  \prg{this}\, \mapsto\, \interpret{\phi_n}{y_0},\overline{x\, \mapsto\, \interpret{\phi_n}{y}}; \ stmt\ )},\chi)}}
	\quad(\textsc{Call})
	\and
\infer
	{
		\phi_{n+1}.\prg{cont} \txteq  {\epsilon} \\  
	\phi_n.\prg{cont}   \txteq  {x := y_0.m(\overline y)}; stmt  
			}
	{\exec{\Mtwo}{({\overline{\phi} \cdot \phi_n \cdot \phi_{n+1}}, \chi) }{({\overline{\phi}\cdot \phi_{n}[x \mapsto \interpret{\phi_{n+1}}{\prg{res}}][\prg{cont} \mapsto {stmt} ]}, \chi)}}
	\quad(\textsc{Return})
\end{mathpar}
 }
\end{minipage}
 \caption{\LangOO operational Semantics}
\label{f:loo-semantics}
\end{figure}

The semantics is unsurprising:  
The  top frame's continuation {(${\sigma.\prg{cont}}$)} contains the statement to be  executed next.  
We dynamically enforce a simple form of module-wide privacy: 
Fields may be read or written only if they {belong to an object (here $y$)} whose class comes from the same module as the  class of the object 
reading or writing the fields ($\prg{this}$). \footnote{More fine-grained privacy, \eg C++ private fields or ownership types, would provide all
the guarantees needed in our work.} 
Wlog, {to simplify some proofs} we  require, as in Kotlin, that method bodies do not assign to formal parameters.

Private methods may be called only if the class of 
the callee ({the object whose method  is being called -- here $y_0$}) 
comes from the same module as the  class of the caller (here $\prg{this}$).
Public methods may always be called.
When a method is called, a new frame is pushed onto the stack; this frame  maps \prg{this} and the formal parameters to  the values for the receiver and other arguments, and the continuation to the body of the method. 
Method bodies are expected to store their return values in the {implicitly defined} variable \prg{res}\footnote{For ease of presentation, we omit assignment to \prg{res} in methods returning \prg{void}.}. 
  When the continuation is  empty ($\epsilon$), the frame is popped and the value of \prg{res}
 from the popped frame  is stored  in the variable map of the top frame.

{Thus, when $\leadstoOrig {\Mtwo}{\sigma}   {\sigma'} $ is within the same method we have  $\DepthSt {\sigma'}$= $\DepthSt {\sigma}$;\  when it is a call we have
 $\DepthSt {\sigma'}$= $\DepthSt {\sigma}+1$; \ and when it is a return we have  $\DepthSt {\sigma'}$= $\DepthSt {\sigma}-1$.}
Fig. \ref{fig:illusrPreserve}  from \S \ref{s:outline} 
distinguishes 
  ${\leadstoN}$ {execution} steps into: 
steps within the same  call ($\rightarrow$),\   entering a method  ($\uparrow$),\    returning from a method  ($\downarrow$).
Therefore $\leadstoOrig {\Mtwo}{\sigma_8}   {\sigma_9} $ is a step within the same call, 
$\leadstoOrig {\Mtwo}{\sigma_9}   {\sigma_{10}} $ is a method entry with $\leadstoOrig {\Mtwo}{\sigma_{12}}   {\sigma_{13}} $
the corresponding return. 
In general,  $\leadstoOrigStar  {\Mtwo} {\sigma}   {\sigma'}$ may involve {any}  number of  calls or returns: \eg
$\leadstoOrigStar  {\Mtwo} {\sigma_{10}}   {\sigma_{15}}$,   involves no calls and two returns.

%

\subsection{Fundamental  Concepts}
\label{s:auxiliary}

The novel features of our assertions — protection and scoped invariants  
— are both defined in terms of the current point of execution.
Therefore, for the semantics of our   assertions we need to represent calls and returns, scoped execution, and (in)directly accessible objects.
 
 \subsubsection{Method Calls and Returns} These  are characterized through pushing/popping   frames :
$ \PushSF  {\phi} {\sigma}$ pushes 
frame $\phi$ onto the stack of $\sigma$, while
${\sigma\, \popSymbol}$   pops the top frame 
and updates the continuation and variable map.


\begin{definition}
\label{def:push:frame}
Given a state $\sigma$, and a frame $\phi$,  we define
\begin{itemize}
\item
 $ \PushSF  {\phi} {\sigma} \ \triangleq \ ({\overline{\phi}\cdot\phi}, \chi)$ \ \ \  if \ \ \  $\sigma=(\overline{\phi}, \chi)$.
\item
$ \ \sigma\, \popSymbol \ \ \  \triangleq\   { (\overline{\phi}\cdot (\phi_n[\prg{cont}\mapsto stmt][x \mapsto \interpret {\phi_n}{\prg{res}}]), \chi)}$ \ \ \  if \\
 $\strut \hspace{1.5cm} \sigma=(\overline{\phi}\cdot\phi_n\cdot\phi_{n+1}, \chi)$, and $\phi_n(\prg{cont})\txteq x:= y_0.m(\overline y); stmt $
\end{itemize}
 \end{definition}

 \noindent Consider Fig. \ref{fig:illusrPreserve}  again: $\sigma_8 = \PushSF  {\phi} {\sigma_7}$ for some $\phi$, {and}  $\sigma_{15}$=$\sigma_{14} \popSymbol$.

 \subsubsection{Scoped Execution}
 \label{sect:bounded}

In order to give semantics to scoped invariants (introduced in \S  \ref{sect:approach:scoped} and to be fully defined  in Def.  \ref{def:necessity-semantics}), we need a new definition of execution, called \emph{scoped execution}.

 \renewcommand{\EarlierS}[2]{\DepthSt{#1} \leq \DepthSt{#2}}
 \renewcommand{\NotEarlierS}[2]{\DepthSt{#1} \not\leq \DepthSt{#2}} 
 
\begin{definition}[Scoped Execution] Given modules $\Mtwo$, and states $\sigma$, $\sigma_1$, $\sigma_n$, and $\sigma'$, we define:
\label{def:shallow:term}
 
\begin{itemize}

  \item
{  $\leadstoBounded  {\Mtwo} {\sigma}   {\sigma'} \ \ \   \,   \ \ \ \triangleq \ \ \  \leadstoOrig {\Mtwo} {\sigma} {\sigma'} \, \wedge\, 
 \EarlierS {\sigma}  {\sigma'} $}
  \item
{  $\leadstoBoundedStar {\Mtwo}  {\sigma_1}  {\sigma_n}  \ \ \,  \ \    \ \triangleq  \ \ \  {\sigma_1} = {\sigma_n}\ \ \vee \ \  \exists \sigma_2,...\sigma_{n-1}.\forall i\!\in\! [1..n)[\  \leadstoOrig {\Mtwo}  {\sigma_i}  {\sigma_{i+1}}\  \wedge\  \EarlierS{\sigma_1} {\sigma_{i+1}} \ ]$ }
\item
  $\leadstoBoundedStarFin {\Mtwo}  {\sigma}  {\sigma'}  \  \,  \ \  \ \triangleq  \ \ \  \leadstoBoundedStar {\Mtwo}  {\sigma}  {\sigma'}  \ \wedge\ \
 {\DepthFs \sigma = \DepthFs {\sigma'} \ \ \wedge \ \ \sigma'.\prg{cont}=\epsilon  } $
 \end{itemize}
\end{definition}

Consider    Fig. \ref{fig:illusrPreserve} :
Here $\EarlierS {\sigma_8} {\sigma_9}$
and thus $\leadstoBounded   {\Mtwo} {\sigma_8} {\sigma_9}$.
Also,  $\leadstoOrig {\Mtwo} {\sigma_{14}}  {\sigma_{15}}$  \
  but  $\NotEarlierS {\sigma_{14}} {\sigma_{15}} $
  (this step returns from the active call in $\sigma_{14}$),
  and hence   $\notLeadstoBounded  {\Mtwo}  {\sigma_{14}}   {\sigma_{15}}$. 
Finally, even though $\DepthSt {\sigma_8} = \DepthSt {\sigma_{18}}$
 and $\leadstoOrigStar {\Mtwo} {\sigma_8}  {\sigma_{18}}$, we have  
 $\notLeadstoBoundedStar {\Mtwo} {\sigma_8}   {\sigma_{18}}$:
This is so, because the execution $\leadstoOrigStar {\Mtwo} {\sigma_8}  {\sigma_{18}}$ goes through the step
$\leadstoOrig {\Mtwo} {\sigma_{14}}  {\sigma_{15}}$ and  $\NotEarlierS {\sigma_{8}} {\sigma_{15}} $
 (this step returns from the active call in  $\sigma_8$).

\vspace{.1cm}
{The relation $\boundedTrans$ contains more than the transitive closure of  $\bounded$.
\Eg, ${\leadstoBoundedStar  {\Mtwo}  {\sigma_9}  {\sigma_{13}}}$, even though ${\leadstoBounded   {\Mtwo}  {\sigma_{9}}  {\sigma_{12}}}$  and ${\notLeadstoBoundedStar   {\Mtwo}  {\sigma_{12}}  {\sigma_{13}}}$.} 
Lemma \ref{l:params:do:not:change} says that the value of the parameters does not change during  execution of the same method. 
Appendix \aref{B}{\ref{app:aux}} discusses proofs, and further properties.


\begin{lemma}
\label{l:params:do:not:change} 
 
For all $\Mtwo$, $\sigma$, $\sigma'$:
 $\strut \hspace{0.3cm} \leadstoBoundedStar {\Mtwo}  {\sigma}  {\sigma'} \ \wedge  \ \DepthFs{\sigma}=\DepthFs {\sigma'}\ \ \ \Longrightarrow\ \  \ \forall x\in \Formals {\Mtwo} {\sigma}.[ \ \interpret \sigma x = \interpret {\sigma'} x \ ]$

\end{lemma}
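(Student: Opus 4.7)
}

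Let $k = \DepthFs{\sigma} = \DepthFs{\sigma'}$ and unfold the definition of $\leadstoBoundedStar$: either $\sigma = \sigma'$ (trivial), or there exist intermediate states $\sigma = \sigma_1, \sigma_2, \ldots, \sigma_n = \sigma'$ with $\leadstoOrig{\Mtwo}{\sigma_i}{\sigma_{i+1}}$ and $\EarlierS{\sigma}{\sigma_{i+1}}$, i.e.\ $\DepthFs{\sigma_{i+1}} \geq k$ for every $i$. The plan is to prove, by induction on $i$, the stronger invariant

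\medskip
\noindent
$(\star)$\quad The $k$-th frame of $\sigma_i$ (which exists because $\DepthFs{\sigma_i} \geq k$) agrees with the top frame of $\sigma$ on every $x \in \Formals{\Mtwo}{\sigma}$.

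\medskip

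Once $(\star)$ is established, applying it at $i = n$ gives the conclusion, because $\DepthFs{\sigma'} = k$ forces the $k$-th frame of $\sigma'$ to be its top frame.

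The inductive step is a case analysis on the rule used in $\leadstoOrig{\Mtwo}{\sigma_i}{\sigma_{i+1}}$, split by whether $\DepthFs{\sigma_i} > k$ or $\DepthFs{\sigma_i} = k$. If $\DepthFs{\sigma_i} > k$, then \textsc{Read}, \textsc{Write}, \textsc{New} and \textsc{Call} all modify only frames strictly above level $k$, so the $k$-th frame is untouched and $(\star)$ is inherited. The only subtle sub-case is \textsc{Return} from depth $k+1$ to depth $k$: here the frame at level $k$ is rewritten by $\phi_n[\prg{cont}\mapsto stmt][x \mapsto \interpret{\phi_{n+1}}{\prg{res}}]$, so the parameter map could in principle be changed via $x$. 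However, the \textsc{Call} rule that originally produced the top frame required $u \notin \Formals{\ldots}{\Mtwo}$, so the assignment target $x$ is not a formal parameter of the method at level $k$, and $(\star)$ is again preserved. If $\DepthFs{\sigma_i} = k$, then the scoped-execution side condition $\EarlierS{\sigma}{\sigma_{i+1}}$ rules out \textsc{Return}, and the remaining rules either push a new frame (leaving level $k$ intact) or modify only the variable assigned in a \textsc{Read}/\textsc{New} statement; by the Kotlin-style restriction stated in \S\ref{sect:execution} that method bodies do not assign to formal parameters, this assigned variable is again outside $\Formals{\Mtwo}{\sigma}$, preserving $(\star)$. \textsc{Write} does not touch the variable map at all.

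The main obstacle is the \textsc{Return} case at $\DepthFs{\sigma_i} = k+1$, because it is the only rule that modifies a frame other than the current top one. The argument rests squarely on the side conditions of the \textsc{Call} rule together with the no-reassignment-to-parameters convention; these two facts combine to show that every write targeting level $k$ (either an in-method assignment or the return value binding) lands on a variable outside $\Formals{\Mtwo}{\sigma}$. Once that is observed, the induction goes through without further difficulty.
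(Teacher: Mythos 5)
Your proof is correct and takes essentially the same route as the paper, which only sketches the argument (induction on the step sequence, case analysis on the operational semantics, and the side conditions $x \notin \Formals{\cdot}{\cdot}$ in \textsc{Read}, \textsc{New} and \textsc{Call} that encode the no-reassignment-to-parameters convention). Your strengthened invariant on the $k$-th frame and the explicit handling of the \textsc{Return} step back to level $k$ (via the \textsc{Call} premise on the assignment target) supply exactly the details the paper leaves implicit.
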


\subsubsection{Reachable  Objects, Locally Reachable Objects, and Well-formed States}

To define protection (no external object indirectly accessible from the top frame
has access to the protected object, \cf \S~\ref{sect:approach:protection}) we first define
reachability. 
%
%
An object
$\alpha$ is
 \emph{locally reachable}, i.e. $\alpha \in  \LRelevantO   \sigma $, if it is reachable from the top frame on the stack of $\sigma$.
 
\begin{definition} We define 
\begin{itemize}
\item
{{$\Relevant {\alpha} {\sigma}  \ \ \ \ \ \triangleq \ \  \  \{ \ \alpha' \, \mid \ \exists n\!\in\!\mathbb{N}.\exists f_1,...f_n. [\ \interpret {\sigma} {\alpha.f_1...f_n} = \alpha'\ ] \ \}$}}.
\item
$ \LRelevantO   \sigma  \ \  \ \triangleq \ \  \  \{ \ \alpha \ \mid \ { \exists x\in dom(\sigma) \wedge \alpha \in \Relevant {\interpret  {\sigma} x}
{\sigma} \ \} } $.
\end{itemize}
\end{definition}


In well-formed states, $\Mtwo \models \sigma$,    the value of a parameter in  any callee   (${\RestictTo  \sigma {k}}$) is also the 
 value of some variable in the caller (${\RestictTo  \sigma {k\!-\!1}}$),
and any address reachable from any frame (${\LRelevantO   {\RestictTo  \sigma {k}} }$) is reachable from some formal parameter of that frame.

\begin{definition}[Well-formed states]
\label{def:wf:state}
 For modules $\Mtwo$, and  states $\sigma$, $\sigma'$:

$\Mtwo \models \sigma \ \ \triangleq \ \  \forall k\in \mathbb{N}. [ \  1<k\leq \DepthFs {\sigma} \ \Longrightarrow $\\
$\strut \hspace{2.25cm}[\ \ \forall   x \in   \Formals {\RestictTo  \sigma {k}} {\Mtwo}.[\ \exists y. \ \interpret {\RestictTo  \sigma {k}}  {x} = \interpret {\RestictTo  \sigma {k\!-\!1}}  {y} \ ]$ $\hspace{0.3cm} \ \ \  \wedge$  
\\
$\strut \hspace{2.5cm}  {\LRelevantO   {\RestictTo  \sigma {k}} } = \bigcup_{z \in {\Formals {\RestictTo  {\sigma} {k}} {\Mtwo}}}  
{\Relevant { \interpret   {\RestictTo  \sigma {k}}  {z}}\sigma }  \  \ \ \ \ \ ] $
\end{definition}

Lemma  \ref{lemma:relevant} 
says that 
(\ref{wf:preserve}) execution preserves well-formedness, and 
(\ref{oneLR}) any object which is locally reachable after pushing a frame was locally reachable before pushing that frame.
 
\begin{lemma}
\label{lemma:relevant}
\label{l:wf:state}
\label{lemma:push:N}
For all modules $\Mtwo$, states $\sigma$, $\sigma'$,   and frame $\phi$:
\begin{enumerate}
\item
\label{wf:preserve}
$\Mtwo \models \sigma \ \wedge \ {\exec{\Mtwo} {\sigma} {\sigma'}}  \ \    \Longrightarrow \ \ \Mtwo \models \sigma' $
\item
\label{oneLR}
{$ \sigma'= \PushSF {\phi} {\sigma}   \ \wedge  \   \Mtwo \models \sigma' 
 \ \  \Longrightarrow\ \ \LRelevantO {\sigma'}\  \subseteq \LRelevantO   {\sigma}$}

\end{enumerate}
\end{lemma}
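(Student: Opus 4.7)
My plan is to prove part~(\ref{wf:preserve}) by case analysis on the rule used to derive $\exec{\Mtwo}{\sigma}{\sigma'}$, and to derive part~(\ref{oneLR}) directly from Definition~\ref{def:wf:state} applied to $\sigma'$.

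For part~(\ref{wf:preserve}), the five rules of Figure~\ref{f:loo-semantics} split into two groups. READ, WRITE, and NEW keep the stack depth constant, modifying only the top frame and possibly the heap. For such steps the formal-parameter clause is maintained because the side condition $x\notin\Formals{\sigma}{\Mtwo}$ prevents a formal from being overwritten, and the bindings of the older frames are untouched. The reachability clause is maintained by a local argument that tracks how the new binding or heap edit interacts with reachability from the formals: in READ the new value $\interpret{\sigma}{y.f}$ is reachable from $y$, which by the hypothesis $\Mtwo\models\sigma$ is reachable from some formal; in WRITE any reachability path that now traverses $x.f$ reroutes through $\interpret{\sigma}{y}$, which is again reachable from a formal; and in NEW freshness of $\alpha$ ensures that no existing reachability paths are disturbed, while the contribution of the freshly bound variable $x$ is handled by noting that $\alpha$ only reaches itself, since its fields are all $\nul$. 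The CALL and RETURN rules change the stack depth. For CALL, the pushed frame's domain coincides with its formals by construction, and the formal-matches-caller clause is witnessed immediately by the $y_i$ values passed at the call site; well-formedness of older frames is preserved because their frames and the heap are unchanged. For RETURN, the re-exposed caller frame acquires one extra binding $x\mapsto \interpret{\phi_{n+1}}{\prg{res}}$, which is handled analogously to READ once one observes that $\prg{res}$ in the popped frame was reachable from the formals of that frame, whose values, by well-formedness of $\sigma$, already correspond to variables in the caller.

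For part~(\ref{oneLR}), pushing a frame does not alter the heap, so reachability in $\sigma$ and $\sigma'$ agree on every starting address. Instantiating Definition~\ref{def:wf:state} at $k=\DepthFs{\sigma'}$ gives, for each formal $z$ of $\sigma'$'s top frame, a companion variable $y$ in $\sigma$'s top frame with $\interpret{\sigma'}{z}=\interpret{\sigma}{y}$, together with the identity $\LRelevantO{\sigma'}=\bigcup_{z\in\Formals{\sigma'}{\Mtwo}}\Relevant{\interpret{\sigma'}{z}}{\sigma'}$. Replacing each $z$ by its companion $y$ and using heap agreement rewrites this as $\LRelevantO{\sigma'}=\bigcup_{y}\Relevant{\interpret{\sigma}{y}}{\sigma}$, with every such $y$ lying in the domain of $\sigma$'s top frame, so the right-hand side is contained in $\LRelevantO{\sigma}$.

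The main obstacle will be the NEW and CALL/RETURN cases of part~(\ref{wf:preserve}), where one must show that the reachability clause, phrased as an equality between a union over all locally bound variables and one over just the formals, survives a change to the top frame's domain. Once the role of the $\Formals$ side condition in each rule is made explicit and the freshness convention for NEW is pinned down, the remaining steps are routine bookkeeping about how reachability decomposes across the push/pop boundary.
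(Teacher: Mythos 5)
Your overall route is the same as the paper's: part~(\ref{wf:preserve}) by case analysis on the operational rules, leaning on the side conditions that forbid assignment to formal parameters, and part~(\ref{oneLR}) by instantiating Definition~\ref{def:wf:state} at the top frame of $\sigma'$ and using the fact that pushing a frame leaves the heap (and hence $\Relevant{\cdot}{\cdot}$) unchanged. Part~(\ref{oneLR}) is correct and is essentially the paper's own argument, just routed through the formals clause rather than directly through $Rng(\phi)$.

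For part~(\ref{wf:preserve}) there is a concrete step that does not go through as you assert, and it sits exactly in the clause you defer to ``routine bookkeeping'': the \emph{equality} $\LRelevantO{\RestictTo{\sigma}{k}} = \bigcup_{z\in\Formals{\RestictTo{\sigma}{k}}{\Mtwo}}\Relevant{\interpret{\RestictTo{\sigma}{k}}{z}}{\sigma}$. In the \textsc{New} case the fresh address $\alpha$ is bound to the local $x$, which by the rule's side condition is \emph{not} a formal, and no field of any pre-existing object points to $\alpha$; so $\alpha\in\LRelevantO{\sigma'}$ via $x$ but $\alpha$ is not reachable from any formal, and the stated equality fails outright --- ``$\alpha$ only reaches itself'' addresses the wrong direction of the equation. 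In the \textsc{Write} case your rerouting argument covers only the new paths created through the updated field; it does not cover an address that is still held by a non-formal local whose only formal-rooted path previously went \emph{through} the overwritten field: e.g.\ from formals $z\mapsto A$ with $A.f=B$, $B.g=C$, first execute $v:=z.f;\ w:=v.g$ and then $z.f:=z$ --- now $C$ is locally reachable via $w$ but no longer reachable from any formal. So the case analysis, as written, does not establish the lemma against Definition~\ref{def:wf:state} read literally; closing it requires either weakening the reachability clause (say, to an inclusion modulo objects allocated since the frame was pushed) or an argument genuinely stronger than rerouting. In fairness, the paper's own proof is a two-line sketch that does not confront either case, but your write-up positively asserts that these cases close, and as stated they do not.
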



 \label{s:underlying}
   
\section{Assertions} 
\label{s:assertions}
\label{sub:SpecO}

Our assertions are    standard  (\eg properties of the values of expressions,  connectives, quantification \etc)  or  about protection (\ie ${\protectedFrom{{\re}} {{\re}}} $ and 
$ {\inside {{\re}}} $).

\begin{definition}
\label{def:assert:syntax}
Assertions, $A$,  are defined as follows:

\label{f:chainmail-syntax}
$
\begin{syntax}

\syntaxElement{A}{}
		{
		\syntaxline
				{{\re}}
				{{\re} : C}
				{\neg A}
				{A\ \wedge\ A}
				{\all{x:C}{A}}
				{\external{{\re}}}
 				{\protectedFrom{{\re}} {{\re}}} 
				 {\inside {{\re}}} 
		\endsyntaxline
		}
\endSyntaxElement\\
\end{syntax}
$
\footnote{Addresses in assertions 
as \eg  in  $\alpha.blnce > 700$, 
are useful when giving semantics to universal quantifiers 
\cf Def. \ref{def:chainmail-semantics}.(\ref{quant1}), {when the local map changes \eg upon call and return, and in general,} for scoped invariants, \cf Def. \ref{def:necessity-semantics}.}

\vspace{.05cm}

{$\fv(A)$ returns the free variables in $A$; for example, $\fv(a\!:\!Account \wedge \forall b:int.[a.\balance = b])$=$\{ a \}$.} 

\end{definition}

\forget{
\noindent
\textbf{NOTES}  \notesep 
 Assertions  may contain addresses; \eg   $\alpha.bal > 700$. 
{While addresses make little sense in user-written assertions, they are useful when giving semantics to universal quantifiers 
\cf Def. \ref{def:chainmail-semantics}.(\ref{quant1}), {when the local map changes \eg upon call and return, and in general,} for two-state invariants, \cf Def. \ref{def:necessity-semantics}.(2).}
\notesep The syntax does  not distinguish between fields and ghost fields. For instance, $\prg{a}.\prg{\balance}$ may, in some modules (\eg in \ModA), be a field lookup, while in others (e.g. when the balance is defined though an entry in a lookup table) may involve executing a ghost function. 
}

\begin{definition}[Shorthands] 
{We write $\internal{\re}$ for $\neg (\external {\re})$}, and
$\extThis$. resp. {$\intThis$} for $\external{\prg{this}}$ resp. $\internal{\prg{this}}$. 
Forms such as $A \rightarrow A'$,  $A \vee A'$, and $\exists x:C.A$  can be encoded.
\end{definition}

\label{def:chainmail-semantics-all}
\label{dup:def:chainmail-semantics}
\noindent
Satisfaction  of Assertions by a module and a state is expressed  through \ $\satisfiesA{M}{\sigma}{{A}}$ \  and defined by cases on the shape of $A$, in definitions \ref{def:chainmail-semantics}  and 
 \ref{def:chainmail-protection}.
 {$M$} is used 
 to look up the definitions of ghost fields, and to find class definitions to determine whether an object is  external.
 
\footnoteSD{say why we split the def into three defs.} 
\noindent

\subsection{Semantics of Assertions  -- First Part}
\label{sect:semantics:assert:standard}

To determine satisfaction of an expression, we    use the evaluation relation, $\eval{M}{\sigma}{e}{v}$,
which says that the expression $e$ evaluates
to value $v$ in the context of state $\sigma$ and module $M$.
{Ghost fields may be recursively defined, thus evaluation of $\re$ might}
 not  terminate. Nevertheless, the logic of  assertions 
 remains classical because recursion is restricted to expressions. 
\footnoteSD{
The semantics of $\hookrightarrow$ {is} unsurprising 
(see {the appendices 
\cite{necessityFull}).}
We have taken this approach from \citeasnoun{FASE}, which also contains a mechanized Coq proof that assertions are classical \cite{coqFASE}. } 

\begin{definition}[Satisfaction 
of Assertions -- first part] 
\label{def:chainmail-semantics}
We define satisfaction of an assertion $A$ by a 
state $\sigma$ with 
 module $M$ as:
\begin{enumerate}
\item
\label{cExpr}
$\satisfiesA{M}{\sigma}{{\re}}\ \ \ \triangleq \ \ \   \eval{M}{\sigma}{{\re}}{\true}$
\item
\label{cClass}
$\satisfiesA{M}{\sigma}{{{\re}} : C}\ \ \ \triangleq \ \ \   \eval{M}{\sigma}{{\re}}{\alpha}\   \wedge \ \class{\alpha} {\sigma}= C$
\item
$\satisfiesA{M}{\sigma}{\neg A}\ \ \ \triangleq \ \ \   {M},{\sigma}\not\models{A}$
\item
$\satisfiesA{M}{\sigma}{A_1\ \wedge\ A_2}\ \ \ \triangleq \ \ \   \satisfiesA{M}{\sigma}{A_1} \   \wedge \ \satisfiesA{M}{\sigma}{A_2}$
\item
\label{quant1}
$\satisfiesA{M}{\sigma}{\all{x:C}{A}} \ \ \ \triangleq \ \ \   
\forall \alpha.[\   \satisfiesA {M}{\sigma} {\alpha:C}  \ \Longrightarrow   \ \satisfiesA{M}{\sigma} {A[\alpha/x]} \ ] $

\item
\label{cExternal}
$\satisfiesA{M}{\sigma}{\external{{\re}}} \ \ \ \triangleq \ \ \  \exists C.[\ \satisfiesA{M}{\sigma}{{{\re}} : C} \ \wedge \ C \notin M \ ]$
\end{enumerate}
\end{definition}

Note that while execution takes place in the context of one or more modules, $\Mtwo$, satisfaction of assertions considers \emph{exactly one} module  $M$ -- the internal module. 
{$M$} is used  to look up the definitions of ghost fields, and to 
 determine whether objects are  external.

\subsection{Semantics of Assertions - Second Part}  

\label{sect:protect}

In \S \ref{sect:approach:protection} we 
introduced protection -- we will now formalize this concept. 

 An object is protected from another object, $\protectedFrom{{\alpha}} {{\alpha_{o}}}$, if 
the two objects are not equal, and no external object reachable from $a_o$ has a field pointing to  $\alpha$.
This ensures that the last element on any path leading from $\alpha_o$ to $\alpha$
in an internal object.

An object is protected,  $\inside{{\alpha}}$,  if no external object
reachable from any of the current frame's arguments has a field
pointing to $\alpha$; and furthermore, if the receiver is external,
then
no parameter to the current method call directly refers to $\alpha$.
This  ensures that no external object reachable from the current
receiver or arguments can ``obtain'' $\alpha$, where obtain $\alpha$ is either direct access through a field,  
 or by virtue of the method's receiver 
 being able to see all the arguments.

\begin{definition}[Satisfaction 
of Assertions  -- Protection] 
\label{def:chainmail-protection-from}
\label{sect:semantics:assert:prtFrom}
 \label{def:chainmail-protection}
-- continuing definitions in \ref{def:chainmail-semantics}:
\begin{enumerate}
 \item
 \label{cProtected}
$\satisfiesA{M}{\sigma} {\protectedFrom \re  {\re_o}}$ $\ \  \triangleq\ \ $ 
$\exists \alpha, \alpha_{o}. [$ 
  \begin{enumerate}
 \item
 $ \eval{M}{\sigma}{{\re}}{\alpha}\ \ \ \wedge\ \ \eval{M}{\sigma}{{\re_o}}{\alpha_o} 
\ \ \ \wedge\ \  \alpha\neq \alpha_0$,
 \item
$\forall \alpha' \in {\Relevant {\alpha_o} {\sigma}}.  \forall f.[\ \ \satisfiesA{M}{\sigma}
{\external {\alpha'}  \ \ \Longrightarrow \ \    \interpret {\sigma} {\alpha'.f} \neq \alpha  }   \ ] $  $  \strut \hspace{.25cm} ] $.
\end{enumerate}

\item
 \label{sect:semantics:assert:prt}
$\satisfiesA{M} {\inside \re} $ $\ \  \triangleq\ \  $ 
$\exists \alpha. [$  
\begin{enumerate}
\item
$ \eval{M}{\sigma}{{\re}}{\alpha}$,
 \item 
 $\satisfiesA {M}{\sigma}{\extThis}\ \ \Longrightarrow\ \ \forall x\!\in\! \sigma.\ \satisfiesA{M}{\sigma}{x\neq \alpha}$,
 \item
$\forall \alpha'\in {\LRelevantO {\sigma}}.\forall f.[\  
 \satisfiesA{M}{\sigma}{ \external {\alpha'}  \ \ \Longrightarrow \ \  
  \interpret {\sigma} {\alpha'.f} \neq \alpha  }  \ ]$ 
 \strut \hspace{.25cm} ].   
  \end{enumerate} 
    \end{enumerate} 
 \end{definition} 
 
 We illustrate ``protected'' and ``protected from'' in Fig.   \ref{fig:ProtectedBoth} in \S \ref{s:outline}.
   and    Fig.    \aref{12 in App. C}{\ref{fig:ProtectedFrom} in App. \ref{appendix:assertions}}.
In general,  $\protectedFrom{{\alpha}} {{\alpha_{o}}}$ ensures that $\alpha_o$ will get access to $\alpha$ only if another object 
 grants that access.
Similarly, $\inside \alpha$ ensures that during execution of the current method, no external object will get direct access to $\alpha$ unless some internal object grants that access\footnote{This is in line with the motto "only connectivity begets connectivity" from \cite{MillerPhD}.}.
Thus, protection together with protection preservation  (\ie no internal object gives access) guarantee
lack of eventual external access.

\footnoteSD{JAMES' comment: If is possible that ``we'' do not know the complete heap (eg we only know about the green stuff.) how do we know whether an object is protected. The answer is that we do not know that it is protected, but we do know that our code guaranrtees preservation of protectedness.
}  
 
 \subsubsection{Discussion} 
Lack of  eventual 
direct access is a central concept in the verification of code with calls to and callbacks  from untrusted code.
It has already been over-approximated in several different ways, \eg
2nd-class \cite{rompf-second-class-oopsla2016,rompf-dont-pop-second-class-ecoop2022}
or borrowed (``2nd-hand'') references
\cite{boyland-promises-icse1998,boyland-aliasburying-spe2001},
 textual modules \cite{OOPSLA22},
information flow \cite{ddd}, runtime
checks \cite{secure-io-fstar-popl2024},
abstract data type exports \cite{vmsl-pldi2023},
  separation-based invariants 
Iris \cite{iris-wasm-pldi2023,cerise-jacm2024},
-- more in  \S~\ref{sect:related}.
In general, protection is applicable in more situations (i.e.\ is less
restrictive) than most of these approaches,
 although more restrictive than the ideal ``lack of eventual access''.

\noindent
\begin{flushleft}
\begin{tabular}{@{}lr@{}}
  \begin{minipage}{.85\textwidth}
   {An alternative definition might consider $\alpha$ as protected from $\alpha_o$, 
if   any path from $\alpha_o$ to $\alpha$ goes through at least one internal object.
With this definition, $o_4$ would be protected from $o_1$ in the heap shown here.
However,  $o_1$ can make a call to $o_2$, and  this call could  return $o_3$. 
Once $o_1$ has direct access to $o_3$, it can also get direct access to $o_4$. 
The example justifies our current definition.  
}
\end{minipage}
& 
\begin{minipage}{.18\textwidth}
\resizebox{2cm}{!}{
\includegraphics[width=\linewidth]{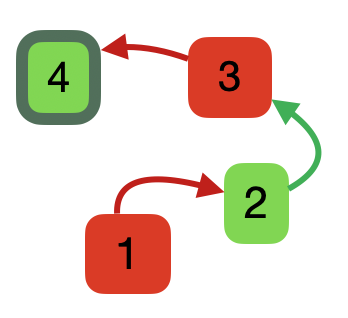}
} 
\end{minipage}
\end{tabular}
\end{flushleft}

 \subsection{Preservation of Assertions}
 \label{s:preserve}
 
Program logics require some form of framing, \ie conditions under which  satisfaction of  assertions is preserved across program execution. 
This is the subject of  the current subsection.

Def. \ref{def:as}  turns an assertion  $A$ 
to the equivalent variable-free from by replacing all free variables from $A$ by their values in $\sigma$. 
 Then,   Lemma \ref{l:assrt:unaffect}  says  that 
satisfaction of an assertion is not affected by replacing free  variables by their values, nor by changing the sate's continuation.

\begin{definitionAndLemma} $ $ ~
\label{def:as}
\label{lemma:addr:expr}
\label{l:assrt:unaffect}
For all $M$, $\sigma$,  $stmt$,   $A$, and $\overline x$ where  $\overline x = \fv{(A)}$:

\begin{itemize}  
\item

${\as  \sigma A} \ \triangleq \ \ A[\overline {{\interpret {\sigma} {x}}/x}]$\  
\item
$\satisfiesA{M}{\sigma}{A}   \ \ \ \Longleftrightarrow\ \ \ \satisfiesA{M}{\sigma}{{\as {\sigma} A} }    \ \ \ \Longleftrightarrow\ \ \  \satisfiesA{M}{\sigma[\prg{cont}\mapsto stmt]}{A}$ 
\end{itemize}
\end{definitionAndLemma}

%
%
%

 \noindent
We now move to assertion preservation across method call and return.  

\subsubsection{{Stability}} 
\label{s:preserve:call:ret}
In most program logics, satisfaction of  variable-free assertions  is preserved when pushing/popping frames
-- \ie immediately after entering a method or  returning from it.
But this is not  so for our assertions, where protection depends 
on the heap but also 
on the range of 
the top frame. \Eg  Fig. \ref{fig:ProtectedBoth}:\  
$\sigma_2 \not\models \inside {o_6}$, but after pushing a frame, we have $\sigma_3  \models \inside {o_6}$.

{Assertions} which do  not contain  $\inside {\_}$   are called $\Stable {\_}$, 
while assertions which do  not contain $\inside {\_}$ in \emph{negative} positions are called $\Pos {\_}$. 
Sect \ref{sect:Compare:stable:enc} shows some examples.
Lemma \ref{l:stbl:preserves} says that $\Stable{\_}$ assertions are  preserved when pushing/popping frames,
and $\Pos {\_}$ assertions are preserved when pushing  {internal} frames.
\Cf    \A\ \aref{D}{\ref{app:preserve}} for   definitions and proofs. 

\begin{lemma}
For all  states $\sigma$, frames $\phi$,   all assertions $A$ with  $\fv(A)=\emptyset $
\label{l:preserve:asrt}
\label{l:stbl:preserves} 
\begin{itemize}
\item 
$\Stable{A} \  \ \  \Longrightarrow  \  \ \  [\ \ M, \sigma \models A \ \ \Longleftrightarrow \ \  M,{\PushSLong \phi \sigma} \models A\ \ ]$
\item 
\label{l:preserve:asrt:two}
$\Pos{A}   \ \wedge    \ {M\cdot\Mtwo \models {\PushSLong \phi \sigma}}\  \wedge\  M, {\PushSLong \phi \sigma} \models  \intThis\  \wedge  \ M, \sigma \models A $
$\  \ \ \Longrightarrow \ \  \ M,{\PushSLong \phi \sigma} \models A\ $
\end{itemize}
\end{lemma}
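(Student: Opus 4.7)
The plan is to prove both parts by structural induction on the assertion $A$, exploiting the observation that pushing or popping a frame leaves the heap unchanged, and that $A$ is closed ($\fv(A)=\emptyset$) so the values of its sub-expressions depend only on the heap. Before starting the induction, I would apply Lemma \ref{l:assrt:unaffect} to view $A$ in its variable-free form $\as{\sigma}{A}$, so that all atomic positions contain either addresses or heap lookups from addresses. Under this normalization, the atomic cases $\re$, $\re:C$, $\external{\re}$, and $\protectedFrom{\re_1}{\re_2}$ -- each ultimately defined by reading the heap from address arguments -- are visibly insensitive to the top-of-stack variable map.

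For Part 1 (Stable assertions), the induction is routine. The atomic cases listed above are preserved in both directions because pushing/popping does not touch the heap. Propositional connectives and the universal quantifier case follow directly from the induction hypothesis, using that the range of quantification (addresses $\alpha$ with $\alpha:C$) is determined by the heap. Since $\Stable{A}$ syntactically forbids $\inside{\_}$, there is no case in which the frame stack can play a role.

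For Part 2 (Pos assertions, with an internal pushed frame), the key new case is $\inside{\alpha}$. Clause (a) of Def.\ \ref{def:chainmail-protection}.\ref{sect:semantics:assert:prt} is vacuous in $\PushSLong{\phi}{\sigma}$ because the new receiver is internal by the hypothesis $M,\PushSLong{\phi}{\sigma}\models\intThis$. For clause (b), I invoke Lemma \ref{lemma:relevant}.\ref{oneLR}: from well-formedness of $\PushSLong{\phi}{\sigma}$ we get $\LRelevantO{\PushSLong{\phi}{\sigma}} \subseteq \LRelevantO{\sigma}$, so every externally reachable $\alpha'$ after the push was already externally reachable before; since no such $\alpha'$ had a field pointing to $\alpha$ in $\sigma$ and the heap is unchanged, the same holds in $\PushSLong{\phi}{\sigma}$. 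The negation case exploits the polarity discipline of $\Pos{\_}$: a negated subformula must contain no $\inside{\_}$ at all, i.e.\ be $\Stable{\_}$, so Part 1 supplies preservation in both directions and covers $\neg A$. Conjunction and universal quantification again follow from the induction hypothesis, the former by splitting and the latter because quantification ranges over heap-determined addresses.

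The main obstacle is lining up the polarity definitions of $\Stable{\_}$ and $\Pos{\_}$ so that the induction hypothesis on each connective delivers exactly what is needed -- in particular, so that the $\neg A$ case of Part 2 legitimately reduces to an instance of Part 1. I would therefore phrase the two bullets as a single mutual induction on $A$, with Part 1 providing the biconditional needed to handle negative positions inside Part 2; the well-formedness and $\intThis$ hypotheses are threaded unchanged through the connective cases, and are consumed only at the $\inside{\alpha}$ leaf.
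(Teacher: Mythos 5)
Your proposal is correct and follows essentially the same route as the paper: induction on the structure of $\Stable{A}$ (resp.\ $\Pos{A}$), with the only non-trivial case being $\inside{\re}$, handled exactly as the paper does by using well-formedness of the pushed state to obtain $\LRelevantO{\PushSLong{\phi}{\sigma}}\subseteq\LRelevantO{\sigma}$ (Lemma \ref{lemma:relevant}.\ref{oneLR}) and the unchanged heap, with clause (a) discharged by $\intThis$. Your explicit framing of the negation case as a reduction to Part 1 via $\Pos{\neg A}\triangleq\Stable{A}$ is exactly the polarity discipline the paper's definitions are designed to support.
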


While $Stb^+$ assertions \emph{are} preserved  when pushing  internal frames,   they  are \emph{not} necessarily preserved when pushing  external frames  
nor when popping frames   (\cf Ex. \ref{ex:pop:does:not:preserve}). 

{ 
\begin{example}
\label{push:does:not:preserve}
Fig. \ref{fig:ProtectedBoth} illustrates that \\ 
-- \textit{$Stb^+$  not necessarily preserved by External Push}: 
Namely, $\sigma_2 \models \inside {o_4}$,   pushing   frame $\phi_3$ with an external receiver 
 and $o_4$ as argument gives $\sigma_3$,  we have $\sigma_3 \not\models \inside {o_4}$.
%
\\
\label{ex:pop:does:not:preserve}
-- \textit{$Stb^+$  not necessarily preserved by Pop}:  
Namely,  $\sigma_3 \models \inside {o_6}$, returning from $\sigma_3$ would give  $\sigma_2$, and  we have  $\sigma_2 \not\models \inside {o_6}$.
\end{example}
}

{We work with  $Stb^+$  assertions   (the  $Stbl$ requirement is too strong). 
But  we need to address the lack of preservation of $Stb^+$ assertions  {for external method calls and returns}.
We do the former   through    \emph{adaptation}   ($\pushSymbolInText$ in Sect \ref{s:viewAndProtect}), and the latter through  
\emph{\strong satisfaction} (\S \ref{s:scoped:valid}).
  }

\subsubsection{{Encapsulation}} 
 \label{sect:Compare:stable:enc}
 
 As external code is unknown, it could, in principle, have unlimited effect and  invalidate any assertion, and thus make  reasoning about  external calls impossible.
However, because fields are private, assertions which  read internal fields only,  cannot be invalidated by external execution steps.
Reasoning about external calls relies on such   \emph{encapsulated} assertions. 

Judgment $M\ \vdash \encaps{A}$,  
{defined in} 
\aref{D.4}{\ref{d:encaps:sytactic}},  
{requires that} $A$ looks up the contents of
internal objects only,  
{that $A$} does not contain  $\inside {\_}$  in negative positions,  
{nor does it} contain  $\protectedFrom {\_} {\_}$  
{in any position}.  
Lemma \ref{lem:encap-soundness} says that 
$M\ \vdash \encaps{A}$  
{implies}
that any external scoped execution step which involves $M$ and any set of other modules $\Mtwo$  
{preserves} satisfaction of $A$. 

\begin{lemma}[Encapsulation] 
\label{d:encaps}  For all modules $M$, and assertions $A$:  
\label{lem:encap-soundness}

\begin{itemize}
\item
$ \proves{M}{\encaps{A}}   \ \Longrightarrow\ \ \forall \Mtwo, \sigma, \sigma'.[   \ \  \satisfiesA{M}{\sigma}{(A  \wedge \externalexec)}\  \wedge\ { \leadstoBounded {M\madd\Mtwo}  {\sigma}{\sigma'}} 
        \  \Longrightarrow\  
    {M},{\sigma'}\models {\as \sigma A} \ \  ]$
 \end{itemize}
  
 \end{lemma}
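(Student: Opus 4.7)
The plan is to proceed by case analysis on the execution step $\leadstoBounded{M\madd\Mtwo}{\sigma}{\sigma'}$, combined with structural induction on $A$. First I would observe that the hypothesis $\satisfiesA{M}{\sigma}{\externalexec}$ tells us the top frame's receiver is external in $\sigma$, and the bounded condition rules out the Return step. So the step is either a local same-depth step (Read, Write, New, Assignment) or a Call that pushes a frame.

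The core auxiliary observation I would establish first is a \emph{heap-preservation sublemma}: any such external step leaves every internal object's field values unchanged. For Read, Assignment, and Call this is immediate. For New, the only modification is initializing a fresh address's fields to $\nul$, which does not alter any pre-existing object. For Write the essential point is that the privacy side condition $\Same{\prg{this}}{x}{\sigma}{\Mtwo}$ together with $\extThis$ at $\sigma$ forces $x$ to be external, so the mutation $x.f := y$ targets only an external object.

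Next I would induct on $\encaps{A}$ (using Lemma~\ref{l:assrt:unaffect} to reason about $\as{\sigma}{A}$). For atomic $\re$ and $\re : C$, the syntactic encapsulation judgment guarantees that every field-lookup chain in $\re$ traverses only internal objects, so the heap-preservation sublemma implies $\eval{M}{\sigma}{\as{\sigma}{\re}}{v}$ iff $\eval{M}{\sigma'}{\as{\sigma}{\re}}{v}$. For $\external{\re}$ I would use class immutability of existing objects. For conjunction and negation I apply the IH, relying on the fact that $\encaps{\_}$ propagates polarity correctly (no $\protectedFrom{\_}{\_}$ anywhere and no $\inside{\_}$ in negative position, so the induction hypothesis suffices on both subformulas). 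For $\all{x:C}{A'}$ I would expand the semantics over addresses $\alpha$ with class $C$ in $\sigma'$ and apply the IH pointwise, taking care of the newly allocated address in the New case by noting that a fresh address has all fields $\nul$, so an encapsulated body cannot distinguish it in a truth-changing way.

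The main obstacle is the positive $\inside{\re}$ case. After reducing to $\inside{\alpha}$ with $\alpha = \interpret{\sigma}{\as{\sigma}{\re}}$, I would split by step type. For same-depth external steps I would argue that the only way the variable map or heap can gain a direct reference to $\alpha$ is via copying a variable, dereferencing a field, allocating a fresh object, or writing a variable's value into an external object's field; clauses (a) and (b) of $\inside{\alpha}$ at $\sigma$ rule out each of these sources because every variable in $\sigma$ differs from $\alpha$ and no external reachable object in $\sigma$ has a field pointing to $\alpha$. For the Call step, $\sigma' = \PushSF{\phi}{\sigma}$ gives $\LRelevantO{\sigma'} \subseteq \LRelevantO{\sigma}$ by Lemma~\ref{lemma:push:N}, which immediately preserves clause (b); clause (a) at $\sigma'$ is established by observing that the new frame's variables are inherited from values of variables in $\sigma$, each of which was $\neq \alpha$ by clause (a) at $\sigma$ when \prg{this} is external (and vacuous otherwise). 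Putting the cases together yields the required $\satisfiesA{M}{\sigma'}{\as{\sigma}{A}}$.
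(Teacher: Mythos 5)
Your proposal is correct and follows essentially the same route as the paper's proof: the paper likewise first establishes that protection-free encapsulated assertions are preserved because external steps cannot modify internal objects' fields (your heap-preservation sublemma), then inducts on the encapsulation judgment, treating the positive $\inside{\_}$ case by case analysis on the step — using Lemma~\ref{lemma:push:N} for calls and excluding Return via scopedness, exactly as you do. Your handling of the fresh address in the New/quantifier cases is if anything slightly more explicit than the paper's.
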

 
 \noindent 
 The example below compares the judgments $\Stable{\_}$, \ \ $\Pos {\_}$, \ and \ $\encaps{\_}$. 

 $
  \begin{array}{c}
  \begin{array}{ll}
  \ \ \  &
 \begin{array}{|c|c|c|c|c|c|}
 \hline
          & \ \ \prg{z.f}\geq 3  \ \ &  \ \   \inside{\prg{x}} \ \ & \ \ \neg(\ \inside{\prg{x}}\ ) \ \ & \ \ \protectedFrom {\prg{y}} {\prg{x}} \ \ &  \ \ \neg (\ \protectedFrom {\prg{y}} {\prg{x}}\ ) \ \ \\
   \hline
 \Stable  {\_}  &  \checkmark &  \times & \times &  \checkmark  &  \checkmark  \\
 \hline 
  \Pos {\_ } &  \checkmark &   \checkmark  & \times &  \checkmark  &  \checkmark  \\ 
  \hline 
  \encaps{\_ } &   \checkmark &  \checkmark & \times &   \times   &  \times \\ 
  \hline
 \end{array}
  \end{array}
    \end{array}
 $

\section{ Specifications}
\label{sect:spec}    
 
\subsection{\textbf{Syntax, Semantics, and Examples of Specifications}}

\begin{definition} [Specifications Syntax]     We define the syntax  of  specifications, $S$:
\label{f:holistic-syntax}
\[
\begin{syntax}
\syntaxElement{S}{ }
		  {\syntaxline
				{\  \TwoStatesN {\overline {x:C}} {A}\  }
 				{\ \mprepostN{A}{p\ C}{m}{y}{C}{A} {A}\ } 
				{\ S\, \wedge \, S\ }
		 \endsyntaxline
 		}
\endSyntaxElement\ 
\syntaxElement{p}{ } 
 	 {\syntaxline
                                  {\    \prg{private} \ } 	
				 {\   \prg{public} \ } 	
		 \endsyntaxline
 		}
\endSyntaxElement 
\end{syntax}
\]

\end{definition}

In Def. \ref{f:holistic-wff}  later on we describe  well-formedness of $S$, but  we first discuss  semantics and some examples.
%
\label{ssect:sem}
 %
We use quadruples involving states: 
${\satAssertQuadruple  \Mtwo  M     {A} \sigma {A'} {A''} }$ 
  says that   if $\sigma$ satisfies $A$, then any terminating scoped execution of its continuation (${\leadstoBoundedStarFin { \Mtwo\madd M}{\sigma}  {\sigma'} }$) will satisfy $A'$, and any intermediate reachable external state 
  (${\leadstoBoundedStar  {\Mtwo\madd M}{\sigma}  {\sigma''}}$) will satisfy the ``mid-condition'', $A''$.

\begin{definition} \label{def:hoare:sem}
\label{def:shallow:spec:sat:state}
For modules $\Mtwo$, $M$, state $\sigma$, and assertions $A$, $A'$ and  $A''$, we define:
\begin{itemize}
\item
$ {\satAssertQuadruple  \Mtwo  M     {A} \sigma {A'} {A''} } \ \ \triangleq \ \ \forall \sigma',\sigma''.[
$  \\
$\strut \hspace{.2cm} M,  \sigma \models  {A}   
  \  \ \Longrightarrow \ \   [ \ \  {\leadstoBoundedStarFin { \Mtwo\madd M}{\sigma}  {\sigma'} }\ \ \Longrightarrow\ \   M,  \sigma' \models  {A'}  
 \ \ \ \  ] \ \ \ \wedge$\\ 
 $\strut   \   \hspace{2.5cm}  [ \ \   {\leadstoBoundedStar  {\Mtwo\madd M}{\sigma}  {\sigma''} }\ \  \ \Longrightarrow\   \   M,  \sigma'' \models  {(\extThis \rightarrow {\as \sigma {A''}} )}\ \ \  ] \ \ \ \ \ ]$ 
\end{itemize} 
\end{definition}

{\begin{example}
Consider ${\satAssertQuadruple  {...}  {...}     {A_1} {\sigma_{4}} {A_2} {A_3} }$   
for Fig. \ref{fig:illusrPreserve}.
It  means  that  if  $\sigma_4$ satisfies $A_1$, 
then $\sigma_{23}$ will satisfy $A_2$, while $\sigma_6$-$\sigma_9$,\ $\sigma_{13}$-$\sigma_{17}$, and $\sigma_{20}$-$\sigma_{21}$ will satisfy $A_3$.
It does not imply anything about $\sigma_{24}$ because $\notLeadstoBoundedStar {...} {\sigma_4} {\sigma_{24}}$.
Similarly, if $\sigma_8$ satisfies $A_1$ 
then $\sigma_{14}$ will satisfy $A_2$, and  $\sigma_{8}$, $\sigma_{9}$, $\sigma_{13}$, $\sigma_{14}$  will satisfy $A_3$, while making no claims about  $\sigma_{10}$, $\sigma_{11}$, $\sigma_{12}$, nor  about $\sigma_{15}$ onwards.
 \end{example}}

{Now  we} 
 define    $\satisfies{M}{\TwoStatesN {\overline {x:C}} {A}}$ 
to mean that  if an external state $\sigma$ satisfies $A$, then all future external states reachable from $\sigma$—including nested  
 calls and returns but  {\emph{stopping} before}   returning from the active call in $\sigma$— also satisfy $A$. 
 And  $\satisfies{M} { \mprepostN {A_1}{p\ D}{m}{y}{D}{A_2} {A_3} }$ means that scoped execution of a call to $m$ from $D$   in  states satisfying $A_1$ leads to final states satisfying $A_2$ (if it terminates),
 and to intermediate external states satisfying $A_3$.

\begin{definition}  [Semantics of  Specifications]
We define $\satisfies{M}{{S}}$ by cases over $S$:  

\label{def:necessity-semantics}

\begin{enumerate}
 \item
\label{def:necessity-semantics-first}
 $\satisfies{M}{\TwoStatesN {\overline {x:C}} {A}} \ \  \ \triangleq   \ \ \ {\forall   \Mtwo,  \sigma.[\ {\satAssertQuadruple  \Mtwo  M    {\extThis \wedge \overline {x:C} \wedge A} \sigma {A} {A} }\ ].}$
  \item
   \label{def:necessity-semantics-second}
 $\satisfies{M} { \mprepostN {A_1}{p\ D}{m}{y}{D}{A_2} {A_3} }\  \ \ \   \triangleq  $ \\ 
$\strut \ \   \forall   \Mtwo,  \sigma, y_0,\overline y.[\ 
 \ \sigma.\prg{cont}\txteq {u:=y_0.m(y_1,..y_n)} \ \ \Longrightarrow \ \ $\\
$\strut  \ \ \   \ \ \ \ \ \ \ \ \   \ \ \  \ \ 
\ \ \ \ \ \ \ \ \ {\satAssertQuadruple  { \Mtwo} {M} { y_0\!:\!D, \overline {y\!:\!D}   \wedge   A[y_0/\prg{this}]}  {\ \sigma\ }   {A_2[u/res,y_0/\prg{this}] }{A_3 } } \  \ \  ]  $   
 \item
 $\satisfies{M}{S\, \wedge\, S'}$\ \ \  \ \ \  $\triangleq$  \  \ \  \   $\satisfies{M}{S}\ \wedge \ \satisfies{M}{S'}$
\end{enumerate}
\end{definition}

Fig. \ref{fig:illusrPreserve} in  \S \ref{sect:approach:scoped}  illustrated  the meaning of ${\TwoStatesN {\overline {x:C}} {A_0}}$. 
Moreover, $M_{good} \models S_2 \wedge S_3 \wedge S_4$, and  $M_{fine} \models S_2 \wedge S_3 \wedge S_4$,
 while $M_{bad} \not\models S_2$.
More examples  below and in  
\A ~\aref{E.1.1}{\ref{app:spec}}.

{
 \begin{example}[Scoped Invariants and Method Specs]
 \label{example:twostate}
 \label{example:mprepostl}
 $S_5$  says 
  that   non-null keys are immutable:
 \\
 \begin{tabular}{lcll}
$\strut \ \ \ \ \ \ \ \ S_5$ & $\triangleq$   & ${\TwoStatesN {\prg{a}:\prg{Account},\prg{k}:\prg{Key}}  {\prg{null}\neq \prg{k}=\prg{a.\password}}} $  \end{tabular}
\\
$S_9$    guarantees that \prg{set} preserves the protectedness of any account, and any key.  \\
   {\sprepost
		{\strut \ \ \ \ \ \ \ \ \ S_9} 
		{  a:\prg{Account}, a':\prg{Account}\wedge  \inside{a}\wedge  \inside{a'.\prg{key}} }
		{\prg{public Account}} {\prg{set}} {\prg{key'}:\prg{Key}}
		{   \inside{a}\wedge  \inside{a'.\prg{key}}  }
		{   \inside{a}\wedge  \inside{a'.\prg{key}} }
}

\noindent
Note that  $a$, $a'$ are disjoint from \prg{this} and the formal parameters of \prg{set}. 
In that sense, $a$ and $a'$ are universally quantified; a call of \prg{set} will preserve protectedness for \emph{ all} accounts and their keys. 

\end{example}

\subsection{\textbf{Well-formed Specifications}} We now define what it means for a specification to be well-formed:

\begin{definition}
 {\emph{Well-formedness}} of specifications,  $\vdash S$,  is   defined by cases on $S$:
\label{f:holistic-wff}

\begin{itemize}
\item
  $\strut \  \vdash {\TwoStatesN {\overline {x:C}} {A}} \ \ \ \triangleq\  \ \ \fv(A)\subseteq\{  \overline x \}\,\wedge\, {M \vdash \encaps {{\overline {x: C}} \wedge A}} $.
    \item
 $\strut \ \vdash {\mprepostN{\overline{x:C'} \wedge A}{p\ C}{m}{y}{C}{A'} {A''}}\ \ \ \triangleq$\\
$\strut \hspace{0.7cm}  [ \ \ {\small{   {\prg{res},\prg{this}\!\notin\! \overline{x}, \overline{y}}\ \wedge\  {\fv(A)\!\subseteq\! \overline x, \overline y, \prg{this}}\     \wedge\    \fv(A')\!\subseteq \! \overline{x}, \overline{y}, \prg{this}, \prg{res}\   \wedge\   \fv(A'')\!\subseteq\!  {\overline{x}} }} $\\
    $\strut \hspace{0.7cm} \ \ \  \wedge\  \Pos {A } \, \ \wedge \ \,  \Pos {A'} \, \ \wedge \  \,  M \vdash \encaps  {\overline {x: C'} \wedge A''}\ \ \  ]$ 
  \item  $\strut \   \vdash S\, \wedge \, S' \ \ \triangleq \ \  \vdash S\, \ \wedge\,  \ \vdash S'  $.  
\end{itemize}

\end{definition}

Def \ref{f:holistic-wff}'s  requirements about  free variables are relatively straightforward -- \cf  \S \aref{E.1.1}{\ref{wff:spec:free:more}}.

Def \ref{f:holistic-wff}'s  requirements about encapsulation are motivated by  Def.   \ref{def:necessity-semantics}. If  $\overline {x:C}\wedge A$ in the scoped invariant  were not encapsulated,  then it could be invalidated by some external code, and it would be impossible to ever satisfy Def.   \ref{def:necessity-semantics}(\ref{def:necessity-semantics-first}). 
Similarly, if a method specification's mid-condition, $A''$, could be invalidated by some external code, then it would be impossible to ever satisfy Def.   \ref{def:necessity-semantics}(\ref{def:necessity-semantics-second}).

Def \ref{f:holistic-wff}'s  requirements about stability are motivated by our Hoare logic rule for internal calls,   {\sc{[Call\_Int]}}, Fig \ref{f:internal:calls}. The requirement    $\Pos {A}$ for the method's precondition  gives that $A$ is preserved when an internal frame is pushed, \cf Lemma \ref{l:preserve:asrt}.
The requirement     $\Pos {A'}$ for the method's postcondition gives,  in the context of \strong satisfaction,  that $A'$ is preserved when an internal frame is popped, \cf Lemma \aref{G.42}{\ref{l:calls:return:deep}}. This is crucial for soundness of  {\sc{[Call\_Int]}}.

{In  \S \A ~\aref{E.1.2}{\ref{wff:spec:encaps:more}} we discuss the encapsulation requirements, and why these do not restrict expressiveness.}

\subsection{\textbf{Discussion}}  

\paragraphSD{Difference with Object and History Invariants.}  Our scoped invariants are similar to, but different from, history invariants  and object invariants.
We compare through an example:
%
 %
 
\begin{flushleft}
\begin{tabular}{@{}lr@{}}
  \begin{minipage}{.80\textwidth}
State $\sigma_a$ is making a call  transitioning to $\sigma_b$,    execution of $\sigma_b$'s continuation   eventually results in $\sigma_c$, and $\sigma_c$ returns  to $\sigma_d$. 
Suppose all four states are external, and the module guarantees $\TwoStatesN {\overline{x:Object}} {A}$, and $\sigma_a \not\models A$, but $\sigma_b \models A$. 
Scoped invariants 
ensure  $\sigma_c \models A$, but allow   $\sigma_d \not\models A$.\end{minipage}
& 
\begin{minipage}{.18\textwidth}
\resizebox{2cm}{!}{
\includegraphics[width=\linewidth]{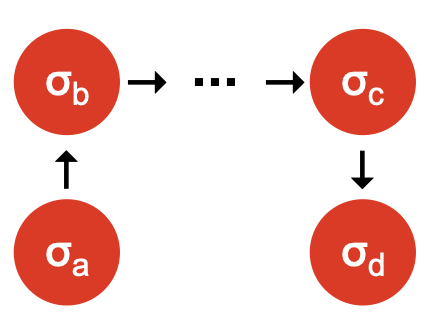}
} 
\end{minipage}
\end{tabular}
\end{flushleft}

{\emph{History  invariants}} \cite{liskov94behavioral,usinghistory,Cohen10}, instead, consider {all  future states including any method returns}, and therefore {would  require that   $\sigma_d \models A$. Thus, they are,}  for our purposes,  both
 \emph{unenforceable} and overly \emph{restrictive}.\ \  \emph{Unenforceable}: \ Take $A \txteq \inside{\prg{acc.key}}$,  assume  in $\sigma_a$ a path to an external object which has access to $\prg{acc.key}$, assume that path is unknown in $\sigma_b$: then, the transition from $\sigma_b$ to $\sigma_c$ cannot eliminate that path—hence, $\sigma_d \not\models \inside{\prg{acc.key}}$.\ \  \emph{Restrictive}:\ Take $A \txteq \inside{\prg{acc.key}}\wedge a.\prg{blnce}\geq b$; then,  requiring  $A$   to hold in all states from $\sigma_a$ until termination would prevent all future withdrawals from $a$, rendering the account useless.

{\emph{Object invariants}}  \cite{Meyer92,MeyerDBC92,BarDelFahLeiSch04,objInvars,MuellerPoetzsch-HeffterLeavens06}, on the other hand, expect 
invariants to hold in all (visible) states,
here would require,  \eg that $\sigma_a \models A$. Thus, they  are 
\emph{inapplicable} for us: They would require, \eg, that for all 
 $\prg{acc}$, in all (visible) states, $\inside{\prg{acc.key}}$, and thus prevent \emph{any} withdrawals from \emph{any} account in \emph{any} state.

\paragraphSD{Difference between Postconditions and Invariants.}
In all  method specification examples so far, the post-condition and   mid-condition were identical.
However, this need not be so. 
Assume a method \prg{tempLeak} defined in \prg{Account}, with an external argument \prg{extArg}, and  method body:
\\
$\strut \hspace{1cm} \prg{extArg.m(this.key); this.key:=new Key}$} 
\\
Then, the assertion   $ \inside{\prg{this.key}}$  is  invalidated by the external call \prg{extArg.m(this.key)}, but is  established by \prg{this.key:=new Key}.
Therefore, $ \inside{\prg{this.key}}$  is a valid post-condition but not a valid   mid-condition.
The specification of \prg{tempLeak} could be\\
$
{\sprepost
		{\strut \ \ \ \ \ \ \ \ \ S_{\prg{tempLeak}}} 
		{  \ \prg{true}\  }
		{\prg{public Account}} {\prg{tempLeak}} {\prg{extArg}:\prg{external}}
		{  \  \inside{\prg{this.key} }\  }
		{  \  \prg{true}\  }
}
$

\footnoteSD{First bullet: This means that we require all objects to satisfy even if not locally relevant. Second Bullet: notice that we are asking for globally relevant objects}  
\footnoteSD{{TODO: Make an example that demonstrates the difference if in the second bullet we had asked for locally relevant objects ${\overline o}$.}}
\footnoteSD{TODO: explain why we did not require the stronger $\leadstoFin{M_{ext}\!\circ \!M}{\sigma}{\sigma'}$ rather than $\leadstoBoundedStar {M_{ext}\!\circ \!M}{\sigma}  {\sigma'}$.}

\newcommand{\paragraphSDD}[1]{\vspace{.02cm}{\textit{#1}}}
 
 \paragraphSD{Expressiveness} 
In  \S \aref{E.2}{\ref{app:expressivity}} we argue the expressiveness of our approach  through a sequence of capability patterns studied in related approaches from the literature  
 \cite{OOPSLA22,dd,VerX,irisWasm23,ddd} and written in our specification language.
These approaches 
 are based on temporal logics \cite{VerX,OOPSLA22}, or on extensions of Coq/Iris \cite{dd,irisWasm23,ddd}, and do not
offer Hoare logic  rules  for external calls.

\section{Hoare Logic} 
\label{sect:proofSystem}
 We   develop  an inference system for adherence to our specifications.
 We distinguish three phases:
 
\vspace{.05cm}
\textit{\textbf{First Phase:}} We assume  an underlying Hoare logic, ${M \vdash_{ul}  \{A\} {\ stmt\ } \{A'\} }$, and extend it to a logic  
  ${\hproves{M}  {A} {\ stmt\ }{A'} }$  with 
the expected meaning, \ie 
(*) execution of statement $stmt$ in a state satisfying 
$A$ will lead to a state satisfying  
$A'$.
These triples only apply to   $stmt$'s  that  do not contain method calls  (even internal) -- this is so, because method calls may make further calls to   external methods.
In our extension we  introduce   judgements   which 
talk about protection.

\vspace{.05cm}

\textit{\textbf{Second Phase:}} We develop a logic of quadruples ${\hprovesN{M}  {A} {\ stmt\ }{A'} {A''}}$. These promise  (*) and  
in addition,   that (**) any intermediate external states reachable during execution of that $stmt$
 satisfy the \emph{mid-condition}  $A''$.  
 We incorporate the triples from the first phase,       
introduce  mid-conditions, give the usual substructural rules, and deal with method calls. 
For internal   calls we use the methods' specs. 
For external   calls, we   use 
 the module's invariants. 
 
 \vspace{.05cm}
 
\textit{\textbf{Third Phase:} } We prove adherence  to  our specifications. 
For method specifications we require that the body maps the precondition to the postcondition and preserves the method's  mid-condition. 
For module invariants we require that they  are preserved by all public methods of the module.

 \vspace{.1cm}

\noindent
\textit{Preliminaries:}  
 The judgement    ${\promises M S}$ expresses that $S$ is part of $M$'s specification.  
In particular, it allows   \emph{safe  renamings}. 
These renamings are   a convenience, akin to the Barendregt convention, and  allow simpler Hoare rules  -- \cf Sect. \ref{sect:wf},
Def. \aref{F.1}{\ref{d:promises}}, and Ex. \aref{F.2}{\ref{e:rename}}. 
We also require an underlying Hoare logic with judgements $M \vdash_{ul} \{ A \} stmt \{ A' \}$ 
-- \cf Ax. \aref{F.3}{\ref{ax:ul}}.

\subsection{First Phase: Triples}
\label{s:hoare:first}

In  Fig. \ref{f:underly}  we introduce our triples, of the form ${   \hproves{M}  {A} {stmt}  {A'}}$. 
These promise, as expected, that any execution of $stmt$ in a state   satisfying $A$ leads to a state satisfying $A'$.

{\small{
\begin{figure}[tht]
$
\begin{array}{c}
\begin{array}{lcl}
\inferruleSD{\hspace{0.5cm} [\sc{embed\_ul}]}
	{stmt  \ \mbox{contains no method call}   
        \\
        \Stable{A}  \ \  \Stable{A'}  \ \ \ \ \  \ \ { M \vdash_{ul} \{ \ A\ \} {\ stmt\ }\{\ A'\ \}}
	 	 }
	{\hproves{M}  {A} {\ stmt\ }{A'} } 
& &
\inferruleSD{\hspace{0.5cm} [\sc{prot-new}]}
	{
	\sdN{u \txtneq   x	 }
	}
	{	 
 	\hproves{M} 
 						{ \sdN{true} }  
 					{\  u = \prg{new}\ C \ }
 						  {  \inside{u}\  \wedge  \protectedFrom{u}{x}} 
 	} 
\\ \\
	{\inferruleSD{\hspace{0.5cm} [\sc{prot-1}]}
	{   stmt \ \mbox{ is free of  meth. cals, or assignment to $z$}
	\\
	{\hproves{M}  {{A \wedge} \re\!=\!z} {\ stmt\ }{ \re\!=\!z} }
	}
	{\hproves{M} 
						{\  {A \wedge} \inside{\re}  \ }
						{\  stmt \ }
						{\  \inside{\re}\ }
	}
}
& &
      {\inferruleSD{\hspace{0.5cm} [\sc{prot-2}]}
	{ stmt \mbox{ is either $x:=y$ or $x:=y.f$}\ \ \ \ \    z,z' \txtneq x 
		\\
	{\hproves{M} 
						{  {A \wedge} z\!=\!\re \wedge z'\!\!=\!\! \re' }
						{\ stmt }
						{   z\!=\! \re \wedge z'\!\!=\!\! \re' }
	}
	}
	{
	\hproves{M} 
						{\ {A \wedge} \protectedFrom{\re}{\re'}\ }
						{\ stmt }
						{\ \protectedFrom{\re}{\re'}\ }
	}
}
\\ \\
      {\inferruleSD{\hspace{0.5cm} [\sc{prot-3}]}
	{\sdN{x \txtneq   z	 } }
	{\hproves{M} 
						{\ \protectedFrom{y.f}{z}\ }
						{\ x =y.f\ }
						{\ \protectedFrom{x}{z}\ }
	}
}
& &
        {\inferruleSD{\hspace{0.5cm} [\sc{prot-4}]}
	{ \ A\txteq\protectedFrom{x}{z}  \wedge   \protectedFrom{x}{y'} \ \ \ \  \ A'\txteq\protectedFrom{x}{z}\ }
	{\hproves{M} 
						{  \ A \  }
						{\ y.f=y'\ }
						{  \ A'\   }
	}
}
\end{array}
\end{array}
 $
\caption{Embedding the Underlying Hoare Logic, and Protection}
\label{f:protection}
\label{f:underly}
\end{figure}
}}

With rule {\sc{embed\_{ul}}} in Fig. \ref{f:underly},  any triple $ \{ A \} stmt \{ A' \} $  whose statement does not contain a method call, and which 
can be proven in the underlying Hoare logic, can also be proven in our logic.  
In \textsc{Prot-1}, we see that  protection of an object $o$ is preserved by internal code which does not call any methods: namely any heap modifications will
ony affect internal objects, and this will not expose new, unmitigated external access to $o$.
 \textsc{Prot-2}, \textsc{Prot-3} and \textsc{Prot-4} describe the preservation of relative protection.
Proofs of soundness for these rules can be found in App.\aref{G.5}{\ref{s:app:protect:lemmas}}.

\subsection{Second Phase: Quadruples}
\label{s:hoare:second}

\subsubsection{Mid-conditions and Substructural Rules}
We now introduce  quadruple rules.  
Rule {\sc{mid}} embeds  triples  ${\hproves{M}  {A} {\ s\ }{A'} }$  into quadruples ${\hprovesN{M}  {A} {\ s\ }{A'} {A''} }$.
This is sound, because $stmt$ is guaranteed not to contain method calls (by lemma \aref{F.5}{ \ref{l:no:meth:calls}}).

\begin{center}
$
\begin{array}{c}
\inferruleSD{[\sc{Mid}]}
	{\hproves{M}  {A} {\ stmt\ }{A'} }
	{\hprovesN{M}  {A} {\ stmt\ }{A'} {A''} }
  \end{array}
 $
 \end{center}
 
Substructural quadruple rules appear in  Fig.\aref{15}{ \ref{f:substructural:app}}, and are as expected: 
Rules   {\sc{sequ}} and {\sc{consequ}} are  the usual rules for statement sequences and consequence, adapted to quadruples.
Rule {\sc{combine}} combines two quadruples for the same statement into one.
Rule  {\sc{Absurd}} allows us to deduce anything out of \prg{false} precondition, and  {\sc{Cases}} allows for case analysis.
These  rules  apply to \emph{any} statements -- even those containing method calls.

\subsubsection{Adaptation}

  \label{s:viewAndProtect}
 
In the outline of the Hoare proof of the external call in  \S  \ref{sec:howThird},  we saw that an assertion of the form $\protectedFrom x {\overline {y}}$ at the call site may imply $\inside{x}$ at entry to the call.
More generally,  the $\pushSymbolInText$ operator  adapts an assertion from the view of the callee to that of the caller,
 and is used in the Hoare logic for method calls. It is defined below.

\begin{definition}
\label{def:push}
[The $\pushSymbolInText$  operator]  

$
\begin{array}{c}
\begin{array}{l}
\begin{array}{rclcrcl}
  \PushAS y {\inside \re} & \triangleq &  \protectedFrom \re {\overline {y} }
  & \ \ \  \ &
  \PushAS y   {(A_1  \wedge  A_2)} & \triangleq &  (\PushAS y  { A_1})  \wedge  ( \PushAS y  {A_2} )  
\\ 
 \PushAS y {(\protectedFrom \re {\overline {u}})} &  \triangleq& \protectedFrom \re {\overline {u}} 
  & &
 \PushAS y  {(\forall x:C.A)} & \triangleq & \forall x:C.({\PushAS y A} )  
  \\  
  \PushAS y  {(\external \re)} &  \triangleq & {\external \re}  
  & & 
  \PushAS y  {(\neg A)} &  \triangleq & \neg( {\PushAS y A} )  
    \\
     \PushAS y  {\re} &  \triangleq&   \re 
    & &
    \PushAS y  {(\re:C)} &  \triangleq&   \re:C 
 \end{array}
\end{array}
\end{array}
$
\label{f:Push}
\end{definition}

Only the first equation in  Def.  \ref{def:push}  is interesting: for $\re$ to be {protected at  a} callee with arguments $\overline y$, it should be protected from   
these arguments -- thus
  $\PushAS y {\inside \re} = \protectedFrom {\re} {\overline {y}}$. 
The notation $\protectedFrom {\re} {\overline {y}}$   stands for $\protectedFrom \re {y_0}\  \wedge\  ...  \wedge \protectedFrom \re {y_n}$, assuming that $\overline y$=${y_0, ... y_n}$.

Lemma \ref{lemma:push:ass:state}  states that   
indeed, $\pushSymbolInText$ adapts assertions from the callee to the caller, and is the counterpart to the  
$\pushSymbol$.
{In particular:\ \ 
 \ (\ref{l:push:stbl}):\    $\pushSymbolInText$ turns an assertion into a stable assertion.
\ (\ref{lemma:push:ass:state:one}):\ If the caller,   $\sigma$,  satisfies  $\PushSLong  {Rng(\phi)} {A}$, then  the callee,   $\PushSLong {\phi} {\sigma}$, satisfies $A$.
\ \ (\ref{lemma:push:ass:state:two}): \ When returning from external states,  an assertion implies its adapted version.
 \ \ (\ref{lemma:push:ass:state:three}): \ When calling from external states, an assertion implies its adapted version. 
}

{
\begin{lemma} 
\label{lemma:push:ass:state}
For  states  $\sigma$, assertions $A$, 
so that $Stb^+(A)$ and $\fv(A)=\emptyset$,  
frame $\phi$,  variables $y_0$, $\overline y$: 

\begin{enumerate}
 \item
\label{l:push:stbl}
$\Stable{\,  \PushASLong {(y_0,\overline y)} A\, }$
\item
 \label{lemma:push:ass:state:one}
$M, \sigma \models \PushASLong  {Rng(\phi)} {A}\ \  \ \ \ \ \  \ \ \    \Longrightarrow  \ \ \ \ M,  \PushSLong {\phi} {\sigma}   \models A$
\item
 \label{lemma:push:ass:state:two}
$M,  \PushSLong {\phi} {\sigma}   \models  A  \wedge \extThis    \ \  \ \ \  \  \Longrightarrow  \ \ \ \ M, \sigma \models \PushASLong  {Rng(\phi)} {A}$
 \item
 \label{lemma:push:ass:state:three}
$M, \sigma  \models  A  \wedge \extThis  \ \wedge \ M\cdot\Mtwo \models \PushSLong {\phi} {\sigma}   \ \  \ \ \  \  \Longrightarrow  \ \ \ \ M, \PushSLong {\phi} {\sigma} \models \PushASLong  {Rng(\phi)} {A}$
\end{enumerate}
\end{lemma}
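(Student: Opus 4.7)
The plan is to prove all four parts simultaneously by structural induction on $A$, exploiting the hypotheses that $A$ is $Stb^+$ and closed (so all variables in $A$ have been replaced by addresses). Part (1) follows directly from the definition of $\pushSymbolInText$: the operator commutes with all connectives and the quantifier, and the only clause that introduces a change—$\inside{\re}$ being rewritten to $\protectedFrom{\re}{\overline{y}}$—eliminates an $\inside{\_}$ rather than introducing one. For parts (2)--(4), the purely heap-dependent cases ($\re$, $\re:C$, $\external{\re}$, and $\protectedFrom{\re}{\re'}$) are immediate: on these forms $\pushSymbolInText$ is the identity, and the relevant satisfaction conditions depend only on the heap and on the addresses occurring in $A$, neither of which changes under push/pop. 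The compositional cases ($\wedge$, $\neg$, $\forall x\!:\!C$) follow from the induction hypothesis, where the $Stb^+$ restriction ensures that the $\inside{\_}$ subformulas only appear under an even number of negations so the induction hypothesis can be applied in the correct direction.

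The substantive case is $A \equiv \inside{\re}$. For (2), assume $M,\sigma \models \protectedFrom{\re}{y}$ for every $y \in Rng(\phi)$, and let $\sigma' = \PushSLong{\phi}{\sigma}$. Clause (a) of $\inside{\re}$ in $\sigma'$ requires that, if the callee's receiver is external, no variable of $\sigma'$ equals $\re$; but every such variable maps to some $y \in Rng(\phi)$, and $\protectedFrom{\re}{y}$ gives $\re \neq y$. Clause (b) requires that no external object locally reachable in $\sigma'$ has a field pointing to $\re$; by the structure of pushed frames, the locally reachable objects of $\sigma'$ are exactly the objects reachable in $\chi$ from the values in $Rng(\phi)$, so the conclusion follows directly from the hypotheses. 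Part (3) is the converse: given $\inside{\re}$ in $\sigma'$ together with $\extThis$, clause (a) forces every $y \in Rng(\phi)$ to differ from $\re$, and clause (b) transfers to the claim that no external object reachable from $y$ has a field pointing to $\re$, which is exactly what is needed for $\protectedFrom{\re}{y}$ in $\sigma$.

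Part (4) is the most delicate and is the one that requires well-formedness. Given $M,\sigma \models \inside{\re} \wedge \extThis$ and $M\madd\Mtwo \models \PushSLong{\phi}{\sigma}$, we must show $\protectedFrom{\re}{y}$ for each $y \in Rng(\phi)$. Since $\protectedFrom$ depends only on the heap, we may read the claim in $\sigma$. By Def.~\ref{def:wf:state}, every $y \in Rng(\phi)$ is the value of some variable $z \in dom(\sigma)$; clause (a) of $\inside{\re}$ in the (external) caller then rules out $z = \re$, giving $y \neq \re$. For clause (b), any object reachable from $y = \interpret{\sigma}{z}$ lies in $\LRelevantO{\sigma}$, so clause (b) of $\inside{\re}$ in $\sigma$ applies.

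The main obstacle is bookkeeping rather than conceptual: precisely characterizing $\LRelevantO{\PushSLong{\phi}{\sigma}}$ as the reachability closure of $Rng(\phi)$, and invoking Def.~\ref{def:wf:state} to tie $Rng(\phi)$ back to variables in $\sigma$ for part (4). Both are essentially consequences of Lemma~\ref{lemma:relevant} combined with the well-formedness equation $\LRelevantO{\RestictTo{\sigma}{k}} = \bigcup_{z \in Formals} \Relevant{\interpret{\RestictTo{\sigma}{k}}{z}}{\sigma}$, so once these reachability identities are established in a preliminary lemma, the four inductive arguments above go through uniformly.
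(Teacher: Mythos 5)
Your proposal is correct and follows essentially the same route as the paper: part (1) by inspection of the adaptation operator, and parts (2)--(4) by structural induction on $A$ where the only substantive case is $\inside{\re}$, handled by relating $\protectedFrom{\re}{Rng(\phi)}$ in the caller to $\inside{\re}$ in the callee (using that the heap is unchanged and that $\LRelevantO{\PushSLong{\phi}{\sigma}}$ is the reachability closure of $Rng(\phi)$), with well-formedness of the pushed state supplying $Rng(\phi)\subseteq\LRelevantO{\sigma}$ for part (4). The paper merely factors the three address-level $\inside{\alpha}$ arguments into an explicit auxiliary lemma before running the induction, which is a presentational rather than a substantive difference.
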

}

{
Proofs 
 in \A\ \aref{F.5}{\ref{appendix:adaptation}}. Example \ref{push:does:not:imply}
 demonstrates the need for    \prg{extl}  
  requirement in  
  (\ref{lemma:push:ass:state:two}). 
}

\begin{example}[When returning from   internal states, $A$ does not imply $\PushASLong {Rng(\phi)} {A}$]  
\label{push:does:not:imply}   
In  Fig. \ref{fig:ProtectedBoth} we have
 $\sigma_2= \PushSLong {\phi_2} {\sigma_1}$, and    $\sigma_2 \models \inside{o_1}$,  and $o_1\!\in \!Rng(\phi_2)$,
 but $\sigma_1 \not\models \protectedFrom {o_1} {o_1}$. 
%
%
\end{example}

\subsubsection{Calls}
\label{s:calls}
is described in Fig. \ref{f:internal:calls}. {\sc{Call\_Int}}  
 for internal methods, whether public or private; \ 
and {\sc{Call\_Ext\_Adapt}} and {\sc{Call\_Ext}\_Adapt\_Strong} for  external methods.

\begin{figure}[htb]
{\small{
$\begin{array}{c}
 \inferruleSD{\hspace{4.7cm} [\sc{Call\_Int}]}
	{
	   	\begin{array}{c}
		\promises  M {\mprepostN{A_1}{p\ C}{m}{x}{C}{A_2} {A_3}}  \\
		{A_1' = A_1[y_0,\overline y/\prg{this}, \overline x]  \ \ \ \ \ \ \ \ \  A_2' = A_2[y_0,\overline y,u/\prg{this}, \overline x,\prg{res}]}  
		          	\end{array}
		}
	{  \hprovesN {M} 
						{ \  y_0:C,\overline {y:C} \wedge  {A_1'}\ }  
						 { \ u:=y_0.m(y_1,.. y_n)\    }
					         { \ {A_2'} \ } 
						{  \ A_3 \ }  
}
%
\\
 \\ 
 \inferruleSD{\hspace{4.7cm} [\sc{Call\_Ext}\_Adapt]}
 	{ 
	 \promises M   {\TwoStatesN {\overline {x:C}} {A}} 
        }
	{   \hprovesN{M} 
						{ \    { \external{y_0}} \,     \wedge \,  \overline{x:C}\  \wedge\ {\PushASLong {{(y_0,\overline {y})}}  {A}}  \ } 
						{ \ u:=y_0.m(y_1,.. y_n)\    }
						{ \   {\PushASLong {{(y_0,\overline {y})}}  A}  \ }
						{\  A \   }
	}	
\\
 \\ 
{
 \inferruleSD{\hspace{4.7cm} [\sc{Call\_Ext}\_Adapt\_Strong]}
 	{ 
	 \promises M   {\TwoStatesN {\overline {x:C}} {A}} 
        }
	{   \hprovesN{M} 
						{ \    { \external{y_0}} \,     \wedge \,  \overline{x:C}\ \wedge  A   \wedge\ {\PushASLong {{(y_0,\overline {y})}}  {A} }\  }   
						{ \ u:=y_0.m(y_1,.. y_n)\    }
						{ \   A \wedge {\PushASLong {{(y_0,\overline {y})}}  {A} } \  }  
						{\  A \   }	
}
}

\end{array}
$
}}
\caption{Hoare Quadruples for Internal and External Calls -- here $\overline y$ stands for $y_1, ... y_n$}
\label{f:internal:calls}
\label{f:external:calls}
\label{f:calls}
\end{figure}

For  internal calls, we  start, as usual,  by looking up the method's specification, and 
{substituting the formal  by the actual parameters parameters ($\prg{this}, \overline{x}$  by $y_0,\overline{y}$).} 
  {\sc{Call\_Int}} is as expected:   we  require the precondition, and guarantee the postcondition and mid-condition.
 {\sc{Call\_Int}} 
 {is} applicable whether the method is public or private.

For external calls, 
we consider the module's invariants. 
If the module promises to preserve $A$, \ie if  $\promises M   {\TwoStatesN {\overline {x:D}} {A}}$, {and  if its adapted version, $ \PushASLong {(y_0, \overline y)}{A}$},  holds before the call, then it also holds after  the call \ ({\sc{Call\_Ext\_Adapt}}).
{If, in addition, the un-adapted version also holds before the call, then it also holds after the call  ({\sc{Call\_Ext\_Adapt\_Strong}}).}


\vspace{.1cm}

Notice that   internal calls, {\sc{Call\_Int}},   require   the \emph{un-adapted} 
 method  precondition (\ie $A_1'$), while   external calls, both {\sc{Call\_Ext\_Adapt}} and {\sc{Call\_Ext\_Adapt\_Strong}},  require the 
 \emph{adapted} 
 invariant (\ie $ \PushASLong {(y_0, \overline y)}{A}$). 
{This is sound, because  internal callees preserve  
 $\Pos{\_}$-assertions} 
 -- \cf Lemma \ref{l:preserve:asrt}. 
On the other hand, 
 {external callees do not necessarily preserve  $\Pos{\_}$-assertions} -- \cf Ex. \ref{push:does:not:preserve}. 
Therefore, in order to guarantee that $A$ holds upon entry to the callee, we need to know that $ \PushASLong {(y_0, \overline y)}{A}$ held at the caller site -- \cf Lemma \ref{lemma:push:ass:state}.


Remember that {popping frames does not necessarily preserve}
 $\Pos{\_}$  assertions 
-- \cf Ex. \ref{ex:pop:does:not:preserve}.
Nevertheless, {\sc{Call\_Int}} guarantees the unadapted version, $A$,  upon return from the call. 
This is sound, because of our 
 \emph{\strong satisfaction} of assertions -- more in Sect.  \ref{s:scoped:valid}.

%
%
%

\paragraphsd{
Polymorphic Calls.} Our rules do  not \emph{directly} address scenaria where  the receiver may be
   internal or external, andthe choice about this is made at runtime. 
However, such scenaria  are \emph{indirectly} supported, through  our rules of consequence and  case-split.
More   in  \A\ \aref{H.6}{\ref{app:polymorphic}}.

\begin{example}[Proving external calls]
We continue our discussion from \S \ref{sec:howThird} on how to establish the Hoare triple    \textbf{(1)} :

 \vspace{.05cm}
  \begin{minipage}{.05\textwidth}
   \textbf{(1?)}\ \ 
\end{minipage}
\hfill
\begin{minipage}{.95\textwidth}
\begin{flushleft}
$\{\  \   \external{\prg{buyer}} \ \wedge\ 
   {\protectedFrom {\prg{this.\myAccount.key}}  {\prg{buyer} } }
 \ \wedge\ \prg{this.\myAccount.\balance}= b  \ \  \}$\\
$\ \ \ \ \ \ \ \ \ \ \ \ {\ \prg{buyer.pay(this.accnt,price)}   \ } $\\
$  \{\  \ \  {\prg{this.\myAccount.\balance}} \geq  b \  \  \} \ \ ||\ \  \{\ \inside{\prg{a.\password}}\wedge  \prg{a.\balance}\!\geq\!{\prg{b}}   \ \}  $ 
\end{flushleft}
\end{minipage}
 
\vspace{.03cm}
\noindent
We use $S_3$, which says that $\TwoStatesN{ \prg{a}:\prg{Account},\prg{b}:\prg{int} } {\inside{\prg{a.key}} \wedge \prg{a.\balance} \geq \prg{b} }$. 
We can apply rule {\sc{Call\_Ext\_Adapt}}, by taking  $y_0 \triangleq \prg{buyer}$,  and $\overline {x : D}\triangleq \prg{a}:\prg{Account},\prg{b}:\prg{int}$, 
and $A \triangleq  \inside{\prg{a.\pwd}}\wedge \prg{a.\balance}\geq \prg{b}$, \ \ 
and $m \triangleq \prg{pay}$,\ and $\overline {y} \triangleq \prg{this.accnt},\prg{price}$,
and provided that we can establish that\\
\strut \ \ \   \textbf{(2?)}  $ {\small{\strut \ \ \ \protectedFrom {\prg{this.\myAccount.key}} {(\prg{buyer},\prg{this.\myAccount},\prg{price})}}}$\\
holds. Using type information, we get that all fields transitively accessible from \prg{this.\myAccount.key}, or \prg{price} are internal or scalar. This implies\\
\strut \ \ \   \textbf{(3)}  $ {\small{\strut \ \ \ \protectedFrom {\prg{this.\myAccount.key}} {\prg{this.\myAccount}} \wedge  \protectedFrom {\prg{this.\myAccount.key}} {\prg{price}}}} $\\
Using then    Def. \ref{def:push},  we can indeed establish that\\
\strut \ \ \   \textbf{(4)} $ {\small{\strut \ \ \ \protectedFrom {\prg{this.\myAccount.key}} {(\prg{buyer},\prg{this.\myAccount},\prg{price})}  \ = \   \protectedFrom {\prg{this.\myAccount.key}} {\prg{buyer}}}}$\
Then, by application of the rule of consequence, \textbf{(4)}, and the rule {\sc{Call\_Ext\_Adapt}}, we can establish \textbf{(1)}.\ \ \ 
More details in \S  \aref{H.3}{\ref{l:buy:sat}}.
\end{example}

\subsection{Third Phase: Adherence to Module Specifications}
\label{sect:wf}

In Fig. \ref{f:wf} we  define the judgment $\vdash M$, which says that  
$M$ has been proven to be well formed.

\begin{figure}[thb]
$
\begin{array}{l}
\begin{array}{lcl}
\inferruleSDNarrow 
{~ \strut  {\sc{WellFrm\_Mod}}}
{  \vdash \SpecOf {M}
  \hspace{1.2cm}  M \vdash \SpecOf {M}
}
{
\vdash M  
}
& \hspace{0.7cm} &
\inferruleSDNarrow 
{~ \strut   {\sc{Comb\_Spec}}}
{  
M \vdash S_1 \hspace{1.2cm}  M \vdash S_2
}
{
M \vdash S_1 \wedge S_2
}
\end{array}
\\
\inferruleSD 
{~ \strut \hspace{6.5cm} {\sc{method}}}
{  
 \prg{mBody}(m,D,M)=p \ (\overline{y:D})\{\  stmt \ \}       
    \\
  {\hprovesN{M} { \ \prg{this}:\prg{D}, \overline{y:D}\, {\wedge\, A_1}\  } 
  {\ stmt\ } {\ A_2\, \sdN{\wedge\, \PushASLong {\prg{res}} {A_2}} \ }   {A_3} } 
}
{
M \vdash {\mprepostN {A_1}{p\ D}{m}{y}{D}{A_2} {A_3} }
}
\\
\inferruleSD 
{~ \strut \hspace{6.5cm} {\sc{invariant}}}
{
\begin{array}{l}
\forall  D,  m. \ \ \  \prg{mBody}(m,D,M)=\prg{public} \ (\overline{y:D})\{\  stmt \ \}      \ \ \Longrightarrow  
  \\
 \begin{array}{l}
   \hprovesN {M}  
{ \ \prg{this}:\prg{D}, \overline{y:D},\,   \overline{x:C} \, \wedge\,  A \, \sdN{\wedge\, \PushASLong {(\prg{this},\overline y)} {A}}\, }  
  	{\ stmt\ }   
	 {\  A\, \wedge\, \PushASLong {\prg{res}} {A} \ }  
{\ A \ }  
 \end{array}
 \end{array}
}
{
M \vdash \TwoStatesN{ \overline{x:C}} {A}
}
\end{array}
$
\caption{Methods' and Modules' Adherence to Specification}
\label{f:wf}
\end{figure}

\footnoteSD{Julian's reflections on the invariant and method call rules:

 I think it somewhat makes sense for the Invariant rule to be asymmetric since
calls to internal code are asymmetric from the perspective of protection. There is no change in specific protection when entering 
internal code, but there may be when returning to external code from internal code. 

I'd have to think about it more, but I don't believe there is any decrease in general protection when entering internal code,
but there may be when exiting internal code. The adaptation in the post-condition is there to address the potential
for protection to be lost on return.

Magic wand adaptation (i.e. A -* x) expresses what must hold before crossing the boundary from internal to external code
(i.e. before an internal method return or call to an external method) for the adapted assertion to still be satisfied after crossing 
that boundary. I think there are a few things that are useful to consider here:
it is good to note that if you remove protection from the assertion language, the magic wand would be the identity
(i.e. A -* x = A) since only protection is modified when passing values between objects via method calls
more importantly, the way we define protection is based around fields and external/internal objects, and passing a value
to an internal object (via a method call) modifies neither a field nor grants any external object access to any other object
According to my understanding of the intended semantics of the magic wand, entering internal code from frame should
not affect any sort of protection.
My impression is that part of the intent of the magic wand is to be able to express ``what needs to hold before a call
so that <x>  holds within the new frame (whether that frame is internal or external)?''. I still need to think about how that
fits in with my above description of magic wand adaptation.
}

{\sc{WellFrm\_Mod}} and {\sc{Comb\_Spec}} say that $M$ is well-formed  if its specification is well-formed (according to Def. \ref{f:holistic-wff}), and if $M$ satisfies all conjuncts of the specification.
{\sc{Method}} says that  
a module satisfies a method specification if the 
body satisfies the corresponding pre-, post- and \midcond. 
 In  the postcondition we also ask that $\PushASLong {\prg{res}}  {A}$, so that \prg{res} does not leak any of the values that $A$ promises will be protected. 
{\sc{Invariant}} says that  a module satisfies {a} 
specification $\TwoStatesN{ \overline{x:C}} {A}$,  if the method body of each public method
 has $A$ as its  pre-, post- and \midcond. 
{Moreover, the precondition is strengthened by $\PushASLong {(\prg{this},\overline y)} {A}$ -- this is sound because
 the caller is external, and by Lemma   \ref{lemma:push:ass:state}, part (\ref{lemma:push:ass:state:three}).}

\paragraphsd{Barendregt} In  {\sc{method}} we \sdN{implicitly} require   the free variables  in a method's precondition  not to overlap with variables in its body, unless they are the receiver or one of the parameters ($\sdN{\vs(stmt) \cap \fv(A_1) \subseteq   \{ \prg{this}, y_1, ... y_n \} }$).  And in {\sc{invariant}} we require   the free variables in $A$ (which are a subset of  $\overline x$) not to overlap with the variable  in $stmt$ ($ \sdN{ \vs(stmt)\,  \cap\, \overline x\, =\, \emptyset}$).
This can easily be achieved through renamings, \cf Def. \aref{H.3}{\ref{d:promises}}.

\newcommand{\sdsp}{\strut \ \ \ \ \ }

\begin{example}[Proving a public method] Consider the proof that \prg{Account::set} from $M_{fine}$ satisfies $S_2$. 
Applying rule {\sc{invariant}}, we need to establish:\\
\label{e:public}
{\small{ \vspace{.05cm}
  \begin{minipage}{.05\textwidth}
  \textbf{(5?)}\ \ 
\end{minipage}
\hfill
\begin{minipage}{.95\textwidth}
\begin{flushleft}
$\{ \  \   ...\  \prg{a}:\prg{Account}\ \wedge\,  {\inside{\prg{a.key}}} \, \wedge \, \protectedFrom {\prg{a.key}} { (\, \prg{key'},\prg{key''}\, )} \  \} \ $\\
$\ \ \ \ \ \ \ \ \ \ \  \prg{body\_of\_set\_in\_Account\_in}\_ M_{fine}\   $\\
$  \{\  \    {\inside{\prg{a.key}}}\ \wedge\ {\PushASLong {\prg{res}\ } {\ \inside{\prg{a.key}}}} \ \   \} \ \ \  || \ \ \ 
	\{\ \    {\inside{\prg{a.key}}}\ \  \}  $ 
\end{flushleft}
\end{minipage}
}}
\vspace{.03cm}
\noindent
Given the conditional statement in \prg{set}, and with the obvious treatment of conditionals (\cf Fig. \aref{15}{\ref{f:substructural:app}}), among other things, we  need to prove for the \prg{true}-branch that:\\
\vspace{.01cm}
{\small{  \begin{minipage}{.05\textwidth}
  \textbf{(6?)}\ \ 
\end{minipage}
\hfill
\begin{minipage}{.95\textwidth}
\begin{flushleft}
$\{ \  \   ... \   {\inside{\prg{a.key}}} \, \wedge \, \protectedFrom {\prg{a.key}} { (\, \prg{key'},\prg{key''}\, )}  \wedge  \,  \prg{this.key}=\prg{key'}\  \} \ $\\
$\ \ \ \ \ \ \ \ \ \ \  \prg{this.key := key''}\   $\\
$  \{\  \  {\inside{\prg{a.key}}}\   \   \} \ \ \  || \ \ \ \{  \ \  {\inside{\prg{a.key}}}\  \  \}  $ 
\end{flushleft}
\end{minipage}
}}
\\
 \vspace{.03cm}
\noindent
We can apply case-split  (\cf Fig. \aref{15}{\ref{f:substructural:app}}) on whether \prg{this}=\prg{a}, and thus a proof of \textbf{(7?)} and \textbf{(8?)}, would give us a proof of \textbf{(6?)}:\\
 \vspace{.03cm}
{\small{  \begin{minipage}{.05\textwidth}
   \textbf{(7?)}\ \ 
\end{minipage}
\hfill
\begin{minipage}{.95\textwidth}
\begin{flushleft}
$\{ \  \   ... \   {\inside{\prg{a.key}}} \, \wedge \, \protectedFrom {\prg{a.key}} { (\, \prg{key'},\prg{key''}\, )} \wedge  \,  \prg{this.key}\!=\!\prg{key'}\ \wedge\ \prg{this}\!=\!\prg{a} \ \  \} \ $\\
$\ \ \ \ \ \ \ \ \ \ \   \prg{this.key := key''}\    $\\
$  \{\  \      {\inside{\prg{a.key}}} \   \   \} \ \ \  || \ \ \ \{  \ \  {\inside{\prg{a.key}}} \ \  \}  $ 
\end{flushleft}
\end{minipage}
}}
\\
and also
\\
 \vspace{.03cm}
{\small{  \begin{minipage}{.05\textwidth}
   \textbf{(8?)}\ \ 
\end{minipage}
\hfill
\begin{minipage}{.95\textwidth}
\begin{flushleft}
$\{ \  \   ...   {\inside{\prg{a.key}}} \, \wedge \, \protectedFrom {\prg{a.key}} { (\, \prg{key'},\prg{key''}\, )}  \wedge  \,  \prg{this.key}\!=\!\prg{key'}\ \wedge\ \prg{this}\!\neq\!\prg{a} \ \ \} \ $\\
$\ \ \ \ \ \ \ \ \ \ \  \prg{this.key := key''}\   $\\
$  \{\  \   {\inside{\prg{a.key}}}  \  \  \}\ \ \  || \ \ \ \{  \ \  {\inside{\prg{a.key}}} \ \  \}  $ 
\end{flushleft}
\end{minipage}
}}
 
 \vspace{.03cm}
\noindent
If  $ \prg{this.key}\!\!=\!\!\prg{key'} \wedge\ \prg{this}\!\!=\!\!\prg{a}$, then $\prg{a.key}\!\!=\!\!\prg{key'}$. But   $\protectedFrom {\prg{a.key}} {\prg{key'}}$ and   
 {\sc{Prot-Neq}} from Fig. \aref{16}{\ref{f:protection:conseq:ext}} give $\prg{a.key}\!\neq\!\prg{key'}$. So,  by contradiction (\cf Fig. \aref{15}{\ref{f:substructural:app}}), we can prove    \textbf{(7?)}.
If  $\prg{this}\!\neq \!\prg{a}$, then we  obtain from the underlying Hoare logic that the value of \prg{a.key} did not change. Thus, by rule {\sc{Prot\_1}}, we obtain  \textbf{(8?)}. \ \ \ More details in \S \aref{H.5}{\ref{l:set:sat}}.
 
 \vspace{.045cm}
\noindent
On the other hand, \prg{set} from $M_{bad}$ cannot be proven to satisfy $S_2$, because it 
requires  proving\\
\vspace{.03cm}
{\small{  \begin{minipage}{.05\textwidth}
   \textbf{(???)}\ \ 
\end{minipage}
\hfill
\begin{minipage}{.95\textwidth}
\begin{flushleft}
$\{ \  \   ...  \  {\inside{\prg{a.key}}} \, \wedge \, \protectedFrom {\prg{a.key}} { (\, \prg{key'},\prg{key''}\, )}  \   \} \ $\\
$\ \ \ \ \ \ \ \ \ \ \  \prg{this.key := key''}\   $\\
$  \{\  \    \{ {\inside{\prg{a.key}}}\   \   \}\ \ \  || \ \ \ \{  \ \  {\inside{\prg{a.key}}} \ \  \}  $ 
\end{flushleft}
\end{minipage}
}}
\\
and without the condition \prg{this.key}=\prg{key'} there is no way we can prove \textbf{(???)}.

\subsection{Our Example Proven} 
\label{sect:example:proof:short}
Using our Hoare logic, we have developed a mechanised proof in Coq, that, indeed, $M_{good} \vdash S_2 \wedge S_3$.
This proof is  part of the accompanying artifact. 
 
 Our proof models  \LangOO, the assertion language, the specification language, and the Hoare logic from \S \ref{s:hoare:first},  \S   \ref{s:hoare:second},  \S  \ref{sect:wf},  \S \aref{F}{\ref{app:hoare}} and Def. \ref{def:push}.
In keeping with   the start of  \S \ref{sect:proofSystem}, our proof assumes the existence of an underlying Hoare logic,  
and several, standard, properties of that underlying logic, the assertions logic (\eg equality of objects implies equality of field accesses) and of type systems
(\eg  fields of objects of different types cannot be aliases of one another).
All assumptions  are clearly indicated in the associated artifact.
 Appendix \aref{H}{\ref{s:app:example}}  
outlines   that proof. 


\end{example}

 \section{Soundness} 
 \label{sect:sound:proofSystem}
We now give a synopsis of the proof of soundness of the logic from \S \ref{sect:proofSystem}, and outline the 
 two most interesting aspects: deep satisfaction, and summarized execution.

\paragraphSD{\Strong Satisfaction} 
\label{s:scoped:valid}
We are faced with the challenge that assertions are not always preserved when  the top frame is popped (\cf Ex. \ref{ex:pop:does:not:preserve}),  yet we must be able to argue that method return preserves the method body's postconditions.
 For this, we introduce a  ``deeper'' notion of assertion satisfaction, which an assertion be satisfied not only from the view of the top frame, but also from the view of every frame starting from the $k$-th frame onwards:   $ \satDAssertFrom M  \sigma k   A$   says that   $\forall j. [\  k\!\leq\! j \leq\! \DepthSt \sigma \ \Rightarrow \ M, \RestictTo \sigma j \models A \ ]$.
Accordingly, we introduce \emph{\strong specification satisfaction},  ${\satDAssertQuadrupleFrom \Mtwo  M  \sigma   {A} {A'} {A''} } $, which promises for all $k\!\leq\! \DepthSt \sigma$,  
if $ \satDAssertFrom M  \sigma k   A$, and    if 
  scoped execution of $\sigma$'s continuation leads to final state $\sigma'$ and intermediate external state $\sigma''$, then
 $ \satDAssertFrom M  {\sigma'} k   {A'}$, and  $ \satDAssertFrom M  {\sigma''} k   {A''}$
 -  \cf    App.\aref{G.3}{\ref{s:shallow:deep:appendix}}.
 
 Here how \strong satisfaction addresses this problem: Assume    state  $\sigma_1$  right before entering a call, $\sigma_2$ and $\sigma_3$ at start and end of the call's body, and   $\sigma_4$ upon return. If a  pre-condition holds at $\sigma_1$,  then it  holds for a $k\leq \DepthSt {\sigma_1}$; hence,  if the postcondition holds for $k$ at $\sigma_3$, and because $\DepthSt {\sigma_3}= \DepthSt {\sigma_1}\!+\!1$, it also holds for $\sigma_4$.    
\Strong satisfaction is stronger than shallow (\ie  specification satisfaction as in Def. \ref{def:shallow:spec:sat:state}). 
 
\begin{lemma}
For all $\Mtwo$, $M$, $A$, $A'$, $A''$, $\sigma$:  
\begin{itemize}
\item
 $ {\satDAssertQuadrupleFrom \Mtwo  M  \sigma   {A} {A'} {A''} } \ \Longrightarrow \ \
  {\satAssertQuadruple  \Mtwo  M   {A}  \sigma  {A'} {A''} } $\end{itemize}
\end{lemma}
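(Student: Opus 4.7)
The plan is to prove this by specializing the strong satisfaction statement to a particular choice of $k$, namely the depth of the initial state $\sigma$. The intuition is that strong satisfaction universally quantifies over all ``viewpoints'' $k$ (i.e., all frames from the $k$-th onwards must satisfy the relevant assertion), while shallow satisfaction only concerns the top-frame viewpoint; the latter corresponds to the degenerate case $k = \DepthSt \sigma$.

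More concretely, I would assume $\satDAssertQuadrupleFrom \Mtwo M \sigma {A} {A'} {A''}$ and fix $k \triangleq \DepthSt \sigma$. To establish the shallow conclusion $\satAssertQuadruple \Mtwo M {A} \sigma {A'} {A''}$, assume its premise $M, \sigma \models A$. The first step is to lift this to $\satDAssertFrom M \sigma k A$: unfolding this definition requires $M, \RestictTo \sigma j \models A$ for every $j$ with $k \leq j \leq \DepthSt \sigma$, and since $k = \DepthSt \sigma$, the only such $j$ is $k$ itself, where $\RestictTo \sigma k = \sigma$, so the assumption suffices. Applying the strong hypothesis at this $k$ then yields $\satDAssertFrom M {\sigma'} k {A'}$ whenever $\leadstoBoundedStarFin {\Mtwo\madd M} \sigma {\sigma'}$, and $\satDAssertFrom M {\sigma''} k {A''}$ whenever $\leadstoBoundedStar {\Mtwo\madd M} \sigma {\sigma''}$ with $\sigma''$ external.

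Reading off the shallow conclusions is then straightforward. For the terminating case, the definition of $\leadstoBoundedStarFin$ forces $\DepthSt {\sigma'} = \DepthSt \sigma = k$, so $\satDAssertFrom M {\sigma'} k {A'}$ instantiated at $j = k$ gives exactly $M, \sigma' \models A'$. For the intermediate case, scoped execution guarantees $\DepthSt {\sigma''} \geq k$, so instantiating the deep satisfaction at $j = \DepthSt{\sigma''}$ yields $M, \sigma'' \models A''$, from which the guarded form $M, \sigma'' \models (\extThis \rightarrow \as \sigma {A''})$ required by Def.~\ref{def:shallow:spec:sat:state} follows (after reconciling free variables via Lemma~\ref{l:assrt:unaffect}, noting that variables free in $A''$ are bound by $\overline{x:C}$ and therefore have values fixed by the initial state).

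The main obstacle, as flagged above, is the bookkeeping around free variables and the $\as \sigma {\cdot}$ substitution in the shallow mid-condition: the strong formulation evaluates $A''$ directly at $\sigma''$, whereas the shallow formulation evaluates the variable-free form $\as \sigma {A''}$. These coincide provided $A''$'s free variables are either bound by the specification's outer quantification (over $\overline{x:C}$) or by the caller's variable map---both of which are preserved under $\boundedTrans$ by Lemma~\ref{l:params:do:not:change} and the well-formedness conditions on specifications. Modulo this rewriting step, the proof is a direct unwinding of definitions.
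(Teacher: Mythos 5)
Your proof is correct and matches the paper's approach: the paper dispatches this lemma with ``by unfolding and folding the definitions,'' and instantiating the universally quantified $k$ at $\DepthSt{\sigma}$ (so that both the hypothesis $\satDAssertFrom{M}{\sigma}{k}{A}$ and the conclusions degenerate to top-frame satisfaction, using $\DepthSt{\sigma'}=\DepthSt{\sigma}$ for terminating runs and $\DepthSt{\sigma''}\geq\DepthSt{\sigma}$ for scoped intermediate states) is exactly the right way to carry that out. The free-variable/substitution reconciliation you flag is handled, as you say, by Lemma~\ref{l:assrt:unaffect} and the well-formedness conditions, so no gap remains.
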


\paragraphSD{Soundness of the Triples Logic}
\label{sect:prove:triples:sound}
We require the assertion logic,  $M\vdash A$, and  the  underlying Hoare logic,  $M\ \vdash_{ul}\  \triple A {stmt} {A'}$,   to be be  sound. 
Such  sound logics do exist.
Namely, one can build an assertion logic, $M\vdash A$, by extending a logic which does not talk about protection, through the addition of structural rules which talk about protection; this extension preserves soundness - \cf  App. \aref{G.1}{\ref{s:expectations}}. 
Moreover,  since the assertions $A$ and $A'$ in $M \vdash_{ul}\  \triple A {stmt} {A'}$ may, but need not, talk about protection, 
one can take a Hoare logic from the literature as the $ \vdash_{ul}$-logic.

We then prove   soundness of the rules about protection from Fig. \ref{f:protection}, and, based on this, 
we prove soundness of the inference system for triples  -- \cf \A\ \aref{G.5}{\ref{s:sound:app:triples}}.

\begin{Theorem}
\label{l:triples:sound}
For module  $M$   such that  $\vdash M$, and for any assertions $A$,  $A'$, $A''$ and statement  $stmt$:
\begin{center}
$M\ \vdash\  \triple A {stmt} {A'}  \ \ \ \  \Longrightarrow  \ \ \ \ \satisfiesD {M} {\quadruple {A} {stmt} {A'} {A''}}$
\end{center}
\end{Theorem}

\paragraphSD{Summarised Execution}
\label{s:summaized}
Execution of an external call may consist of any number of external
transitions, interleaved with calls to public internal methods, which in
turn may make any number of further internal calls (public or private),  
and these, again may call external methods.
For the   proof of soundness,  internal and external transitions use different arguments.
 For  external transitions we consider small steps  and  argue in terms of  preservation of  encapsulated properties,
while for internal calls, we use large steps, and appeal to the method's specification.
Therefore, we define  \emph{sumarized} executions, where  internal calls are collapsed into one, large step, \eg below:

\label{sect:termExecs}


\begin{tabular}{lll}
\begin{minipage}{.41\textwidth}
%
 \resizebox{4.1cm}{!}{
\includegraphics[width=\linewidth]{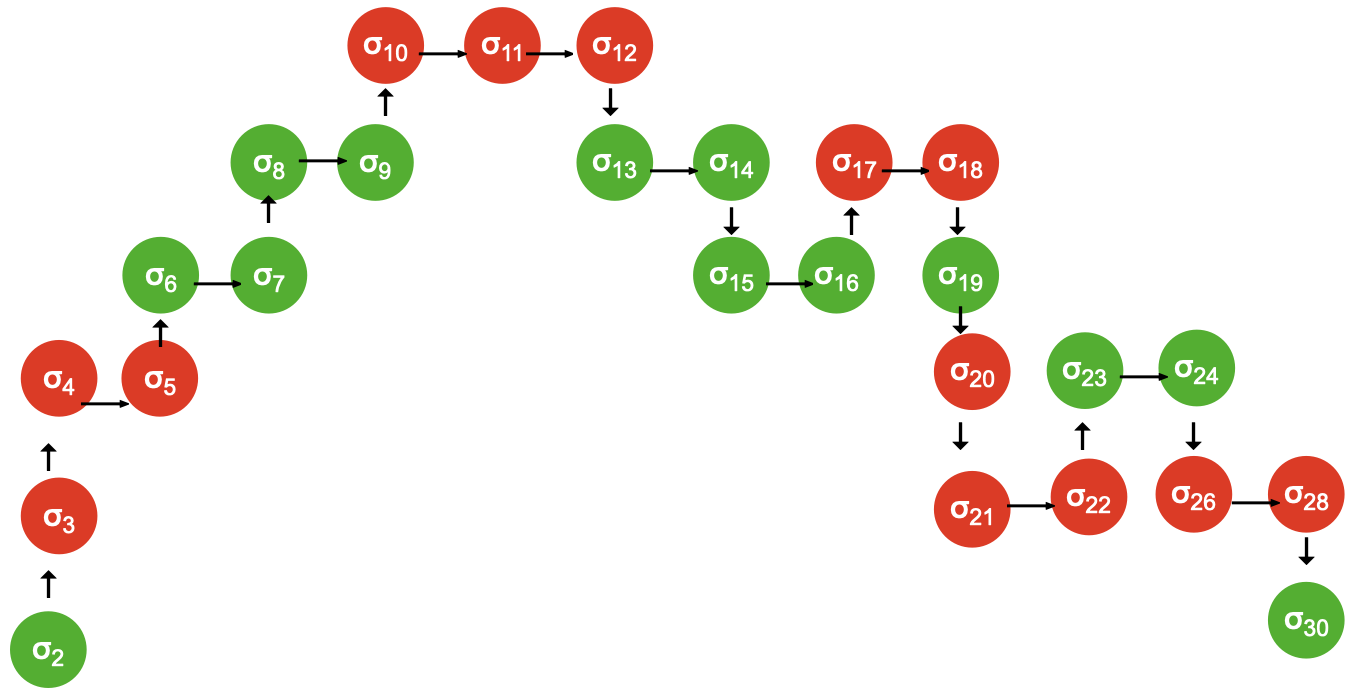}
} 
 \end{minipage}
&  \begin{minipage}{.14\textwidth}
summarized\\
$\strut \ \ \ \ \ $ to 
\end{minipage}
 &
\begin{minipage}{.41\textwidth}
~ \\
~ \\
\resizebox{4.1cm}{!}
{
\includegraphics[width=\linewidth]{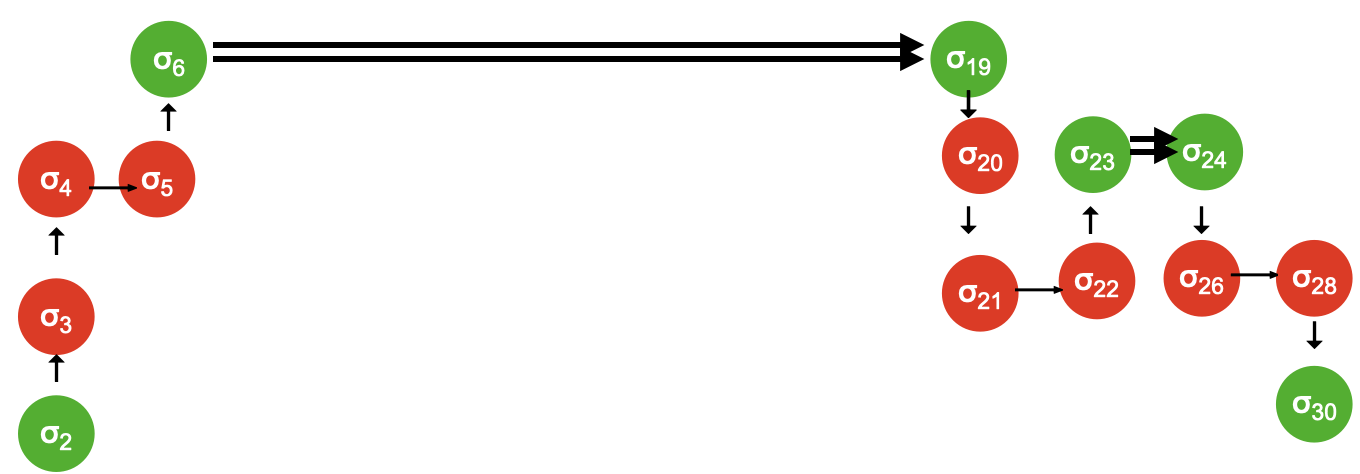}
} \end{minipage}
\end{tabular}

\vspace{.05cm}

Lemma \aref{G.28}{\ref{lemma:external_breakdown:term}} 
says that any terminating execution 
 starting in an external state  consists of a  sequence of  external states interleaved with terminating executions
  of public methods.
Lemma  \aref{G.29}{ \ref{lemma:external_exec_preserves_more}} says that such an execution preserves an encapsulated assertion $A$  
provided that all these finalising internal executions  
also preserve $A$.
%


\paragraphSD{ Soundness of the Quadruples Logic}
Proving soundness of our quadruples  requires  induction on the execution  in  some cases, and  induction on the derivation of the quadruples in others.  We address this   through  a well-founded ordering that combines both, \cf 
\label{sect:prove:wellfounded}
\label{sect:prove:sound:quadruples}
  Def.  \aref{G.22}{\ref{def:measure}}  and  Lemma \aref{G.23}{\ref{lemma:normal:two}}. 
  Finally, in \S \aref{G.16}{\ref{s:app:proof:sketch;quadruples}}, we prove soundness:

\begin{theorem}
\label{t:quadruple:sound}
\label{thm:soundness}
For module  $M$,   assertions $A$, $A'$, $A''$,   state  $\sigma$, and specification $S$:

\begin{enumerate}[(A)]
\item
 $:\strut \   \vdash M  \ \ \ \wedge \ \ \  M\ \vdash\  \quadruple {A} {stmt} {A'} {A''}  \ \ \ \ \ \ \ \Longrightarrow \ \ \ \ \ \  \ \ \  M\ \modelsD\  \quadruple {A} {stmt} {A'} {A''}$
 \item
  $:\strut \  \  \proves{M}{S}\ \ \ \ \ \ \Longrightarrow\ \ \ \ \ \  \ \ \ {M} \modelsD {S}$
 
\end{enumerate}

\end{theorem}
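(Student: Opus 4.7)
\textbf{Proof Proposal for Theorem \ref{thm:soundness}.} The plan is to prove (A) and (B) simultaneously by well-founded induction on the combined measure of Def.~\ref{def:measure}/Lemma~\ref{lemma:normal:two}, which lexicographically pairs the length of the scoped execution with the height of the Hoare-derivation. Mutual induction is essential: a quadruple proof for a statement containing an internal call appeals, via rule \textsc{Call\_Int}, to the method's specification (part B), while a specification-satisfaction proof appeals, via \textsc{Method} and \textsc{Invariant}, to quadruple derivations for the method bodies (part A). The well-founded measure makes this circularity legitimate, because each appeal from (A)-to-(B) strictly decreases the derivation-height component, while each appeal from (B)-to-(A), triggered by unfolding a call in the operational semantics, strictly decreases the execution-length component.

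For part (A), I would proceed by case analysis on the last rule of $M \vdash \quadruple{A}{stmt}{A'}{A''}$. The case \textsc{Mid} reduces to Theorem \ref{l:triples:sound} (soundness of triples), since no internal external states are visited when no calls occur, and deep satisfaction collapses to shallow for method-call-free code. The structural rules \textsc{Sequ}, \textsc{Consequ}, \textsc{Combine}, \textsc{Absurd}, \textsc{Cases} are routine manipulations on deep satisfaction. The crux is \textsc{Call\_Int}: here I would apply the induction hypothesis (B) to the module's method specification, being careful that deep satisfaction is preserved across push/pop of the callee frame — this uses Lemmas \ref{l:preserve:asrt} (preservation of $Stb^+$ under internal push, which applies because \textsc{Invariant}/\textsc{Method} required $Pos$) and the fact that $\DepthSt{\sigma_{\text{callee}}} = \DepthSt{\sigma_{\text{caller}}}+1$, so the postcondition established at depth $k$ in the callee remains valid at depth $k$ after pop. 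The cases \textsc{Call\_Ext\_Adapt} and \textsc{Call\_Ext\_Adapt\_Strong} are handled by Lemma \ref{lemma:push:ass:state}: the adapted precondition $\PushASLong{(y_0,\overline y)}{A}$ at the caller site transfers to $A$ at the callee (external) site, and then the summarized-execution argument (see next paragraph, invoking (B) on the module's invariant $\TwoStatesN{\overline{x:C}}{A}$) preserves $A$ through the entire call.

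For part (B), conjunction is trivial, so I consider the two primitive shapes. For a method specification via \textsc{Method}: given an invocation of $m$ in a state satisfying the precondition, I push the corresponding frame, apply Lemma \ref{lemma:push:ass:state}(2) to transfer the precondition to the new top frame, then apply the induction hypothesis (A) to the derived quadruple for the body, obtaining the postcondition deeply at the terminal state and the mid-condition deeply at every intermediate external state. The $\PushASLong{\prg{res}}{A_2}$ conjunct in the body's postcondition ensures the result returned to the caller does not unprotect any capability promised protected by $A_2$. For an invariant $\TwoStatesN{\overline{x:C}}{A}$ via \textsc{Invariant}: starting from an external state $\sigma$ satisfying $A$, I use the summarized-execution decomposition (Lemmas \ref{lemma:external_breakdown:term} and \ref{lemma:external_exec_preserves_more}) to split the scoped future into alternating external small-steps and terminating public-method calls. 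Encapsulation of $A$ (forced by well-formedness, Def.~\ref{f:holistic-wff}) together with Lemma \ref{lem:encap-soundness} handles the external small-steps. For each terminating internal public-method call, the induction hypothesis (A), applied to the quadruple proved for that method's body in \textsc{Invariant}, preserves $A$ through entry, execution, and return; Lemma \ref{lemma:push:ass:state}(3) justifies strengthening the precondition with $\PushASLong{(\prg{this},\overline y)}{A}$ at the boundary.

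The main obstacle I expect is the bookkeeping around \emph{deep} satisfaction across external call boundaries in the invariant case. Since $Stb^+$ assertions may fail upon external push (Ex.~\ref{push:does:not:preserve}), one cannot simply observe the invariant at every depth uniformly; instead the proof must carefully propagate the frame-level $k$ at which $A$ holds through the summarized execution, re-establishing $\satDAssertFrom{M}{\sigma'}{k}{A}$ only at the \emph{external} states (which is what scoped invariants demand) and using the fact that consecutive internal frames, though they may violate $A$ internally, will restore $A$ upon returning to the external level by virtue of the inductive hypothesis applied to the public methods. Getting this alternation right — and ensuring that the strict decrease of the well-founded measure is preserved across each appeal from (B) back to (A) inside the summarized sequence — is the technical heart of the argument.
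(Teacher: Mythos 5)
Your proposal follows essentially the same route as the paper's: well-founded induction on the lexicographic measure of Def.~\ref{def:measure}, case analysis on the last rule applied, Theorem~\ref{l:triples:sound} for \textsc{Mid}, the adaptation/push--pop lemmas for the call rules, and the summarized-execution decomposition (Lemmas~\ref{lemma:external_breakdown:term} and~\ref{lemma:external_exec_preserves_more}) combined with encapsulation for the external-call and invariant cases. The one organizational difference is that the paper does not run a mutual induction between (A) and (B): it proves (A) on its own --- in the \textsc{Call\_Int} case it bypasses the semantic statement (B) entirely, extracting the body's quadruple directly from $\vdash M$ via rule \textsc{Method} and applying the induction hypothesis of (A) to that quadruple at the callee state --- and then derives (B) afterwards from (A).

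That organizational choice matters because of a slip in your bookkeeping of the measure. You claim the appeal from (A) to (B) at an internal call ``strictly decreases the derivation-height component.'' It does not: the premise of \textsc{Call\_Int} is a specification \emph{lookup}, not a sub-derivation, so the proof of the body's quadruple (reached through \textsc{Method}) can have arbitrary height relative to the one-step derivation at the call site. The decrease that actually makes the induction go through --- and the reason the execution-length component sits first in the lexicographic order --- is that entering the callee strictly reduces the minimal number of remaining scoped-execution steps (the call and return steps are shaved off); this is precisely the paper's justification $(\ldots,\sigma_1,\ldots) \ll_{M,\Mtwo} (\ldots,\sigma,\ldots)$ at step (11) of its \textsc{Call\_Int} case. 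The derivation-height component is needed only for the structural rules (\textsc{Sequ}, \textsc{Consequ}, \textsc{Combine}, \textsc{Cases}), where the state, and hence the execution length, is unchanged. Once you reassign the decrease to the correct component --- or simply adopt the paper's non-mutual organization --- the rest of your outline, including the alternation argument for deep satisfaction across the summarized execution in the invariant case, is the argument the paper carries out.
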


%
%

%


  \section{Conclusion: Summary, Related Work and Further Work}
 \label{sect:related}
\label{sect:conclusion}
  \paragraphSD{Our motivation}
comes from the OCAP approach to security, whereby object capabilities guard against unsanctioned effects.
Miller \cite{miller-esop2013,MillerPhD} advocates
 \emph{defensive consistency}: whereby 
 {``An object is defensively
  consistent when it can defend its own invariants and provide correct
  service to its well behaved clients, despite arbitrary or malicious
  misbehaviour by its other clients.''}  Defensively consistent
modules  are  hard to design 
 and  verify, but
make it much
easier to make guarantees about systems composed of multiple components
\cite{Murray10dphil}.

 \paragraphSD{Our work} aims to elucidate such guarantees. We want to formalize and prove  that 
\cite{permissionAuthority}:
\begin{quote}
\emph{Lack of eventual access implies that certain properties will be preserved, even in the presence of external calls.}
\end{quote}
For this, we had  to  model the concept of  lack of eventual access,  determine the temporal scope of the preservation, and  develop a Hoare logic framework to formally prove such guarantees.

For lack of eventual access,  we introduced protection, 
a property of all  the paths of all external objects accessible from the current stack frame.
For the  temporal scope of preservation, we developed scoped invariants, which ensure that a given property holds as long as we have not returned from the current method.
  (top of current stack has not been popped yet). 
 For our Hoare logic, we introduced an adaptation operator, which translates assertions between the caller’s and callee’s frames. 
 Finally, to prove the soundness of our approach, we developed the notion of \strong satisfaction,  which mandates that an assertion must be satisfied from a particular stack frame onward. 
 Thus, most concepts in this work are  \emph{scope-aware}, as they depend  on the current stack frame.
 
 With these concepts, we 
 developed a specification language for modules limiting effects, a Hoare Logic for proving external calls, protection, and adherence to specifications, and have proven it sound. 

\paragraphSD{Lack of Eventual Access}   Efforts to restrict ``eventual access'' have been extensively explored, with
 Ownership Types  being a prominent example \cite{simpleOwnership,existOwn}.
These types enforce encapsulation boundaries to safeguard internal implementations, thereby ensuring representation independence and defensive consistency
\cite{ownalias,NobPotVitECOOP98,Banerjee:2005}.
Ownership is fundamental to key systems like Rust’s memory safety  
\cite{RustPL2,RustBelt18},
Scala's Concurrency \cite{ScalaCapabilities,ScalaLightweightAffine},
Java's heap analyses \cite{PotterNC98,HillNP02,MitECOOP06}, 
and it  plays a critical role in program verification
\cite{BoyLisShrPOPL03,hypervisor} including Spec$\#$
\cite{BarLeiSch05,BarDelFahLeiSch04}, universes
\cite{DieDroMue07,DietlMueller05,LuPotPOPL06},
Borrowable Fractional Ownership \cite{borrow-fract-vmcai2024},
and recently has been integrated into languages like OCAML \cite{ocaml-ownership-icfp2024,funk-ownership-oopsla2024}.

\sdN{Ownership types are closely related to the notion of protection: both are scoped relative to a frame. However, ownership requires an object to control some part of the path, while protection demands that module objects control the endpoints of paths. }

\sdN{In future work we want to   explore how to express protection within Ownership Types, with the primary challenge being how to accommodate capabilities accessible to some external objects while still   inaccessible to others. }
Moreover,  tightening some
rules in our current Hoare logic (e.g.\ Def. \ref{def:chainmail-protection-from})
may lead to a native  Hoare logic of ownership.
Also, recent approaches like
%
%
the Alias
Calculus \cite{meyer-alias-calculus-scp2015,meyer-auto-alias-sncs2020},
Reachability
Types \cite{romf-reachability-types-oopsla2021,rompf-poly-reachability-popl2024}
and Capturing
Types \cite{odersky-capturing-types-toplas2023,scoped-effects-oopsla2022,odersky-reach-prog2024}
abstract fine-grained method-level descriptions of 
references and aliases flowing into and out of methods and fields,
and likely accumulate enough information to express 
protection. Effect exclusion
\cite{fx-exclusion-icfp2023} directly prohibits nominated
effects, but within a closed, fully-typed world.

\paragraphSD{Temporal scope of the guarantee} Starting with loop invariants\cite{Hoare69,Floyd67}, property preservation at various granularities and durations has been widely and successfully adapted and adopted \cite{Hoare74,liskov94behavioral,usinghistory,Cohen10,Meyer92,MeyerDBC92,BarDelFahLeiSch04,objInvars,MuellerPoetzsch-HeffterLeavens06,DrossoFrancaMuellerSummers08}.
In our work, the temporal scope of the preservation guarantee includes all nested calls, until termination of the currently executing method, but not beyond. 
We compare with object and history invariants in \S \ref{sect:bounded}.

Such guarantees are maintained by the module as a whole.
\citet{FASE}  proposed ``holistic specifications'' which take an external
perspective across the interface of a module. 
\citet{OOPSLA22} builds upon this work, offering a specification
language based on \emph{necessary} conditions and temporal operators.
Neither of these systems support any kind of external calls.
\sdN{Like \cite{FASE,OOPSLA22} we propose ``holistic specifications'', albeit without temporal logics, and with sufficient conditions.
In addition, we introduce protection, and develop a Hoare logic for protection and external calls.}

\paragraphSD{Hoare Logics} were first developed in Hoare's seminal 1969 paper \cite{Hoare69}, and have inspired a plethora of influential further developments and tools. We shall discuss a few only.

\sdN{Separation logics  \cite{IshtiaqOHearn01,Reynolds02}  
 reason  about disjoint memory regions. 
Incorporating Separation Logic's powerful framing mechanisms  will pose several challenges: 
We have no specifications and no footprint for external calls. 
Because protection is ``scope-aware'', expressing it as a predicate would require quantification over all possible paths and variables within the current stack frame.
We may also require a  new separating conjunction \susan{operator}.
}
Hyper-Hoare Logics \cite{hyper-hoare-pldi2024,compositional-hypersafety-oopsla2022}  reason about the execution of several programs, and  \sdN{could thus be applied to our problem, if extended to model all possible  
sequences of calls of internal public methods.}

 Incorrectness Logic
\cite{IncorrectnessLogic}
under-approximates  postconditions, and thus
reasons about the presence of bugs, rather than their absence.
%
Our work, like classical Hoare Logic, over-approximates postconditions,
 and differs from Hoare and Incorrectness Logics
by tolerating interactions between verified code and unverified components.
Interestingly, even though in earlier  works
  \cite{FASE,OOPSLA22} {we} employed \emph{necessary} conditions for effects (\ie under-approximate pre-conditions), we can, instead, employ \emph{sufficient} conditions for the lack of effects (over-approximate postconditions).
Incorporating our work into  Incorrectness Logic might require  under-approximating  eventual access, while protection over-approximates it.

Rely-Guarantee
\cite{relyGuarantee-HayesJones-setss2017,relyGuarantee-vanStaden-mpc2015}
and Deny-Guarantee \cite{DenyGuarantee} 
distinguish between assertions guaranteed by a thread, and those a
thread can reply upon. 
Our Hoare quadruples are (roughly) Hoare triples plus 
the ``guarantee'' portion of rely-guarantee.
When a
specification includes a guarantee, that guarantee must be maintained
by every ``atomic step'' in an execution
\cite{relyGuarantee-HayesJones-setss2017}, rather than just at method
boundaries as in visible states semantics
\cite{MuellerPoetzsch-HeffterLeavens06,DrossoFrancaMuellerSummers08,considerate}.
In concurrent reasoning,  
this is because shared state may be accessed
by another co{\"o}perating thread at any time:
while in our case, it is because unprotected
state may be accessed by an untrusted component within the same
thread.  

\paragraphSD{Models and Hoare Logics for the interaction with the the external world}
\citet{Murray10dphil} made the first attempt to formalise defensive
consistency, 
to tolerate interacting with any untrustworthy object,
although without a specification language for describing effects
(i.e.\ when an object is correct).

 \citet{CassezFQ24} propose one approach to reason about external calls.
Given that external callbacks are necessarily restricted to the module's public interface,
external callsites are replaced  with a
generated \texttt{externalcall()} method that  nondeterministically invokes any method in that interface.
\citet{iris-wasm-pldi2023}'s Iris-Wasm is similar.
WASM's
modules are very loosely coupled: a module
has its own byte memory
and object table.
Iris-Wasm ensures models 
can only be
modified via their explicitly exported interfaces.

\citet{ddd}  designed OCPL, a logic
that separates internal implementations (``high values'')
from interface objects
(``low values''). %
OCPL supports defensive
consistency 
(called ``robust safety'' after the
security literature \cite{Bengtson})
by ensuring
low values can never leak high values, a 
and 
prove 
object-capability patterns, such as
sealer/unsealer, caretaker, and membrane.
%
%
RustBelt \cite{RustBelt18}
developed this approach to prove Rust memory safety using Iris \cite{iris-jfp2018},
and combined with RustHorn \cite{RustHorn-toplas2021} for the safe subset,
produced RustHornBelt \cite{RustHornBelt-pldi2022} that verifies
both safe and unsafe Rust programs. 
Similar techniques were extended to C \cite{RefinedC-pldi2021}.

\citet{dd} deploy step-indexing, Kripke worlds, and 
  public/private state machines to model problems including the 
DOM wrapper and a mashup application.
Their distinction between public and private transitions 
is similar  to our
distinction between internal and external objects.
VMSL is an Iris-based separation logic for
virtual machines to assure defensive consistency
\cite{vmsl-pldi2023}.
%
%
Cerise uses Iris invariants to support proofs of programs with outgoing calls and callbacks,
on capability-safe CPUs \cite{cerise-jacm2024}.
\cite{BirkedalL:caps-mmio-conf}  {supports several different wrappers on the same object.}
{Besides the distinction between the object-based and class-based approach}, our work differs from
{\cite{iris-wasm-pldi2023,vmsl-pldi2023,dd,ddd,cerise-jacm2024,BirkedalL:caps-mmio-conf,RustHornBelt-pldi2022}}
in that {they follow a ``static'' capability model whereby authority is conferred though an object's interface, whereas we follow a ``dynamic'' model whereby authority may be stored in the callee's context.}
 
{\citet{stack-safety-csf2023} extend memory safety} arguments to ``stack
safety'', 
ensuring that method calls and returns are well bracketed,
and preserving   the integrity of
caller and callee.  
{\citet{schaeferCbC}    adopt}
an information-flow approach to ensure confidentially by construction.

\paragraphSD{Smart Contracts} also pose the problem of external calls.
Rich-Ethereum \cite{rich-specs-smart-contracts-oopsla2021}
assumes  instance-private and unaliased contract fields ,
and  employs a 
``finished'' flag to manage  callbacks. 
Scilla \cite{sergey-scilla-oopsla2019},
a minimalist functional alternative to Ethereum, helps prevent common contract errors.


The VerX tool can verify
specifications for Solidity contracts automatically \cite{VerX}.
VerX's specification language is based on temporal logic.
It 
is restricted to ``effectively call-back free'' programs
\cite{Grossman,relaxed-callbacks-ToDES},
delaying any callbacks until the
incoming call to the internal object has finished.
 
\textsc{ConSol} \cite{consolidating-pldi2024}
provides a specification langauge for smart contracts,
checked at runtime \cite{FinFel01}.
SCIO* \cite{secure-io-fstar-popl2024}, implemented in
F*, supports both
verified and unverified code.
Both \textsc{Consol} and SCIO* are 
similar to gradual verification techniques 
\cite{gradual-verification-popl2024,Cok2022} that
insert dynamic checks between verified and unverified code,
and contracts for general access control 
\cite{DPCC14,AuthContract,cedar-oopsla2024}.

%

  \forget{
\paragraph{Future work} THERE AREN"T REALLY QUESTIONS ABOVE includes the questions mentioned above, and also
\sdN{the investigation of more ways to approximate eventual access, footprint and framing operators}, and   
the application of these techniques to
languages that rely on lexical nesting for access
control such as Javascript \cite{ooToSecurity},
rather than public/private annotations;
languages that support ownership types that can be leveraged for
verification
\cite{leveragingRust-oopsla2019,RustHornBelt-pldi2022,verus-oopsla2023};
and languages from the
functional tradition such as OCAML, which is gaining imperative
features such as ownership and uniqueness \cite{funk-ownership-oopsla2024,ocaml-ownership-icfp2024}. 
%
%
We expect our techniques can be incorporated into existing program
verification tools \cite{Cok2022}, especially those attempting
gradual verification \cite{gradual-verification-popl2024},
thus paving the way towards practical verification for
the open world.
}

\paragraphSD{Future work}  includes 
 the application of our techniques to
 languages that rely on lexical nesting for access control, 
{\eg}   Javascript \cite{ooToSecurity},
to languages that support ownership types,  
{\eg} Rust, 
 leveraged for verification
\cite{leveragingRust-oopsla2019,RustHornBelt-pldi2022,verus-oopsla2023},
and to
{features} such as 
uniqueness\cite{funk-ownership-oopsla2024,ocaml-ownership-icfp2024}. 
{These} different language paradigms may lead to a 
{refined view of the meaning
of protection}. 
{We also want to investigate tighter integrations with the underlying Hoare logic and   its framing operators.}
%



 \section*{Acknowledgements}
We are grateful for the feedback from Lars Birkedal, David Swasey, Peter Müller, Derek Dreyer, Dominique Devriese, and the anonymous OOPSLA reviewers. 
We are especially thankful to Reviewer B, and even more so to Reviewer C, whose 
 sustained engagement across multiple rounds of discussion went well beyond 
 expectations and was instrumental in helping us clarify parts of the work that were initially unclear.

\appendix
\section{Appendix to Section \ref{sect:underlying} -- The programming language \LangOO}
\label{app:loo}

We introduce \LangOO, a simple, typed, class-based, object-oriented language.

\subsection{Syntax}

The syntax of \LangOO is given in Fig. \ref{f:loo-syntax}\footnote{{Our motivating example is provided in a slightly richer syntax for greater readability.}}.
To reduce the complexity of our formal models, as is usually done, CITE - CITE,  \LangOO lacks many
common languages features, omitting static fields and methods, interfaces,
inheritance, subsumption, exceptions, and control flow.  
 \LangOO 
and which may be defined recursively.

\LangOO modules ($M$) map class names ($C$) to class definitions ($\textit{ClassDef}$).
A class definition consists of 
a list of field definitions, ghost field definitions, and method definitions.
{Fields, ghost fields, and methods all have types, $C$; {types are
    classes}.
    Ghost fields may be optionally 
 annotated as \texttt{intrnl}, requiring the argument to have an internal type, and the 
body of the ghost field to only contain references to internal objects. This is enforced by the
limited type system of \LangOO.}
A program state ($\sigma$) is a pair of of a stack and a heap.
The stack is a a stack is a non-empty list of frames ($\phi$), and the heal ($\chi$)
is a map from addresses ($\alpha$) to objects ($o$). A frame consists of a local variable
map and a continuation .\prg{cont} that represents the statements that are yet to be executed ($s$).
A statement is either a field read ($x := y.f$), a field write ($x.f := y$), a method call
($u :=y_0.m(\overline{y})$), a constructor call ($\prg{new}\ C$), 
  a sequence of statements ($s;\ s$),
  or empty ($\epsilon$).

\LangOO also includes syntax for expressions $\re$ that may 
be used in writing
specifications or the definition of ghost fields.

\subsection{Semantics}
\LangOO is a simple object oriented language, and the operational semantics 
(given in Fig. \ref{f:loo-semantics} and discussed later)
do not introduce any novel or surprising features. The operational 
semantics make use of several helper definitions that we 
define here.

{
We provide a definition of reference interpretation in Definition \ref{def:interpret}
\begin{definition}
\label{def:interpret}
For a frame $\phi= (\overline {x \mapsto v}, s)$, and a program state $\sigma = (\overline \phi \cdot \phi,, \chi)$, we   define:
\begin{itemize}
\item
$\interpret{\phi}{x}\ \triangleq\ v_i$\ \ \ if \ \ \ $x=x_i$
\item
 $\interpret{\sigma}{x}\ \triangleq\  \interpret{\phi}{x}$
\item
$\interpret{\sigma}{\alpha.f}\ \triangleq\ v_i $ \ \ if \ \ $\chi(\alpha)=(\_; \  \overline {f \mapsto v})$, and $f_i=f$
\item
$\interpret{\sigma}{x.f}\ \triangleq\ \interpret{\sigma}{\alpha.f}$ where $\interpret{\sigma}{x}=\alpha$
\item
$\phi.\prg{cont} \ \triangleq\ s$ 
\item
$\sigma.\prg{cont} \ \triangleq\ \phi.\prg{cont}$\
\item
$\phi[\prg{cont}\mapsto s'] \ \triangleq\ (\overline {x \mapsto v}, s')$
\item
$\sigma[\prg{cont}\mapsto s'] \ \triangleq \ (\ {\overline \phi}\cdot \phi[\prg{cont}\mapsto s'],\  \chi\ )$ 
\item
$\phi[\prg{x'}\mapsto v'] \ \triangleq\ ( \ (\overline {x \mapsto v})[\prg{x'}\mapsto v'],\ s \ )$
\item
$\sigma[\prg{x'}\mapsto v'] \ \triangleq\ (\ (\overline {\phi} \cdot (\phi[\prg{x'}\mapsto v']), \ \chi)$ 
\item
$\sigma [\alpha \mapsto o ] \ \triangleq\ (\ (\overline {\phi} \cdot \phi), \ \chi [\alpha \mapsto o ]\ )$ 
\item
$\sigma [\alpha.f' \mapsto v' ] \ \triangleq\ \sigma [\alpha \mapsto o ] $\ \ \  if \ \  
$\chi(\alpha)=(C, {\overline {f \mapsto v}})$, and $o=(\ C;  ({\overline {f \mapsto v}})[f' \mapsto v' ]\ )$ 
\end{itemize}
\end{definition}
}
That is, a variable $x$, or a field access on a variable $x.f$ 
has an interpretation within a program state of value $v$
if $x$ maps to $v$ in the local variable map, or the field
$f$ of the object identified by $x$ points to $v$.

Definition \ref{def:class-lookup} defines the class lookup function an object 
identified by variable $x$.
\begin{definition}[Class Lookup]
\label{def:class-lookup}
For program state $\sigma = ({\overline {\phi}}\cdot\phi, \chi)$, class lookup is defined as 
$$\class{\sigma}{x}\ \triangleq\ C \ \ \ \ \ \mbox{if} \ \ \  \chi(\interpret{\sigma}{x})=(C,\_ )$$
\end{definition}

Module linking is defined for modules with disjoint definitions:

\begin{definition}
\label{def:linking}
For all modules $\Mtwo$ and $M$, if the domains of $\Mtwo$ and $M$ are disjoint, 
we define the module linking function as $M\cdot \Mtwo\ \triangleq\ M\ \cup\ M'$.
\end{definition}
That is,  their linking is the union of the two if their domains are disjoint.

Definition \ref{def:meth-lookup} defines the method lookup function for a method
call $m$ on an object of class $C$.
\begin{definition}[Method Lookup]
\label{def:meth-lookup}
For module $\Mtwo$, class $C$, and method name $m$, method lookup is defined as 
$$\meth{\Mtwo}{C}{m}\ \triangleq\ { pr}\  \prg{method}\ m\ (\overline{x : T}) {:T}\{\ s\ \}  $$
if there exists an $M$ in $\Mtwo$, so that $M(C)$ contains the definition ${ pr}\  \prg{method}\ m\ (\overline{x : T}) {:T}\{\ s\ \} $
\end{definition}

Definition \ref{def:fields-lookup} looks up all the field identifiers in a given class
\begin{definition}[Fields Lookup]
\label{def:fields-lookup}
For modules $\Mtwo$,and  class $C$, fields lookup is defined as 
$$fields(\Mtwo,C) \ \triangleq\  \  \{ \ f  \ | \  \exists  M\in\Mtwo. s.t.  M(C) \mbox{contains the definition}  \prg{field}\ f: T\ \} $$
\end{definition}

We define what it means for two objects to come from the same module
\begin{definition}[Same Module]
\label{def:same:module}
For program state $\sigma$,  modules $\Mtwo$, and variables $x$ and $y$, we defone
$$\Same {x} {y} {\sigma}{\Mtwo}\ \triangleq\ \exists C, C', M[ \ M\in \Mtwo \wedge C, C'\in M \wedge  \class{\sigma}{x}=C \wedge \class{\sigma}{y} =C'\ ]$$
\end{definition}

As we already said in \S \ref{s:underlying}, we forbid assignments to a method's parameters. 
To do that, the following function returns the  identifiers of the formal parameters of the currently active method.

\begin{definition}
For program state $\sigma$:
\label{def:params}

$\Formals \sigma \Mtwo \ \ \triangleq \ \  \overline x \ \ \ \mbox{such that} \ \  \  \exists \overline {\phi},\,\phi_k, \, \phi_{k+1}, \,C,\,p.$\\
$\strut \hspace{3.2cm} [   \ \ \sigma =  (\overline {\phi}\cdot{\phi_k}\cdot{\phi_{k+1}}\,, \chi) 
\  \ \wedge\  \ \phi_k.\prg{cont}=\_:=y_0.m(\_); \_ \  \  \wedge\ \ $\\
$\strut \hspace{3.2cm} \class{(\phi_{k+1},\chi)}  {\prg{this} }\  \ \wedge\ \ \meth{\Mtwo} {C} {m} = p \ C::m(\overline{x : \_}){:\_}\{\_\} \ ] $
\end{definition}

While the small-step operational semantics of \LangOO is given in Fig. \ref{f:loo-semantics},
specification satisfaction is defined over an abstracted notion of 
the operational semantics that models the open world. 

An \emph{Initial} program state contains a single frame 
with a single local variable \prg{this} pointing to a single object 
in the heap of class \prg{Object}, and a continuation.
\begin{definition}[Initial Program State]
\label{def:initial}
A program state $\sigma$ is said to be an initial state ($\initial{\sigma}$)
if and only if
\begin{itemize}
\item
$\sigma\ =\  ( ((\prg{this}\ \mapsto\ \alpha), s); \  (\alpha \mapsto (\prg{Object}, \emptyset)$
\end{itemize} 
for some address $\alpha$ and some statement $s$.
\end{definition}


We provide a semantics for expression evaluation is given in Fig. \ref{f:evaluation}. 
That is, given a module $M$ and a program state $\sigma$, expression $e$ evaluates to $v$
if $\eval{M}{\sigma}{e}{v}$. Note, the evaluation of expressions is separate from the operational
semantics of \LangOO, and thus there is no restriction on field access.

{
\paragraph{Lemmas and Proofs}
}

\beginProof{l:wf:state}
{The first assertion is proven by unfolding the definition of $\_ \models \_ $.

The second assertion is proven by case analysis on the execution relation $\exec {\_} {\sigma} {\sigma'}$. 
The assertion gets established when we call a method, and is preserved through all the execution steps, because we do not allow assignments to the formal parameters.
 
}
\completeProof

We now prove lemma \ref{l:var:unaffect}:

\beginProof{l:var:unaffect} 
\begin{itemize}
\item
We first show that $\leadstoBoundedThree {\Mtwo} {\sigma} {\sigma\bd}  {\sigma'}   \ \wedge \ k<\DepthSt \sigma\bd  \ \ \Longrightarrow \ \  \interpret {\RestrictTo \sigma k} y =  \interpret {\RestrictTo {\sigma'} k} y$
This follows easily from the operational semantics, and the definitions.

\item
By induction on the earlier part, we  obtain that $\leadstoBoundedStar {\Mtwo}  {\sigma}  {\sigma'}  \ \wedge \ k<\DepthSt \sigma  \ \ \Longrightarrow \ \  \interpret {\RestrictTo \sigma k} y =  \interpret {\RestrictTo {\sigma'} k} y$

\item

We now show that $\leadstoBoundedStarFin {\Mtwo}  {\sigma}  {\sigma'} \ \wedge \ y\notin \vs(\sigma.\prg{cont}) \ \ \Longrightarrow \ \  \interpret \sigma y =  \interpret {\sigma'} y$ by  induction on the number of steps, and  using the earlier lemma.
\end{itemize}
\completeProof

{Lemma \ref{lemma:relevant:more} states that initila states are well-formed, and that (\ref{threeLR}) a pre-existing object, locally reachable after any number of scoped execution steps, was locally reachable at the first step.

\begin{lemma}
\label{lemma:relevant:more}
For all modules $\Mtwo$, states $\sigma$, $\sigma'$,   and frame $\phi$:
\begin{enumerate}
\item
\label{lemma:relevant:more:one}
$\initial {\sigma}  \ \    \Longrightarrow \ \ \Mtwo \models \sigma $
\item
\label{threeLR}
{${\leadstoBoundedStar {\Mtwo}  {\sigma}    {\sigma'}}  \ \  \Longrightarrow\ \ 
dom(\sigma) \cap \LRelevantO {\sigma'} \subseteq   \LRelevantO {\sigma}$
}

\end{enumerate}
\end{lemma}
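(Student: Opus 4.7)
The plan is to handle the two parts separately. For part (\ref{lemma:relevant:more:one}), the claim is essentially a one-liner: by Def.~\ref{def:initial} an initial state has a single frame, so $\DepthFs{\sigma}=1$; the universally quantified condition in Def.~\ref{def:wf:state} ranges over $k$ with $1 < k \leq \DepthFs{\sigma}$ and is therefore vacuously satisfied.

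For part (\ref{threeLR}), I would proceed by induction on the length of the scoped execution $\leadstoBoundedStar{\Mtwo}{\sigma}{\sigma'}$, but first strengthen the induction hypothesis. A naive induction fails because local reachability is not monotone across a single scoped step: a return step can \emph{grow} $\LRelevantO{\cdot}$ by bringing the caller's variables back into view. The strengthened claim I would aim to prove is that for every $k$ with $\DepthSt{\sigma} \leq k \leq \DepthSt{\sigma'}$, one has $dom(\sigma)\cap \LRelevantO{\RestrictTo{\sigma'}{k}} \subseteq \LRelevantO{\sigma}$; taking $k=\DepthSt{\sigma'}$ recovers the stated lemma. Throughout the induction I would rely on Lemma~\ref{l:wf:state}(\ref{wf:preserve}) to carry well-formedness along the execution from a well-formed initial $\sigma$ (given by part (\ref{lemma:relevant:more:one})).

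The base case ($\sigma = \sigma'$) is immediate. In the step case, suppose $\sigma \leadstoBoundedStar \sigma'' \leadstoBounded \sigma'$, and split on the last rule. For \textsc{Read}, \textsc{Write}, and \textsc{New}, the stack depth is preserved, frames below the top are literally unchanged (so IH applies at smaller $k$), and at $k=\DepthSt{\sigma'}$ any pre-existing address that becomes locally reachable in $\sigma'$ was already reachable from some variable or field in $\sigma''$; newly allocated addresses lie outside $dom(\sigma'') \supseteq dom(\sigma)$. Hence $dom(\sigma)\cap \LRelevantO{\sigma'} \subseteq dom(\sigma)\cap\LRelevantO{\sigma''}$, and IH finishes. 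For \textsc{Call}, Lemma~\ref{lemma:relevant}(\ref{oneLR}) gives $\LRelevantO{\sigma'} \subseteq \LRelevantO{\sigma''}$ at the new top; at $k \leq \DepthSt{\sigma''}$ the restricted states coincide and IH applies verbatim.

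The main obstacle is the \textsc{Return} case, where $\DepthSt{\sigma'} = \DepthSt{\sigma''} - 1 =: n \geq \DepthSt{\sigma}$ and the popped frame's \prg{res} is injected into the caller as the new binding of $x$. Here I would invoke well-formedness of $\sigma''$ to argue that every address locally reachable from the top frame $\phi_{n+1}$ of $\sigma''$ is reachable from $\phi_{n+1}$'s formal parameters, which themselves agree with values of variables in $\phi_n$; consequently $\interpret{\phi_{n+1}}{\prg{res}}$ and its reachable closure lie in $\LRelevantO{\RestrictTo{\sigma''}{n}}$. Since the remaining bindings of $\phi_n$ are unchanged, we conclude $\LRelevantO{\sigma'} \subseteq \LRelevantO{\RestrictTo{\sigma''}{n}}$, and the strengthened IH at $k = n$ applied to $\sigma''$ closes the case; for smaller $k$, $\RestrictTo{\sigma'}{k} = \RestrictTo{\sigma''}{k}$ and IH applies directly.
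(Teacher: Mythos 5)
Your proof is correct, and it is essentially the paper's argument: part~(\ref{lemma:relevant:more:one}) is indeed vacuous for a one-frame stack, and part~(\ref{threeLR}) is an induction over the scoped execution with the same case analysis (use Lemma~\ref{lemma:relevant}(\ref{oneLR}) for calls, well-formedness for returns, and locality of the remaining steps). The one point worth highlighting is that your strengthened induction hypothesis --- quantifying over all frames from $\DepthSt{\sigma}$ up to the current depth --- is exactly the device the paper packages as $\LRelevantKO{\cdot}{k}$ in Lemma~\ref{exec:rel}, from which part~(\ref{threeLR}) follows by taking $k=\DepthSt{\sigma}$; the paper's in-place sketch of this lemma instead dispatches \textsc{Return} by citing Lemma~\ref{rel:smaller} alone, which, as you correctly observe, does not close a naive single-step induction because a return can \emph{enlarge} the locally-reachable set by re-exposing the caller's variables. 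Your \textsc{Return} case, routing $\interpret{\phi_{n+1}}{\prg{res}}$ through the callee's formal parameters via Def.~\ref{def:wf:state}, is the same use of well-formedness that underlies Lemma~\ref{rel:smaller}. One notational caveat: you read $dom(\sigma)$ as the set of addresses allocated in $\sigma$'s heap, whereas the paper's notation section defines $dom(\sigma)$ as the top frame's variable map; your reading is clearly the intended one (it matches the prose gloss and the ``$\cap\,\sigma$'' of Lemma~\ref{exec:rel}), but you should say so explicitly.
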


{Consider Fig.  \ref{fig:illusrPreserve} . 
Lemma \ref{lemma:relevant:more}, part \ref{threeLR}  promises that any objects locally reachable in $\sigma_{14}$ which already existed in $\sigma_{8}$, were locally reachable in $\sigma_{8}$. However, the lemma is only  applicable to scoped execution, and as 
$\notLeadstoBoundedStar {\Mtwo} {\sigma_8}  {\sigma_{17}}$, 
the lemma does not promise that  objects locally reachable in $\sigma_{17}$ which already existed in $\sigma_{8}$, were locally accessible in $\sigma_{8}$ -- namely it could be that objects are made globally reachable upon method return, during the step from $\sigma_{14}$ to $\sigma_{15}$.}
}

\vspace{1cm}
Finally, we define the evaluation of expressions, which, as we already said, represent ghost code.

\begin{figure}[hbp]
\begin{minipage}{\textwidth}
\footnotesize{
\begin{mathpar}
\infer
		{}
		{\eval{M}{\sigma}{v}{v}}
		\quad(\textsc{E-Val})
		\and
\infer
		{}
		{\eval{M}{\sigma}{x}{\interpret{\sigma}{x}}}
		\quad(\textsc{E-Var})
		\and
\infer
		{
		\eval{M}{\sigma}{\re}{\alpha}
		}
		{
		\eval{M}{\sigma}{\re.f}{\interpret{\sigma}{\alpha.f}}
		}
		\quad(\textsc{E-Field})
		\and
\infer
		{
		\eval{M}{\sigma}{\re_0}{\alpha}\\
		\overline {\eval{M}{\sigma}{\re}{v}}\\
		M(\class{\sigma}{\alpha})\ \mbox{contains}\ \prg{ghost}\ gf(\overline{x : T)}\{\re\} : T' \\
		\eval{M}{\sigma}{[\overline {v/x}]\re}{v}
		}
		{
		\eval{M}{\sigma}{\re_0.gf(\overline{\re})}{v}
		}
		\quad(\textsc{E-Ghost})
\end{mathpar}
}
\caption{\LangOO Expression evaluation}
\label{f:evaluation}
\end{minipage}
\end{figure}

\subsection{ $M_{ghost}$ Accounts expressed through ghost fields}

\label{app:BankAccount:ghost}

We   revisit the bank account example, to demonstrate the use of ghost fields.
In Fig. \ref{f:ex-bank-short}, accounts belong to banks, and their \prg{\balance} is  kept in a ledger. 
Thus, $\prg{account.\balance}$ is a ghost field which involves a recursive search through that ledger.

\begin{figure}[h]
\begin{lstlisting}[language=chainmail, mathescape=true, frame=lines]
module $M_{ghost}$ 
  
  class Shop  ...
  
  class Account
    field bank: Bank
    field key:Key
    public method transfer(dest:Account, key':Key, amt:nat)
      if (this.key==key') 
         this.bank.decBalance(this,amt); 
         this.bank.incBalance(dest.amt);
    public method set(key':Key)
      if (this.key==null) this.key=key'
    ghost balance(): int
        res:=bank.balance(this)
        
  class Bank
    field ledger: Ledger
    method incBalance(a:Account, amt: nat)
        this.ledger.decBalance(a,amt)
    private method decBalance(a:Account, amt: nat)
        this.ledger.decBalance(a,amt)   
    ghost balance(acc):int 
        res:=this.ledger.balance(acc)
        
   class Ledger
      acc:Acc
      bal:int
      next:Ledger
      ghost balance(a:Acc):int 
        if this.acc==a 
          res:=retrun bal
        else
          res:=this.next.balance(a)
      
\end{lstlisting}
\caption{$M_{ghost}$  -- a module with ghost fields}
\label{f:ex-bank-short}
\end{figure}

\clearpage
\section{Appendix to Section \ref{s:auxiliary}  --  Fundamental Concepts }
\label{app:aux}

Lemma \ref{lemma:orig:to:bounded:front} says, essentially,  that \scoped executions describe the same set of executions as those  starting at an initial state\footnote{An \emph{Initial} state's heap contains a single object of class \prg{Object}, and
its  stack   consists of a single frame, whose local variable map is a mapping from \prg{this} to the single object, and whose continuation is  any statement.
(See Def. \ref{def:initial})}.   
For instance, revisit  Fig. \ref{fig:illusrPreserve} , and assume that $\sigma_6$ is an initial state.
We have  $\leadstoOrigStar {\Mtwo} {\sigma_{10}}  {\sigma_{14}}$ and $ \notLeadstoBoundedStar {\Mtwo}  {\sigma_{10}} {\sigma_{14}}$, but also 
 $\leadstoBoundedStar  {\Mtwo}  {\sigma_{6}}   {\sigma_{14}}$. 

 \begin{lemma}
\label{lemma:orig:to:bounded:front}
For all modules $\overline M$, state  $\sigma_{init}$,  $\sigma$, $\sigma'$, where
$\sigma_{init}$ is  initial:
\begin{itemize} 
\item 
\label{otbOne}
$\leadstoBoundedStar  {\Mtwo}  {\sigma} {\sigma'} \ \ \Longrightarrow \  \
\leadstoOrigStar {\Mtwo} {\sigma}  {\sigma'}$
\item 
\label{otbTwo}
$\leadstoOrigStar {\Mtwo} {\sigma_{init}}  {\sigma'}  \ \  \Longrightarrow\ \
\leadstoBoundedStar  {\Mtwo}  {\sigma_{init}} {\sigma'}$.
\end{itemize}
\end{lemma}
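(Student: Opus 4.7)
For part 1, the plan is a straightforward induction on the number of steps in the scoped execution $\leadstoBoundedStar {\Mtwo} {\sigma} {\sigma'}$. The base case ($\sigma = \sigma'$) gives a zero-step ordinary execution. For the inductive step, I would unfold the definition of $\leadstoBoundedStar$: each intermediate transition $\leadstoOrig {\Mtwo} {\sigma_i} {\sigma_{i+1}}$ is by definition an ordinary execution step, and concatenating them yields $\leadstoOrigStar {\Mtwo} {\sigma} {\sigma'}$. The depth conjuncts $\EarlierS{\sigma_1}{\sigma_{i+1}}$ are simply discarded.

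For part 2, the crux is an auxiliary lemma about the depth of reachable states: for every state $\sigma$ with $\leadstoOrigStar {\Mtwo} {\sigma_{init}} {\sigma}$, we have $\DepthSt{\sigma} \geq 1$. I would prove this by induction on the number of ordinary steps. The base case gives $\DepthSt{\sigma_{init}} = 1$ from Def.\ \ref{def:initial}. For the inductive step, a case analysis on the operational rules of Fig.\ \ref{f:loo-semantics} shows that \textsc{Read}, \textsc{Write}, and \textsc{New} preserve depth; \textsc{Call} increases depth by one; and \textsc{Return} is the only rule that decreases depth, but its premise requires the stack to contain at least two frames $\phi_n \cdot \phi_{n+1}$, and the post-state retains $\phi_n$. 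Hence depth never drops to zero, and stays at least $1$.

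Given this auxiliary lemma, the proof of part 2 proceeds as follows. Suppose $\leadstoOrigStar {\Mtwo} {\sigma_{init}} {\sigma'}$ via intermediate states $\sigma_{init} = \sigma_1, \sigma_2, \ldots, \sigma_n = \sigma'$. By the auxiliary lemma applied to each prefix, $\DepthSt{\sigma_{i}} \geq 1 = \DepthSt{\sigma_{init}}$ for every $i$, so $\EarlierS{\sigma_{init}}{\sigma_i}$ holds. This is exactly the condition required by the definition of $\leadstoBoundedStar {\Mtwo} {\sigma_{init}} {\sigma'}$, so the ordinary execution already is a scoped execution.

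The main obstacle is really only the auxiliary depth lemma, and even that is a routine case analysis on the five operational rules. No subtle issues arise because the initial state sits at the minimum possible stack depth, so the scoping constraint is vacuous starting from it. Note that part 2 would \emph{fail} if we replaced $\sigma_{init}$ by an arbitrary state, since an intermediate \textsc{Return} could take us below the starting depth (as illustrated by the $\sigma_8 \leadsto \sigma_{15}$ example in Fig.\ \ref{fig:illusrPreserve}); the lemma depends crucially on $\sigma_{init}$'s stack being of depth one.
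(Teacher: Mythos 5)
Your proof is correct and follows essentially the same route as the paper's: part 1 is the definitional unfolding that discards the depth conjuncts, and part 2 rests on exactly the observation the paper invokes, namely that $\DepthSt{\sigma_{init}}=1$ is minimal so the scoping constraint is vacuous. Your auxiliary depth-$\geq 1$ lemma merely spells out the case analysis the paper leaves implicit.
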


\vspace{.05cm}

{Lemma \ref{l:var:unaffect} says that scoped execution does not affect the contents of variables in earlier frames.}
and that 
the interpretation of a variable remains unaffected by
scoped execution of statements  which do not mention that variable. More  in Appendix~\ref{app:aux}.

\begin{lemma}
\label{l:var:unaffect}
For any modules $\Mtwo$, states $\sigma$, $\sigma'$,  variable $y$, and number $k$:
\begin{itemize}
\item
\label{carInFrame}
{$\leadstoBoundedStar {\Mtwo}  {\sigma}  {\sigma'}  \ \wedge \ k<\DepthSt \sigma  \ \ \Longrightarrow \ \  \interpret {\RestrictTo \sigma k} y =  \interpret {\RestrictTo {\sigma'} k} y$
}
\item
$\leadstoBoundedStarFin {\Mtwo}  {\sigma}  {\sigma'} \ \wedge \ y\notin \vs(\sigma.\prg{cont}) \ \ \Longrightarrow \ \  \interpret \sigma y =  \interpret {\sigma'} y$
\end{itemize}
\end{lemma}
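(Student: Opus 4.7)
For the first part, my plan is to induct on the length of the scoped derivation $\leadstoBoundedStar{\Mtwo}{\sigma}{\sigma'}$. The base case is immediate. For the inductive step, decompose as $\leadstoBoundedStar{\Mtwo}{\sigma}{\sigma''}$ followed by a single step $\leadstoBounded{\Mtwo}{\sigma''}{\sigma'}$; by IH the $k$-th frame of $\sigma$ agrees with that of $\sigma''$, so it suffices to show the $k$-th frame is untouched by the single step. A case analysis on the rules of Fig.~\ref{f:loo-semantics} gives this: \textsc{Read}, \textsc{Write}, \textsc{New} modify only the top frame (at depth $\DepthSt{\sigma''}$) and possibly the heap; \textsc{Call} only pushes a new frame; \textsc{Return} pops the top and updates the new top, at depth $\DepthSt{\sigma'} = \DepthSt{\sigma''}-1$. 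In every case, the touched frame lies at depth at least $\DepthSt{\sigma}$, and the scoped condition $\DepthSt{\sigma} \leq \DepthSt{\sigma''}, \DepthSt{\sigma'}$ combined with $k < \DepthSt{\sigma}$ guarantees that frame $k$ is strictly below it. Hence the variable map of frame $k$ is preserved, and so is the lookup of $y$ in $\RestrictTo{\cdot}{k}$.

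For the second part, the strategy is to build on the first, tracking the frame at depth $n := \DepthSt{\sigma} = \DepthSt{\sigma'}$. Because depth never drops below $n$, this frame is present in every intermediate $\sigma_i$, and is the top exactly when $\DepthSt{\sigma_i} = n$. I will induct on the length of the derivation while maintaining two invariants at each $\sigma_i$: (a) the binding of $y$ in the frame at depth $n$ equals $\interpret{\sigma}{y}$, and (b) if $\DepthSt{\sigma_i} = n$, then $\vs(\sigma_i.\prg{cont}) \subseteq \vs(\sigma.\prg{cont})$. Invariant (b) holds at the start by hypothesis and is preserved because each \LangOO execution step replaces the current continuation by a proper tail of it, which can only shrink the set of occurring variables.

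The case analysis for (a) then reads: a step at depth $> n$ touches only frames strictly above $n$, leaving frame $n$ (and its $y$-binding) alone; a local step (\textsc{Read}, \textsc{Write}, \textsc{New}) at depth $n$ modifies only a variable mentioned in the executed head, which by (b) lies in $\vs(\sigma.\prg{cont})$ and is therefore distinct from $y$; a \textsc{Call} step merely pushes; and a \textsc{Return} into depth $n$ writes into frame $n$'s binding for the variable $x$ of the matching call statement $x := y_0.m(\overline{y})$ that sits in frame $n$'s continuation, which by (b) applied at the \textsc{Call} step is again distinct from $y$. Since $\DepthSt{\sigma'} = n$, frame $n$ of $\sigma'$ is its top frame, and (a) yields $\interpret{\sigma'}{y} = \interpret{\sigma}{y}$.

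The hardest point is the bookkeeping in the last case of part 2: to conclude that the variable overwritten by a \textsc{Return} is not $y$, one must invoke invariant (b) at the moment of the matching \textsc{Call}, which requires pairing calls with their returns along the scoped trace. Everything else is a direct case analysis on the operational semantics, consistent with the excerpt's remark that the result ``follows easily from the operational semantics, and the definitions.''
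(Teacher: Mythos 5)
Your proposal is correct and follows essentially the same route as the paper's proof: part 1 by a single-step case analysis on the operational semantics plus induction over the scoped trace, and part 2 by induction over the trace, using part 1 for frames strictly below $\DepthSt{\sigma}$ and the hypothesis $y\notin \vs(\sigma.\prg{cont})$ for steps that touch the frame at depth $\DepthSt{\sigma}$. Your explicit invariant (b) and the pairing of \textsc{Return} steps with their matching \textsc{Call}s just makes precise what the paper compresses into ``we are executing part of $\sigma.\prg{cont}$''.
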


\begin{figure}[htb]
\begin{tabular}{|c|c|c|}
\hline 
\resizebox{3.5cm}{!}{
\includegraphics[width=\linewidth]{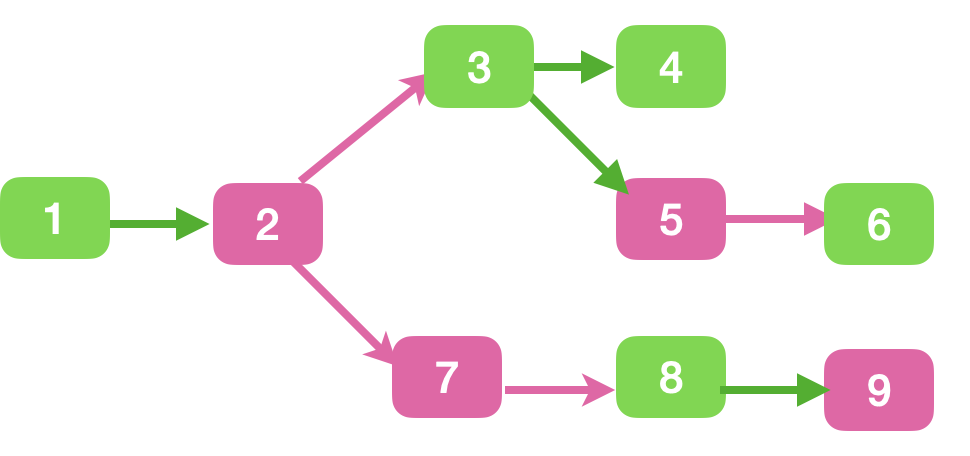}
} 
&
\resizebox{5cm}{!}{
\includegraphics[width=\linewidth]{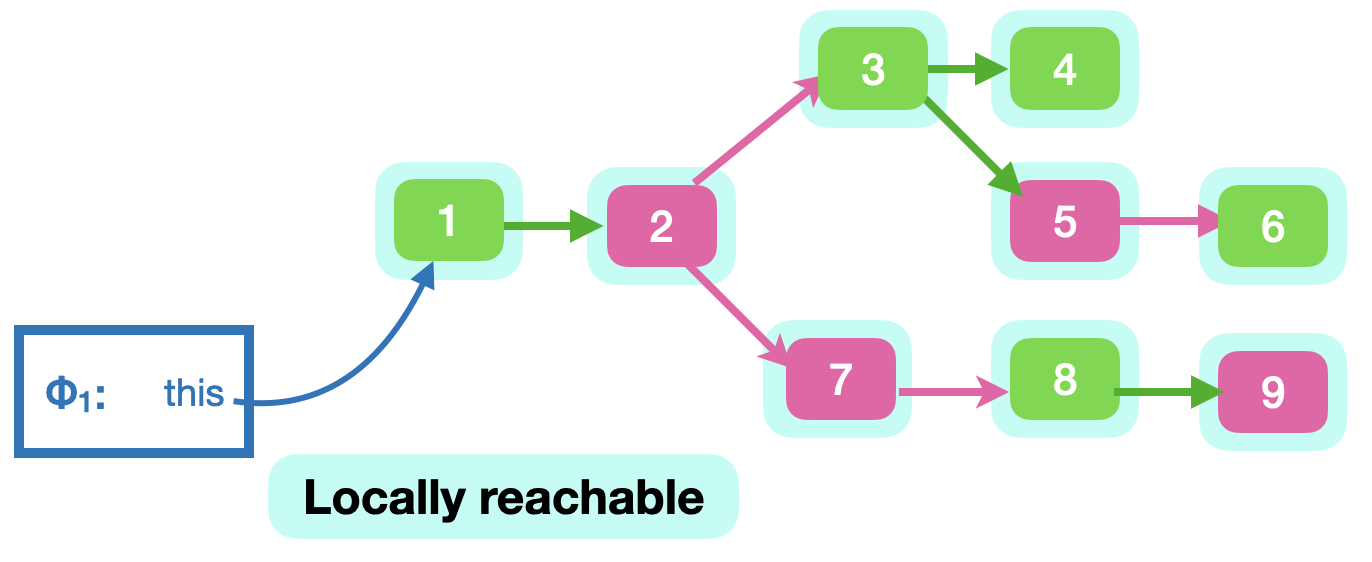}
} 
&
\resizebox{5cm}{!}{
\includegraphics[width=\linewidth]{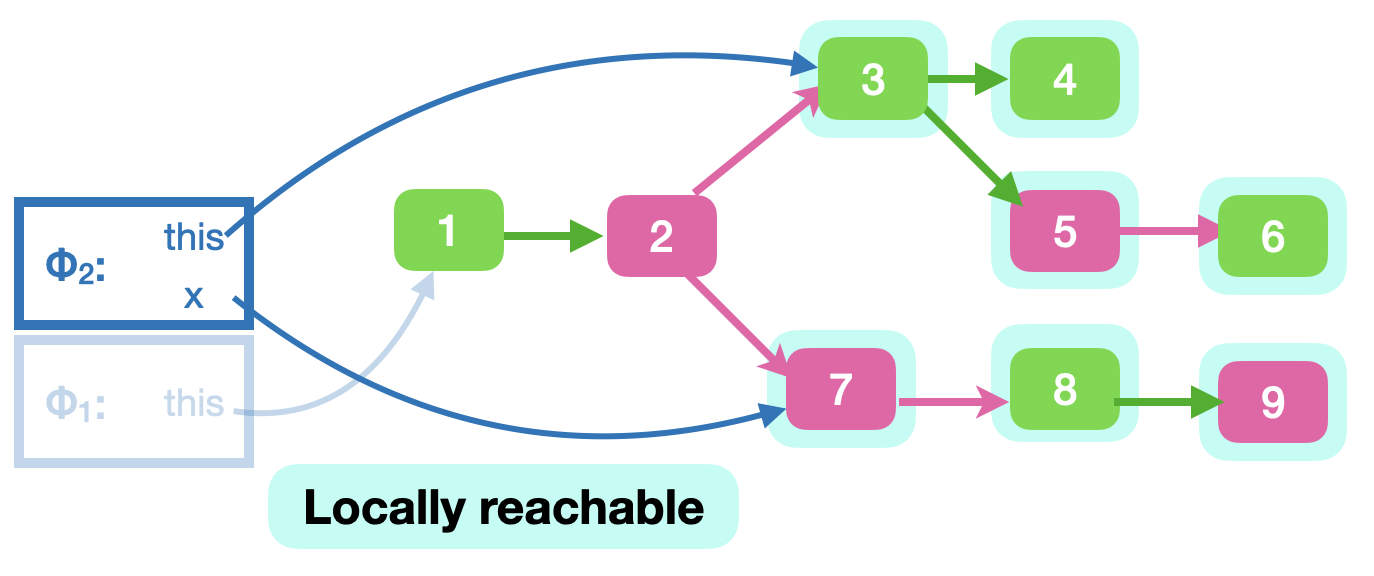}
} 
\\
\hline
 a heap
&
Locally Reachable from $\phi_1$
&
Locally Reachable from $\phi_2$
\\
\hline \hline
\end{tabular}
   \caption{-Locally Reachable Objects} 
   \label{fig:LReachable}
 \end{figure}

Fig. \ref{fig:LReachable} illustrates local reachability:  In the middle pane the top frame is $\phi_1$ which maps \prg{this} to $o_1$; all objects are locally reachable. 
In the right pane the top frame is $\phi_2$, which maps \prg{this} to $o_3$, and $x$ to $o_7$; now $o_1$ and $o_2$ are no longer locally reachable.

 \beginProof{lemma:orig:to:bounded:front}

\begin{itemize}
\item

 By unfolding and folding the definitions.
\item
By unfolding and folding the definitions, and also, by the fact that $\DepthSt {\sigma_{init}}$=1, \ie minimal.
\end{itemize}

\completeProof

\vspace{1cm}

 \beginProof{l:var:unaffect}

\begin{itemize}
\item

We unfolding the definition of $\leadstoBounded  {\Mtwo}  {\sigma}  {\sigma'}$  $\leadstoBounded  {\Mtwo}  {\sigma}  {\sigma'}$ and the rules of the operational semantics. 
\item
Take $k=\DepthSt {\sigma}$.
We unfold the definition from \ref{def:shallow:term}, and obtain that
$\sigma = \sigma'$ or, $\exists \sigma_1,...\sigma_{n1}.\forall i\!\in\! [1..n)[\  \leadstoOrig {\Mtwo}  {\sigma_i}  {\sigma_{i+1}}\  \wedge\  \EarlierS{\sigma_1} {\sigma_{i+1}}\ \wedge \sigma=\sigma_1\ \wedge\ \sigma'=\sigma_n  \ ]$

Consider the second case. 
Take any  $i\!\in\! [1..n)$. Then, by Definition, $k \leq \DepthSt {\sigma}$. 
If $k= \DepthSt {\sigma_i}$, then we are executing part of $\sigma.prg{cont}$, and because 
 $y\notin \vs(\sigma.\prg{cont})$, we get $\interpret {\RestrictTo \sigma i} y =  \interpret {\RestrictTo {\sigma_{i+1}} k} y$.
 If $k= \DepthSt {\sigma_i}$, then we apply the bullet from above, and also obtain 
 $\interpret {\RestrictTo \sigma i} y =  \interpret {\RestrictTo {\sigma_{i+1}} k} y$
 
 This gives that  $\interpret {\RestrictTo {\sigma} k} y= \interpret {\RestrictTo {\sigma'} k} y$.
 Moreover, because   $\leadstoBoundedStarFin {\Mtwo}  {\sigma}  {\sigma'}$  we obtain that
 $\DepthSt {\sigma} = \DepthSt {\sigma'} = k$.
 Therefore, we have that  $\interpret   {\sigma} y= \interpret   {\sigma'} y$.
 
 \end{itemize}
 \completeProof

\vspace{1cm}

 We also prove that in well-formed states ($\models \sigma$), all objects locally reachable from a given frame also locally reachable from the frame below.
 
\begin{lemma}
\label{rel:smaller}
  $\models \sigma\ \  \wedge k< \DepthSt \sigma\ \ \  \ \Longrightarrow\ \ \  \ \LRelevantO   {\RestrictTo \sigma {k+1}} \subseteq \LRelevantO  {\RestrictTo \sigma {k}}$
\end{lemma}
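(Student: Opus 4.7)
\textbf{Proof plan for Lemma \ref{rel:smaller}.} The plan is to prove the set inclusion pointwise, by taking an arbitrary $\alpha \in \LRelevantO{\RestrictTo \sigma {k+1}}$ and chasing the two clauses of Def.~\ref{def:wf:state} one after the other. The intuition is that every local root at depth $k{+}1$ is, by well-formedness, a formal parameter of the $(k{+}1)$-th frame; each such formal coincides with the value of some variable one frame below; and so the whole reachability cone rooted at $k{+}1$ is already present in the reachability cone rooted at~$k$.

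Concretely, I would proceed as follows. First, observe that since $k < \DepthSt \sigma$, we have $1 < k{+}1 \leq \DepthFs \sigma$, so the well-formedness hypothesis $\Mtwo \models \sigma$ may be instantiated at index $k{+}1$. From its second conjunct,
$\LRelevantO{\RestrictTo \sigma {k+1}} = \bigcup_{z \in \Formals{\RestrictTo \sigma {k+1}}{\Mtwo}} \Relevant{\interpret{\RestrictTo \sigma {k+1}}{z}}{\sigma}$,
so any chosen $\alpha \in \LRelevantO{\RestrictTo \sigma {k+1}}$ lies in $\Relevant{\interpret{\RestrictTo \sigma {k+1}}{z}}{\sigma}$ for some formal $z$ of the top frame of $\RestrictTo \sigma {k+1}$. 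Second, apply the first conjunct of well-formedness at $k{+}1$ to obtain a variable $y$ in the top frame of $\RestrictTo \sigma {k}$ with $\interpret{\RestrictTo \sigma {k+1}}{z} = \interpret{\RestrictTo \sigma {k}}{y}$. Finally, because $\Relevant{\cdot}{\sigma}$ depends only on the heap of $\sigma$ (which is shared by all truncations), we get $\alpha \in \Relevant{\interpret{\RestrictTo \sigma {k}}{y}}{\sigma}$ and, since $y \in dom(\RestrictTo \sigma {k})$, the definition of $\LRelevantO{\,\cdot\,}$ yields $\alpha \in \LRelevantO{\RestrictTo \sigma {k}}$, as required.

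The main obstacle is mostly bookkeeping rather than mathematical depth: one must check that the well-formedness clause is applicable at index $k{+}1$ (i.e.\ confirm $k \geq 1$, which follows implicitly from $\RestrictTo \sigma {k}$ being defined), and one must be careful that the $\Relevant{\,\cdot\,}{\sigma}$ operator is indeed insensitive to the particular truncation since it is parameterised by an address and the global heap $\sigma$. No induction on execution is needed, and no case analysis on frame shape is required—the whole argument is a single application of the two conjuncts of Def.~\ref{def:wf:state} at index $k{+}1$ followed by unfolding the definition of $\LRelevantO{\,\cdot\,}$.
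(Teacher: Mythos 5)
Your proposal is correct and follows essentially the same route as the paper's (very terse) proof, which simply unfolds the definitions and applies well-formedness of $\sigma$: you spell out precisely the two clauses of Def.~\ref{def:wf:state} at index $k{+}1$ that the paper invokes implicitly, together with the observation that $\Relevant{\cdot}{\sigma}$ depends only on the shared heap. No gap.
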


\begin{proof}
By unfolding the definitions: Everything that is in $ \RestrictTo \sigma {k+1}$ is reachable from its frame, and everything that is reachable from the frame of  $ \RestrictTo \sigma {k+1}$ is also reachable from the frame of  $ \RestrictTo \sigma {k}$. We then apply that $\models \sigma$
 
\end{proof}

 \vspace{1cm}

 \beginProof {lemma:relevant}
 
 \begin{enumerate}
\item
By unfolding and folding the definitions. 
Namely, everything that is locally reachable in $\sigma'$ is locally reachable through the frame $\phi$, and everything in the frame $\phi$ is locally reachable in $\sigma$.
  
  \item
 We require that $\models \sigma$ -- as we said earlier, we require this implicitly.
  Here we apply induction on the execution. 
  Each step is either a method call (in which case we apply the bullet from above), or a return statement (then we apply lemma \ref{rel:smaller}), or the creation of a new object (in which case reachable set is the same as that from previous state plus the new object),
  or an assignment to a variable (in which case the locally reachable objects in the new state are a subset of the locally reachable from the old state), or a an assignment to a field. 
  In the latter case, the locally reachable objects are also a subset of the locally reachable objects from the previous state.

 \end{enumerate}

\completeProof

 \clearpage
\section{Appendix to Section \ref{s:assertions} -- Assertions}
\label{appendix:assertions}

Figure \ref{fig:ProtectedFrom} illustrates ``protected from'' and ``protected''. 
In the first row  we  highlight in yellow  the objects protected from other objects. Thus, all objects except $o_6$ are protected from $o_5$ (left pane);\ all objects expect $o_8$ are protected from $o_7$ (middle pane);\ and all objects except $o_3$, $o_6$, $o_7$, and $o_8$ are protected from $o_2$ (right pane). 
Note  that $o_6$ is not protected from $o_2$, because  $o_5$ is reachable from $o_2$, is external, and has direct access to $o_6$.

In the third row of   Figure \ref{fig:Protected} we show three states: 
 $\sigma_1$ has  top frame $\phi_1$, which has  one variable, \prg{this}, pointing to $o_1$, while 
 $\sigma_2$ has  top frame $\phi_2$; it has two  variables,   \prg{this} and \prg{x} pointing to $o_3$ and  $o_7$, and 
 $\sigma_3$ has  top frame $\phi_3$; it has two  variables,  \prg{this} and \prg{x}, pointing to $o_7$ and $o_3$.  
%
We also   highlight the protected objects with a yellow halo.
 Note that $o_3$ is protected in $\sigma_2$, but is not protected in $\sigma_3$. This is so, because $\interpret {\sigma_3} {\prg{this}}$  is external, and  $o_3$ is an argument to the call. As a result, during the call, $o_7$ may obtain direct access to $o_3$. 
 
\begin{figure}[tbh]
\begin{tabular}{|c|c|c|}
\hline  
\resizebox{4.5cm}{!}{
\includegraphics[width=\linewidth]{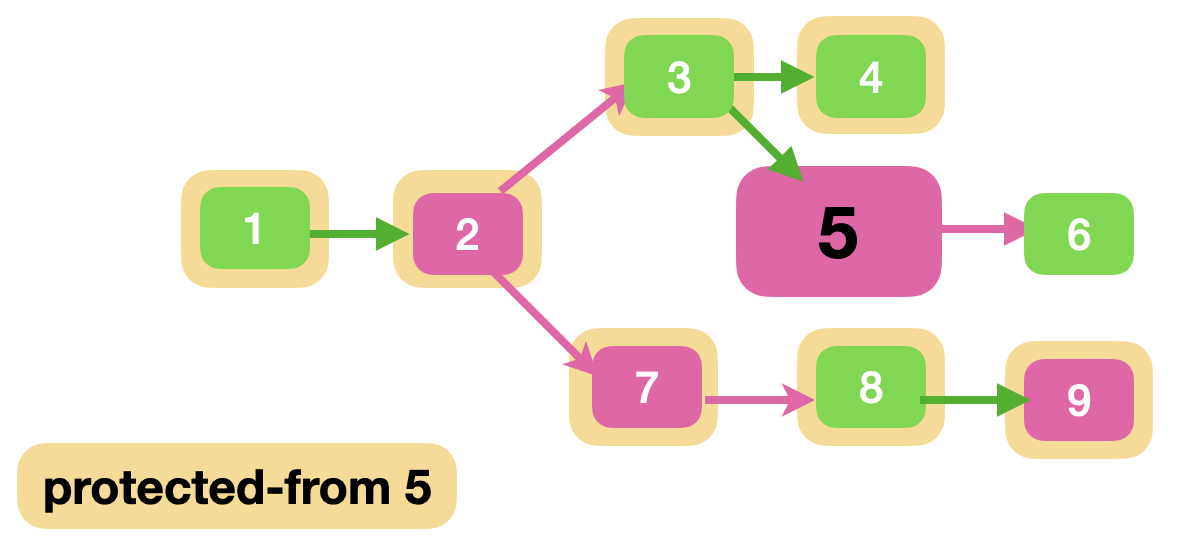}
} 
&
\resizebox{4.5cm}{!}{
\includegraphics[width=\linewidth]{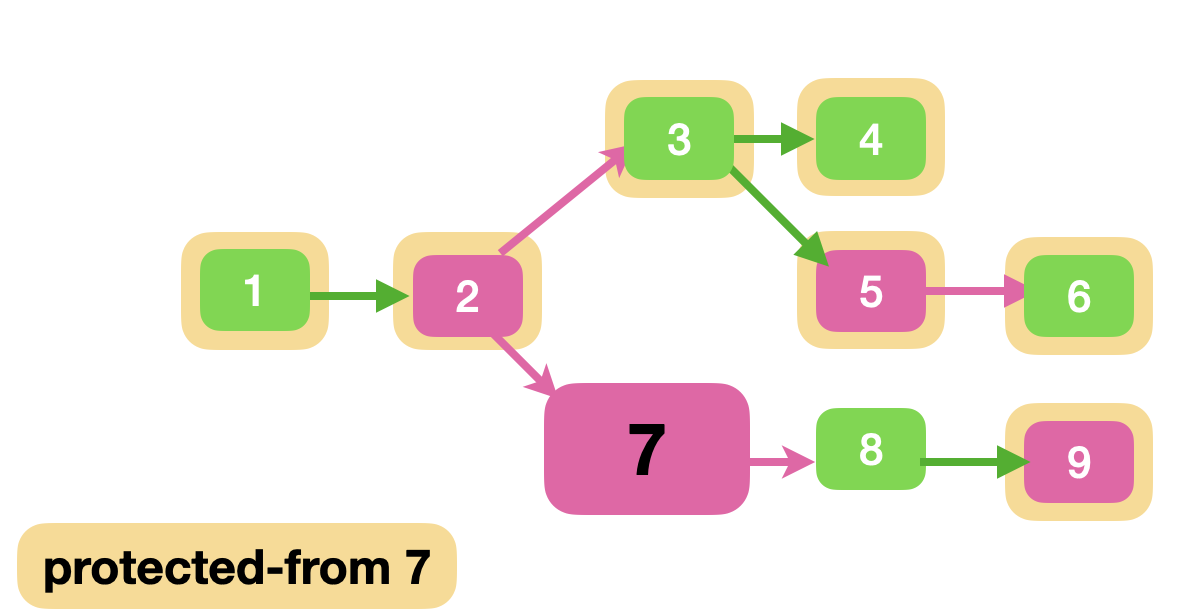}
} 
&
\resizebox{4.5cm}{!}{
\includegraphics[width=\linewidth]{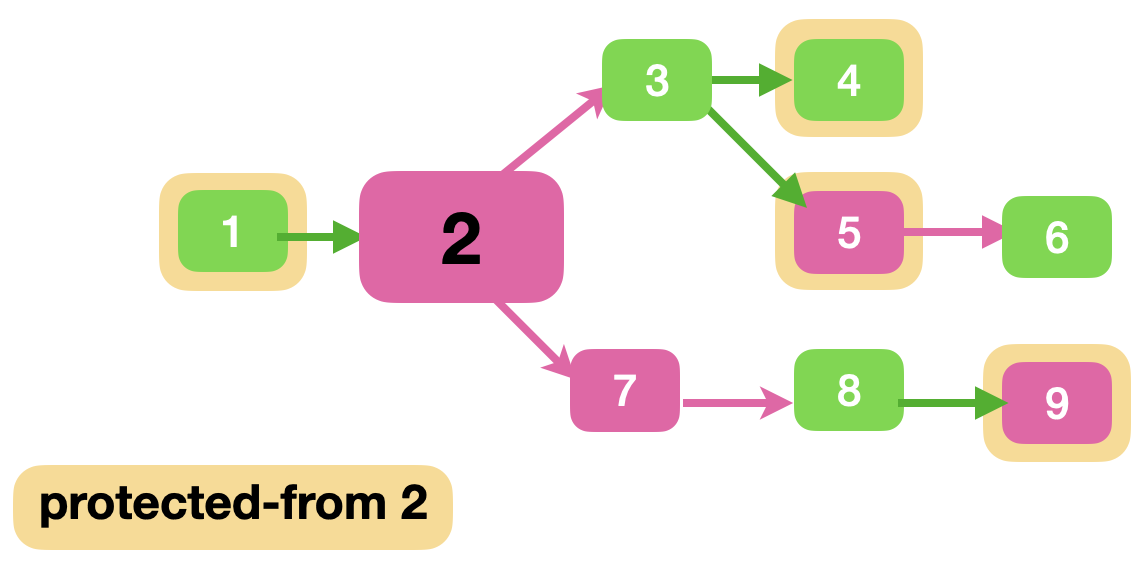}
} 
\\
\hline
protected from $o_5$
&
protected from $o_7$
&
protected from $o_2$
\\
\hline  \hline
\resizebox{4.5cm}{!}{
\includegraphics[width=\linewidth]{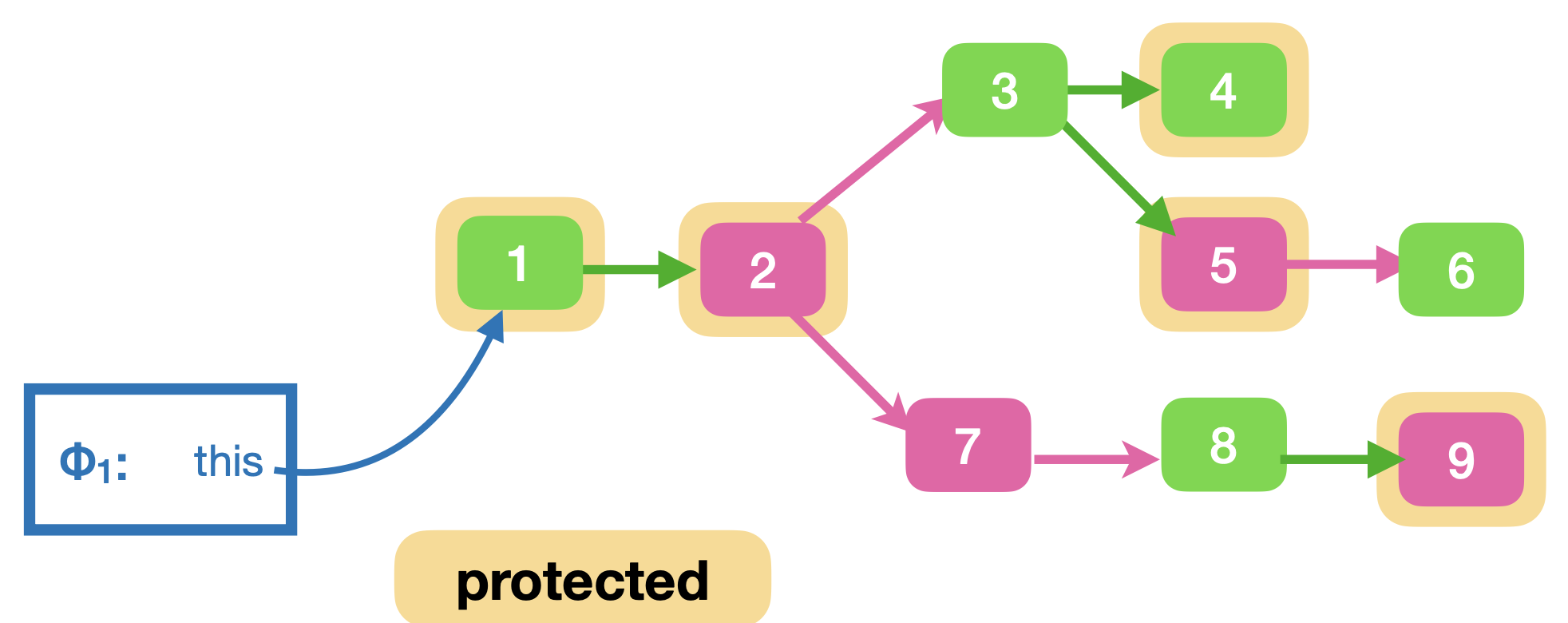}
} 
&
\resizebox{4.5cm}{!}{
\includegraphics[width=\linewidth]{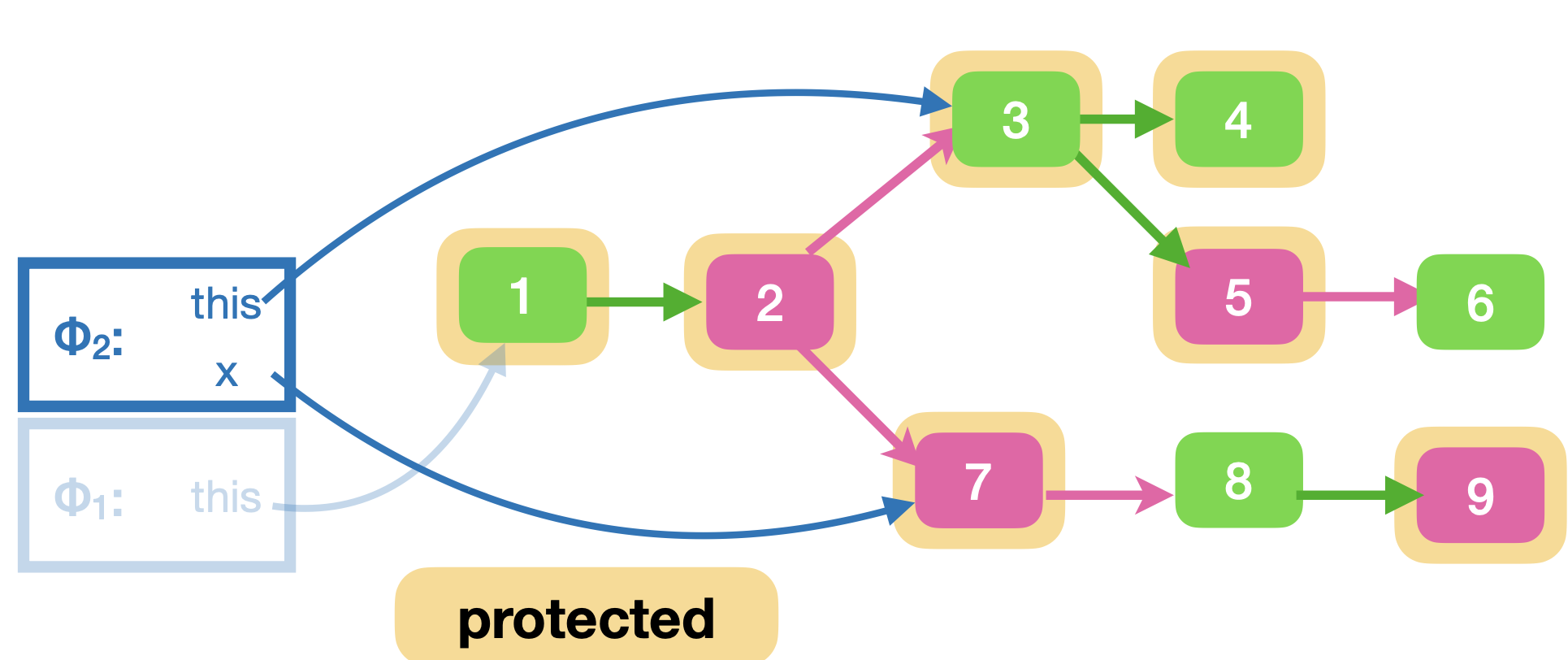}
} 
&
\resizebox{4.5cm}{!}{
\includegraphics[width=\linewidth]{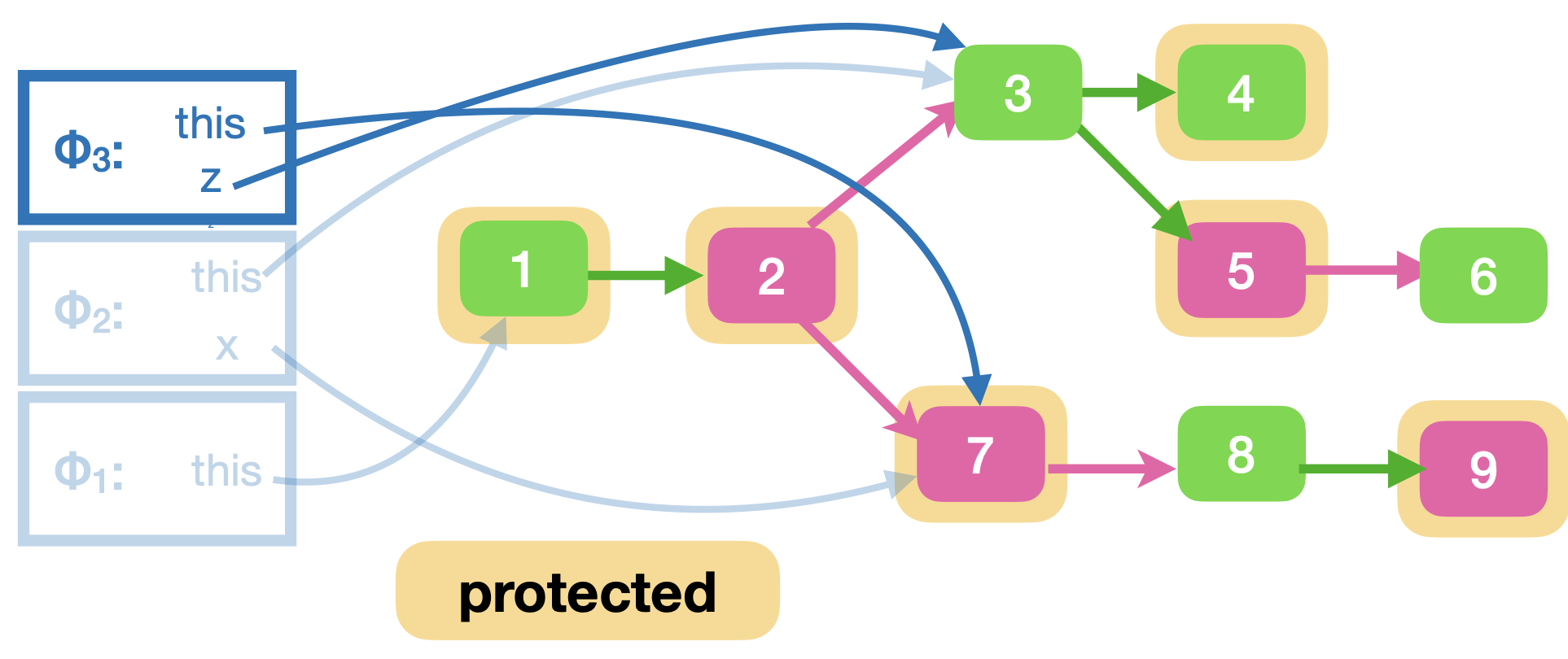}
} 
\\
\hline
protected in $\sigma_1$ & 
protected   in $\sigma_2$ & 
protected  in $\sigma_3$  
\\
\hline \hline
\end{tabular}
   \caption{ Protection. Pink objects are external, and green objects are internal.}
   \label{fig:ProtectedFrom}
    \label{fig:Protected}
 \end{figure}

\sdN{In order to prove \ref{lemma:addr:expr} from the next appendix, we first formulate and prove the following auxiliary lemma, which allows us to replace any variable $x$ in an extended expression $\re$, by its interpretation

\begin{lemma}
\label{aux:lemma:vars:eval}
For all extended expressions $\re$,  addresses $\alpha$ and  variables $x$, so that $x \in dom(\sigma):$

\begin{itemize}
\item
$\eval{M}{\sigma}{{\re}} {\alpha} \ \ \ \Longleftrightarrow \ \ \ \ \eval{M}{\sigma}{{\re[{\interpret \sigma x}/x] }} {\alpha}$
\end{itemize}
\end{lemma}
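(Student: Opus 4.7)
The plan is to prove both directions by induction on the derivation of the evaluation judgment (rather than purely on the structure of $\re$), using the rules in Fig.~\ref{f:evaluation}. This is necessary because the evaluation of ghost fields recurses into an arbitrary body $\re'$ whose syntactic size is unrelated to $\re_0.gf(\overline{\re})$; nevertheless, each individual derivation is a finite tree. The key invariant that makes the whole argument go through is that $\interpret{\sigma}{x}$ is a value $v$, so $\re[v/x]$ is again a well-formed extended expression in which $x$ no longer occurs, and the only rule whose behaviour depends on $\sigma$'s variable map is \textsc{E-Var}.

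For the forward direction, I would case-split on the last rule used in $\eval{M}{\sigma}{\re}{\alpha}$. The \textsc{E-Val} case is immediate since substitution leaves values unchanged. For \textsc{E-Var}, there are two sub-cases: if the variable is $x$ itself, then $\re[\interpret{\sigma}{x}/x] = \interpret{\sigma}{x}$, which evaluates to itself by \textsc{E-Val}, matching $\alpha = \interpret{\sigma}{x}$; if the variable is some $y \neq x$, substitution is the identity and the derivation is unchanged. The \textsc{E-Field} case follows by applying the induction hypothesis to the sub-derivation for $\re'$ and then re-applying \textsc{E-Field}, using that $\interpret{\sigma}{\alpha.f}$ does not depend on the substitution (it is a heap lookup on an address). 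The backward direction is symmetric.

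The main obstacle is the \textsc{E-Ghost} case, where the last step is $\eval{M}{\sigma}{\re_0.gf(\overline{\re})}{v}$ via evaluation of the body $[\overline{v'/x'}]\re''$ with the formal parameters $\overline{x'}$ substituted by the evaluated arguments $\overline{v'}$. Here I would apply the induction hypothesis to each sub-derivation for $\re_0$ and for each $\re_i$ to obtain matching derivations for $\re_0[\interpret{\sigma}{x}/x]$ and $\re_i[\interpret{\sigma}{x}/x]$, yielding the same values $\alpha$ and $\overline{v'}$. It then remains to observe that substituting $\interpret{\sigma}{x}$ for $x$ in $\re_0.gf(\overline{\re})$ yields $(\re_0[\interpret{\sigma}{x}/x]).gf(\overline{\re[\interpret{\sigma}{x}/x]})$, and that the ghost body's evaluation $[\overline{v'/x'}]\re''$ is unaffected because the formal parameters of a ghost field are bound and $x \notin \overline{x'}$ (renaming if necessary, per Barendregt). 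The substitution into the body therefore proceeds identically on both sides, closing the case.

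A minor technicality is the assumption $x \in dom(\sigma)$, which ensures $\interpret{\sigma}{x}$ is defined and hence the substitution $\re[\interpret{\sigma}{x}/x]$ is a well-formed expression; this is used in the \textsc{E-Var} sub-case where the variable is $x$. No other case relies on it explicitly.
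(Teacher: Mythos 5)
Your proof is correct and takes essentially the same approach as the paper, which simply states that the result follows by induction on the structure of $\re$ together with case analysis on the evaluation rules of Fig.~\ref{f:evaluation}. Your switch to induction on the evaluation derivation is a safe refinement but not actually necessary: as your own \textsc{E-Ghost} case shows, the induction hypothesis is only needed for the receiver $\re_0$ and the arguments $\overline{\re}$ (which are structural subexpressions), while the body premise $\eval{M}{\sigma}{[\overline{v'/x'}]\re''}{v}$ is identical on both sides of the equivalence since the formals have already been replaced by values and $x$ does not occur in it, so plain structural induction on $\re$ already closes that case.
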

}

\sdN{
Note that in the above we require that $x\in dom( \sigma)$, in order to ensure that the replacement $[{\interpret \sigma x}/x]$ is well-defined.
On the other hand, 
we do not require  that $x\in \fv(\re)$, because 
if  $x\notin \fv(\re)$, then ${\re[{\interpret \sigma x}/x] } \txteq \re$ and the guarantee from above becomes a tautology. 
}

\sdN{
\noindent
\textbf{Proof of Lemma 
\ref{aux:lemma:vars:eval}} 
 The proof goes by induction on the structure of $\re$ -- as defined in Def. \ref{f:chainmail-syntax} -- and according to the expression evaluation rules from Fig. \ref{f:evaluation}.
\noindent
\textbf{End of Proof}
}

\clearpage

\section{Appendix to Section \ref{s:preserve} -- Preservation of Satisfaction }
\label{app:preserve}

\beginProof{lemma:addr:expr}

We first prove that for  any $M$ $A$, $\sigma$

\begin{enumerate}
\item
To show that  $ \satisfiesA{M}{\sigma}{A}\ \ \ \ \Longleftrightarrow \ \ \ \ \ \satisfiesA{M}{\sigma}{A[{\interpret \sigma x}/x]} $ 

The proof goes by induction on the structure of $A$,   application of  Defs.  \ref{def:chainmail-semantics}, \ref{def:chainmail-protection-from}, and  \ref{def:chainmail-protection}.

\item
To show that $ \satisfiesA{M}{\sigma}{A}   \ \ \ \Longleftrightarrow\ \ \ \satisfiesA{M}{\sigma[\prg{cont}\mapsto stmt]}{A}$ 

The proof goes by induction on the structure of $A$,   application of  Defs.  \ref{def:chainmail-semantics}, \ref{def:chainmail-protection-from}, and  \ref{def:chainmail-protection}.

\end{enumerate}

The lemma itself then follows form (1) and (2) proven above.

\completeProof

In addition to what is claimed in Lemma  \ref{lemma:addr:expr}, it  also holds that 
\begin{lemma}
$\eval{M}{\sigma}{{\re}}{\alpha}  \ \ \Longrightarrow\ \  [ \ \satisfiesA{M}{\sigma}{A} \  \Longleftrightarrow\   \  \satisfiesA{M}{\sigma}{A[\alpha/\re]} \  \  ]$
\end{lemma}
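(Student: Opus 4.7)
The plan is to reduce the lemma to an analogous substitution property for expression evaluation, and then lift it to assertions by structural induction on $A$, following exactly the pattern established for the already-proved Lemma~\ref{lemma:addr:expr}.

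\medskip
\noindent
\emph{Step 1 (auxiliary evaluation lemma).} First I would establish the following expression-level statement:
if $\eval{M}{\sigma}{\re}{\alpha}$, then for every extended expression $\re'$ and value $v$,
\[
   \eval{M}{\sigma}{\re'}{v} \ \ \Longleftrightarrow\ \ \eval{M}{\sigma}{\re'[\alpha/\re]}{v}.
\]
This is proved by induction on the structure of $\re'$ using the evaluation rules of Fig.~\ref{f:evaluation}. The only delicate case is when $\re'$ has the same syntactic shape as $\re$: then either $\re' \txteq \re$, in which case $\re'[\alpha/\re] \txteq \alpha$ and both directions follow from the hypothesis $\eval{M}{\sigma}{\re}{\alpha}$ together with determinacy of $\hookrightarrow$; or $\re'\not\txteq\re$, and substitution descends into subterms so the induction hypothesis applies componentwise (field lookup, ghost-field unfolding). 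This is the analogue of Lemma~\ref{aux:lemma:vars:eval}, only replacing the variable~$x$ by an arbitrary expression~$\re$.

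\medskip
\noindent
\emph{Step 2 (induction on $A$).} With the auxiliary lemma in hand, I would prove the main statement by induction on the syntactic form of $A$ as given in Def.~\ref{def:assert:syntax}, unfolding Defs.~\ref{def:chainmail-semantics} and~\ref{def:chainmail-protection-from}. The base cases $A \txteq \re'$ and $A \txteq \re' : C$ follow directly from Step~1, noting that $\class{\alpha}{\sigma}$ depends only on the address, not the syntactic path used to reach it. The boolean connectives $\neg$ and $\wedge$ are immediate from the induction hypothesis. For $\external{\re'}$ we combine Step~1 with the class-lookup base case. For the quantifier $\forall x{:}C.A'$, we adopt the standard Barendregt convention $x \notin \fv(\re)$, so the substitution $[\alpha/\re]$ commutes with the inner substitution $[\alpha'/x]$ used in Def.~\ref{def:chainmail-semantics}(\ref{quant1}); the induction hypothesis then finishes the case. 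For $\protectedFrom{\re_1}{\re_2}$ and $\inside{\re_1}$, I unfold to the third and fourth clauses of Def.~\ref{def:chainmail-protection}, apply Step~1 to each of the subexpressions $\re_1,\re_2$ to obtain their common denotations $\alpha_1,\alpha_2$ both before and after substitution, and observe that the remaining conditions ($\Relevant{\cdot}{\sigma}$, $\LRelevantO{\sigma}$, field lookups, $\external{\alpha'}$) are purely semantic and therefore independent of how $\re_1, \re_2$ were written syntactically.

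\medskip
\noindent
\emph{Expected main obstacle.} The routine bookkeeping (connectives, quantifier capture) is standard. The only place that needs real care is Step~1 in the presence of ghost fields, where evaluation is potentially non-terminating and the substitution $[\alpha/\re]$ may occur inside the body of a recursively unfolded ghost definition. The induction in Step~1 must therefore be organised either on the shape of $\re'$ together with the finite derivation of $\hookrightarrow$, or (equivalently) on the evaluation derivation itself, so that the ghost-field rule \textsc{E-Ghost} can appeal to the induction hypothesis on the already-substituted body $[\overline{v/x}]\re_{body}$. Once this is set up as in the proof of Lemma~\ref{aux:lemma:vars:eval}, the rest of the proof is a mechanical case analysis mirroring that of Lemma~\ref{lemma:addr:expr}.
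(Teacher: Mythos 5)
Your proposal is correct and follows essentially the same route as the paper: a structural induction on $A$ that unfolds the satisfaction definitions, resting on an expression-level substitution lemma in the style of Lemma~\ref{aux:lemma:vars:eval}. Your only addition is to spell out that the auxiliary lemma must be generalised from variables to arbitrary expressions $\re$ (and to flag the ghost-field/non-termination bookkeeping), which is exactly the gap the paper's one-line proof glosses over.
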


\begin{proof} by induction on the structure of $A$,   application of  Defs.  \ref{def:chainmail-semantics}, \ref{def:chainmail-protection-from}, and  \ref{def:chainmail-protection}, and , and auxiliary lemma \ref{aux:lemma:vars:eval}.

\end{proof}

\subsection{Stability}


We first give complete definitions for the concepts of $  \Stable {\_]}$ and $\Pos {\_}$

\vspace{.2cm}

\begin{definition}
\label{def:Basic}
[$\Stable{\_}$] assertions: 

$
\begin{array}{l}
 \begin{array}{c}
  \Stable {\inside{\re}}  \triangleq  false \\
    \Stable {\protectedFrom \re {\overline {u}}} =  
  \Stable  {\internal \re} =  
    \Stable {\re}=   
     \Stable {\re:C}\   \triangleq \    true
 \end{array}
  \\
 \begin{array}{lcl}
 \Stable  {A_1  \wedge  A_2}\  \triangleq\     \Stable  { A_1}  \wedge    \Stable  {A_2}    &
\lclSPACE  &  
 \Stable  {\forall x:C.A} =\Stable  {\neg A} \   \triangleq\   \Stable A
 \end{array}
 \end{array}
$
\label{f:Basic}
 \end{definition}

 \begin{definition}
[$\Pos{\_}$] assertions: 

$
\begin{array}{l}
 \begin{array}{c}
  \Pos {\inside{\re}} =  \Pos {\protectedFrom \re {\overline {u}}} =  
  \Pos  {\internal \re} =   
    \Pos {\re}=   
     \Pos {\re:C}\   \triangleq \    true
     \\
 \Pos  {A_1  \wedge  A_2}\  \triangleq\     \Pos  { A_1}  \wedge    \Pos  {A_2}  
  \\ 
 \Pos  {\forall x:C.A}   \triangleq\   \Pos A
\\ 
  \Pos {\neg A}  \triangleq \Stable A 
\end{array}
 \end{array}
 $
 \label{def:Pos}
\end{definition}

%

The definition  of $\Pos{\_}$ is  less  general than would be   possible. \Eg $(\inside {x} \rightarrow  x.f=4) \rightarrow xf.3=7$  does not satisfy our definition of $\Pos {\_}$.
We have given these less general definitions in order to simplify both our defintions and our proofs.

\beginProof {l:preserve:asrt} 
Take any  state  $\sigma$, frame  $\phi$,  assertion  $A$,

\begin{itemize}
\item 
To show\\
  $\Stable{A}\ \wedge \  \fv(A)=\emptyset \ \ \  \Longrightarrow \ \  \ \  [\ \ M, \sigma \models A \ \ \Longleftrightarrow \ \  M,{\PushSLong \phi \sigma} \models A\ \ ]$

By induction on the structure of the definition of $\Stable{A}$.

\item 
To show\\
 $\Pos{A}\ \wedge \  \fv(A)=\emptyset \  \wedge \     {M\cdot\Mtwo \models {\PushSLong \phi \sigma}}\ \wedge\ 
  \ M, \sigma \models A \  \wedge \  M, {\PushSLong \phi \sigma} \models  \intThis \ \ \Longrightarrow$ \\
  $\strut \hspace{2cm}  \ \  M,{\PushSLong \phi \sigma} \models A $

By induction on the structure of the definition of $\Pos{A}$.
The only interesting case is when $A$ has the form $\inside {\re}$. Because 
$ fv(A)=\emptyset$, we know that $\interpret {\sigma} {\re}$=$\interpret {\PushSLong \phi \sigma}  {\re}$. Therefore, we assume that 
 $\interpret {\sigma} {\re} = \alpha$ for some $\alpha$, assume that $ M,  \sigma  \models \inside \alpha$, and want to show that  $ M,{\PushSLong \phi \sigma} \models \inside \alpha$. 
 From $   {M\cdot\Mtwo \models {\PushSLong \phi \sigma}}$ we obtain that
 $Rng(\phi) \subseteq Rng(  \sigma)  $. 
 From this, we obtain that
  $  \LRelevantO {\PushSLong \phi \sigma} \subseteq \LRelevantO  {\sigma}$.
  The rest follows by unfolding and folding Def. \ref{def:chainmail-protection}.

  \end{itemize}
 
\completeProof

\subsection{Encapsulation}

{
Proofs of adherence to {\SpecLang specifications  hinge on the expectation that some, 
specific, assertions cannot be invalidated unless some 
} internal (and thus known) computation took place. 
{We call such assertions   \emph{encapsulated}.}
}
We define the  judgement,  $M\ \vdash  \encaps{A}$, in terms of the judgment  $M; \Gamma \vdash \encaps A  ; \Gamma'$ from Fig. \ref{f:encaps:aux}.
This judgements ensures   that any objects whose fields  are read  in the validation of $A$ are internal, 
that $\protectedFrom {\_} {\_}$  does not appear in $A$,  and that protection assertions (ie $\inside{}$ or $\protectedFrom {\_} {\_}$) do not appear in negative positions in $A$. The second environment in this judgement, $\Gamma'$, is used to keep track of any variables introduces in that judgment, \eg we would have that\\
$\strut \hspace{1cm} M_{good}, \emptyset\ \vdash\ \encaps{a:\prg{Account}\wedge  k:\prg{Key} \wedge \inside{a.\prg{key}} \wedge a.\prg{key}\neq k}; \ (a:\prg{Account}, k:\prg{Key}$.

We assume a type judgment $M; \Gamma \vdash e :  \prg{intl}$ which says that in the context of $\Gamma$, the expression $e$ belongs to a class from $M$.
We also assume that the judgement $M; \Gamma \vdash e :  \prg{intl}$ can deal with ghostfields -- namely, ghost-methods have to be type checked in the contenxt of $M$ and therefor they will only read the state of internal objects.
Note that it is possible for $M; \Gamma \vdash \encaps {\re}; \Gamma'$ to hold and 
$M; \Gamma \vdash  e : \prg{intl}$ not to hold -- \cf rule {\sc{Enc\_1}}.

\begin{figure}[thb]
$
\begin{array}{l}
\begin{array}{lclcl}
\inferruleSDNarrow 
{~ \strut  {\sc{Enc\_1}}}
{  
\begin{array}{l}
M; \Gamma \vdash \re : \prg{intl} \\
M; \Gamma \vdash \encaps{\re};\  \Gamma'
\end{array}
}
{
M; \Gamma \vdash \encaps{\re.f};\  \Gamma'
}
& &
\inferruleSDNarrow 
{~ \strut  {\sc{Enc\_2}}}
{  
\begin{array}{l}
  \\
M; \Gamma \vdash \encaps{\re};\  \Gamma'
\end{array}
}
{
M; \Gamma \vdash \encaps{\re: C};\  (\Gamma', \re:C)
}
& &
\inferruleSDNarrow 
{~ \strut  {\sc{Enc\_3}}}
{   
\begin{array}{l}
M; \Gamma \vdash \encaps{A};\ \Gamma'  \\
 A \mbox{\ does\ not\ contain\ $\inside{\_}$}
\end{array}
}
{
M; \Gamma \vdash \encaps{ \neg A};\  \Gamma'  
}
\\ \\
\inferruleSDNarrow 
{~ \strut  {\sc{Enc\_4}}}
{  
\begin{array}{l}
M; \Gamma \vdash \encaps{A_1};\ \Gamma''   \\
  M; \Gamma'' \vdash \encaps{ A_2};\ \Gamma' 
  \end{array} 
}
{
M; \Gamma \vdash \encaps{A_1 \wedge A_2};\  \Gamma'
}
& &
\inferruleSDNarrow 
{~ \strut  {\sc{Enc\_5}}}
{  
\\
M; \Gamma, {x:C} \vdash \encaps {A};\ \Gamma' 
}
{
M; \Gamma \vdash \encaps {\forall {x:C}. A};\  \Gamma'
}
& & 
\inferruleSDNarrow 
{~ \strut  {\sc{Enc\_6}}}
{  
\\
M; \Gamma \vdash \encaps{\re};\  \Gamma'
}
{
M; \Gamma \vdash \encaps{\re: \prg{extl} };\  \Gamma'
}
\end{array}
\\ \\
\inferruleSDNarrow 
{~ \strut  {\sc{Enc\_7}}}
{  
M; \Gamma \vdash \encaps{\re}; \Gamma' 
}
{
M; \Gamma \vdash \encaps{\inside{\re}}; \ \Gamma' 
}
\end{array}
$
\caption{The judgment $M; \Gamma \vdash \encaps  {A}; \Gamma'$}
\label{f:encaps:aux}
\label{f:encaps}
\end{figure}

\begin{definition}[An assertion $A$ is \emph{encapsulated} by module $M$] $~$ \\
\label{d:encaps:sytactic}
\begin{itemize}
\item 
$M \vdash \encaps{A}  \ \   \triangleq  \ \  \exists \Gamma.[\ M; \emptyset \vdash \encaps{A}; \Gamma\ ]$ \ \  as defined in Fig. \ref{f:encaps}.
 \end{itemize}
  \end{definition}

To motivate the design of our judgment $M; \Gamma \vdash \encaps{A}; \Gamma'$,  we first give a semantic notion of encapsulation:

\begin{definition}  An assertion $A$ is semantically encapsulated by module $M$:
\label{d:encaps:sem}

\begin{itemize}
\item
$
    M\ \models \encaps{A}\ \   \triangleq  \   
     \forall \Mtwo, \sigma, \sigma'.[   \ \  \satisfiesA{M}{\sigma}{(A  \wedge \externalexec)}\  \wedge\ { \leadstoBounded {M\madd\Mtwo}  {\sigma}{\sigma'}} 
        \  \Longrightarrow\  
    {M},{\sigma'}\models {\as \sigma A} \ \  ]
  $
\end{itemize}
\end{definition}

\noindent
\textbf{More on Def. \ref{d:encaps:sem}} {If the definition \ref{d:encaps:sem} or in lemma \ref{d:encaps} we had used the more general execution, $\leadstoOrig  {M\madd\Mtwo}  {\sigma}{\sigma'}$, rather than the scoped execution,  $\leadstoBounded {M\madd\Mtwo}  {\sigma}{\sigma'}$,
 then fewer assertions would have been encapsulated.}
Namely, assertions like    $\inside {x.f}$ would not be encapsulated.
Consider, \eg, a heap $\chi$, with objects $1$, $2$, $3$ and $4$, where  $1$, $2$ are external, and $3$, $4$ are internal, and  $1$ has fields pointing to $2$ and $4$, and $2$ has a field pointing to $3$, and $3$ has a field $f$ pointing to $4$. Take  state $\sigma$=$(\phi_1\!\cdot\!\phi_2,\chi)$, where $\phi_1$'s receiver is $1$,  $\phi_2$'s receiver is $2$,   and there are no local variables. 
We have  $...\sigma\models \externalexec \wedge \inside {3.f}$. 
We  return from the most recent all, 
getting  $\leadstoOrig  {...}  {\sigma}{\sigma'}$ where $\sigma'=(\phi_1,\chi)$; and have   $...,\sigma'\not\models  \inside {3.f}$.

\begin{example}
\label{ex:not:encaps}
For an assertion $A_{bal}\  \triangleq\ a:\prg{Account}\wedge a.\prg{balance}=b$, 
and modules \ModB and  \ModC  from \S~\ref{s:outline}, we have  \ \ \ $\ModB\ \models\ \encaps{ A_{bal} }$, \ \ \ and \ \ \ $\ModB\ \models\ \encaps{ A_{bal} }$.
\end{example}

\begin{example} Assume   further modules, $\ModD$ and $\ModE$,  which  use ledgers mapping  accounts to their balances, and export functions that update this map. In  $\ModD$ the ledger is  part of the {internal} module, 
while in $\ModE$ it is part of the  {external} module.
Then  \ \ $\ModD \ \not\models\encaps{ A_{bal}} $, \ \  and \ \ $\ModE  \models \encaps{ A_{bal}} $.
Note that in both $\ModD$ and $\ModE$, the term \prg{a.balance} is a ghost field. 
\end{example}

\begin{note} Relative protection 
is not encapsulated, (\eg $M \not\models {\encaps{\protectedFrom{x}{y}}}$), even though    absolute protection is
(\eg $M \models \encaps{\inside{x}}$).
Encapsulation of an assertion does not imply encapsulation of its negation; 
 for example,  $M \not\models {\encaps{\neg\inside{x}}}$.
\end{note}

\noindent
\textbf{More on Def. \ref{d:encaps:sytactic}} This definition is less permissive than necessary. 
For example $M \not\vdash \encaps{\neg ( {\neg \inside {x}})}$ even though 
 $M  \models \encaps{\neg ( {\neg \inside {x}})}$.
 Namely, 
$\neg (\neg \inside {x}) \equiv  \inside {x}$ and $M \vdash {\encaps{  \inside {x}}}$.
A more permissive, sound, definition, is not difficult, but not the main aim of this work.
We gave this, less permissive definition, in order to simplify the definitions and the proofs.

 \vspace{.1cm}

 \beginProof{lem:encap-soundness} This says that $M \vdash {\encaps A}$ implies that $M \vdash {\encaps A}$.
 \\
 We fist prove that\\
 $\strut \ \ \ \ $  (*) Assertions $A_{poor}$ which do not contain $\inside {\_}$ or $\protectedFrom {\_} {\_}$ are preserved by any external step.\\
 Namely, such an assertion only depends on the contents of the fields of internal objects, and these are not modified  by external steps.
 Such an $A_{poor}$ is defined essentially through
\\ 
 $
\begin{syntax}
\syntaxElement{\ \ \ \  A_{poor}}{}
		{
		\syntaxline
				{{\re}}
				{{\re} : C}
				{\neg A_{poor}}
				{A_{poor}\ \wedge\ A_{poor}}
				{\all{x:C}{A_{poor}}}
				{\external{{\re}}}
		\endsyntaxline
		}
\endSyntaxElement\\
\end{syntax}
$
\\
 We can prove (*) by induction on the structure of $A_{poor}$ and case analysis on the execution step.
  
\vspace{.05cm}   
\noindent
We then prove Lemma \ref{lem:encap-soundness} by induction on the structure of $A$. 

\noindent 
--- The cases {\sc{Enc\_1}}, {\sc{Enc\_2}},  and {\sc{Enc\_6}} 
are  straight application of (*). 

\noindent 
--- The case {\sc{Enc\_3}} also follows from (*), because any $A$ which satisfies $\encaps {A}$ and which does not contain $\inside {\_}$
is an $A_{poor}$ assertion.

\noindent 
--- The cases {\sc{Enc\_4}} and {\sc{Enc\_5}}  follow  by induction hypothesis.

\noindent 
--- The case {\sc{Enc\_7}} is more interesting. \\
We assume that $\sigma$ is an external state,   that $\leadstoBounded {...} {\sigma} {\sigma'}$, and that $.. \sigma \models \inside \re$.
By definition, the latter means that\\
 $\strut \ \ \ \ $  (**) no locally reachable external object in $\sigma$ has a field ponting to $\re$, \\
 $\strut \ \ \ \ \ \ \ \ \ \ \ $ nor is $\re$ one of the variables.\\
We proceed by case analysis on the step $\leadstoBounded {...} {\sigma} {\sigma'}$.
\\
- If that step was an assignment to a local variable $x$, then this does not affect $\inside {\as \sigma \re}$ because in $\interpret \sigma {\re}$=$\interpret {\sigma'} {\as \sigma {\re}}$,
and $... \sigma' \models x \neq re$.
\\
- If that step was an assignment to an external object's field, of the form $x.f :=y$ then this does not affect $\inside {\as \sigma \re}$ either.
This is so, because $\encaps {\re}$ gives that $\interpret \sigma {\re}$=$\interpret \sigma' {\as \sigma {\re}}$ -- namely the evaluation of $\re$ does not read $x$'s fields, since $x$ is external. 
And moreover, the assignment $x.f :=y$ cannot create a new, unprotected  path to $\re$ (unprotected means here that the penultimate element in that path is external), because then we would have had in $\sigma$  an unprotected path from $y$ to $\re$.
\\
- If that step was a method call, then we apply lemma \ref{lemma:push:N} which says that all objects reachable in $\sigma'$ were already reachable in $\sigma$.
\\
- Finally, we do not consider method return (\ie the rule {\sc{Return}}),   because we are looking at $\leadstoBounded {...} {\_} {\_}$ execution steps rather than $\leadstoOrig  {...}  {\_}{\_}$ steps.
\hspace{3cm}
\completeProof


\clearpage

\section{Appendix to Section \ref{sect:spec} -- Specifications}
\label{app:spec}

 \begin{example}[Badly Formed Method Specifications]
$S_{9,bad\_1}$ is not a well-formed specification, because $A'$ is not a formal parameter, nor free in the precondition. 

   {\sprepost
		{\strut \ \ \ \ \ \ \ \ \ S_{9,bad\_1} }
		{  a:\prg{Account} \wedge  \inside{a} }
		{\prg{public Account}} {\prg{set}} {\prg{key'}:\prg{Key}}
		{   \inside{a}\wedge  \inside{a'.\prg{key}}  }
		{  true }
}
		
 {\sprepost
		{\strut \ \ \ \ \ \ \ \ \ S_{9,bad\_2} }
		{  a:\prg{Account} \wedge  \inside{a} }
		{\prg{public Account}} {\prg{set}} {\prg{key'}:\prg{Key}}
		{   \inside{a}\wedge  \inside{a'.\prg{key}}  }
		{  \prg{this}.\balance \ }

}
\end{example}

{ \begin{example}[More Method Specifications]
\label{ex:spesMore}
$S_7$ below  guarantees that
\prg{transfer} does not affect the balance of accounts different  from the receiver or argument, and  if the key supplied is not that of the receiver, then no account's balance is affected.  \
$S_8$ guarantees that if the key supplied is that of the receiver, the correct amount is transferred from the receiver to the destination.
 $S_9$ guarantees that \prg{set} preserves the protectedness of a key.

\small{
{\sprepost
		{\strut \ \ \ \ S_7} 
		{ a:\prg{Account}\wedge  a.\prg{\balance}=b \wedge
		(\prg{dst}\neq a\neq\prg{this} \vee \prg{key'}\neq a.\prg{\password})}
	               {\prg{public Account}} {\prg{transfer}} {\prg{dst}:\prg{Account},\prg{key'}:\prg{Key},\prg{amt}:\prg{nat}}
		{ a.\prg{\balance}=b}
		{\sdred{ a.\prg{\balance}=b}}
}
\\
{\sprepostLB
		{\strut \ \ \ \ \ \ \ \ \ S_8} 
		{  
		\prg{this}\neq \prg{dst}\wedge \prg{this}.\prg{\balance}=b \wedge  \prg{dst}.\prg{\balance}=b' }
		  {\prg{public Account}}
		  	  {\prg{transfer}} 
		   	 {\prg{dst}:\prg{Account},\prg{key'}:\prg{Key},\prg{amt}:\prg{nat} }
		{\prg{this}.\prg{\balance}=b-\prg{amt} \wedge \prg{dst}.\prg{\balance}=b'+\prg{amt} } 
		{    \prg{this}.\prg{\balance}=b \wedge  \prg{dst}.\prg{\balance}=b'   }
}
\\
{\sprepost
		{\strut \ \ \ \ \ \ \ \ \ S_9} 
		{  a:\prg{Account}\wedge
		 \inside{a.\prg{\password}}}
		{\prg{public Account}} {\prg{set}} {\prg{key'}:\prg{Key}}
		{ \inside{a.\prg{\password}}}
		{ \inside{a.\prg{\password} } }		
}
}
\end{example}
}
 
%

\forget{
 \label{example:twostatesarisfy}
\se{We revisit the modules and specifications from Sect. \ref{s:bankSpecEx}, and Example \ref{ex:spacesMore} :}

\begin{tabular}{lllllllll}
$\ModA  \not\models S_1$  &   $\ModA  \models S_2$  &     $\ModA \models S_3$    & $\ModA \models S_5$\\
 $\ModB \not\models S_1$  &   $\ModB \not\models S_2$     &  $\ModB  \not\models S_3$   & $\ModB \not\models S_5$ \\
 $\ModC  \not\models S_1$    & $\ModC \models S_2$ &   & $\ModC \not\models S_3$   & $\ModC \not\models S_5$ 
\end{tabular}
\end{example}

 \begin{example}
 \label{example:mprepost:sat:one}
 For  
 Example \ref{example:mprepostl}, we have
  $\ModA \models S_6$ and $\ModB \models S_6$ and  $\ModC \models S_6$.
Also,  $\ModA \models S_7$ and $\ModB \models S_7$ and  $\ModC \models S_7$.
However,   $\ModA  \models S_8$, while $\ModB  \not\models S_8$.
\end{example}

 \begin{example}
\label{example:mprepost:sat:two}
 For  
any   specification  $S \triangleq {\mprepost{A}{p\ C}{m}{x}{C}{A'} }$ and any module  $M$ which does not have a class $C$  with a method $m$ with formal parameter  types ${\overline C}$, we have that $M \models S$.
Namely, if a method were to be called with that signature on a $C$  from $M$, then execution would be stuck, and the requirements from Def. \ref{def:necessity-semantics}(3) would be trivially satisfied.
Thus,   $\ModC \models S_8$. 
\end{example}
}

\subsection{Examples of Semantics of our Specifications}

\begin{example}
 \label{example:mprepost:sat:three}
We  revisit the specifications given in Sect. \ref{s:bankSpecEx},  the three  modules from Sect. \ref{s:bank}, and Example \ref{ex:spesMore}

\begin{tabular}{lllllllll}
$\ModA  \models S_1$  &   $\ModA  \models S_2$ &   $\ModA \models S_3$    & $\ModA \models S_5$\\
 $\ModB \models S_1$  &   $\ModB \not\models S_2$   &  $\ModB  \not\models S_3$   & $\ModB \not\models S_5$ \\
 $\ModC  \models S_1$    & $\ModC \models S_2$ & $\ModC \not\models S_3$   & $\ModC \not\models S_5$ 
\end{tabular}
\end{example}

 \begin{example}
 \label{example:mprepost:sat:four}
 For  
 Example \ref{example:mprepostl}, we have
  $\ModA \models S_7$ and $\ModB \models S_7$ and  $\ModC \models S_7$.
Also,  $\ModA \models S_8$ and $\ModB \models S_8$ and  $\ModC \models S_8$.
However,   $\ModA  \models S_9$, while $\ModB  \not\models S_9$.
\end{example}

 \begin{example}
\label{example:mprepost:sat:five}
 For  
any   specification  $S \triangleq {\mprepost{A}{p\ C}{m}{x}{C}{A'} }$ and any module  $M$ which does not have a class $C$  with a method $m$ with formal parameter  types ${\overline C}$, we have that $M \models S$.
Namely, if a method were to be called with that signature on a $C$  from $M$, then execution would be stuck, and the requirements from Def. \ref{def:necessity-semantics}(3) would be trivially satisfied.
Thus,   $\ModC \models S_8$. 
\end{example}

\subsubsection{Free variables in well-formed specifications}
\label{wff:spec:free:more}

We now discuss the requirements about free variables in well-formed specifications as defined in Def. \ref{f:holistic-wff}.
 In scoped invariants, $A$ may only mention   variables introduced by the quantifier, $\overline x$. In method specifications, the precondition, $\overline{x:C'} \wedge A$, may only mention the receiver, \prg{this}, the formal parameters, $\overline y$, and the explicitly introduced  variables, $\overline x$; it may \emph{not}  mention the result  \prg{res}. The postcondition, $A'$, may mention these variables, and in addition, may mention the result, \prg{res}. The mid-condition, $A''$ is about a state which has at least one more frame than the current method's,
and therefore it  may not mention \prg{this}, nor $\overline{y}$, nor \prg{res}.

\section{Expressiveness} 

\label{app:expressivity}

\begin{figure}[tbh]
\begin{lstlisting}[language = Chainmail, mathescape=true, frame=lines]
module DOM 
   class Node
      field cnt: int
      field parent: Node 
      
   class Proxy
      field hght: nat
      field node: Node
	
      public method set(newCnt:int, up:nat): void
            if this.hght >= up
               setPrivate(newCnt, up)
            else
               return
	       
      private method setPrivate(newCnt:int, up:nat): void
            if up==0 then
               this.node.cnt := nwCnt
            else
               setPrivate(newCnt, up-1)
\end{lstlisting}
\caption{The DOM module  -- classes \prg{Node} and \prg{Proxy}}
\label{fig:DoMCode}
\end{figure}

We argue the expressiveness of our approach by comparing with example specifications  proposed in \cite{OOPSLA22,dd,irisWasm23}.

\subsection{The DOM}  
\label{ss:DOM}

\subsubsection{The Problem} This is the motivating example in \cite{dd}. It
deals with a tree of DOM nodes: Access to a DOM node
gives access to all its \prg{parent}s, 
with the ability to modify the node's contents   -- where  \prg{parent} 
and \prg{cnt} are fields in class \prg{Node}. Since the top nodes of the tree
usually contain privileged information, while the lower nodes contain
less crucial third-party information, we must be able to limit 
 access given to third parties to only the lower part of the DOM tree. 
 
To do this,   \citet{dd}  propose   
  a \prg{Proxy} class, which has a field \prg{node} pointing to a \prg{Node}, and a field height (\prg{hght}), which restricts the range of \prg{Node}s which may be modified through the use of the particular \prg{Proxy}. Namely,   a \prg{Proxy}  may modify 
the \prg{cnt} of all the ancestors of its  \prg{node}, up to the   \prg{hght}-th ancestor of that \prg{node}. 

A possible implementation of such \prg{Node} and \prg{Proxy} classes is shown in Fig.  \ref{fig:DoMCode}, 
while the creation of a tree, three proxies, and  passing the proxies to the external world is shown in Fig \ref{fig:DoMCodeClient}.

\begin{figure}[tbh]
\begin{lstlisting}[language = Chainmail, mathescape=true, frame=lines]
module DOM 
   ... as before ...
   class Example
      public method demo(untrst:external) : void 
          //  create a tree of $\prg{Node}s$
         nd1 := new Node; nd1.cnt := 1
         nd2 := new Mode; nd2.parent:= nd1; nd2,ctns:=2;
         nd3 := new Mode; nd3.parent:= nd1; nd3,ctns:=3;
         nd4 := new Mode; nd4.parent:= nd2; nd4,ctns:=4;
         nd5 := new Mode; nd5.parent:= nd2; nd5,ctns:=5;
         
         //  create three $\prg{Proxy}s$
         prx10 := newProxy; prx10.height:=2; prx10.node:=nd4;
         prx11 := newProxy; prx10.height:=1; prx10.node:=nd4; 
         prx12 := newProxy; prx10.height:=0; prx10.node:=nd4;
          
         // External calls:         
         untrst.meth_A();       // no modification in the tree
         
         untrst.meth_B(nd2);    // no modification in the tree     
         
         untrst.meth_C(prx12);  // nd4.cnt may have changed; 
                                // all else has stayed the same
         
         nd4.ctns:=4            // re-establish old value of nd4.ctns
         untrst.meth_A( );      // nd4.cnt may have changed again; 
                                // all else has stayed the same
         
         untrst.meth_C(prx12);  // nd4.cnt, nd2.cnt, and nd1.ctns may have changed; 
                                // all else has stayed the same
      
        // Revocation:
         prx12.height := 1;     // revocation
         nd1.ctns:=1            // re-establigh the old value of nd1.ctns
         untrst.meth_C(prx12);  // nd1.ctns does not change     
\end{lstlisting}
\caption{The DOM module continued -- class \prg{Example} }
\label{fig:DoMCodeClient}
\end{figure}
In Fig. \ref{f:DOM:Tree:Diagrs}  we show a diagram of  a tree consisting of the five nodes, \prg{nd1}, \prg{nd2},  \prg{nd3}, \prg{nd4} and \prg{nd5}, 
and three proxies, \prg{prx10}, \prg{prx11},  and \prg{prx12}. These are created in lines 6-15 in the code in Fig. \ref{fig:DoMCodeClient}.
Moreover, in Fig. \ref{f:DOM:Diagrs}  we revisit the diagram, and show highlight that three objects have capability on \prg{nd4.cnt},
while two objects have capability on \prg{nd2.cnt}, and finally, only one object has capability on  \prg{nd1.cnt}.

We now consider the remaining code in Fig. \ref{fig:DoMCodeClient}:

\begin{itemize}
\item 
The first external call, in line 18, does not affect the DOM tree or is \prg{cnt}, because \prg{untrst} has no access to it.
 \item 
The second external call, in line 20, also does not affect the DOM tree or is \prg{cnt} -- even though \prg{untrst} has access to a \prg{Node}, because the class has no public method, none of its fields may be modified. 
 \item 
The third external call, in line 22, may affect the contents of \prg{nd4.ctns}, because we have passed the capability \prg{prx12} in the call. 
From then on, the capability is no longer protected.
 \item 
Before the fourth external call, on line 25, we restore the value of \prg{nd4.cnt} to what it was earlier. Nevertheless, even though the fourth external call, on 
line 26, is identical to the from line 19, it may result in a modification of \prg{nd4.ctns}. This is so, because \prg{prx12} is no longer protected.
 \item 
The fifth external call, on line 29, may result in the modification of \prg{nd4.ctns}, \prg{nd2.ctns}, and \prg{nd1.ctns}, because we have passed the capability \prg{prx10}.
 \item 
We now revoke the capability of \prg{prx12}, on line 34, so that it can only affect \prg{nd4.cnt} and \prg{nd2.cnt}. We re-establish the value of \prg{nd1.cnt} on line 35, and on line 36 we make an external call. We know that this call cannot affect the value of \prg{nd1.cnt}.
\footnote{Another example of a revocation would be to set the   \prg{parent} of \prg{nd4} to be \prg{nd3}; then, \prg{prx11} 
would have capability on \prg{nd4} and \prg{nd3}.
On the other hand, setting the \prg{parent} of \prg{nd4}  to be \prg{nd1} would augment the capability of \prg{prx11};
such an augmentation would, however be forbidden by $S_{dom_4}$, if the corresponding capability was had already been exported to the external world.}
\end{itemize}

\begin{figure}[th] 
\resizebox{7cm}{!}{
\includegraphics[width=\linewidth]{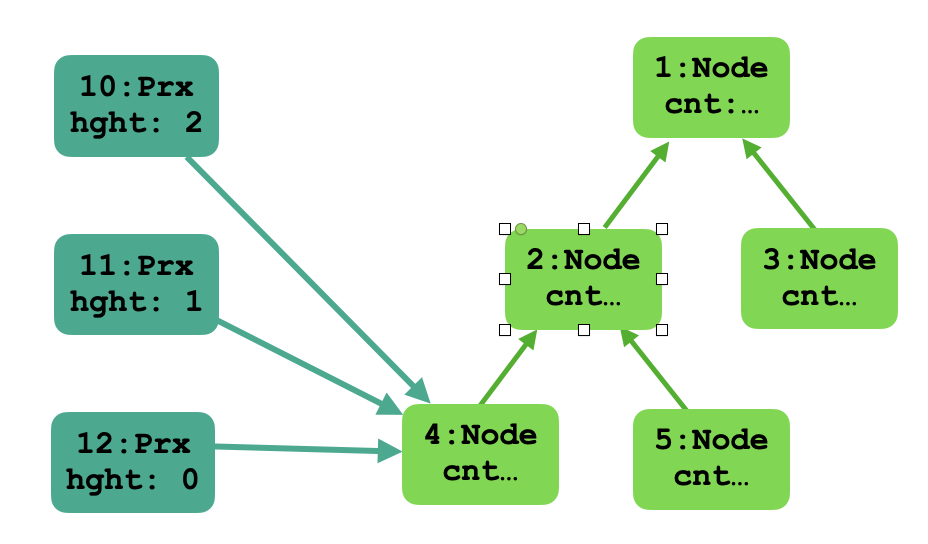}
} 
\caption{DOM tree and Proxies, created as in the code from Fig \ref{fig:DoMCodeClient}}
 \label{f:DOM:Tree:Diagrs}
 \end{figure}

\begin{figure}[th] 
\begin{tabular}{|c|c|c|}
\hline
\resizebox{3,8cm}{!}{
\includegraphics[width=\linewidth]{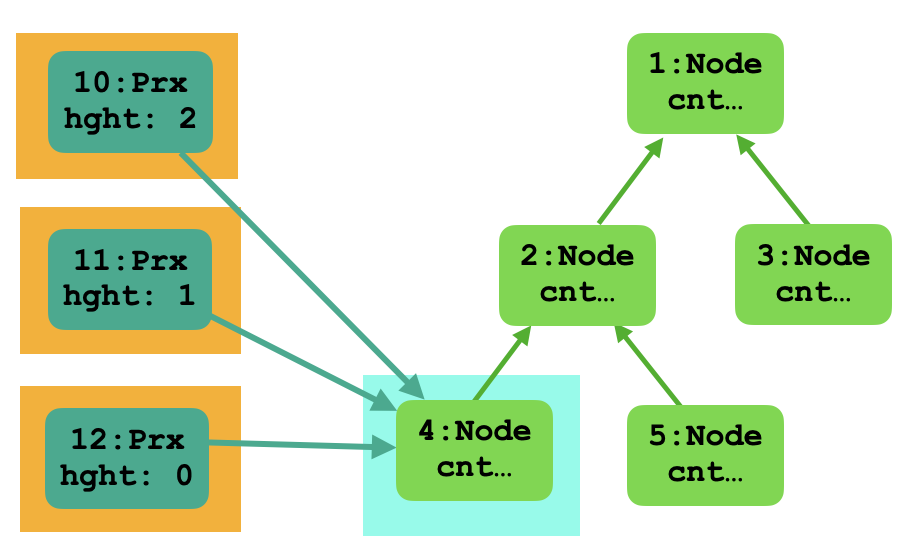}
}
&
\resizebox{3.8cm}{!}{
\includegraphics[width=\linewidth]{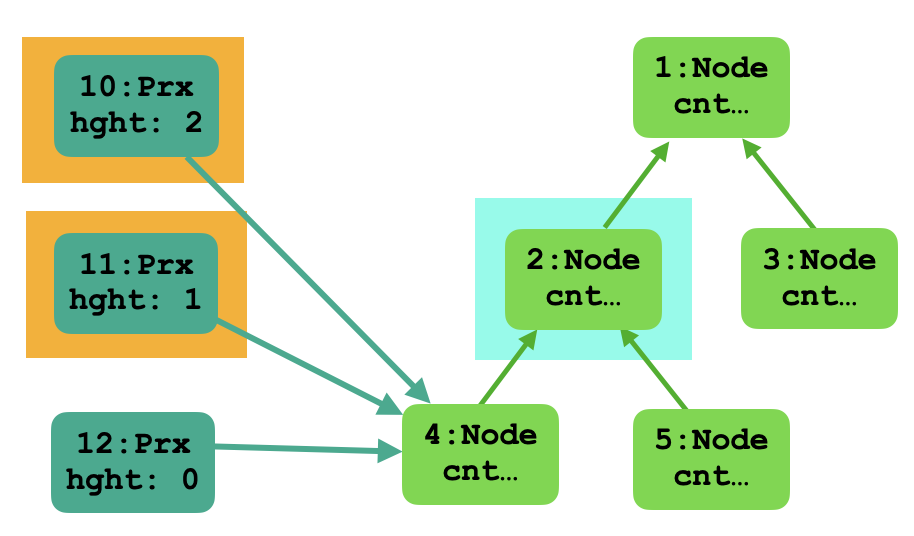}
}
&
\resizebox{3,8cm}{!}{
\includegraphics[width=\linewidth]{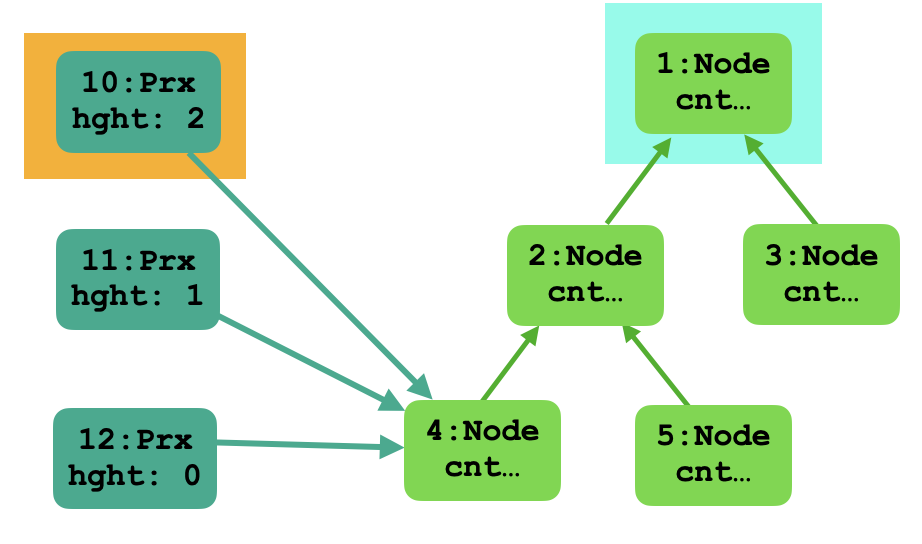}
}
\\
\hline
\prg{prx10}, \prg{prx10}, and \prg{prx12} are the  
&
\prg{prx10} and  \prg{prx11}  are the 
&
\prg{prx10}  is the\ 
\\
capabilities for \prg{nd4.cnt} 
&
capabilities for \prg{nd2.cnt}  
&
capability for \prg{nd1.cnt}  
\\
\hline
\end{tabular}
\caption{Capabilities for the DOM tree from Fig. \ref{f:DOM:Tree:Diagrs}. In particular, notice the many-to-many relation between capabilities and effects: For example, there are three capabilities on \prg{nd4.cnt}. And \prg{prx10} has capabilities on \prg{cnt} of three objects.}
 \label{f:DOM:Diagrs}
 \end{figure}

\subsubsection{Specifications}
We now discuss possible specifications. For this, we will use  the predicate $may\_modify \subseteq \prg{Proxy} \times \prg{Node}$, which says that 
\prg{prx} has \emph{modification-capabilities} on \prg{nd}, where \prg{prx} is
a  \prg{Proxy} and \prg{nd} is a \prg{Node}, if \prg{nd} is the \prg{prx}$k^{th}$  parent
of   \prg{pr.node} where $k \leq \prg{nd}.\prg{hght}$.
The formal definition is as follows:
\\
$\strut \SPSP   may\_modi\!f\!y(\prg{prx}, \prg{nd}) \triangleq \exists k:\mathbb{N}. [ \  \prg{prx}.\prg{node}.\prg{parent}^k=\prg{nd}\ \wedge k\leq   \prg{prx}.\prg{hght}]$
\\
Thus, a \prg{Proxy} object \prg{prx}, which satisfies $may\_modi\!f\!y(\prg{prx},\prg{nd})$ is a capability which may modify the node $\prg{nd}$.

\vspace{.1cm}

\noindent
We now use $may\_modify$ in writing the specifications.
Below, $S_{dom\_1}$ mandates  that nodes, $\prg{nd}$ are not leaked. 
Moreover, $S_{dom\_2}$  and $S_{dom\_3}$   mandate that the \prg{parent} node of a \prg{Node} cannot be modified by external code: 
{Similarly, $S_{dom\_4}$ mandates  that proxies which $may\_modi\!f\!y(\prg{prx},\prg{nd})$ are not leaked.}
{Finally, $S_{dom\_5}$ mandates   that proxies which $may\_modi\!f\!y(\prg{prx},\prg{nd})$ are not leaked, and that when a node $nd$ is protected, and  all proxies that can  $may\_modi\!f\!y(\prg{prx},\prg{nd})$ are protected, then  the $cnt$ of $nd$ cannot be modified by external code.}
\\
$\strut \SPSP  S_{dom\_1}\ \  \triangleq \ \ \TwoStatesN{ \prg{nd}:\prg{Node}}{\  \inside{\prg{nd}}\  ] \ }$ 
\\
\\
$\strut \SPSP  S_{dom\_2}\ \  \triangleq \ \ \TwoStatesN{ \prg{nd}:\prg{Node},\prg{nd'}:\prg{Node}}{\  \prg{nd.parent}=\prg{nd}'   \ }$ 
\\
$\strut \SPSP  S_{dom\_3} \  \triangleq \ \ \TwoStatesN{ \prg{nd}:\prg{Node} }{\  \prg{nd.parent}=\prg{null} \ }$ 
\\
\noindent
\\
$\strut \SPSP  S_{dom\_4}\ \  \triangleq \ \ \TwoStatesN{ \prg{nd}:\prg{Node}}{\  \forall \prg{prx}:\prg{Proxy}.[ \ may\_modi\!f\!y(\prg{prx}, \prg{nd}) \rightarrow \inside{\prg{prx}}\  ] \ }$ 
\\
\\
$\strut \SPSP  S_{dom\_5}\ \  \triangleq \ \  \forall{ \prg{nd}:\prg{Node}.\forall \prg{val}:\prg{Object} }$.\\
$\strut \SPSP\strut  \SPSP\strut \SPSP	\{  \  \inside{\prg{nd}} \wedge \forall \prg{prx}:\prg{Proxy}.[ \ may\_modi\!f\!y(\prg{prx}, \prg{nd} ) \rightarrow \inside{\prg{prx}}\  ]  \wedge \prg{nd.cnt} = \prg{val} \ \}  $

 Note that $S_{dom\_5}$ is strictly stronger than $S_{dom\_4}$. 
The module shown in Fig. \ref{fig:DoMCode} and \ref{fig:DoMCodeClient} does satisfy  all the  specifications from above. 

The careful reader might worry that the method \prg{demo} in class \prg{Example} in Fig. \ref{fig:DoMCodeClient} 
breaks specifications $S_{dom\_1}$, and $S_{dom\_4}$, since it does export \prg{nd1}  and \prg{prx12}. 
However, there is no external state between the creation of   \prg{nd1} and its export, and so our scoped invariant is preserved. 
Similarly, there is no external state between the creation of   \prg{prx10} and its export, and so our scoped invariant is preserved.

 To prove that the external calls in Fig. \ref{fig:DoMCodeClient} preserve the stated properties  we need to apply the specifications  $S_{dom\_2}$, 
 and $S_{dom\_5}$, but we do not need   $S_{dom\_1}$ or $S_{dom\_3}$.

\subsubsection{Other Specifications}

\citet{OOPSLA22} specify this as:
 
 \begin{lstlisting}[language = Chainmail, mathescape=true, frame=lines]
DOMSpec $\triangleq$ from nd : Node $\wedge$ nd.property = p  to nd.property != p
  onlyIf $\exists$ o.[ $\external {\prg{o}}$ $\wedge$ 
     $( \exists$ nd':Node.[ $\access{\prg{o}}{\prg{nd'}}$ ]  $\vee$ 
     $\,\;\exists$ pr:Proxy,k:$\mathbb{N}$.[$\, \access{\prg{o}}{\prg{pr}}$ $\wedge$ nd.parent$^{\prg{k}}$=pr.node.parent$^{\prg{pr.height}}$ ] $\,$ ) $\,$ ]
\end{lstlisting}

\prg{DomSpec} states that the \prg{property} of a node can only change if
some external object presently has 
access to a node of the DOM tree, or to some \prg{Proxy} with modification-capabilties
to the node that was modified.
The assertion $\exists {o}.[\ \external {\prg{o}} \wedge \access{\prg{o}}{\prg{pr}}\ ]$ is the contrapositive of our  $\inside{pr}$, but is is weaker than that, because it does not specify the frame from which $o$ is accessible.
Therefore, $\prg{DOMSpec}$ is a stronger requirement than $S_{dom\_1}$.

\subsection{DAO}
The Decentralized Autonomous Organization (DAO)~\cite{Dao}  is a well-known Ethereum contract allowing 
participants to invest funds. The DAO famously was exploited with a re-entrancy bug in 2016, 
and lost \$50M. Here we provide specifications that would have secured the DAO against such a 
bug. 
\\ 
$\strut \SPSP  S_{dao\_1}\ \  \triangleq \ \ \TwoStatesN{ d:\prg{DAO}}{\ \forall p:\prg{Participant}. [\ d.ether \geq d.balance(p) \ ]   \ }$ 
\\
$\strut \SPSP  S_{dao\_2}\ \  \triangleq \ \ \TwoStatesN{ d:\prg{DAO}}{\ \ d.ether \geq \sum_{p \in d.particiants} d.balance(p)\  \ }$

The specifications above say the following:
\\
\begin{tabular}{ll}
\begin{minipage}{.10\textwidth}
$\strut \SPSP  S_{edao\_1}$
\end{minipage}
&
\begin{minipage}{.85\textwidth}
guarantees that the DAO holds more ether than the balance  of any of its  participant's.
\end{minipage}
\\
\\
\begin{minipage}{.10\textwidth}
$\strut \SPSP  S_{dao\_2}$ 
\end{minipage}
&
\begin{minipage}{.85\textwidth}
guarantees that that the DAO holds more ether than the sum  of the balances held by DAO's participants.
\end{minipage}
\end{tabular}

$S_{dao\_2}$  is stronger than $S_{dao\_1}$. They would both have precluded the DAO bug. Note that these specifications  do not mention capabilities. 
They are, essentially, simple class invariants and could have been expressed with the techniques proposed already by \cite{MeyerDBC92}.
The only difference is that $S_{dao\_1}$ and $S_{dao\_2}$ are two-state invariants, which means that we require that they are \emph{preserved},
\ie if they hold in one (observable) state they have to hold in all successor states,
while class invariants are one-state, which means they are required to hold in all (observable) states.
\footnote{This should have been explained somewhere earlier.}

\vspace{0.5cm}
We now compare with the specification given in \cite{OOPSLA22}.
\prg{DAOSpec1} in similar to  $S_{dao\_1}$: iy
says that no participant's balance may ever exceed the ether remaining 
in DAO. It is, essentially, a one-state invariant.

\begin{lstlisting}[language = Chainmail, mathescape=true, frame=lines]
DAOSpec1 $\triangleq$ from d : DAO $\wedge$ p : Object
            to d.balance(p) > d.ether
            onlyIf false
\end{lstlisting}
\prg{DAOSpec1}, similarly to $S_{dao\_1}$,   in that it enforces a class invariant of \prg{DAO}, something that could be enforced
by traditional specifications using class invariants.

 \cite{OOPSLA22}  gives one more   specification: 
 
 \begin{lstlisting}[language = Chainmail, mathescape=true, frame=lines]
DAOSpec2 $\triangleq$ from d : DAO $\wedge$ p : Object
            next d.balance(p) = m
            onlyIf $\calls{\prg{p}}{\prg{d}}{\prg{repay}}{\prg{\_}}$ $\wedge$ m = 0 $\vee$ $\calls{\prg{p}}{\prg{d}}{\prg{join}}{\prg{m}}$ $\vee$ d.balance(p) = m
\end{lstlisting}

 \prg{DAOSpec2} states that if after some single step of execution, a participant's balance is \prg{m}, then 
either 
\begin{description}
\item[(a)] this occurred as a result of joining the DAO with an initial investment of \prg{m}, 
\item[(b)] the balance is \prg{0} and they've just withdrawn their funds, or 
\item[(c) ]the balance was \prg{m} to begin with
\end{description}

%
%

\subsection{ERC20}

The ERC20 \cite{ERC20} is a widely used token standard describing the basic functionality of any Ethereum-based token 
contract. 
This functionality includes issuing tokens, keeping track of tokens belonging to participants, and the 
transfer of tokens between participants. Tokens may only be transferred if there are sufficient tokens in the 
participant's account, and if either they (using the \prg{transfer} method) or someone authorised by the participant (using the \prg{transferFrom} method) initiated the transfer. 

For an $e:\prg{ERC20}$, the term $e.balance(p)$  indicates the number of tokens in   participant $p$'s  account at $e$.
The 
assertion $e.allowed(p,p')$ expresses that participant $p$ has been authorised to spend moneys from $p'$'s account at $e$.
 
The security model in Solidity is not based on having access to a capability, but on who the caller of a method is. 
Namely, Solidity supports the  construct \prg{sender} which indicates the identity of the caller.
Therefore, for Solidity, we adapt our approach in two significant ways:
we change the meaning of $\inside{\re}$ to express that $\re$ did not make a method call.
Moreover, we introduce a new, slightly modified form of two state invariants of the form $\TwoStates{\overline {x:C}}{A}{A'}$ which expresses that any execution which satisfies $A$, will preserve $A'$.


We specify the guarantees of   ERC20  as follows:
\\
\\
$\strut \SPSP  S_{erc\_2}\ \  \triangleq \ \ \TwoStatesLB{ e:\prg{ERC20},p,p':\prg{Participant},n:\mathbb{N}} 
 {\ \forall p'.[\,(e.allowed(p',p) \rightarrow   \inside{p'}\, ] \ } { \ e.balance(b)=n \ } $ 
\\
$\strut \SPSP  S_{erc\_3}\ \  \triangleq \ \ \TwoStatesLB{ e:\prg{ERC20},p,p':\prg{Participant}}  {\ \forall p'.[\,(e.allowed(p',p) \rightarrow   \inside{p'}\, ] \ } { \ \neg (e.allowed(p'',p) \ } $ 

The specifications above say the following:
\\
\begin{tabular}{ll}
\begin{minipage}{.10\textwidth}
$\strut \SPSP  S_{erc\_1}$
\end{minipage}
&
\begin{minipage}{.85\textwidth}
guarantees that the the owner of an account is always authorized on that account -- this specification is expressed using the original version of two-state invariants.
\end{minipage}
\\
\\
\begin{minipage}{.10\textwidth}
$\strut \SPSP  S_{erc\_2}$ 
\end{minipage}
&
\begin{minipage}{.85\textwidth}
guarantees that any execution which does not contain calls from a participant $p'$ authorized on $p$'s account will not affect the balance of $e$'s account. Namely, if the execution starts in a state in which $ e.balance(b)=n$, it will lead to a state where $ e.balance(b)=n$ also holds.
\end{minipage}
\\
\\
\begin{minipage}{.10\textwidth}
$\strut \SPSP  S_{erc\_3}$ 
\end{minipage}
&
\begin{minipage}{.85\textwidth}
guarantees that any execution which does not contain calls from a participant $p'$ authorized on $p$'s account will not affect who else is authorized on that account. That is, if the execution starts in a state in which $ \neg (e.allowed(p'',p)$, it will lead to a state where $ \neg (e.allowed(p'',p)$ also holds.
\end{minipage}
\end{tabular}


\vspace{1cm}

We compare with the specifications given in \cite{OOPSLA22}:
 Firstly, \prg{ERC20Spec1} 
says that if the balance of a participant's account is ever reduced by some amount $m$, then
that must have occurred as a result of a call to the \prg{transfer} method with amount $m$ by the participant,
or the \prg{transferFrom} method with the amount $m$ by some other participant.
\begin{lstlisting}[language = Chainmail, mathescape=true, frame=lines]
ERC20Spec1 $\triangleq$ from e : ERC20 $\wedge$ e.balance(p) = m + m' $\wedge$ m > 0
              next e.balance(p) = m'
              onlyIf $\exists$ p' p''.[$\calls{\prg{p'}}{\prg{e}}{\prg{transfer}}{\prg{p, m}}$ $\vee$ 
                     e.allowed(p, p'') $\geq$ m $\wedge$ $\calls{\prg{p''}}{\prg{e}}{\prg{transferFrom}}{\prg{p', m}}$]
\end{lstlisting}
Secondly, \prg{ERC20Spec2} specifies under what circumstances some participant \prg{p'} is authorized to 
spend \prg{m} tokens on behalf of \prg{p}: either \prg{p} approved \prg{p'}, \prg{p'} was previously authorized,
or \prg{p'} was authorized for some amount \prg{m + m'}, and spent \prg{m'}.
\begin{lstlisting}[language = Chainmail, mathescape=true, frame=lines]
ERC20Spec2 $\triangleq$ from e : ERC20 $\wedge$ p : Object $\wedge$ p' : Object $\wedge$ m : Nat
              next e.allowed(p, p') = m
              onlyIf $\calls{\prg{p}}{\prg{e}}{\prg{approve}}{\prg{p', m}}$ $\vee$ 
                     (e.allowed(p, p') = m $\wedge$ 
                      $\neg$ ($\calls{\prg{p'}}{\prg{e}}{\prg{transferFrom}}{\prg{p, \_}}$ $\vee$ 
                              $\calls{\prg{p}}{\prg{e}}{\prg{allowed}}{\prg{p, \_}}$)) $\vee$
                     $\exists$ p''. [e.allowed(p, p') = m + m' $\wedge$ $\calls{\prg{p'}}{\prg{e}}{\prg{transferFrom}}{\prg{p'', m'}}$]
\end{lstlisting}

\prg{ERC20Spec1} is related to $S_{erc\_2}$. Note that \prg{ERC20Spec1} is more API-specific, as it expresses the precise methods which caused the modification of the balance.

\subsection{Wasm, Iris, and the stack}

In \cite{irisWasm23}, they consider inter-language safety for Wasm. They develop Iris-Wasm, a mechanized higher-order separation logic mechanized in Coq and the Iris framework. Using Iris-Wasm, with the aim to
specify and verify individual modules separately, and then compose them modularly in a simple host language
featuring the core operations of the WebAssembly JavaScript Interface. They develop a 
logical relation that enforces robust safety: unknown, adversarial code can only aﬀect other modules through
the functions that they explicitly export. 
They do not offer however a logic to deal with the effects of external calls.

As a running example, they use a \prg{stack} module, which is an array of values, and exports functions to inspect the stack contents or modify its contents. 
Such a setting can be expressed in our language through a \prg{stack} and a \prg{modifier} capability.
Assuming a predicate $Contents(\prg{stack},\prg{i},\prg{v})$, which expresses that the contents of \prg{stack} at index \prg{i} is \prg{v}, we can specify the stack through
 
 $$\strut \SPSP  S_{stack}\ \  \triangleq \ \ \TwoStatesN{ s:\prg{Stack},i:\mathbb{N},\prg{v}:\prg{Value}} 
 {\ \inside{\prg{s.modifier}} \ \wedge \ Contents(\prg{s},i,\prg{v})\  }$$

 In that work, they provide a tailor-made proof that indeed, when the stack makes an external call, passing only the inspect-capability, the contents will not change. 
 However, because the language is essentially functional, they do not consider the possibility that the external call might already have stored the modifier capability.
 Moreover, the proof does not make use of a Hoare logic.  
 
 \subsection{Sealer-Unsealer pattern} 
 The sealer-unsealer pattern, proposed by \citet{JamesMorris}, is a security  pattern  to enforce data
abstraction while interoperating with untrusted  code. He proposes a function
\prg{makeseal} which generating pairs of functions (\prg{seal}, \prg{unseal} ), such that \prg{seal} takes a value $v$ and returns a low-integrity value $v'$.
The function \prg{unseal} when given $v'$ will return $v$. But there is no other way to obtain $v$ out of $v'$ except throughthe use of the \prg{usealer}.
Thus, $v'$ can securely be shared with untrusted code.
This pattern has been studied by \citet{ddd}.

We formulate this pattern here. 
As we are working with an object oriented rather than a functional language, we assume the existence of a class 
\prg{DynamicSealer} with two methods, \prg{seal}, and \prg{unseal}. 
And we define a predicate $Sealed(v,v',us)$ to express that $v$ has been sealed into $v'$ and can be unsealed using $us$.

Then, the scoped invariants 

 $$\strut \SPSP  S_{sealer\_1}\ \  \triangleq \ \ \TwoStatesN{ \prg{v},\prg{v}',\prg{us}: \prg{Object} }
 {\   \inside{\prg{us}} \ \wedge\ Sealed(\prg{v},\prg{v}',\prg{us}) }$$  
 
 $$\strut \SPSP  S_{sealer\_2}\ \  \triangleq \ \ \TwoStatesN{ \prg{v},\prg{v}',\prg{us}: \prg{Object} }
 {\ \inside{\prg{v}} \ \wedge \ \inside{\prg{us}} \ \wedge\ Sealed(\prg{v},\prg{v}',\prg{us}) }$$  

\noindent
 expresses that the unsealer is not leaked to external code ($S_{sealer\_1}$), and that if the external world has no access to the high-integrity value $\prg{v}$ nor to the its unsealer  \prg{us}, then it will not get access to the value  ($S_{sealer\_2}$). 
 
 \subsection{Access Security for Hotel Electronic Keys}

A hotel giving access to rooms through cards with electronic keys is a case study in  \cite{AccessHL}. 
This paper\footnote{at the time of writing not yet officially published, but  fresh in arXive}, 
argue that security should be understood in terms of necessary (rather than sufficient) pre-conditions -- this is as we already proposed in the FASE and our OOPSLA 22 paper.
In contrast to our OOPSLA'22 and the current paper, rather than fall back on a the negation of sufficient conditions, t\cite{AccessHL}  propose  a Hoare logic for necessary conditions.
They propose what the call \emph{accessibility tuples}, of the form\\
$ \strut \hspace{4cm}  <\,P \, >\, C \, <\, Q \, >$\\
which express that for any states $\sigma$,$\sigma'$, if execution of command $C$ in state $\sigma$  leads to   state $\sigma'$, and if  $\sigma'$ satisfies $Q$,  then $\sigma$ must satisfy $P$. Moreover, in contrast to our work, the \cite{AccessHL} accessibility tuples are in terms of two conditions and a command, whereas our tuples are only in terms of two conditions -- we quantify universally over the commands.
 
 The Access Security for Hotel Electronic Keys problem as proposed in \cite{AccessHL},  in a hotel that uses cards to control access to its room, the room's door has a battery-powered lock which has an electronic door key, \prg{dk}. A card holds two electronic keys, \prg{ck1} and \prg{ck2}. Such a card can be used to update the lock's key: if  \prg{ck1} is the same as  \prg{dk}, then  \prg{dk} is set to  \prg{ck2}. From then on, \prg{ck2} may be used to open the door.
 A possible implementation in our toy language looks, essentially, as follows
 
 \begin{lstlisting}[language = Chainmail, mathescape=true, frame=lines]
module Hotel
   class Card
      field ck1, ck2: Object
      private init(k1, k2: Object)
          this.ck1:= k1;
          this.ck2:=k2
	
    class Lock
      field dk: Object
      field is_open: bool
      private init(k1, k2: Object)
          this.ck1:= k1;
          this.ck2:=k2
                
      public open(card:Card) : void
          if (this.dk == card.ck1)
              this.dk := card.ck2
              is_open := true
          else 
              is_open := (this.dk == card.ck2)
 \end{lstlisting}
  
 The code for the electronic keys given in \cite{AccessHL} is essentially a snippet -- no modules, classes, or methods.  The paper does not give a specification for the code either, but we assume that the specification would be, roughly
\\
$~ \strut \ \ \ \  (*) \ \ <  \prg{c.ck1}== \prg{l.dk} \vee \prg{c.ck2}== \prg{l.dk}\ > \ \prg{l.open(c)}\ < \ \prg{l.is\_open}\ >$
\\
where $\prg{c}$ is a \prg{Card} and  $\prg{l}$ is a \prg{Lock}. Notice, that in this work, the specification is over one method, nothing prevents another method in the nodule -- or a sequence of methods for that matter -- to break the necessity requirement that the door cannot be opened unless the card had access to the two keys. 

In contrast, our specification is holistic, and would  not be concerned over whether the opening is effected through a call to a method \prg{open}, or any other method from that module. However, the spec does assume that opening can only be effected through possession of a card with the appropriate electronic keys,

$~  (**) \ \  \TwoStatesN{ \prg {cd}:\prg{Card}, \  \prg {lk}:\prg{Lock}}
{  \  \   \neg  \prg{lk.isOpen} \ \wedge\ ( \prg{lk.dk}\in\{ \prg{cd}.\prg{ck1} , \prg{cd}.\prg{ck2} \}  \ 
  \ \rightarrow \  \inside{\prg {cd}}\ ) \ \  }$

 The implementation also satisfies the specification $(*\!*\!*)$ below,  which says that regardless of the state of the lock, no card containing either of the the lock's keys may be leaked:
 
  $~  (*\!*\!*) \ \  \TwoStatesN{ \prg {cd}:\prg{Card}, \  \prg {lk}:\prg{Lock}}
{  \  \    \prg{lk.dk}\in\{ \prg{cd}.\prg{ck1} , \prg{cd}.\prg{ck2} \}  \ 
  \ \rightarrow \  \inside{\prg {cd}}\  \  }$
  
  Note, that while  $(*\!*\!*)$  and  $(\!*)$  share the part about $( \prg{lk.dk}\in\{ \prg{cd}.\prg{ck1} , \prg{cd}.\prg{ck2} \}  \ 
  \ \rightarrow \  \inside{\prg {cd}}\ )$ neither of the two is stronger than the other.
  
\subsection{Access through any element out of a set of  Electronic Keys}

Another case study proposed in  \cite{AccessHL}, is that access should only be granted through possession of any out of a set of keys.

 \begin{lstlisting}[language = Chainmail, mathescape=true, frame=lines]
module Vault
      field keys: List< Object >
      field isOpen: boolean
      
      private init(keys:List<Object>)
          ...
      
      public open(key:Object) : bool
      	  keysList := keys
          while keysList.notEmpty() do
            	  firstKey := keysList.top();
          	  if (firstKey == this.key) then 
          	     isOpen:= true 
	              return true
         	  keysList.pop();
          return false                   
 \end{lstlisting}
 
The paper  \cite{AccessHL} is concerned with the verification of the body of \prg{open}, and in particular the \prg{while}-loop,. And so, they could prove that\footnote{To be precise, their proof  is only  for the body of \prg{open} because their logic lacks support function calls. However, that lack is not a severe limitation; function calls can be easily added.} a call to \prg{v.open(k)} returns \prg{true} only if \prg{k} is one of the keys of \prg{v}:
\\
$~ \strut \ \ \ \  (*) \ \ <\  \prg{v.keys.contains(k)} \ > \ \prg{v.open(k)}\ < \ \prg{ret}==\prg{true}\  >$
\\
where \prg{v} is a \prg{Vault} and \prg{contains} is assumed to be a method in \prg{List} checking whether its argument is in the list.

With our approach, we would use a standard, sufficient condition Hoare  logic to obtain that
$~ \strut \ \ \ \  (*\!*) \ \ \{ \neg \prg{v.keys.contains(k)} \wedge \neg \prg{v.isOpen} \ \}$\\
$~ \strut \ \ \ \  \ \ \ \ \ \ \ \ \ \ \ \ \ \ \prg{l.open(k)}\ $\\
$~ \strut \ \ \ \ \ \ \  \ \ \ \ \  \{ \ \neg \prg{v.keys.contains(k)} \wedge \neg \prg{v.isOpen} \ \}$
\\
The triple from $(*\!*)$ together with the fact that \prg{open} is the only public method would allow us to prove that if the vault is not open, and none of its keys are externally accessible, then it remains not open, and its keys remain externally in-accessible, \ie \\
$~ \strut \ \ \ \  (*\!*\!*) \ \   \TwoStatesN{ \prg {v}:\prg{Vault} }
{  \  \   \neg  \prg{v.isOpen} \ \wedge\ \forall \prg{k}:\prg{Object}.[  \ \prg{v.keys.contains(k)} \ \rightarrow \  \inside{\prg {k} } \ ] \ \  }$

 The implementation also satisfies   spec, $(*\!*\!*\!*)$ below,  which says that regardless of the state of the vault, none of its keys are leaked:
\\
 $~ \strut \ \ \ \  (*\!*\!*\!*) \ \   \TwoStatesN{ \prg {v}:\prg{Vault} }
{  \  \    \forall \prg{k}:\prg{Object}.[  \ \prg{v.keys.contains(k)} \ \rightarrow \  \inside{\prg {k} } \ ] \ \  }$
  
 %
%
%
%
%
%


\clearpage

\section{Appendix to Section \ref{sect:proofSystem} } \label{app:proof}
\label{app:hoare}

\subsection{Preliminaries: Specification Lookup,  Renamings, Underlying Hoare Logic}

Definition \ref{d:promises} is broken down as follows:  $S_1 \txtin  S_2$ says that $S_1$ is textually included in $S_2$; \ \ $S \thicksim S'$ says that $S$ is a safe renaming of $S'$; \ \   $\promises M S$ says that $S$ is a safe renaming of one of the specifications given for $M$. 
 
In particular, a safe renaming of  ${ \TwoStatesN {\overline {x:C}} {A}  }$ can replace any of the variables $\overline x$.  
A safe renaming  of ${\mprepostN{A_1}{p\ D}{m}{y}{D}{A_2} {A_3}}$  can replace  the formal parameters ($\overline y$) by actual parameters  ($\overline {y'}$) but requires the actual parameters  not to include \prg{this}, or \prg{res}, (\ie $\prg{this}, \prg{res}\notin \overline{y'}$). -- 
Moreover, it can replace  the free variables which do not overlap with the formal parameters or the receiver ( $\overline{x}=\fv(A_1)\setminus\{{\overline y},\prg{this}\}$).

\begin{definition}
For a module $M$ and a specification $S$, we define:
\label{d:promises}
\begin{itemize}
\item
$S_1 \txtin  S_2  \ \ \ \triangleq\ \ \   S_1 \txteq  S_2$, or  $S_2 \txteq  S_1 \wedge S_3$, or $S_2\txteq S_3 \wedge S_1$,  or   $S_2 \txteq S_3 \wedge  S_1 \wedge S_4$ for some $S_3$, $S_4$.
\item
$S  \thicksim  S'$ \ \ \  is defined by cases

\begin{itemize}
\item
$ { \TwoStatesN {\overline {x:C}} {A}  }   \thicksim  { \TwoStatesN {\overline {x':C}} {A'[\overline{x'/x}]} } $
\item
$ {\mprepostN{A_1}{p\ D}{m}{y}{D}{A_2} {A_3}}  \thicksim
 {\mprepostN{A_1'}{p\ D}{m}{y'}{D}{A_2'} {A_3'}} $
  \\
 $\strut \hspace{2cm}    \ \  \triangleq\ \ \  A_1 = A_1'[\overline{y/y'}][\overline{x/x'}], \ \ A_2 = A_2'[\overline{y/y'}][\overline{x/x'}], \ \ A_3 = A_3'[\overline{y/y'}][\overline{x/x'}], \ \ \wedge$\\
 $\strut \hspace{2cm}\ \ \ \ \ \ \ \ \ \prg{this}, \prg{res}\notin \overline{y'}, \ \ \overline{x}=\fv(A_1)\setminus\{{\overline y},\prg{this} \} $
  \end{itemize} 
  
 \item  
 $\promises M  S \ \ \ \triangleq
 \ \ \ \exists S'.[ \ \ S'  \txtin \SpecOf M\ \ \wedge\ \ S' \thicksim S \ \ ]$
  \end{itemize} 
  \end{definition} 
  
The restriction on renamings of method specifications that  the actual parameters should not to include \prg{this}  or \prg{res} 
 is necessary because \prg{this} and \prg{res}  denote different objects from the point of the caller  than from the point of the callee.
It means that we are not able to verify a method call whose actual parameters include \prg{this} or \prg{res}. 
This is not a serious restriction: we can encode any such method call by preceding it with assignments to fresh local variables, \prg{this'}:=\prg{this}, and   \prg{res'}:=\prg{res}, and using \prg{this'} and \prg{res'} in the call.

\begin{example}
\label{e:rename}
The specification  from  Example \ref{example:mprepostl} can be renamed as 

\small
{
   {\sprepost
		{\strut \ \ \ \ \ \ \ \ \ S_{9r}} 
		{  a1:\prg{Account}, a2: \prg{Account}\wedge  \inside{a1}\wedge  \inside{a2.\prg{key}} }
		{\prg{public Account}} {\prg{set}} {\prg{nKey}:\prg{Key}}
		{   \inside{a1}\wedge  \inside{a2.\prg{key}} }
		{{   \inside{a1}\wedge  \inside{a2.\prg{key}} }}		
}}

\end{example}

\begin{axiom}
\label{ax:ul}
{Assume   Hoare logic with judgements 
\ $M \vdash_{ul} \{ A \} stmt \{ A' \}$, 
with  $\Stable{A}$,  $\Stable{A'}$. }
\end{axiom}

\subsection{Types}
\label{types}

The rules in Fig. \ref{f:types} allow triples to talk about the types 
Rule {\sc{types-1}} promises that types of local variables do not change.
Rule {\sc{types-2}} generalizes {\sc{types-1}} to any statement, provided that  there already exists a triple for that statement.

\begin{figure}[tht]
$
\begin{array}{c}
 \begin{array}{lcl}
\inferrule[\sc{types-1}]
	{  stmt \ \mbox{contains no method call} \\
	stmt  \ \mbox{contains   no assignment to $x$}}
	{\hproves{M}  {x:C} {\ stmt\ }{x:C} }
\\
\\
\inferrule[\sc{types-2}]
	{ \hprovesN{M}  {A} {\ s\ }  {A'} {A''}  }
	{\hprovesN{M}  {x:C \wedge A} {\ s\ }  {x:C\wedge A'} {A''}}
\end{array}
\end{array}
 $
\caption{Types}
\label{f:types}
\end{figure}

In {\sc{types-1}} we restricted to statements which do not contain method calls  in order to make lemma   \ref{l:no:meth:calls}  valid.

\subsection{Second Phase - more}

in Fig. \ref{f:substructural:app}, we    extend the Hoare Quadruples Logic with substructural rules, rules for conditionals, case analysis, and a contradiction rule.
For the conditionals we assume the obvious operational. semantics, but do not define it in this paper

\begin{figure}[htb]
$
\begin{array}{c}
\begin{array}{lcl}
\inferruleSD{[\sc{combine}]}
	{  \begin{array}{l}
	\hprovesN{M}  {A_1} {\ s\ } {A_2}  {A} \\ 
	\hprovesN{M}  {A_3} {\ s\ } {A_4} {A}
	\end{array}
	}
	{ \hprovesN{M}  {A_1 \wedge A_3 }{\ s\ } {A_2 \wedge A_4} {A} }
& &
\inferruleSD{[\sc{sequ}]}
	{  \begin{array}{l} 
	\hprovesN{M}  {A_1} {\ s_1\ } {A_2}  {A}  \\ 
	\hprovesN{M}  {A_2} {\ s_2\ } {A_3} {A}
	\end{array}
	}
	{   \hprovesN{M}  {A_1   }{\ s_1; \, s_2\ } {  A_3} {A} }
\end{array}
\\ \\
\inferruleSD{ \hspace{3cm} [\sc{consequ}]}
	{
	 { \hprovesN  {M}  {A_4} {\, s\, } { A_5} {A_6}  }
	 \hspace{1.4cm} 
	 M \vdash A_1 \rightarrow A_4 
	 \hspace{1.4cm} 
	{ M \vdash A_5   \rightarrow  A_2  }
	 \hspace{1.4cm}   
	{  M \vdash A_6 \rightarrow A_3 }
	}
	{   \hprovesN{M}  {A_1 }{\ s\ } {A_2} {A_3} }
  \end{array}
 $
 
 $
\begin{array}{c}
\inferruleSD{\hspace{2.5cm} [\sc{If\_Rule}]}
	{
	 \begin{array}{c}
	  \hprovesN {M}   
		{\  A \wedge Cond \  }
		{\    stmt_1   \ }
 		{\ A' \ }
		{\ A'' \ }
	\\
	    \hprovesN {M}   
		{\  A \wedge \neg Cond \  }
		{\    stmt_2   \ }
 		{\ A' \ }
		{\ A'' \ }	
	\end{array}
	}	
 	{  	
	\hprovesN {M}   
		{\  A \  }
		{\  \prg{if}\ Cond\ \prg{then}\ stmt_1\ \prg{else}\ stmt_2 \ \ }
		{\ A' \ }
		{\ A'' \ }
}
\\
\\
\begin{array}{lcl}
{
\inferruleSD{\hspace{0.5cm} [\sc{Absurd}]}
	{	
	}	 
 	{  	
	\hprovesN {M}   
		{\  false \  }
		{\  \ stmt \ \ }
		{\ A' \ }
		{\ A'' \ }
}
} & &
{
\inferruleSD{\hspace{0.5cm} [\sc{Cases}]}
	{ \begin{array}{l}
	\hprovesN {M}   
		{\  A \wedge A_{1}  \  }
		{\  \ stmt \ \ }
		{\ A' \ }
		{\ A'' \ }
		\\
		\hprovesN {M}   
		{\   A \wedge A_{2} \  }
		{\  \ stmt \ \ }
		{\ A' \ }
		{\ A'' \ }
	\end{array}	
	}	 
 	{  	
	\hprovesN {M}   
		{\  A \wedge (A_1 \vee A_2) \  }
		{\  \ stmt \ \ }
		{\ A' \ }
		{\ A'' \ }
}
}
\end{array}
\end{array}
$
 \vspace{-.5cm}
\caption{Hoare Quadruples -    substructural rules, and conditionals }
\label{f:substructural:app}
\end{figure}

\subsection{Extend the semantics and Hoare logic to accommodate scalars and conditionals}
\label{s:app:scalars}

{We extend the notion of protection to also allow it to apply to scalars. }

\begin{definition}[Satisfaction  of Assertions  -- Protected From]
\label{def:chainmail-protection-from-ext}
extending the definition of Def 
\ref{def:chainmail-protection-from}. We use $\alpha$ to range over addresses, $\beta$  to range over scalars, and   $\gamma$ to range over addresses or scalars.

\noindent
We define  $\satisfiesA{M}{\sigma}{\protectedFrom{{\gamma}} {{\gamma_{o}}}} $ as:
\begin{enumerate}
\item
\label{cProtectedNew}
 $\satisfiesA{M}{\sigma}{\protectedFrom{{\alpha}} {{\alpha_{o}}}}   \ \ \ \triangleq $ 
  \begin{itemize}
 \item
$\alpha\neq \alpha_0$,
 \ \ \ \  and
 \item
$\forall n\in\mathbb{N}. \forall f_1,...f_n..
[\ \ \interpret{\sigma}{\alpha_{o}.f_1...f_n}=\alpha \ \ \  \Longrightarrow \ \ \  \satisfiesA{M}{\sigma}{ {\interpret{\sigma}{\alpha_{o}.f_1...f_{n-1}}}:C} \ \wedge \ C\in M\ \ ]$
\end{itemize}
\item
 $\satisfiesA{M}{\sigma}{\protectedFrom{{\gamma}} {{\beta_{o}}}}   \ \ \ \triangleq  \ \ \ true$
 \item
 $\satisfiesA{M}{\sigma}{\protectedFrom{{\beta}} {{\alpha_{o}}}}   \ \ \ \triangleq  \ \ \ false$
  \item
$\satisfiesA{M}{\sigma}{\protectedFrom{{\re}} {{\re_{o}}}} \ \ \ \triangleq $ \\
  $\exists \gamma, \gamma_{o}. [\  \ \eval{M}{\sigma}{{\re}}{\gamma}\ \wedge \eval{M}{\sigma}{{\re_0}}{\gamma_0} \  \wedge \ 
  \satisfiesA{M}{\sigma}{\protectedFrom{{\gamma}} {{\gamma_{o}}}}
 \ \  ]$
 \end{enumerate}
 \end{definition}

{The definition from above gives rise to further cases of  protection; we supplement the triples from 
Fig. \ref{f:protection} with some further inference rules, given   in Fig. \ref{f:protection:conseq:ext}.}
Namely, any expression $\re$ is protected from a scalar (rules {\sc{Prot-In}}, {\sc{Prot-Bool}} and {\sc{Prot-Str}}).
Moreover, if starting at some $\re_o$ and following any   sequence of fields $\overline f$ we reach   internal objects or  scalars (\ie never reach an external object), then any $\re$ is protected from $\re_o$ (rule {\sc{Prot\_Intl}}).

\begin{figure}[htb]
\begin{mathpar}
\inferrule
	{M \vdash \re_o : \prg{int} \rightarrow \protectedFrom{\re}{\re_o} }
	{}
	\quad[\textsc{Prot-Int}]
	\and
\inferrule
	{M \vdash \re_o : \prg{bool} \rightarrow \protectedFrom{\re}{\re_o} }
	{}
	\quad[\textsc{Prot-Bool}]
	\and
\inferrule
	{M \vdash \re_o : \prg{str} \rightarrow \protectedFrom{\re}{\re_o} }
	{}
	\quad[\textsc{Prot-Str}]
\and
 {
\inferrule
	{M \vdash  \protectedFrom{\re}{\re_o}  \wedge {\re} \internal\,  \rightarrow \re \neq e'} 
	{}
	\quad[\textsc{Prot-Neq]}
  }
  \and
  {
\inferrule
	{M \vdash  A \rightarrow \forall \overline f.[\ \re_o.\overline f\internal\, \vee\, \re_o.\overline f:\prg{int} \,  \vee\,    \re_o.\overline f:\prg{bool}\,  \vee\,  \re_o.\overline f:\prg{str}\ ] }
	{M \vdash  A \rightarrow \protectedFrom {\re} {\re_o} }
	\quad[\textsc{Prot-Intl]}
  } 
\end{mathpar}
\caption{Protection for Scalar and Internal Types}
\label{f:protection:conseq:ext}
\end{figure}

 %
%

\begin{lemma}
\label{l:no:meth:calls}
If ${\hproves{M}  {A} {\ stmt\ }{A'} }$, then $stmt$ contains no method calls.
\end{lemma}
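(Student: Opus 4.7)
The plan is to proceed by straightforward induction on the derivation of $\hproves{M}{A}{stmt}{A'}$, inspecting each rule that can conclude a triple and verifying that the $stmt$ appearing in its conclusion contains no method call. The relevant rules are those in Fig.~\ref{f:underly} and the triples-level rule \textsc{Types-1} of Fig.~\ref{f:types}; note that the substructural/conditional rules in Fig.~\ref{f:substructural:app} and \textsc{Types-2} only produce quadruples, so they are irrelevant to this induction.

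For the base cases there is nothing to do beyond reading the rule: \textsc{embed\_ul}, \textsc{Prot-1}, and \textsc{Types-1} each carry an explicit side condition requiring that $stmt$ be method-call-free, while \textsc{Prot-new}, \textsc{Prot-2}, \textsc{Prot-3}, and \textsc{Prot-4} fix the shape of $stmt$ to one of $u := \prg{new}\ C$, $x := y$, $x := y.f$, or $y.f := y'$, none of which are method calls per the grammar of Fig.~\ref{f:loo-syntax}. The only inductive case is \textsc{Types-2}, where the same $stmt$ appears in the premise and the conclusion; since this rule is stated over quadruples it does not actually affect the induction on triples, but even reading it as a schematic preservation principle, the induction hypothesis on the premise gives the result immediately.

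The main (minor) obstacle is just being exhaustive about the rule set: one must be sure that no omitted triple-forming rule exists (in particular, there is no triples-level sequencing or consequence rule in the paper's formulation — these live at the quadruple level via \textsc{Mid}). Once this enumeration is confirmed, the lemma follows by a one-line case analysis. The payoff is that \textsc{Mid} in \S\ref{s:hoare:second} is well-founded: embedding a triple into a quadruple never smuggles in a method call, which is what the soundness arguments for the quadruple rules subsequently rely on.
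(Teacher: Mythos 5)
Your proof is correct and follows essentially the same route as the paper, which simply states ``By induction on the rules in Fig.~\ref{f:underly}''; your case-by-case enumeration (including the explicit side conditions of \textsc{embed\_ul}, \textsc{Prot-1}, and \textsc{Types-1}, and the fixed non-call statement shapes of the remaining rules) is just a more explicit rendering of that induction.
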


\begin{proof}
By induction on the rules in Fig. \ref{f:underly}.

\end{proof}

\subsection{Adaptation}
\label{appendix:adaptation}
 
 \newcommand{\SP}{$\strut \ \ \ \ $}

 We now discuss the proof of Lemma \ref{lemma:push:ass:state}.

 \vspace{0.5cm}
 
 \beginProofSub{lemma:push:ass:state}{l:push:stbl}
$~$ \\
To Show: \ \ \  $\Stable{\,  \PushASLong {(y_0,\overline y)} A\, }$
\\
By structural induction on $A$.\\
\completeProofSub

\vspace{1cm}

For parts \ref{lemma:push:ass:state:one},  \ref{lemma:push:ass:state:two}, and  \ref{lemma:push:ass:state:three}, we first prove the following auxiliary lemma:

\begin{auxLemma}
\label{l:push:pop:aux}
For all $\alpha$,   $\overline {\phi_1}$, $\overline {\phi_2}$, $\overline {\phi_2}$, $\phi$ and $\chi$\\
$\strut ~ \ \ \ \ \ (L1)\ \ \    M, (\overline {\phi_1},\chi) \models \protectedFrom \alpha {Rng(\phi)} \ \Longrightarrow \ M, (\overline {\phi_2}\cdot \phi,\chi) \models \inside \alpha$
\\
$\strut ~ \ \ \ \ \ (L2)\ \ \    M, (\overline {\phi_1}\cdot\phi,\chi) \models \inside \alpha   \wedge \extThis \ \ \Longrightarrow \ \ M, (\overline {\phi_2},\chi) \models \protectedFrom \alpha {Rng(\phi)} $\\
$\strut ~ \ \ \ \ \ (L3)\ \ \    M, (\overline {\phi_1}\cdot \phi_1,\chi) \models \inside \alpha   \wedge \extThis \ \ \wedge Rng(\phi)\subseteq Rng(\phi_1)\ \ \  \Longrightarrow \ \ M, (\overline {\phi_2},\chi) \models \protectedFrom \alpha {Rng(\phi)} $
\\\end{auxLemma}

\begin{proof}
$~$ \\
We first prove (L1): \\
~ \\
We define $\sigma_1 \triangleq (\overline {\phi_1},\chi)$, and  $\sigma_2 \triangleq (\overline {\phi_2}\cdot \phi,\chi) $.\\
The above definitions imply that: \\
\SP (1)\ \ $\forall \alpha',\forall \overline f.[\  \interpret {\sigma_1} {\alpha'.\overline f} =  \interpret {\sigma_2} {\alpha'.\overline f}\ ]$\\
\SP (2)\ \ $\forall \alpha'.[\  \Relevant {\alpha'} {\sigma_1} = \Relevant {\alpha'} {\sigma_2}\ ]$\\
\SP (3)\ \ $\LRelevantO {\sigma_2} = \bigcup_{\alpha'\in Rng(\phi)} \Relevant {\alpha'} {\sigma_2} $.\\
We now assume that\\
\SP (4)\ \ $M, \sigma_1 \models \protectedFrom \alpha {Rng(\phi)}$.\\
and want to show that\\
\SP (A?)\ \ $M, \sigma_2 \models \inside \alpha$\\
From (4) and  by definitions, we obtain that\\
\SP (5)\ \ $\forall \alpha'\in Rng(\phi).\forall \alpha''\in \Relevant {\alpha'} {\sigma_1}.\forall f.[ \   M, \sigma_1 \models \alpha'':\prg{extl}\ \rightarrow \alpha''.f  \neq \alpha\ ]$, \ \ \ \ and also\\
\SP (6)\ \ $\alpha \notin Rng(\phi)$\\
From (5) and (3) we obtain:\\
\SP (7)\ \  $\forall \alpha' \in \LRelevantO {\sigma_2}.\forall f.[ \   M, \sigma_1 \models \alpha':\prg{extl}\ \rightarrow \alpha'.f  \neq \alpha\ ]$\\
From (7) and (1) and (2) we obtain:\\
\SP (8) \ \  $\forall \alpha' \in \LRelevantO {\sigma_2}.\forall f.[ \   M, \sigma_2 \models \alpha':\prg{extl}\ \rightarrow \alpha'.f  \neq \alpha\ ]$\\
From (8), by definitions, we obtain\\
 \SP (10)\ \ $M, \sigma_2 \models \inside \alpha$\\
which is (A?).\\
 This completes the proof of (L1).
 \\
  $\strut ~ \ $\\
  We now prove (L2): \\
  ~ \\
 We define $\sigma_1 \triangleq (\overline {\phi_1}\cdot \phi,\chi)$, and  $\sigma_2 \triangleq (\overline {\phi_2},\chi) $.\\
The above definitions imply that: \\
\SP (1)\ \ $\forall \alpha',\forall \overline f.[\  \interpret {\sigma_1} {\alpha'.\overline f} =  \interpret {\sigma_2} {\alpha'.\overline f}\ ]$\\
\SP (2)\ \ $\forall \alpha'.[\  \Relevant {\alpha'} {\sigma_1} = \Relevant {\alpha'} {\sigma_2}\ ]$\\
\SP (3)\ \ $\LRelevantO {\sigma_1} = \bigcup_{\alpha'\in Rng(\phi)} \Relevant {\alpha'} {\sigma_1}$.\\
We   assume that\\
\SP (4)\ \  $M, \sigma_1 \models \inside \alpha \wedge \extThis$.\\
and want to show that\\
\SP (A?)\ \ $M, \sigma_2 \models \PushASLong  {Rng(\phi)} {A}$.\\
From (4), and unfolding the definitions, we obtain:\\
\SP (5)\ \  $\forall \alpha'\in \LRelevantO {\sigma_1}.\forall f.[ \   M, \sigma_1 \models \alpha':\prg{extl}\ \rightarrow \alpha'.f  \neq \alpha\ ]$, \ \ \ and\\
\SP (6)\ \ $\forall \alpha'\in Rng (\phi). [ \ \alpha'\neq \alpha \ ]$.\\
From(5), and using (3) and (2) we obtain:
\\
\SP (7)\ \  $\forall \alpha'\in Rng(\phi).\forall \alpha'' \in\Relevant {\alpha'} {\sigma_2}.\forall f.[ \   M, \sigma_2 \models \alpha'':\prg{extl}\ \rightarrow \alpha''.f  \neq \alpha\ ]$\\
From (5) and (7) and by definitions, we obtain
\\
\SP (8)\ \  $\forall \alpha'\in Rng (\phi).[ \   \models \alpha \protectedFrom \alpha {\alpha'}\ ]$.\\
From (8) and definitions we obtain (A?).\\
This completes the proof of (L2). 
 \\
  $\strut ~ \ $\\
  We now prove (L3): \\
  ~ \\
 We define $\sigma_1 \triangleq (\overline {\phi_1}\cdot \phi_1,\chi)$, and  $\sigma_2 \triangleq (\overline {\phi_2},\chi) $.\\
The above definitions imply that: \\
\SP (1)\ \ $\forall \alpha',\forall \overline f.[\  \interpret {\sigma_1} {\alpha'.\overline f} =  \interpret {\sigma_2} {\alpha'.\overline f}\ ]$\\
\SP (2)\ \ $\forall \alpha'.[\  \Relevant {\alpha'} {\sigma_1} = \Relevant {\alpha'} {\sigma_2}\ ]$\\
\SP (3)\ \ $\LRelevantO {\sigma_1} = \bigcup_{\alpha'\in Rng(\phi_1)} \Relevant {\alpha'} {\sigma_1}$.\\
We   assume that\\
\SP (4a)\ \  $M, \sigma_1 \models \inside \alpha \wedge \extThis$, and
\SP (4b)\ \ $Rng(\phi) \subseteq Rng(\phi_1)$\\
We  want to show that\\
\SP (A?)\ \ $M, \sigma_2 \models \PushASLong  {Rng(\phi)} {A}$.\\
From (4a), and unfolding the definitions, we obtain:\\
\SP (5)\ \  $\forall \alpha'\in \LRelevantO {\sigma_1}.\forall f.[ \   M, \sigma_1 \models \alpha':\prg{extl}\ \rightarrow \alpha'.f  \neq \alpha\ ]$, \ \ \ and\\
\SP (6)\ \ $\forall \alpha'\in Rng (\phi_1). [ \ \alpha'\neq \alpha \ ]$.\\
From(5), and   (3) and (2) and (4b) we obtain:
\\
\SP (7)\ \  $\forall \alpha'\in Rng(\phi).\forall \alpha'' \in\Relevant {\alpha'} {\sigma_2}.\forall f.[ \   M, \sigma_2 \models \alpha'':\prg{extl}\ \rightarrow \alpha''.f  \neq \alpha\ ]$ \\
From(6), and   (4b) we obtain:
\\
\SP (8)\ \ $\forall \alpha'\in Rng (\phi_1). [ \ \alpha'\neq \alpha \ ]$.\\
From (8) and definitions we obtain (A?).\\
This completes the proof of (L3). 

\end{proof}

\beginProofSub{lemma:push:ass:state}{lemma:push:ass:state:one}
$~$ \\
To Show: \ \ \  $(*)\ \ \ M, \sigma \models \PushASLong  {Rng(\phi)} {A}\  \ \ \ \ \  \ \ \    \Longrightarrow  \ \ \ \ M,  \PushSLong {\phi} {\sigma}   \models A$
\\ $~$ \\
 By  induction on the structure of $A$. For the case where $A$ has the form $\inside {\alpha.\overline f}$, we use lemma \ref{l:push:pop:aux},(L1), taking $\overline {\phi_1} = \overline { \phi_2}$, and $\sigma \triangleq (\overline {\phi_1},\chi).$
\\
\completeProofSub

\vspace{1cm}

\beginProofSub{lemma:push:ass:state}{lemma:push:ass:state:two}
$~$ \\
To Show \ \ \  $(*)\ \ \  M,  \PushSLong {\phi} {\sigma}   \models  A  \wedge \extThis    \ \  \ \  \Longrightarrow  \ \ \ \ M, \sigma \models \PushASLong  {Rng(\phi)} {A}$ 
\\
$~$ \\
We apply induction on the structure of $A$. For the case where $A$ has the form $\inside {\alpha.\overline f}$, we apply lemma \ref{l:push:pop:aux},(L2), using    $\overline {\phi_1} = \overline { \phi_2}$, and $\sigma \triangleq (\overline {\phi_1},\chi).$

\completeProofSub

\vspace{1cm}
\beginProofSub{lemma:push:ass:state}{lemma:push:ass:state:three}
$~$ \\
To Show:\ \ \   (*) \ \  $M, \sigma  \models  A  \wedge \extThis  \ \wedge \ M\cdot\Mtwo \models \PushSLong {\phi} {\sigma}   \ \  \ \ \  \  \Longrightarrow  \ \ \ \ M, \PushSLong {\phi} {\sigma} \models \PushASLong  {Rng(\phi)} {A}$
\\ 
$~$ \\
By induction on the structure of $A$. 
 For the case where $A$ has the form $\inside {\alpha.\overline f}$, we want to apply lemma \ref{l:push:pop:aux},(L3). We take  $\sigma$ to be $ (\overline {\phi_1}\cdot\phi_1, \chi)$, and $\overline {\phi_2}=\overline {\phi_1}\cdot\phi_1\cdot \phi$. Moreover,  $M\cdot\Mtwo \models \PushSLong {\phi} {\sigma}$ gives  that $Rng(\phi)\subseteq \LRelevantO {\sigma_2}$. Therefore, (*) follows by application of lemma \ref{l:push:pop:aux},(L3).\\
\completeProofSub

\clearpage

\section{Appendix to Section \ref{sect:sound:proofSystem} -- Soundness of the Hoare Logics}

\subsection{Expectations}
\label{s:expectations}

\begin{axiom}
\label{lemma:axiom:enc:assert:ul}
\label{ax:ul:sound}
We require a sound logic of assertions ($M \vdash A$), and a sound Hoare logic , \ie that for all $M$, $A$, $A'$, $stmt$:
\begin{center}
$M \vdash A   \ \ \ \  \Longrightarrow  \ \ \ \  \forall \sigma.[\ M, \sigma \models A\ ]$.\\
%
%
{$M\ \vdash_{ul}\  \triple A {stmt} {A'}  \ \ \ \  \Longrightarrow  \ \ \ \ \satisfies  {M} { \triple A {stmt} {A'}}$ }
 \end{center}
\end{axiom}

The expectation that $M\ \vdash_{ul}\  \triple A {stmt} {A'} $ is sound is not onerous: 
since the assertions $A$ and $A'$ do not talk about protection, many Hoare logics from the literature could be taken.

On the other hand, in the logic  $M \vdash A$ we want to allow the assertion $A$   to talk about protection. 
Since protection is a novel concept, the   literature offers no such logics.
Nevertheless, such a logic can be constructed by extending and underlying assertion logic $M \vdash_{ul} A$  which does not talk about protection.
We show such an extension in Fig \ref{assert:logic:extend}.

\begin{figure}[htb]
$
\begin{array}{c}
\inferruleSD{[\sc{Ext-1}]}
	{  M \vdash_{ul} A  	}
	{ M \vdash_{ul} A   }
\\ \\
\begin{array}{lcl}
\inferruleSD{[\sc{Ext-2}]}
	{  
	M \vdash A \rightarrow A'
	 
	}
	{   M \vdash (A \wedge \protectedFrom {\re} {\re'}) \rightarrow (A' \wedge \protectedFrom {\re} {\re'})   }
& & 
\inferruleSD{[\sc{Ext-3}]}
	{   
	M \vdash A \rightarrow A'
	 
	}
	{   M \vdash (A \wedge \inside {\re} {) \rightarrow (A' \wedge \inside{\re}})   }
\end{array}
\\ \\
\inferruleSD{[\sc{Ext-4}]}
	{   
	\
	}
	{   M \vdash {((A_1 \vee A_2) \wedge \protectedFrom {\re} {\re'})) \leftrightarrow 
	                     ((A_1  \wedge  \protectedFrom {\re} {\re'})  \vee (A_2  \wedge \protectedFrom {\re} {\re'}))  }}
\\ \\
\inferruleSD{[\sc{Ext-5}]}
	{   \ 
	}
	{   M \vdash {((A_1 \vee A_2) \wedge \inside{\re}) \leftrightarrow 
	                     ((A_1  \wedge  \inside{\re})  \vee (A_2  \wedge  \inside{\re}))  }}
\end{array}
$
\caption{From $M \vdash_{ul} A$ to $M \vdash  A$}
\label{assert:logic:extend}
\end{figure}

\noindent
The extension shown in  in Fig.  \ref{assert:logic:extend}  preserves soundness of the logic:

\begin{lemma}
Assume a logic $\vdash_{ul}$, such that 
\begin{center}
$M \vdash_{ul} A   \ \ \ \  \Longrightarrow  \ \ \ \  \forall \sigma.[\ M, \sigma \models A\ ]$.\\
 \end{center}

\noindent
Extend this logic according to the rules in Fig.  \ref{assert:logic:extend} and in Fig \ref{f:protection:conseq:ext}, and obtain $M \vdash  A$. Then, we have:
\begin{center}
$M \vdash A   \ \ \ \  \Longrightarrow  \ \ \ \  \forall \sigma.[\ M, \sigma \models A\ ]$.\\
 \end{center}

\end{lemma}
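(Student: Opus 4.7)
The plan is to proceed by induction on the derivation of $M \vdash A$, showing for each rule of the extended system that semantic validity of the premises implies semantic validity of the conclusion, where semantic validity of $A$ means $\forall \sigma.[\,M,\sigma \models A\,]$.

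First, the base case \textsc{Ext-1} is immediate from the assumed soundness of $\vdash_{ul}$. Rules \textsc{Ext-2} and \textsc{Ext-3} are handled uniformly: assuming the induction hypothesis that $A \rightarrow A'$ is valid (i.e., holds in every $\sigma$), fix an arbitrary $\sigma$ with $M,\sigma \models A \wedge P$ (where $P$ is $\protectedFrom{\re}{\re'}$ or $\inside{\re}$). By Def.~\ref{def:chainmail-semantics}(4), $M,\sigma \models A$, so by IH $M,\sigma \models A'$, and again by Def.~\ref{def:chainmail-semantics}(4), $M,\sigma \models A' \wedge P$. Rules \textsc{Ext-4} and \textsc{Ext-5} are pure classical distributivity laws and reduce to unfolding Def.~\ref{def:chainmail-semantics}(3)–(4) together with the encoding of $\vee$ via $\neg$ and $\wedge$.

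For the protection rules of Fig.~\ref{f:protection:conseq:ext}, I would appeal directly to Def.~\ref{def:chainmail-protection-from-ext}. Rules \textsc{Prot-Int}, \textsc{Prot-Bool}, \textsc{Prot-Str} are discharged by clause (2) of that definition, which makes $\protectedFrom{\gamma}{\beta_o}$ vacuously true when $\beta_o$ is a scalar. Rule \textsc{Prot-Neq} follows from the first conjunct of clause (1) of Def.~\ref{def:chainmail-protection-from-ext}: $\protectedFrom{\alpha}{\alpha_o}$ explicitly requires $\alpha \neq \alpha_o$, which covers the case $\re \txteq \re_o$; and if $\re \txtneq \re_o$ but both are internal, any would-be aliasing would immediately contradict the no-external-pointer clause at the trivial empty path. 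The main obstacle is \textsc{Prot-Intl}: I would unfold the premise to get, for every field-path $\overline f$ from $\re_o$, that $\interpret{\sigma}{\re_o.\overline f}$ is either internal or scalar; hence no prefix $\interpret{\sigma}{\re_o.f_1\ldots f_{n-1}}$ can both be external and point via $f_n$ to $\re$, so the second clause of Def.~\ref{def:chainmail-protection-from-ext}(1) is satisfied; the inequality clause follows by the same case analysis (no scalar equals an address, and no internal object on such a path equals $\re$ by the structure of the path).

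The argument is essentially routine rule-by-rule verification; the only subtle point is \textsc{Prot-Intl}, which requires taking the quantification over \emph{all} field paths $\overline f$ seriously, and making use of the fact that the semantic formulation of protection (Def.~\ref{def:chainmail-protection-from-ext}(1)) is witness-based over finite paths, so universal quantification over paths in the premise matches the existential witness structure in the negated conclusion. I would also note that the induction structure is well-founded because derivations in the extended logic are finite trees whose leaves are either \textsc{Ext-1} applications of $\vdash_{ul}$ derivations or axiom applications of the protection rules, none of which have unbounded recursion.
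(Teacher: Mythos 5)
Your proposal is correct and takes essentially the same approach as the paper, whose entire proof of this lemma is the single line ``by induction over the derivation that $M \vdash A$''; you have simply carried out the rule-by-rule case analysis that the paper leaves implicit. The only place where your argument is thinner than it should be is the $\alpha \neq \alpha_o$ conjunct in the \textsc{Prot-Intl} case, but since the paper supplies no detail there either, this does not distinguish your route from theirs.
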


\begin{proof}
By induction over the derivation that $M \vdash A$.
\end{proof}

Note that the rules in  in Fig.  \ref{assert:logic:extend} allow the derivation  of $M\vdash A$, 
for which $\Pos A$ does not hold -- \eg we can derive  $M \vdash \inside {\re}  \rightarrow  \inside{\re}$ through application of rule {\sc{Ext-3}}.
However, this does not affect soundness of our logic --    $\Pos {\_}$ is required only in specifications.

\subsection{\Strong satisfaction of assertions}
\label{s:scoped:mean}

\begin{definition}
\label{def:restrict}
For a state $\sigma$, and a number $i\in \mathbb{N}$ with $i \leq \DepthSt {\sigma}$,   module $M$, and assertions $A$, $A'$ we define: 
\begin{itemize}
\item
$  \satDAssertFrom M  \sigma k   A  \ \  \ \triangleq \  \ \  
  k\leq  \DepthSt {\sigma} \ \wedge \  \forall i\!\in\![k...\DepthSt {\sigma}].[\ M,{\RestictTo {\sigma}{i}} \models A[\overline{ {{\interpret \sigma z}/ z}}]\ ] \ \  \mbox{where} \ \
  \overline z=\fv(A).$ 
\end{itemize}
\end{definition}
 
 Remember the definition of  $\RestictTo  \sigma k$, which returns a new state whose top frame is the $k$-th frame from $\sigma$. Namely, $\RestictTo {(\phi_1...\phi_i...\phi_n,\chi)} {i}\ \ \ \triangleq \ \ \ (\phi_1...\phi_i,\chi)$

\begin{lemma}
\label{l:shallow:scoped}
For a states $\sigma$, $\sigma'$, numbers $k,k'\in \mathbb{N}$, assertions  $A$, $A'$, frame $\phi$ and variables $\overline z$, $\overline u$:
\begin{enumerate}
\item
$ \satDAssertFrom M  \sigma { \DepthSt \sigma}   A \ \ \Longleftrightarrow\ \ M,\sigma \models A\ $
\item
$ \satDAssertFrom M  \sigma {k} A \ \wedge\  k\leq k'\  \  \   \Longrightarrow \ \ \satDAssertFrom M  \sigma {k'} A$ 
\item 
\label{shallow:to:scoped}
$ M,\sigma \models A \ \wedge\ \Stable A \  \ \Longrightarrow \  \  \forall k  \leq  \DepthSt \sigma.[ \ \satDAssertFrom M  \sigma k   A \ ]$
\item
\label{fourSD}
$ M  \models A \rightarrow A'\  \  \   \Longrightarrow \ \ \forall \sigma. \forall k\leq  { \DepthSt \sigma}.[ \ \satDAssertFrom M  \sigma {k} A
\ \Longrightarrow \  \satDAssertFrom M  \sigma {k} A'\ ]$

\end{enumerate}
\end{lemma}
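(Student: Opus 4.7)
The plan is to handle the four claims separately, with claims (1)–(3) being essentially routine and (4) requiring the most care. For (1), I unfold the definition: when $k = \DepthSt{\sigma}$, the interval $[k..\DepthSt{\sigma}]$ is the singleton $\{\DepthSt{\sigma}\}$ and $\RestictTo{\sigma}{\DepthSt{\sigma}} = \sigma$, so both sides collapse to $M,\sigma \models A[\overline{\interpret{\sigma}{z}/z}]$, which the substitution clause of Lemma~\ref{l:assrt:unaffect} equates with $M,\sigma \models A$. For (2), since $k \leq k'$ we have $[k'..\DepthSt{\sigma}] \subseteq [k..\DepthSt{\sigma}]$, so the universally quantified conjunct of the premise transfers directly; the accompanying $k' \leq \DepthSt{\sigma}$ conjunct is taken as an implicit side-condition on $k'$ (without which the conclusion is vacuously false).

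For (3), I first observe that the substituted, variable-free form $A_0 \triangleq A[\overline{\interpret{\sigma}{z}/z}]$ is itself stable: $\Stable{\_}$ is a syntactic predicate (Def.~\ref{f:Basic}) preserved by substitution of addresses for variables, since such substitution does not introduce any $\inside{\_}$ subformula. By Lemma~\ref{l:assrt:unaffect} the hypothesis $M,\sigma \models A$ yields $M,\sigma \models A_0$. I then proceed by downward induction on $i$ from $\DepthSt{\sigma}$ down to $k$, exploiting the key observation that $\RestictTo{\sigma}{i+1}$ is exactly $\PushSF{\phi_{i+1}}{\RestictTo{\sigma}{i}}$, where $\phi_{i+1}$ is the $(i+1)$-th frame of $\sigma$. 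The stable-push equivalence of Lemma~\ref{l:stbl:preserves} then gives $M,\RestictTo{\sigma}{i+1} \models A_0 \Leftrightarrow M,\RestictTo{\sigma}{i} \models A_0$, closing the induction; any continuation mismatch introduced by restriction is irrelevant by the second clause of Lemma~\ref{l:assrt:unaffect}.

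Part (4) is the main obstacle, because the definition of $\satDAssertFrom{M}{\sigma}{k}{A}$ evaluates the closed form using $\sigma$'s \emph{top-frame} interpretations, yet $A$ is checked at lower frames $\RestictTo{\sigma}{i}$ whose variable maps may bind the same identifiers quite differently (or not at all). The plan is to read $M \models A \rightarrow A'$ as uniform semantic validity that quantifies over all states \emph{and} all interpretations of free variables, so that in particular the closed instance $A[\overline{\alpha/z}] \rightarrow A'[\overline{\alpha'/z'}]$ is valid in every state for every choice of addresses $\overline{\alpha},\overline{\alpha'}$. Instantiating with $\overline{\alpha} = \overline{\interpret{\sigma}{z}}$ and $\overline{\alpha'} = \overline{\interpret{\sigma}{z'}}$ and applying the instance at each $\RestictTo{\sigma}{i}$ transfers $A$'s closed form to $A'$'s closed form levelwise, which is exactly $\satDAssertFrom{M}{\sigma}{k}{A'}$. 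The delicate step is rigorously connecting the meta-level uniformity of $M \models A \rightarrow A'$ with the specific address-substituted form that appears in the definition; the substitution clause of Lemma~\ref{l:assrt:unaffect}, applied at each $\RestictTo{\sigma}{i}$ to rewrite the closed instance into an equivalent assertion whose free variables are interpreted locally, provides the necessary bridge.
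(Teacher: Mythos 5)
Your proof is correct and follows essentially the same route as the paper's: parts (1)–(2) unfold the definition, part (3) packages the paper's induction on $\Stable{\_}$ into an appeal to the push-preservation equivalence of Lemma~\ref{l:stbl:preserves} applied along $\RestictTo{\sigma}{i+1}=\PushSF{\phi_{i+1}}{\RestictTo{\sigma}{i}}$, and part (4) is the direct (contrapositive-free) form of the paper's proof by contradiction, both resting on the same key idea that each $\RestictTo{\sigma}{i}$ counts as a fully-fledged state at which the validity $M \models A \rightarrow A'$ can be instantiated with the address-substituted closed forms. One small caution in (4): the closed instance $A[\overline{\alpha/z}] \rightarrow A'[\overline{\alpha'/z'}]$ is valid only when $\overline{\alpha}$ and $\overline{\alpha'}$ agree on shared free variables, which your instantiation with $\overline{\interpret{\sigma}{z}}$ and $\overline{\interpret{\sigma}{z'}}$ does ensure.
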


\noindent
\vspace{.1cm}
{\textbf{Proof Sketch}} 

\begin{enumerate}
\item
By unfolding and folding the definitions.
\item
By unfolding and folding the definitions.
\item
By induction on the definition of $\Stable {\_}$.
\item
  By contradiction: Assume that there exists a $\sigma$,  and a  $k\leq \DepthSt \sigma$,    such that  \\
$\strut \hspace{2cm}   \satDAssertFrom M  \sigma {k} A$ \ \ \ and \ \ \ $\neg (\satDAssertFrom M  \sigma {k} A')$\\
 The  above implies that \\
$\strut \hspace{2cm} \forall i\geq k.[\ \ M,{\RestictTo {\sigma}{i}} \models A[\overline{ {{\interpret \sigma z}/ z}}]\ \ ]$, \ \ \ and\\
$\strut \hspace{2cm} \exists j\geq k.[\ \  M,{\RestictTo {\sigma}{j}} \not\models A'[ \overline{ {{\interpret \sigma z}/ z}}]\ \ ]$,\\
 where  $\overline z \triangleq \fv(A)\cup \fv(A')$.\\
 Take $\sigma''\triangleq  \RestictTo {\sigma}{j}$. Then we have that\\
$\strut \hspace{2cm} M, \sigma'' \models A[\overline{ {{\interpret \sigma z}/ z}}]$,  and  $M,  \sigma'' \not\models A'[\overline{ {{\interpret \sigma z}/ z}}]$.\\
 This contradicts $ M  \models A \rightarrow A'$.\\
 {\footnoteSD{NOTE TO AUTHORS  proof hinges on the fact that we consider the "restricted" state, $\sigma''$ a "dully-fledged" state, and the fact that we no longer require "Arising".}}
 {\footnoteSD{SD  wondered whether  \ref{l:shallow:scoped}.\ref{four} would still hold if we allowed the assertions to "reflect" on the frame, to say things eg like "this and x are the only local variables". But such assertions would not have the property \ref{l:shallow:scoped}.\ref{fiveSD}}}
\end{enumerate}
\noindent
{\textbf{End Proof Sketch}}

Finally, the following lemma allows us to combine shallow and \Strong satisfaction:

\begin{lemma}
\label{l:shallow:scoped:scoped}
For states  $\sigma$,  $\sigma'$, frame $\phi$ such that $\sigma'=\sigma  \pushSymbol \phi$, and for  
assertion $A$, such that $fv(A)=\emptyset$:
\begin{itemize}
\item
$\satDAssertFrom M  \sigma k A\   \wedge \ M, \sigma' \models A \ \ \ \Longleftrightarrow \ \ \   \satDAssertFrom M  {\sigma'} k  A$ 
\end{itemize}
\end{lemma}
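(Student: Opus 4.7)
The plan is to unfold the definition of \strong satisfaction on both sides of the biconditional and observe that $\sigma'$ extends $\sigma$ by exactly one frame, so the two sides differ only in whether they additionally constrain the top frame of $\sigma'$.

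First I would record three elementary facts that follow directly from $\sigma' = \sigma \pushSymbol \phi$ and Def.~\ref{def:push:frame}: (i) $\DepthSt{\sigma'} = \DepthSt{\sigma} + 1$; (ii) $\RestictTo{\sigma'}{i} = \RestictTo{\sigma}{i}$ for every $i \in [1..\DepthSt{\sigma}]$, because truncating at level $i \leq \DepthSt{\sigma}$ discards $\phi$; and (iii) $\RestictTo{\sigma'}{\DepthSt{\sigma'}} = \sigma'$. Because $\fv(A) = \emptyset$, the substitution $A[\overline{\interpret{\sigma}{z}/z}]$ that appears in Def.~\ref{def:restrict} is a no-op, and likewise for the corresponding substitution with $\sigma'$; so throughout the proof we can work with $A$ itself.

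Unfolding Def.~\ref{def:restrict}, the right-hand side $\satDAssertFrom{M}{\sigma'}{k}{A}$ becomes $k \leq \DepthSt{\sigma'} \wedge \forall i \in [k..\DepthSt{\sigma'}].\, M, \RestictTo{\sigma'}{i} \models A$. Using facts (i)--(iii), I split this universally quantified conjunction at $i = \DepthSt{\sigma'}$: the conjuncts for $i \in [k..\DepthSt{\sigma}]$ coincide (by (ii)) with the conjuncts of $\satDAssertFrom{M}{\sigma}{k}{A}$, while the single remaining conjunct at $i = \DepthSt{\sigma'}$ is precisely $M, \sigma' \models A$ (by (iii)). This gives both directions of the biconditional at once, under the assumption $k \leq \DepthSt{\sigma}$ (which in turn is implied by $\satDAssertFrom{M}{\sigma}{k}{A}$ on the left and is the relevant regime for the lemma's intended uses).

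I do not expect any real obstacle: the lemma is essentially a bookkeeping statement about how \strong satisfaction decomposes across a push of a single frame. The only subtlety is the boundary case $k = \DepthSt{\sigma'}$, where the left-hand conjunct $\satDAssertFrom{M}{\sigma}{k}{A}$ is vacuous in the sense that its quantifier has empty range over $[k..\DepthSt{\sigma}]$; here the equivalence still goes through because, in that regime, $\satDAssertFrom{M}{\sigma'}{k}{A}$ reduces to the single check $M,\sigma' \models A$ that appears explicitly on the left-hand side.
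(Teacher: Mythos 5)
Your argument is correct in substance, and it takes a genuinely more elementary route than the paper. The paper proves this lemma by structural induction on $A$ (unfolding the satisfaction cases one by one), whereas you observe that no case analysis on $A$ is needed at all: since $\sigma' = \sigma \pushSymbol \phi$ leaves the heap and every existing frame untouched, $\RestictTo{\sigma'}{i}$ and $\RestictTo{\sigma}{i}$ are the \emph{same state} for every $i \leq \DepthSt{\sigma}$, so the judgments $M, {\RestictTo{\sigma'}{i}} \models A$ and $M, {\RestictTo{\sigma}{i}} \models A$ coincide for any closed $A$ whatsoever --- including $\inside{\_}$, whose sensitivity to the new top frame is exactly what is isolated in the single extra conjunct $M, \sigma' \models A$ at level $\DepthSt{\sigma'}$. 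Splitting the quantifier in Def.~\ref{def:restrict} at that level then yields both directions of the biconditional at once. This makes explicit \emph{why} the lemma is pure bookkeeping; the structural induction in the paper's proof buys nothing that your decomposition does not already give.

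One concrete quibble: your handling of the boundary case $k = \DepthSt{\sigma'}$ is wrong as written. By Def.~\ref{def:restrict}, $\satDAssertFrom{M}{\sigma}{k}{A}$ is the conjunction $k \leq \DepthSt{\sigma} \,\wedge\, \forall i\in[k..\DepthSt{\sigma}].\,\ldots$, so for $k = \DepthSt{\sigma}+1$ it is \emph{false} outright (the first conjunct fails), not vacuously true as you claim. At that $k$ the right-to-left direction of the stated biconditional therefore fails: $\satDAssertFrom{M}{\sigma'}{k}{A}$ can hold while the left-hand side cannot. This is as much a defect of the lemma statement as of your proof --- it must be read with the implicit side condition $k \leq \DepthSt{\sigma}$, which is the only regime in which it is invoked --- but you should say the equivalence \emph{breaks} there rather than that it ``still goes through''.
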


\begin{proof}
By structural induction on $A$, and unfolding/folding the definitions. Using also lemma \ref{l:prtFrom} from later.
\end{proof}

\subsection{Shallow and \Strong Semantics of Hoare tuples}
\label{s:shallow:deep:appendix}
Another example demonstrating that assertions at the end of a method execution might not hold after the call:

\begin{example}[$Stb^+$ not always preserved by Method Return]
\label{ex:motivate:scopes}
Assume state $\sigma_a$, such that $\interpret {\sigma_a} {\prg{this}}=o_1$, $\interpret {\sigma} {\prg{this}.f}=o_2$, $\interpret {\sigma} {x}=o_3$, $\interpret {\sigma} {x.f}=o_2$,  
and $\interpret {\sigma} {x.g}=o_4$, where $o_2$ is external and all other objects are internal. 
We then have $..,\sigma_a \models  \inside {o_4}$.
Assume 
 the continuation of $\sigma_a$   consists of a method $x.m()$. Then,
upon entry to that method, when we push the new frame, we have  state $\sigma_b$, which also satisfies $..,\sigma_b \models  \inside {o_4}$.
Assume 
 the   body of $m$ is $\prg{this}.f.m1(\prg{this}.g); \prg{this}.f := \prg{this};  \prg{this}.g := \prg{this}$, and 
 the external method $m1$ stores in the 
receiver a reference to the argument.
Then, at the end of method execution, and before popping the stack, we have   state $\sigma_c$, which also satisfies $..,\sigma_c \models  \inside {o_4}$.
However, after we pop the stack, we obtain $\sigma_d$, for which $..,\sigma_d \not\models  \inside {o_4}$.
\end{example}

\begin{definition}[\Strong Satisfaction of Quadruples by States]
\label{def:restrict:sat}
For modules $\Mtwo$, $M$, state $\sigma$,  
and assertions $A$, $A'$ and  $A''$
\begin{itemize}
\item
$ {\satDAssertQuadrupleFrom \Mtwo  M  \sigma   {A} {A'} {A''} } \ \ \triangleq \ \ $  \\
$\strut \hspace{.5cm} \forall k, \overline{z}, \sigma',\sigma''.[
  \satDAssertFrom M  \sigma k   {A}  \  
  \ \Longrightarrow$\\
$\strut \hspace{4.5cm}    [ \ {\leadstoBoundedStarFin {M\madd \Mtwo}{\sigma}  {\sigma'} }\  \Rightarrow \    \satDAssertFrom M  {\sigma'} k   {\sdN{A'}}   \ ]
 \ \wedge$\\
$\strut \hspace{4.5cm}    [ \ {\leadstoBoundedStar  {M\madd \Mtwo}{\sigma}  {\sigma''} }\  \Rightarrow\      \satDAssertFrom M  {\sigma''}  k  {(\externalexec \rightarrow A''[\sdN{\overline{\interpret \sigma z/z}}])} \ ] $\\
$\strut \hspace{2.3cm}\ \ \ \ ]  $ \\
$\strut \hspace{2.3cm}\ \ \ \  \mbox{where }  \sdN{ \overline z= \fv(A)}$ 
\end{itemize}
\end{definition}

\begin{lemma} 
For all $M$, $\Mtwo$ $A$, $A'$, $A''$ and $\sigma$:
\begin{itemize}
\item
$ {\satDAssertQuadrupleFrom \Mtwo  M  \sigma   {A} {A'} {A''} } \ \Longrightarrow \ \
  {\satAssertQuadruple  \Mtwo  M   {A}  \sigma  {A'} {A''} } $
\end{itemize}
\end{lemma}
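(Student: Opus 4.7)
The plan is to reduce \Strong satisfaction to shallow satisfaction by instantiating the universally quantified starting frame $k$ in Def.~\ref{def:restrict:sat} to $k = \DepthSt{\sigma}$, i.e.\ the top (and only observable) frame of $\sigma$. The key technical glue is Lemma~\ref{l:shallow:scoped}(1), which says that \Strong satisfaction at the maximal index coincides with ordinary (shallow) satisfaction. With that hinge in place, both directions of the implication at $\sigma$ and at the target states $\sigma'$, $\sigma''$ are essentially bookkeeping.

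\paragraph{Main steps.}
First, fix arbitrary $\sigma'$, $\sigma''$ and assume $M, \sigma \models A$; we must derive the two shallow conclusions of Def.~\ref{def:shallow:spec:sat:state}. By Lemma~\ref{l:shallow:scoped}(1) applied to $A$, the hypothesis is equivalent to $\satDAssertFrom{M}{\sigma}{\DepthSt{\sigma}}{A}$. Set $k := \DepthSt{\sigma}$ and instantiate the \Strong quadruple hypothesis at this $k$; this yields, for every terminating scoped run $\leadstoBoundedStarFin{\Mtwo\madd M}{\sigma}{\sigma'}$, the claim $\satDAssertFrom{M}{\sigma'}{k}{A'}$, and for every intermediate scoped run $\leadstoBoundedStar{\Mtwo\madd M}{\sigma}{\sigma''}$, the claim $\satDAssertFrom{M}{\sigma''}{k}{(\externalexec \rightarrow A''[\overline{\interpret{\sigma}{z}/z}])}$ with $\overline z = \fv(A)$.

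Next, convert these \Strong conclusions back to shallow ones. For the postcondition: by the definition of $\leadstoBoundedStarFin$ we have $\DepthSt{\sigma'} = \DepthSt{\sigma} = k$, so instantiating Def.~\ref{def:restrict} at index $i = \DepthSt{\sigma'}$ and using Lemma~\ref{l:shallow:scoped}(1) in the reverse direction gives $M, \sigma' \models A'$. For the mid-condition: by the definition of $\leadstoBoundedStar$ we have $k = \DepthSt{\sigma} \leq \DepthSt{\sigma''}$, so the \Strong statement specialized at $i = \DepthSt{\sigma''}$ yields $M, \sigma'' \models \externalexec \rightarrow A''[\overline{\interpret{\sigma}{z}/z}]$. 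Using Lemma~\ref{lemma:addr:expr} (textual substitution of values for free variables preserves satisfaction) and $\sdN{\fv(A'') \subseteq \fv(A)}$, which holds by well-formedness of the specifications in Def.~\ref{f:holistic-wff}, the substitution $[\overline{\interpret{\sigma}{z}/z}]$ with $\overline z = \fv(A)$ coincides on $A''$ with $\as{\sigma}{A''}$. This gives exactly the shallow mid-condition $M, \sigma'' \models (\extThis \rightarrow \as{\sigma}{A''})$.

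\paragraph{Main obstacle.}
The only delicate point is the free-variable discrepancy in the mid-condition: the \Strong definition substitutes the free variables of $A$, while the shallow definition substitutes the free variables of $A''$ (via $\as{\sigma}{\cdot}$). The conversion is sound only because in well-formed specifications $\fv(A'')$ is contained in the surrounding binders (see Def.~\ref{f:holistic-wff}); consequently the two substitutions agree on $A''$. Apart from this bookkeeping, the proof is essentially a one-line instantiation argument.
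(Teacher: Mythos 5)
Your proof is correct and takes essentially the same route as the paper, which disposes of this lemma with ``by unfolding and folding the definitions'': instantiating $k=\DepthSt{\sigma}$ in Def.~\ref{def:restrict:sat} and converting back and forth with Lemma~\ref{l:shallow:scoped}(1) (using $\DepthSt{\sigma'}=\DepthSt{\sigma}$ for terminating scoped runs and $\DepthSt{\sigma''}\geq\DepthSt{\sigma}$ for intermediate ones) is exactly the intended argument. The free-variable mismatch you isolate in the mid-condition is a genuine subtlety the paper silently elides --- for arbitrary $A''$ with $\fv(A'')\not\subseteq\fv(A)$ the two substitutions need not agree, so your explicit appeal to the well-formedness condition $\fv(A'')\subseteq\overline{x}$ of Def.~\ref{f:holistic-wff} is a legitimate (and more honest) way to close the argument rather than a defect of your proof.
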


\label{sect:HLmeans}

We  define the {\emph {meaning}} of  our Hoare triples, $\triple {A} {stmt} {A'}$,  in the usual way, \ie that execution of $stmt$ in a state that satisfies $A$ leads to a state which satisfies $A'$.  
In addition to that, Hoare quadruples, $\quadruple {A} {stmt} {A'} {A''}$, promise that any external future states scoped by $\sigma$ will satisfy $A''$.
We give both a weak and a shallow version of the semantics

 \begin{definition}[\Strong Semantics of Hoare triples]
For modules $M$, and assertions $A$, $A'$   we define:
\begin{itemize}
\item
\label{def:hoare:sem:one}
$\satisfies  {M} {  \{\, A \,  \}\ stmt\  \{\, A' \, \} }\ \ \ \triangleq$\\
{$\strut  \ \ \ \forall    \Mtwo. \forall  \sigma.[ \ \   
 \sigma.\prg{cont}\txteq stmt\   \Longrightarrow \ 
{\satAssertQuadruple  \Mtwo  M   {\  A\ }  \sigma  { A'  } {true} } \ \ ]$
}
 \item
 \label{def:hoare:sem:two}
$\satisfies {M} {\quadruple {A} {stmt} {A'} {A''}}  \ \ \  \triangleq$ \\
{$\strut  \ \ \  \forall    \Mtwo. \forall  \sigma.[ \  \  \sigma.\prg{cont}\txteq stmt\   \Longrightarrow \ 
{\satAssertQuadruple  \Mtwo  M    {\  A\ }  \sigma  { A'  } {A''} } \ \ ]$
}
\item
\label{def:hoare:sem:three}
$\satisfiesD {M} {  \{\, A \,  \}\ stmt\  \{\, A' \, \} }\ \ \ \triangleq$\\
{$\strut  \ \ \ \forall    \Mtwo. \forall  \sigma.[ \ \   
 \sigma.\prg{cont}\txteq stmt\   \Longrightarrow \ 
{\satDAssertQuadrupleFrom \Mtwo  M  \sigma   {\  A\ } { A'  } {true} } \ \ ]$
}
 
 \item
 \label{def:hoare:sem:four}
$\satisfiesD {M} {\quadruple {A} {stmt} {A'} {A''}}  \ \ \  \triangleq$ \\
{$\strut  \ \ \  \forall    \Mtwo. \forall  \sigma.[ \ \    \sigma.\prg{cont}\txteq stmt\   \Longrightarrow \ 
{\satDAssertQuadrupleFrom \Mtwo  M  \sigma   {\  A\ } { A'  } {A''} } \ \ ]$
}

\end{itemize}
\end{definition}

 \begin{lemma}[\Strong   vs Shallow Semantics of Quadruples]
For all $M$, $A$, $A'$, and $stmt$:
\begin{itemize}
\item
$\satisfiesD {M} {\quadruple {A} {stmt} {A'} {A''}}   \ \ \ \Longrightarrow \ \ \ \satisfies  {M} {\quadruple {A} {stmt} {A'} {A''}} $
\end{itemize}
\end{lemma}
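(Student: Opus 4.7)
The plan is to unfold both definitions and show that the strong version instantiated at $k = \DepthSt{\sigma}$ immediately collapses to the shallow version. I would fix any $\Mtwo$ and $\sigma$ with $\sigma.\prg{cont} \txteq stmt$, assume the shallow precondition $M,\sigma \models A$, and discharge the two conjuncts of $\satAssertQuadruple{\Mtwo}{M}{A}{\sigma}{A'}{A''}$ in turn.

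The pivotal bridge is Lemma \ref{l:shallow:scoped}(1), which states $\satDAssertFrom M \sigma {\DepthSt{\sigma}} A \Longleftrightarrow M,\sigma \models A$. So from $M,\sigma\models A$ I obtain $\satDAssertFrom M \sigma k A$ for $k := \DepthSt{\sigma}$, and then instantiate the hypothesis $\satisfiesD{M}{\quadruple{A}{stmt}{A'}{A''}}$ at exactly this $k$. For the post-condition, any $\sigma'$ with $\leadstoBoundedStarFin{M\madd\Mtwo}{\sigma}{\sigma'}$ satisfies $\DepthSt{\sigma'} = \DepthSt{\sigma} = k$ by the definition of scoped terminating execution (Def.\ \ref{def:shallow:term}); so $\satDAssertFrom M {\sigma'} k {A'}$ coincides with $M,\sigma' \models A'$ via Lemma \ref{l:shallow:scoped}(1) applied at $\sigma'$.

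For the mid-condition, any intermediate $\sigma''$ reachable via $\leadstoBoundedStar{M\madd\Mtwo}{\sigma}{\sigma''}$ has $\DepthSt{\sigma''} \geq \DepthSt{\sigma} = k$ by definition of scoped execution. Strong satisfaction yields $\satDAssertFrom M {\sigma''} k {(\externalexec \rightarrow A''[\overline{\interpret{\sigma}{z}/z}])}$ with $\overline z = \fv(A)$. Instantiating Def.\ \ref{def:restrict} at the top index $i = \DepthSt{\sigma''}$ gives $M,\sigma'' \models \externalexec \rightarrow A''[\overline{\interpret{\sigma}{z}/z}]$ (with free variables of $A''$ outside $\fv(A)$ already eliminated by the outer interpretation at $\sigma''$). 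Matching this against the shallow goal $M,\sigma'' \models \extThis \rightarrow \as{\sigma}{A''}$ is then routine: $\externalexec$ and $\extThis$ denote the same top-frame predicate, and Lemma \ref{lemma:addr:expr} lets me rewrite both bodies into their variable-free forms.

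The main obstacle is the bookkeeping around which free variables get replaced by $\sigma$-values versus $\sigma''$-values: the strong definition explicitly carries only $\fv(A)$-substitutions inside $A''$, whereas $\as{\sigma}{A''}$ substitutes all of $\fv(A'')$. Reconciling the two requires appealing to well-formedness (which constrains $\fv(A'')$ relative to $\fv(A)$ in scoped invariants) together with Lemma \ref{l:var:unaffect}, which ensures that any remaining variables live in frames below $k$ and therefore have the same interpretation in $\sigma$ and $\sigma''$. Once that accounting is done, the two substitution patterns agree and the implication follows.
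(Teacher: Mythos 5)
Your proof is correct and takes essentially the same route as the paper, whose entire argument for this lemma is ``by unfolding and folding the definitions'': you instantiate the strong hypothesis at $k=\DepthSt{\sigma}$, use Lemma~\ref{l:shallow:scoped}(1) to collapse deep satisfaction at the top frame to shallow satisfaction, and observe that terminating scoped executions preserve depth while intermediate ones never drop below it. Your explicit treatment of the mismatch between the substitution $A''[\overline{\interpret{\sigma}{z}/z}]$ (over $\fv(A)$) in the deep definition and $\as{\sigma}{A''}$ (over $\fv(A'')$) in the shallow one is more careful than the paper's one-line proof and correctly isolates the only place where any real bookkeeping is required.
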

 \begin{proof}
 By unfolding and folding the definitions
 \end{proof}

\subsection{\Strong satisfaction of specifications} 
\label{sect:HLmeans:scoped}

We now give a \Strong meaning to specifications: 

\begin{definition} [\Strong Semantics of  Specifications]

We define $\satisfiesD{M}{{S}}$ by cases: 

\label{def:necessity-semantics:strong}

\begin{enumerate}
\item
{ 
$\satisfiesD{M}{\TwoStatesN {\overline {x:C}} {A}} \ \  \ \triangleq   \ \ \ 
\forall \sigma.[\  \satisfiesD {M} {\quadruple {\externalexec \wedge \overline {x:C} \wedge A} {\sigma} {A} {A} } \ ] $
}
  \item
 {$\satisfiesD{M} { \mprepostN {A_1}{p\ D}{m}{y}{D}{A_2} {A_3}    }\  \ \ \   \triangleq  \ \ \ $}\\ 
 {
$\strut  \ \ \   \ \ \ \ \ \ \ \ \   \    \forall   y_0,\overline y, \sigma[ \ \ \ \sigma\prg{cont}\txteq {u:=y_0.m(y_1,..y_n)} \ \ \Longrightarrow \ \ 
\satisfiesD {M} {\quadruple  {A_1'} }   {\sigma}   {A_2' } {A_3' }  \  \ \  ]$ } \\
$\strut \ \ \   \ \ \ \ \ \ \ \ \   \  \mbox{where}$\\
$\strut  \ \ \   \ \ \ \ \ \ \ \ \   \  \ A_1' \triangleq   y_0:D,{\overline {y:D}}   \wedge   A[y_0/\prg{this}],\  \  A_2' \triangleq A_2[u/res,y_0/\prg{this}],\ \ A_3' \triangleq A_3[y_0/\prg{this}] $
 \item
 $\satisfiesD{M}{S\, \wedge\, S'}$\ \ \  \ \ \  $\triangleq$  \  \ \  \   $\satisfiesD{M}{S}\ \wedge \ \satisfiesD{M}{S'}$
\end{enumerate}
\end{definition}

 \begin{lemma}[\Strong   vs Shallow Semantics of Quadruples]
For all $M$, $S$:
\begin{itemize}
\item
$\satisfiesD {M} {S}   \ \ \ \Longrightarrow \ \ \ \satisfies  {M} {S} $
\end{itemize}
\end{lemma}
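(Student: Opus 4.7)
The plan is to proceed by structural induction on the specification $S$, using the three cases of Def.~\ref{def:necessity-semantics:strong} and Def.~\ref{def:necessity-semantics}, and in each case reduce the claim to the already-established quadruples implication ``\Strong semantics of quadruples implies shallow semantics of quadruples'' stated just above (for \satisfiesD vs \satisfies on $\{A\}\, stmt\, \{A'\}\,||\,\{A''\}$).

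\paragraph{Case analysis} First, for $S \txteq \TwoStatesN{\overline{x:C}}{A}$, I would unfold Def.~\ref{def:necessity-semantics:strong}(1) to obtain that for every $\sigma$, $\satisfiesD{M}{\quadruple{\externalexec \wedge \overline{x:C}\wedge A}{\sigma}{A}{A}}$, then apply the quadruples lemma to get $\satisfies{M}{\quadruple{\externalexec \wedge \overline{x:C}\wedge A}{\sigma}{A}{A}}$, and finally fold using Def.~\ref{def:necessity-semantics}(1) (noting that the continuation universally quantifies via $\sigma$) to conclude $\satisfies{M}{\TwoStatesN{\overline{x:C}}{A}}$. Second, for $S \txteq \mprepostN{A_1}{p\ D}{m}{y}{D}{A_2}{A_3}$, the proof is symmetric: I fix $y_0, \overline y, \sigma$ with $\sigma.\prg{cont}\txteq u:=y_0.m(\overline y)$, extract $\satisfiesD{M}{\quadruple{A_1'}{\sigma}{A_2'}{A_3'}}$ from Def.~\ref{def:necessity-semantics:strong}(2), apply the quadruples lemma to derive the shallow quadruple, and re-fold via Def.~\ref{def:necessity-semantics}(2), checking that the substitution pattern ($A_1',A_2',A_3'$) matches on both sides. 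Third, for $S \txteq S_1 \wedge S_2$, the inductive hypothesis on $S_1$ and $S_2$ together with the conjunctive clauses of the two specification semantics definitions give the result immediately.

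\paragraph{Main obstacle} I do not anticipate a genuine mathematical difficulty: the statement is essentially a lifting of the quadruples-level implication to the specification level, and the two specification semantics definitions are structurally identical modulo replacing $\satisfies$ with $\satisfiesD$ inside the quadruple brackets. The only mild subtlety is checking that the outer universal quantifier ranges ($\Mtwo, \sigma$ in the shallow version vs. $\sigma$ in the deep version, with $\Mtwo$ absorbed into the quadruple semantics) line up correctly, in particular that both definitions use the same free-variable renaming conventions for $A$ (through $\as \sigma {A''}$ in the shallow definition and $A''[\overline{\interpret \sigma z / z}]$ in the deep definition). This book-keeping is routine given Lemma~\ref{l:assrt:unaffect}, which identifies these two forms of substitution with the satisfaction of the original assertion.
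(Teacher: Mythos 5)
Your proof is correct and matches the paper's (essentially omitted) argument: the paper states this lemma without proof, and proves the analogous quadruple-level implication only ``by unfolding and folding the definitions,'' which is exactly the pointwise reduction you carry out, lifted over the three cases of the specification syntax. Your noted book-keeping points (the substitution conventions for the mid-condition and the instantiation of the frame index $k$ at the top of the stack) are precisely the details that make the unfolding go through, so nothing is missing.
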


\subsection{Soundness of the Hoare Triples Logic}
\label{s:sound:app:triples}

\begin{auxLemma}
\label{l:no:call}
For any module $M$, assertions $A$, $A'$ and $A''$, such that $\Pos A$, and $\Pos {A'}$, and a statement $stmt$ which does not contain any method calls:
\begin{center}
$  \satisfiesD {M} {\triple {A} {stmt} {A'} }  \ \ \Longrightarrow\ \  \satisfiesD {M} {\quadruple {A} {stmt} {A'} {A''}}$
\end{center}
\end{auxLemma}

\begin{proof}

\end{proof}


\subsubsection{Lemmas about protection}
\label{s:app:protect:lemmas}

\begin{definition}

$\LRelevantKO  {\sigma} {k}\ \ \ \triangleq\ \ \  \{ \alpha \mid  \exists i.[ \ k \leq i \leq \depth \sigma \ \wedge \ \alpha \in \LRelevantO  {\RestictTo \sigma i}\ ]$
\end{definition}
 
Lemma \ref{exec:rel} guarantees that  program execution reduces the locally reachable objects, unless it allocates new ones.
That is, any objects locally reachable in the $k$-th frame of the new state ($\sigma'$), are either new, or were locally reachable in the $k$-th frame of the previous state ($\sigma$).

{
\begin{lemma} For all $\sigma$, $\sigma'$, and $\alpha$, where $\models \sigma$, and where $k\leq \DepthSt {\sigma}$:
\label{exec:rel}
\begin{itemize}
\item
$\leadstoBounded  {\Mtwo\cdot M}  {\sigma}  {\sigma'}\ \  \Longrightarrow\ \ 
  \LRelevantKO   {\sigma'} {k}\cap \sigma   \subseteq\   \LRelevantKO {\sigma} {k}  $
\item
$\leadstoBoundedStar  {\Mtwo\cdot M}  {\sigma}  {\sigma'}\ \  \Longrightarrow\ \ 
  \LRelevantKO   {\sigma'} {k}\cap \sigma   \subseteq\   \LRelevantKO {\sigma} {k} $
\end{itemize}
\end{lemma}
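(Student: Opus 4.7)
}
The plan is to prove the first bullet by case analysis on the operational step, and then obtain the second bullet by induction on the length of scoped execution, handling the one extra case (\textsc{Return}) that scoped multi-step execution admits but scoped single-step does not. Throughout, ``$\cap\,\sigma$'' restricts to addresses already present in the heap of $\sigma$; since the heap only grows, $\dom(\sigma)\subseteq\dom(\sigma_i)$ for any reachable intermediate $\sigma_i$.

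\textbf{Part 1 (single scoped step).} Because $\leadstoBounded{\Mtwo\cdot M}{\sigma}{\sigma'}$ requires $\DepthSt{\sigma}\leq\DepthSt{\sigma'}$, the rule \textsc{Return} is excluded. I would then do a case analysis on the four remaining rules of Fig.~\ref{f:loo-semantics}. For \textsc{Read} ($x:=y.f$), the heap is unchanged and only the top frame's variable map changes; any path in $\sigma'$ from the new $x$ corresponds to a path in $\sigma$ prefixed by $y.f$, hence locally reachable from the same frame of $\sigma$. For \textsc{New}, the fresh object has only $\nul$-valued fields, so no path through it can reach a pre-existing address; hence paths to pre-existing $\alpha$ do not use the new variable or object and already existed in $\sigma$. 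For \textsc{Call}, Lemma~\ref{lemma:push:N}(\ref{oneLR}) gives $\LRelevantO{\sigma'}\subseteq\LRelevantO{\sigma}$ for the new top frame, while frames $1..\DepthSt\sigma$ are identical in $\sigma$ and $\sigma'$, so the union over $i\in[k,\DepthSt{\sigma'}]$ collapses into the union over $i\in[k,\DepthSt\sigma]$.

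The main obstacle is the \textsc{Write} case ($x.f:=y$), where the heap is modified. I would argue as follows: the only new edge in $\chi'$ is $\interpret\sigma{x}.f\mapsto\interpret\sigma{y}$. By Def.~\ref{def:wf:state} and Lemma~\ref{rel:smaller} (iterated), $\interpret\sigma{y}\in\LRelevantO{\RestictTo{\sigma}{i}}$ for every $i\leq\DepthSt\sigma$, since $y$ is in the top frame. Thus any path in $\sigma'$ from a variable of the $i$-th frame to a pre-existing $\alpha$ that traverses the new edge can be rerouted in $\sigma$: strip the prefix leading to $\interpret\sigma{x}$, splice in an existing path from the $i$-th frame to $\interpret\sigma{y}$, and follow the suffix. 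Formally, I would induct on the number of uses of the new edge along the $\sigma'$-path; each reduction produces a path with one fewer use that still reaches $\alpha$, and terminates in a path that exists already in $\sigma$.

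\textbf{Part 2 (multi-step).} By induction on the length of the $\leadstoBoundedStar$-derivation, using Def.~\ref{def:shallow:term}. The base case $\sigma=\sigma'$ is immediate. For the inductive step, write the trace as $\sigma\leadstoBoundedStar\sigma_{n-1}\leadstoOrig\sigma'$, where the last step is an arbitrary operational step constrained by $\DepthSt\sigma\leq\DepthSt{\sigma'}$. If the last step is \textsc{Read}, \textsc{Write}, \textsc{New}, or \textsc{Call}, Part 1 gives $\LRelevantKO{\sigma'}{k}\cap\dom(\sigma_{n-1})\subseteq\LRelevantKO{\sigma_{n-1}}{k}$; since $\dom(\sigma)\subseteq\dom(\sigma_{n-1})$, intersecting with the smaller $\dom(\sigma)$ and applying the IH yields the goal. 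The remaining case is \textsc{Return}, so $\sigma'=\sigma_{n-1}\popSymbol$ and $\DepthSt{\sigma'}=\DepthSt{\sigma_{n-1}}-1\geq\DepthSt\sigma$. Frames $1..\DepthSt{\sigma'}-1$ of $\sigma'$ are identical to those of $\sigma_{n-1}$, while the new top frame of $\sigma'$ only adds $u\mapsto\interpret{\sigma_{n-1}}{\mathit{res}}$ to the former frame at level $\DepthSt{\sigma'}$; this value is locally reachable in the top frame of $\sigma_{n-1}$, so $\LRelevantO{\RestictTo{\sigma'}{\DepthSt{\sigma'}}}\subseteq \LRelevantO{\RestictTo{\sigma_{n-1}}{\DepthSt{\sigma'}}}\cup\LRelevantO{\sigma_{n-1}}$. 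Hence $\LRelevantKO{\sigma'}{k}\subseteq\LRelevantKO{\sigma_{n-1}}{k}$, and the IH closes the argument.
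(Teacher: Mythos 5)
Your proof is correct and follows essentially the same route as the paper's: a case analysis on the single scoped step (with the field-write case resolved by observing that the assigned value is locally reachable at every frame level $i \le \DepthSt{\sigma}$, via well-formedness and Lemma~\ref{rel:smaller}), followed by induction on the length of the scoped execution with \textsc{Return} as the one extra case. Your treatment of the \textsc{Return} case is in fact slightly more careful than the paper's one-line appeal to Lemma~\ref{rel:smaller}, since you explicitly account for the binding $u \mapsto \interpret{\sigma_{n-1}}{\prg{res}}$ added to the caller's frame.
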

 }
 
\begin{proof} $~ $

\begin{itemize}
\item
If the step is a method call, then the assertion follows by construction.
If the steps is a   local execution in a method, we proceed by case analysis. If it is an assignment to a local variable, then 
$\forall k.[\ \LRelevantKO   {\sigma'} {k}= \LRelevantKO   {\sigma} {k}\ ]$.
If the step is the creation of a new object, then the assertion holds by construction.
If it it is a field assignment, say, $\sigma'=\sigma[\alpha_1,f \mapsto \alpha_2]$, then we have that 
$\alpha_1, \alpha_2 \  \LRelevantKO {\sigma} {\DepthSt \sigma}$. 
And therefore, by Lemma \ref{rel:smaller}, we also have that $\alpha_1, \alpha_2 \  \LRelevantKO {\sigma} {k}$
All locally reachable objects in $\sigma'$ were either already reachable in $\sigma$ or reachable through $\alpha_2$,
Therefore, we also have that  $\LRelevantKO   {\sigma'} {k}\subseteq  \LRelevantKO   {\sigma} {k}$
 And by definition of $\leadstoBounded  {\_}  {\_}  {\_}$, it is not a method return.
 
\item
By induction on the number of steps in $\leadstoBoundedStar  {\Mtwo\cdot M}  {\sigma}  {\sigma'}$. 
For the steps that correspond to method calls, the assertion follows by construction.
For the steps that correspond to local execution in a method, the assertion follows from the bullet above.
For the steps that correspond to method returns, the assertion follows by lemma \ref{rel:smaller}.
\end{itemize}
\end{proof}

Lemma \ref{change:external} guarantees that any change to the contents of an external object can only happen during execution of an external method.
 
 {
 \begin{lemma} For all $\sigma$, $\sigma'$:
 \label{change:external}
\begin{itemize}
\item
$\leadstoBounded  {\Mtwo\cdot M}  {\sigma}  {\sigma'}\ \wedge \  \sigma \models \external{\alpha} \ \wedge\  {\interpret{\sigma} {\alpha.f}} \neq {\interpret{\sigma'} {\alpha.f}}
 \ \ \Longrightarrow\ \  M,\sigma \models \extThis$
\end{itemize}
\end{lemma}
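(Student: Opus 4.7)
\textbf{Proof plan for Lemma \ref{change:external}.} My plan is to proceed by case analysis on the derivation of the scoped transition $\leadstoBounded{\Mtwo\cdot M}{\sigma}{\sigma'}$, using the five operational rules in Fig.~\ref{f:loo-semantics}. The goal is to rule out every rule except \textsc{Write}, and then extract the desired conclusion from the side conditions of \textsc{Write}.

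First I would dispatch the cases \textsc{Read}, \textsc{New}, and \textsc{Call} on the grounds that none of them alter the contents of an already-existing heap object: \textsc{Read} only updates the top frame's variable map; \textsc{Call} only pushes a frame; and \textsc{New} allocates a fresh address disjoint from $\sigma$'s heap. In each of these cases we have $\interpret{\sigma}{\alpha.f} = \interpret{\sigma'}{\alpha.f}$ for every $\alpha$ already in $\sigma$, contradicting the hypothesis. The case \textsc{Return} is excluded directly by the scoped-execution constraint: Def.~\ref{def:shallow:term} requires $\EarlierS{\sigma}{\sigma'}$, i.e.\ $\DepthSt{\sigma}\leq \DepthSt{\sigma'}$, whereas \textsc{Return} strictly decreases the stack depth.

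This leaves only the \textsc{Write} case, $\sigma.\prg{cont} \txteq x.f := y;\,stmt$, whose side condition is $\Same{\prg{this}}{x}{\sigma}{\Mtwo\cdot M}$. Since the only field actually mutated is the $f$-field of $\interpret{\sigma}{x}$, the hypothesis $\interpret{\sigma}{\alpha.f}\neq\interpret{\sigma'}{\alpha.f}$ forces $\interpret{\sigma}{x}=\alpha$. Unfolding Def.~\ref{def:same:module}, there exists a module $M'\in \Mtwo\cdot M$ containing both $\class{\sigma}{\prg{this}}$ and $\class{\sigma}{x}=\class{\sigma}{\alpha}$. By the assumption $\sigma\models\external{\alpha}$ together with Def.~\ref{def:chainmail-semantics}(\ref{cExternal}), $\class{\sigma}{\alpha}\notin M$, so $M'\neq M$; hence $M'\in\Mtwo$ and therefore $\class{\sigma}{\prg{this}}\notin M$ as well, which gives $M,\sigma\models\extThis$ as required.

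The only subtlety in the argument is confirming that no rule other than \textsc{Write} can change a heap field of a pre-existing object; this is a routine inspection of Fig.~\ref{f:loo-semantics} and does not require induction. The essential step is the appeal to the $\Same{\_}{\_}{\_}{\_}$ side condition of \textsc{Write}, which is exactly the module-private-field discipline enforced by \LangOO and which embodies the encapsulation guarantee we rely on throughout the paper.
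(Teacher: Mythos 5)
Your proof is correct and follows the same route as the paper, whose entire argument is ``through inspection of the operational semantics in Fig.~\ref{f:loo-semantics}, and in particular rule \textsc{Write}''; you have simply spelled out that inspection, including the step from the $\Same{\prg{this}}{x}{\sigma}{\_}$ side condition and module-domain disjointness to $\extThis$. No gaps.
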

  } 
  \begin{proof}
  Through inspection of the operational semantics in Fig. \ref{f:loo-semantics}, and in particular rule {\sc{Write}}.
  \end{proof}


Lemma \ref{lemma:inside:preserved}  guarantees that internal code which does not include method calls preserves absolute protection. 
It is used in the proof of soundness of the inference rule {\sc{Prot-1}}.

  {
 \begin{lemma} For all $\sigma$, $\sigma'$, and $\alpha$:
 \label{lemma:inside:preserved} 
\begin{itemize}
\item
$ \satDAssertFrom M  {\sigma} k   {\inside{\alpha}}  \, \wedge \,M, \sigma \models \intThis \, \wedge \, \sigma.\prg{cont} \mbox{ contains no method calls } \ \wedge\ \leadstoBounded   {\Mtwo\cdot M}  {\sigma}  {\sigma'}\  \ \ \Longrightarrow\ \ \satDAssertFrom M  {\sigma'} k   {\inside{\alpha}}$
\item
$ \satDAssertFrom M  {\sigma} k   {\inside{\alpha}}  \ \wedge \ M, \sigma \models \intThis \ \wedge \ \sigma.\prg{cont}  \mbox{ contains no method calls } \ \wedge\ \leadstoBoundedStar  {\Mtwo\cdot M}  {\sigma}  {\sigma'}\  \ \ \Longrightarrow\ \ \satDAssertFrom M  {\sigma'} k   {\inside{\alpha}}$
\end{itemize}
\end{lemma}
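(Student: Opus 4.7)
}
The plan is to prove the first (single-step) clause by case analysis on the operational rule used in $\leadstoBounded{\Mtwo\cdot M}{\sigma}{\sigma'}$, and then derive the transitive-closure clause by induction on the number of steps. Because $\leadstoBounded$ requires $\DepthSt{\sigma'}\geq\DepthSt{\sigma}$, rule {\sc{Return}} is immediately excluded, and since $\sigma.\prg{cont}$ contains no method calls, rule {\sc{Call}} is excluded as well. Thus the only rules to consider are {\sc{Read}}, {\sc{Write}}, and {\sc{New}}, and in each case $\DepthSt{\sigma'}=\DepthSt{\sigma}$, so the set of frames $\{k,\ldots,\DepthSt{\sigma'}\}$ equals $\{k,\ldots,\DepthSt{\sigma}\}$.

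For each of the three remaining rules, and for every index $j$ with $k\leq j\leq\DepthSt{\sigma}$, I would verify the two disjuncts of $\inside{\alpha}$ at $\RestictTo{\sigma'}{j}$: the $\extThis$-clause is vacuous for $j=\DepthSt{\sigma}$ by the hypothesis $M,\sigma\models\intThis$ (the top receiver is not modified by any of these rules) and for $j<\DepthSt{\sigma}$ the relevant frame is untouched; the main content is therefore the second disjunct, that no locally reachable external object has a field pointing to $\alpha$. For {\sc{Read}} the heap is unchanged, the local map at all lower frames is unchanged, and Lemma \ref{rel:smaller}/Lemma \ref{exec:rel} give $\LRelevantO{\RestictTo{\sigma'}{j}}\subseteq\LRelevantO{\RestictTo{\sigma}{j}}\cup\{\interpret{\sigma}{y.f}\}$, the only new object being already reachable through $y$; so no new external witness appears. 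For {\sc{New}} the freshly allocated object has every field equal to $\nul$, hence even if it were external it cannot break the property; existing external objects are untouched. For {\sc{Write}} ($x.f:=y$) the premise $\Same{\prg{this}}{x}{\sigma}{\Mtwo}$ together with $M,\sigma\models\intThis$ forces $x$ to be internal, so by (the contrapositive of) Lemma \ref{change:external} no external object's field value is altered; Lemma \ref{exec:rel} again bounds newly locally reachable objects by those reachable through the assigned value $\interpret{\sigma}{y}$, which is a local variable and hence already in $\LRelevantO{\RestictTo{\sigma}{j}}$.

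For the transitive closure, I would induct on the length of the derivation $\leadstoBoundedStar{\Mtwo\cdot M}{\sigma}{\sigma'}$. The base case is trivial. For the inductive step I must check that after one step the three side-conditions required to apply the single-step clause again hold at the intermediate state: (i) $\satDAssertFrom{M}{\sigma_{\mathit{mid}}}{k}{\inside{\alpha}}$ follows from the single-step result just proved; (ii) $M,\sigma_{\mathit{mid}}\models\intThis$ is preserved because none of {\sc{Read}},{\sc{Write}},{\sc{New}} alters the top frame's receiver or its class; (iii) the continuation of $\sigma_{\mathit{mid}}$ still contains no method calls because it is a suffix of $\sigma.\prg{cont}$ (the three applicable rules only strip a prefix from the continuation).

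The main obstacle is the careful bookkeeping in the {\sc{Write}} case when $j<\DepthSt{\sigma}$: the heap modification $\interpret{\sigma}{x}.f\mapsto\interpret{\sigma}{y}$ is visible from every frame, so I must argue simultaneously that (a) $\interpret{\sigma}{x}$ is internal (uses $\intThis$ and {\sc{Same}}, which concern only the top frame), and (b) no external object locally reachable from a lower frame acquires a new field pointing to $\alpha$. Point (a) transfers to lower frames because ``internal'' is a heap-level property. Point (b) reduces to observing that the only field-value that changes is at an internal object, so the quantifier $\forall\alpha'$ external $\forall f$ in Def.\ \ref{def:chainmail-protection} ranges over unchanged field slots; combined with Lemma \ref{exec:rel} this yields the required preservation at every frame.
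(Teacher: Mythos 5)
Your proof is correct and matches the paper's in substance: both rest on Lemma \ref{exec:rel} to bound the locally reachable objects of each frame, on freshly allocated objects having only \prg{null}-valued fields, and on the fact that an internal step cannot change an external object's fields (which you extract directly from the premise of {\sc{Write}}, where the paper packages the same fact as Lemma \ref{change:external}). The only differences are organisational — you case-split on the operational rule where the paper argues by contradiction on the provenance of a hypothetical witness object — and your explicit check that the side conditions ($\intThis$ and a call-free continuation) persist at intermediate states of the induction is a detail the paper leaves implicit.
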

}

\begin{proof} $~ $

\begin{itemize}
\item
Because $\sigma.\prg{cont}$   contains no method calls, we also have that $\DepthSt {\sigma'}=\DepthSt {\sigma}$. Let us take
$m=\DepthSt {\sigma}$.

We continue by contradiction.
 Assume that $\satDAssertFrom M  {\sigma} k   {\inside{\alpha}}$ and $\notSatDAssertFrom M  {\sigma} k   {\inside{\alpha}}$  

Then:\\
(*)\ $\forall f.\forall i\in [k..m].\forall \alpha_o\in  \LRelevantKO {\sigma} {i}.[\ M, \sigma \models \external {\alpha_o} \Rightarrow \interpret {\sigma} {\alpha_o.f} \neq \alpha\ \wedge \alpha_o \neq \alpha \ ] $.
\\
(**) $\exists f.\exists  j\in [k..m].\exists \alpha_o\in  \LRelevantKO {\sigma'} {j}.[\ M, \sigma' \models \external {\alpha_o} \wedge \interpret {\sigma'} {\alpha_o.f} = \alpha\ \vee \alpha_o = \alpha\ ] $
\\
We proceed by cases
\begin{description}
\item{1st Case} $\alpha_o \notin  \sigma$, \ie $\alpha_o$ is a new object.
  Then, by our operational semantics, it cannot have a field pointing to an already existing object ($\alpha$), nor can it be equal with $\alpha$. Contradiction.
\item{2nd Case}  $\alpha_o \in  \sigma$. Then, by Lemma \ref{exec:rel}, we obtain that $\alpha_o \in \LRelevantKO {\sigma} {j}$.
Therefore, using (*), we obtain that $\interpret {\sigma} {\alpha_o.f} \neq \alpha$, and therefore 
$ \interpret {\sigma} {\alpha_o.f} \neq \interpret {\sigma'} {\alpha_o.f}$.
By lemma \ref{change:external}, we obtain $M, \sigma \models \extThis$. Contradiction!
\end{description}

\item
By induction on the   number of steps, and using the bullet above.
\end{itemize}

\end{proof}


  {
 \begin{lemma} For all $\sigma$, $\sigma'$, and $\alpha$:
 \label{lemma:protTwo}
\begin{itemize}
\item
$  M, \sigma  \models    { \protectedFrom \alpha {\alpha_o}}  \ \wedge\   \sigma.\prg{heap}= \sigma'.\prg{heap} \ \ \Longrightarrow\ \  M, \sigma' \models      { \protectedFrom \alpha {\alpha_o}} $
\end{itemize}
\end{lemma}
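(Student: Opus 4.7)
The plan is to prove Lemma \ref{lemma:protTwo} by a simple unfolding of Def.~\ref{def:chainmail-protection-from}(1), observing that every constituent of $\protectedFrom{\alpha}{\alpha_o}$ is determined purely by the heap. Since the statement is phrased on \emph{addresses} $\alpha,\alpha_o$ (not general expressions $\re,\re_o$), we only need to handle clause (1) of the definition; there is no need to reason about evaluation of expressions that might depend on a variable map.

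Concretely, I would expand $M,\sigma\models\protectedFrom{\alpha}{\alpha_o}$ into its two conjuncts: (a) $\alpha\neq\alpha_o$, and (b) $\forall \alpha',f.\,[\alpha'\in\Relevant{\alpha_o}{\sigma}\wedge M,\sigma\models\external{\alpha'}\Rightarrow \interpret{\sigma}{\alpha'.f}\neq\alpha]$. Clause (a) is a pure inequality between addresses and is unaffected by any stack change. For clause (b), by the definition of $\Relevant{\_}{\_}$, membership in $\Relevant{\alpha_o}{\sigma}$ is determined by chains of field lookups $\interpret{\sigma}{\alpha_o.f_1\ldots f_n}$, which in turn consult only the heap component $\chi$ (see Def.~\ref{def:interpret}); hence $\Relevant{\alpha_o}{\sigma}=\Relevant{\alpha_o}{\sigma'}$. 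Similarly, $\interpret{\sigma}{\alpha'.f}$ is a heap lookup and so agrees in $\sigma$ and $\sigma'$.

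It remains to verify that $M,\sigma\models\external{\alpha'}$ and $M,\sigma'\models\external{\alpha'}$ coincide. By Def.~\ref{def:chainmail-semantics}(\ref{cExternal}) and (\ref{cClass}), this predicate reduces to a test on $\class{\alpha'}{\sigma}$, which Def.~\ref{def:class-lookup} computes from $\chi(\alpha')$ alone. Since $\sigma$ and $\sigma'$ share the same heap, the class of $\alpha'$ is identical in both, so the external-check agrees. Combining the three observations, clause (b) evaluated under $\sigma$ is syntactically the same proposition as clause (b) under $\sigma'$, and the implication in the lemma follows immediately.

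There is no real obstacle here; the lemma is essentially a statement that $\protectedFrom{\alpha}{\alpha_o}$, when applied to literal addresses, is a property of the heap. The only mild subtlety is ensuring we never silently invoke the expression-based form (3) of Def.~\ref{def:chainmail-protection-from}, which would require the extra observation that $\eval{M}{\sigma}{\alpha}{\alpha}$ trivially, so even that case would go through unchanged.
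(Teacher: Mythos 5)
Your proof is correct and follows exactly the route the paper takes: its own proof of this lemma is just ``by unfolding and folding the definitions,'' and your argument is that unfolding spelled out, checking that the address inequality, the reachability set $\Relevant{\alpha_o}{\sigma}$, the external-check via $\class{\alpha'}{\sigma}$, and the field lookups are all determined by the heap alone.
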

}

\begin{proof}
By unfolding and folding the definitions.
\end{proof}

{
 \begin{lemma} For all $\sigma$,  and $\alpha$, $\alpha_o$, $\alpha_1$, $\alpha_2$:
 \label{l:prtFrom}
\begin{itemize}
\item
$ M, \sigma  \models    {\protectedFrom \alpha  {\alpha_o}}  \  \wedge \ \  M, \sigma  \models    {\protectedFrom \alpha  {\alpha_1}}    \   \ 
\Longrightarrow\ \ M, \sigma[\alpha_2,f \mapsto \alpha_1] \models  \protectedFrom\alpha   {\alpha_o}$
\end{itemize}
\end{lemma}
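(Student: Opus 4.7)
\medskip

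\noindent\textbf{Proof plan for Lemma \ref{l:prtFrom}.}
Let $\sigma' \triangleq \sigma[\alpha_2,f \mapsto \alpha_1]$. By Definition \ref{def:chainmail-protection-from}, we must establish two conjuncts: (i) $\alpha \neq \alpha_o$, and (ii) for every $\alpha' \in \Relevant{\alpha_o}{\sigma'}$ such that $M,\sigma' \models \external{\alpha'}$, and every field $f'$, $\interpret{\sigma'}{\alpha'.f'} \neq \alpha$. Conjunct (i) is immediate from the hypothesis $M,\sigma \models \protectedFrom{\alpha}{\alpha_o}$, since disequality of addresses does not depend on the heap. Note also that the class of every address is preserved by a field update, so $M,\sigma' \models \external{\alpha'}$ iff $M,\sigma \models \external{\alpha'}$.

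The key step is a reachability lemma I would prove by induction on path length:
\[
\Relevant{\alpha_o}{\sigma'} \;\subseteq\; \Relevant{\alpha_o}{\sigma} \,\cup\, \Relevant{\alpha_1}{\sigma}.
\]
Indeed, every path $\alpha_o.f_1\ldots f_n$ in $\sigma'$ either avoids the modified edge $(\alpha_2,f)$ entirely, in which case every interpretation along the path agrees with $\sigma$ and $\alpha' \in \Relevant{\alpha_o}{\sigma}$; or it traverses $(\alpha_2,f)$ at least once. Taking the last traversal, the prefix up to and including $\alpha_2$ is a $\sigma$-path from $\alpha_o$ to $\alpha_2$ (so in $\Relevant{\alpha_o}{\sigma}$), the step $\alpha_2.f$ yields $\alpha_1$ in $\sigma'$, and the remaining suffix avoids the modified edge by choice of the last traversal, so it is a $\sigma$-path from $\alpha_1$ to $\alpha'$, placing $\alpha' \in \Relevant{\alpha_1}{\sigma}$.

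For conjunct (ii), fix external $\alpha' \in \Relevant{\alpha_o}{\sigma'}$ and field $f'$. Split on whether $(\alpha',f') = (\alpha_2,f)$. If not, $\interpret{\sigma'}{\alpha'.f'} = \interpret{\sigma}{\alpha'.f'}$, and by the reachability lemma $\alpha' \in \Relevant{\alpha_o}{\sigma} \cup \Relevant{\alpha_1}{\sigma}$, so one of the two hypotheses $M,\sigma\models\protectedFrom{\alpha}{\alpha_o}$ or $M,\sigma\models\protectedFrom{\alpha}{\alpha_1}$ (together with the unchanged externality of $\alpha'$) gives $\interpret{\sigma}{\alpha'.f'} \neq \alpha$. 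If instead $(\alpha',f') = (\alpha_2,f)$, then $\interpret{\sigma'}{\alpha'.f'} = \alpha_1$, and we need $\alpha_1 \neq \alpha$; this is precisely the first clause of $M,\sigma\models\protectedFrom{\alpha}{\alpha_1}$.

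The main obstacle is the reachability inclusion: one has to be careful to pick the \emph{last} occurrence of the modified edge along a $\sigma'$-path so that the suffix can be replayed in $\sigma$. Once that observation is in place, the rest reduces to a clean case split on whether the edge being tested is the one just installed.
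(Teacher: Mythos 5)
Your proof is correct. Note that the paper itself gives no proof for this lemma (unlike its neighbours in the appendix, it is stated without even a one-line justification), so there is nothing to compare against; your argument supplies exactly the detail the authors elided, and the route you take — unfold Definition~\ref{def:chainmail-protection-from}, observe that externality and address disequality are heap-update-invariant, prove the path-decomposition inclusion $\Relevant{\alpha_o}{\sigma'} \subseteq \Relevant{\alpha_o}{\sigma} \cup \Relevant{\alpha_1}{\sigma}$, then case-split on whether the tested field edge is the freshly installed one — is the natural and, as far as I can see, the only reasonable one.

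One small inaccuracy, which does not affect the argument: in the second case of your reachability lemma you assert that the prefix up to and including $\alpha_2$ is a $\sigma$-path from $\alpha_o$ to $\alpha_2$. If the path traverses the modified edge more than once, the prefix up to the \emph{last} traversal may itself contain that edge, so this parenthetical claim is not justified as stated. Fortunately you never use it: the conclusion $\alpha' \in \Relevant{\alpha_1}{\sigma}$ rests only on the suffix after the last traversal, which genuinely avoids the modified edge and hence replays in $\sigma$. I would simply delete the remark about the prefix.
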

}

{
\begin{definition}
\begin{itemize}
\item
$M, \sigma \models \internalPaths{\re} \ \ \triangleq \ \ \forall \overline{f}.[\  M, \sigma \models \internal{\re.\overline{f}}\ ]$
\end{itemize}
\end{definition}
}

{
 \begin{lemma} For all $\sigma$, and $\alpha_o$ and $\alpha$:
\begin{itemize}
\item
$M, \sigma \models \internalPaths{\alpha_o}  \    \ \ \Longrightarrow\ \ M, \sigma \models {\protectedFrom \alpha {\alpha_o}}$
\end{itemize}
\end{lemma}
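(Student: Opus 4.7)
The plan is to unfold both sides and observe that the hypothesis $M, \sigma \models \internalPaths{\alpha_o}$ forces the set $\Relevant{\alpha_o}{\sigma}$ to contain no external object, which makes the non-trivial clause of $\protectedFrom{\alpha}{\alpha_o}$ hold vacuously.

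First I would unfold $\internalPaths{\alpha_o}$ using its definition and the semantics of $\internal{\_}$ from Def.~\ref{def:chainmail-semantics}(\ref{cExternal}): for every finite sequence of field names $\overline f$, whenever $\interpret{\sigma}{\alpha_o.\overline f} = \alpha'$ the class of $\alpha'$ lies in $M$, so $M,\sigma \models \internal{\alpha'}$. Next, a short induction on the length of $\overline f$ (or a direct appeal to the definition of $\Relevant{\_}{\_}$) shows that every $\alpha' \in \Relevant{\alpha_o}{\sigma}$ arises as $\interpret{\sigma}{\alpha_o.\overline f}$ for some $\overline f$, hence every such $\alpha'$ is internal. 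Consequently, the antecedent $\alpha' \in \Relevant{\alpha_o}{\sigma} \wedge M,\sigma \models \external{\alpha'}$ in clause~(b) of Def.~\ref{def:chainmail-protection-from}(\ref{cProtected}) is never satisfied, so clause~(b) holds trivially.

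For clause~(a), i.e.\ $\alpha \neq \alpha_o$, I expect one of two situations, and this is the main obstacle: either the lemma is intended to be used with an implicit side condition $\alpha \neq \alpha_o$ (e.g.\ when $\alpha$ is known to be external, which is the typical use in the soundness proofs of the {\sc Prot-*} rules), or the statement should be read disjunctively as ``protected from $\alpha_o$ up to equality.'' In the first reading, the additional hypothesis is either supplied by the caller's context (for instance $M,\sigma \models \external{\alpha}$, which together with $\internalPaths{\alpha_o}$ immediately yields $\alpha \neq \alpha_o$), or by the surrounding typing information used in rule {\sc{Prot\_Intl}} of Fig.~\ref{f:protection:conseq:ext}. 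I would therefore complete the argument by invoking that side condition to discharge clause~(a), and note explicitly in the statement that this is the operative reading.

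Combining the vacuous satisfaction of clause~(b) with the disposal of clause~(a), Def.~\ref{def:chainmail-protection-from}(\ref{cProtected}) is satisfied, and therefore $M,\sigma \models \protectedFrom{\alpha}{\alpha_o}$, as required. The proof is essentially an unfolding exercise; the only subtle point is the treatment of clause~(a), which is why I flag it as the main obstacle.
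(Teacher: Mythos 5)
The paper states this lemma without any proof, so there is nothing to compare your argument against; what you have written is the only sensible route (unfold $\internalPaths{\alpha_o}$, observe that every element of $\Relevant{\alpha_o}{\sigma}$ is obtained as $\interpret{\sigma}{\alpha_o.\overline f}$ for some $\overline f$ and is therefore internal, and conclude that the external-object clause of $\protectedFrom{\alpha}{\alpha_o}$ holds vacuously). That part of your proof is correct and complete.

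Your worry about clause~(a) is not a defect of your proof but a genuine defect of the lemma as stated: the implication is false when $\alpha = \alpha_o$. Take $\alpha_o$ internal with all transitively reachable objects internal; then $M,\sigma \models \internalPaths{\alpha_o}$, yet $M,\sigma \not\models \protectedFrom{\alpha_o}{\alpha_o}$ because Def.~\ref{def:chainmail-protection-from}(\ref{cProtected}) demands $\alpha \neq \alpha_o$ (the paper itself exhibits $\sigma_1 \not\models \protectedFrom{o_1}{o_1}$ in Example~\ref{push:does:not:imply}). The same gap is inherited by rule \textsc{Prot-Intl} in Fig.~\ref{f:protection:conseq:ext}, which this lemma is meant to justify. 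In the paper's concrete uses the missing inequality is discharged by typing (e.g.\ a \prg{Key} cannot equal an \prg{Account} or a scalar), but the general statement needs an explicit extra hypothesis --- either $\alpha \neq \alpha_o$ directly, or something that entails it, such as $M,\sigma \models \external{\alpha}$, which combined with $\internalPaths{\alpha_o}$ (hence $\internal{\alpha_o}$) forces the two addresses apart. With that side condition added, your proof goes through as written.
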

}

\noindent
\vspace{.2cm}
\textbf{Proof Sketch Theorem \ref{l:triples:sound}} 
The proof goes by case analysis over the rules from Fig. \ref{f:underly}  applied to obtain $M \vdash \{ A \}\ stmt \  \{ A' \} $:

\begin{description} 

\item[{\sc{embed\_ul}}] 
By  soundness of the underlying Hoare logic (axiom \ref{ax:ul:sound}), we obtain that  $M\ \models\ \triple {A} {stmt}   {A'}$.
By axiom \ref{ax:ul} we also obtain that $\Stable{A}$ and  $\Stable{A'}$. 
This, together with   Lemma \ref{l:shallow:scoped}, part \ref{shallow:to:scoped}, gives us that
$\satisfiesD {M} {\triple {A} {stmt} {A'} }$. 
By the assumption of {\sc{extend}}, $stmt$ does not contain any method call. Rest follows by lemma \ref{l:no:call}.

\item[{\sc{Prot-New}}] By operational semantics, no field of another object will point to $u$, and therefore $u$ is protected, and protected from all variables $x$.

\item[{\sc{Prot-1}}] by Lemma \ref{lemma:inside:preserved}. The rule premise
${\hproves{M}  	{\ z=e   } {\ stmt } {\ z=e\  }}$ allows us to consider addresses, $\alpha$,   rather than expressions, $e$.

\item[{\sc{Prot-2}}] by Lemma \ref{lemma:protTwo}. The rule premise
${\hproves{M}  	{\ z=e \wedge z=e'\ } {\ stmt } {\ z=e \wedge z=e'\ }}$ allows us to consider addresses $\alpha$, $\alpha'$ rather than expressions $e$, $e'$.

\item[{\sc{Prot-3}}] also by Lemma \ref{lemma:protTwo}. Namely, the rule does not change, and $y.f$ in the old state has the same value as $x$ in the new state.

\item[{\sc{Prot-4}}] by Lemma \ref{l:prtFrom}.

\item[{\sc{types-1}}] 

Follows from type system, the assumption of {\sc{types-1}} and lemma \ref{l:no:call}.

\end{description}
\noindent
\vspace{.1cm}
{\textbf{End Proof Sketch}}

\subsection{Well-founded ordering}

 \begin{definition}
\label{def:measure}
For a module $M$, and modules $\Mtwo$,   we define a measure, $\measure {A, \sigma,A',A''} {M,\Mtwo} $, and based on it, a well founded ordering $(A_1,\sigma_1,A_2, A_3) \ll_{M,\Mtwo}  (A_4,\sigma_2,A_5,A_6)$
as follows:
\begin{itemize}
\item
 $\measure {A, \sigma,A',A''} {M,\Mtwo} \  \ \triangleq \ \  (m, n)$,  \ \ \  where
\begin{itemize}
\item
$m$ is the minimal number of execution steps so that $ \leadstoBoundedStarFin {M\cdot \Mtwo} {\sigma}    {\sigma'}$  for some $\sigma'$, {and $\infty$ otherwise}.
 \item
  $n$ is minimal depth of all proofs of $M \vdash \quadruple {A} {\sigma.\prg{cont}} {A'} {A''} $.
\end{itemize}
 \item
 $(m,n) \ll (m',n')$\ \  \ \ $\triangleq$ \ \  \ \ $ m<m'\vee  (m=m'  \wedge n < n')   $.
\item
$(A_1,\sigma_1,A_2, A_3) \ll_{M,\Mtwo}  (A_4,\sigma_2,A_5, A_6)$  \  \  $\triangleq$ \ \ 
$\measure {A_1, \sigma_1,A_2, A_3} {M,\Mtwo}  \ll \measure {A_4, \sigma_2,A_5. A_6} {M,\Mtwo} $
\end{itemize}
\end{definition}


\begin{lemma}
\label{lemma:normal:two}
For any modules $M$ and $\Mtwo$,  the relation $\_ \ll_{M,\Mtwo}  \_$ is well-founded.
\end{lemma}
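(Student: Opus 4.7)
}

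The plan is to observe that $\ll_{M,\Mtwo}$ is obtained as the pullback of a well-known well-founded relation via the measure function $\measure{\cdot}{M,\Mtwo}$, and then to invoke the standard preservation of well-foundedness under such pullbacks. Concretely, the measure takes values in the set $(\mathbb{N} \cup \{\infty\}) \times \mathbb{N}$, and $\ll$ is the lexicographic order on this set induced by the usual strict orderings on each component. So the task reduces to (i) verifying that each component ordering is well-founded, and (ii) that lexicographic composition of two well-founded orders is again well-founded.

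For (i), the order $<$ on $\mathbb{N}$ is well-founded by the standard argument. The extended order on $\mathbb{N} \cup \{\infty\}$ (with $n < \infty$ for every finite $n$) is also well-founded: any putatively infinite descending chain starting at $\infty$ must, after at most one step, land in $\mathbb{N}$ and continue descending there, which is impossible. For (ii), given two well-founded orders, the lexicographic product has no infinite descending chain, because projection to the first coordinate yields a chain that is either eventually constant (in which case the tail of the original chain descends in the second coordinate, contradicting well-foundedness of the second order) or strictly descends infinitely (contradicting well-foundedness of the first order). Both facts are textbook, so I would simply cite them.

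Finally, given a function $f : X \to Y$ and a well-founded relation $R$ on $Y$, the pullback relation $R_f$ defined by $x \mathrel{R_f} x' \triangleq f(x) \mathrel{R} f(x')$ is well-founded on $X$: any infinite descending $R_f$-chain in $X$ would project along $f$ to an infinite descending $R$-chain in $Y$. Applying this with $X$ being the set of tuples $(A,\sigma,A',A'')$, $Y = (\mathbb{N}\cup\{\infty\}) \times \mathbb{N}$, $f = \measure{\cdot}{M,\Mtwo}$, and $R = \ll$, yields that $\ll_{M,\Mtwo}$ is well-founded, as required.

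I do not expect a serious obstacle here. The only point worth being careful about is the treatment of the $\infty$ value for the first component in the definition of $\ll$: the clause $m < m'$ should be read with the convention $n < \infty$ for all finite $n$ and $\infty \not< \infty$, so that the lexicographic order is a well-defined strict order rather than a partial pre-order. With that convention fixed, the argument above goes through verbatim.
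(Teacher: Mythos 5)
Your proof is correct; the paper states this lemma without proof, treating it as the routine fact that the lexicographic order on $(\mathbb{N}\cup\{\infty\})\times\mathbb{N}$ is well-founded and that pulling back a well-founded relation along the measure function $\measure{\cdot}{M,\Mtwo}$ preserves well-foundedness — which is exactly your argument, including the correct handling of the $\infty$ value in the first component. The only point worth adding is that the second component of the measure is a minimum over a nonempty set only when a derivation of the quadruple exists, but since $\ll_{M,\Mtwo}$ is only ever invoked on provable quadruples (or, equivalently, one sets $\min\emptyset=\infty$ and runs the same lexicographic argument), this does not affect the conclusion.
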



\subsection{Public States, properties of executions consisting of several steps}

We t define a state to be public, if the currently executing method is public.

\begin{definition}
We use the form
$M, \sigma \models \pubMeth$ to express that the currently executing method is public.\footnote{This can be done by looking in the caller's frame -- ie the one right under the topmost frame -- the class of the current receiver and the name of the currently executing method, and then looking up the method definition in the module $M$; if not defined there, then it is not \prg{public}. }
Note that $\pubMeth$ is not part of the assertion language.
\end{definition}

 \begin{auxLemma}[Enclosed Terminating Executions]\footnoteSD{TODO find better name for the aux lemma}
 \label{lemma:encl:tem}
 For   modules $\Mtwo$,   states $\sigma$, $\sigma'$, $\sigma_1$:
\begin{itemize}
\item
$  \leadstoBoundedStarFin {\Mtwo}  {\sigma}  {\sigma'} \  \wedge \  \leadstoBoundedStar  {\Mtwo}  {\sigma}  {\sigma_1} 
\ \ \  \Longrightarrow\ \ \  
 \exists \sigma_2.[\ \ \leadstoBoundedStarFin {\Mtwo} {\sigma_1}  {\sigma_2}  
\ \wedge\ 
\leadstoBoundedStarThree  {\Mtwo}  {\sigma_2}  {\sigma}   {\sigma'} \ \ ]$
\end{itemize}

\end{auxLemma} 

\begin{auxLemma}[Executing  sequences]
\label{lemma:subexp}
For modules $\Mtwo$, statements $s_1$, $s_2$,  states $\sigma$, $\sigma'$, $\sigma'''$:
\begin{itemize}
\item
$ \sigma.\texttt{cont} = s_1; s_2 \ \ \wedge\ \  \leadstoBoundedStarFin {\Mtwo}  {\sigma}  {\sigma'}\ \ 
\wedge \ \
\leadstoBoundedStar {\Mtwo}  {\sigma}  {\sigma''}\
$\\
$  \Longrightarrow$\\
$   \exists \sigma''.[\ \ \ \ \   \leadstoBoundedStarFin {\Mtwo} {\sigma[\texttt{cont}\mapsto s_1]}  {\sigma''}  
\ \wedge\ 
\leadstoBoundedStarFin {\Mtwo} {\sigma''[\texttt{cont}\mapsto s_2]}   {\sigma'} \  \wedge$
\\
$\strut \hspace{1.2cm}  [ \ \ \leadstoBoundedStar {\Mtwo} {\sigma[\texttt{cont}\mapsto s_1]}   {\sigma''}\ \vee \ \leadstoBoundedStarFin {\Mtwo}  {\sigma''[\texttt{cont}\mapsto s_2]}   {\sigma'''}\ ]\ \ \ \ \ \ \ \  \ ] $
\end{itemize}
\end{auxLemma}

\subsection{Summarised Executions}

We repeat the two diagrams given in \S \ref{s:summaized}.

\begin{tabular}{lll}
\begin{minipage}{.45\textwidth}
The diagram opposite  shows such an execution:
  $ \leadstoBoundedStarFin {\Mtwo\cdot M}    {\sigma_2}  {\sigma_{30}}$ consists of 4 calls to external objects,
and 3 calls to internal objects.
The calls to external objects are from $\sigma_2$ to $\sigma_3$,  from $\sigma_3$ to $\sigma_4$, from $\sigma_9$ to $\sigma_{10}$, 
and  from $\sigma_{16}$ to $\sigma_{17}$.
 The calls to internal objects are from $\sigma_5$ to $\sigma_6$, rom $\sigma_7$ to $\sigma_8$, and from $\sigma_{21}$ to $\sigma_{23}$. 
\end{minipage}
& \ \  &
\begin{minipage}{.4\textwidth}
\resizebox{6.2cm}{!}
{
\includegraphics[width=\linewidth]{diagrams/summaryA.png}
} \end{minipage}
\end{tabular}

\begin{tabular}{lll}
\begin{minipage}{.45\textwidth}
 In terms of our example, we want to summarise the execution of the two ``outer'' internal, public methods into the 
 ``large'' steps $\sigma_6$ to $\sigma_{19}$ and $\sigma_{23}$ to $\sigma_{24}$.
 And are not concerned with the states reached from these two public method executions.  
\end{minipage}
& \ \  &
\begin{minipage}{.4\textwidth}
\resizebox{6.2cm}{!}
{
\includegraphics[width=\linewidth]{diagrams/summaryB.png}
} \end{minipage}
\end{tabular} 

\noindent 
In order to express such summaries, Def. \ref{def:exec:sum} introduces the following concepts:
\begin{itemize}
\item
 ${\leadstoBoundedThreeStarExt {\Mtwo\cdot M} {\sigma\bd}  {\sigma}  {\sigma'}}$ \ \ \  execution from $\sigma$ to $\sigma'$ scoped by $\sigma\bd$, involving  external states only.
\item
${\WithPub {\Mtwo\cdot M}    {\sigma}  {\sigma'} {\sigma_1}}$ \  \ \  \ \ \ \ \ \ \ \  ${\sigma}$ is an external state  calling an internal public method, and \\
$\strut \hspace{3.25cm}$ $\sigma'$ is the state after return from the public method, and \\
$\strut \hspace{3.25cm}$  $\sigma_1$ is the first state upon entry to the public method.  
\end{itemize}
  
  \noindent
Continuing with our example, we have the following execution summaries:
\begin{enumerate}
\item
${\leadstoBoundedThreeStarExt {\Mtwo\cdot M} {\sigma_3}  {\sigma_3}  {\sigma_5}}$\ \ \ 
Purely external execution from $\sigma_3$ to $\sigma_5$, scoped by $\sigma_3$.
\item
${\WithPub {\Mtwo\cdot M}    {\sigma_5}  {\sigma_{20}} {\sigma_{6}}}$. \ \ \ 
Public method call from external state $\sigma_5$ into  nternal state $\sigma_6$ returning to $\sigma_{20}$. 
Note that this   summarises two  internal method executions ($\sigma_{6}-\sigma_{19}$, and $\sigma_8-\sigma_{14}$),
and two external method executions ($\sigma_{6}-\sigma_{19}$, and $\sigma_8-\sigma_{14}$).
\item
 ${\leadstoBoundedThreeStarExt {\Mtwo\cdot M} {\sigma_3}  {\sigma_{20}}  {\sigma_{21}}}$.
 \item
${\WithPub {\Mtwo\cdot M}    {\sigma_{21}}  {\sigma_{25}} {\sigma_{23}}}$. \ \ \ 
Public method call from  external state ${\sigma_{21}}$ into internal state $\sigma_{23}$, and returning to external state $\sigma_{25}$.
 \item
  ${\leadstoBoundedThreeStarExt {\Mtwo\cdot M} {\sigma_3}  {\sigma_{25}}  {\sigma_{28}}}$.
\ \ \ 
  Purely external execution from $\sigma_{25}$ to $\sigma_{28}$, scoped by ${\sigma_3}$.
\end{enumerate}

\begin{definition}
\label{def:exec:sum}
For any module $M$  where $M$ is the internal module, external modules $\Mtwo$, and states $\sigma\bd$,  $\sigma$,  $\sigma_1$, ... $\sigma_n$, and $\sigma'$, we define:

\begin{enumerate}

\item
 ${\leadstoBoundedThreeStarExt {\Mtwo\cdot M} {\sigma\bd}  {\sigma}  {\sigma'}}$ \ \ \ \ \   $\triangleq$ \ \ 
$
\begin{cases}
M, \sigma  \models  \extThis\  \wedge\  \\
[ \ \ \ 
\sigma=\sigma' \, \wedge\,  \EarlierS  {\sigma\bd}  {\sigma} \, \wedge\,  \EarlierS  {\sigma\bd}  {\sigma''}\ \ \ \ \ \vee\\
\ \ \ \exists \sigma''[\,  \leadstoBoundedThree {\Mtwo\cdot M} {\sigma}  {\sigma\bd}   {\sigma''} \  \wedge\  
{\leadstoBoundedThreeStarExt {\Mtwo\cdot M} {\sigma\bd}  {\sigma''}  {\sigma'}}\, ] \ \ \ ]
\end{cases}
$

\item
${\WithPub {\Mtwo\cdot M}    {\sigma}  {\sigma'} {\sigma_1}}$ \  \ \  \ \ \ \ \ \ \ \ $\triangleq$ \ \ 
$\begin{cases}
M, \sigma  \models \extThis \ \wedge \\
\exists   \sigma_1'\ [ \ \   \leadstoBoundedThree  {\Mtwo\cdot M} {\sigma} {\sigma}  {\sigma_{1}}\, \wedge\,  M, \sigma_1 \models \pubMeth \ \wedge \\ 
\strut \ \ \ \ \  \ \ \ \ \ \   \leadstoBoundedStarFin {\Mtwo\cdot M} {\sigma_1}  {\sigma_1'}  \ \wedge \   \leadstoBounded  {\Mtwo\cdot M} {\sigma_1'}      {\sigma'} \ \ ] 
\end{cases}
$

\item
$\WithExtPub {\Mtwo\cdot M} {\sigma\bd}  {\sigma}  {\sigma'} {\epsilon}$ \ \      \  $\triangleq$ \ \ 
$\leadstoBoundedThreeStarExt {\Mtwo\cdot M} {\sigma\bd}  {\sigma}  {\sigma''}$

\item
\label{four:defg23a}
$\WithExtPub {\Mtwo\cdot M} {\sigma\bd}  {\sigma}  {\sigma'} {\sigma_1}$  \ \ \  $\triangleq$ \ \ 
$\exists \sigma_1',\sigma_2'.  
\begin{cases}
 \ \   {\leadstoBoundedThreeStarExt {\Mtwo\cdot M} {\sigma\bd}  {\sigma}  {\sigma_1'}}\ \wedge\ 
{\WithPub {\Mtwo\cdot M}    {\sigma_1'}  {\sigma_2'} {\sigma_1}}  \ \ \wedge \\
 \ \  {\leadstoBoundedThreeStarExt {\Mtwo\cdot M} {\sigma\bd}  {\sigma_2'}  {\sigma'}}   \\
  \end{cases}$

\item
\label{four:defg23}
$\WithExtPub {\Mtwo\cdot M} {\sigma\bd}  {\sigma}  {\sigma'} {\sigma_1...\sigma_n}$   \ \  $\triangleq$ \ \ 
$\exists \sigma_1'.[ \  \
 \WithExtPub {\Mtwo\cdot M} {\sigma\bd}  {\sigma}  {\sigma_1'} {\sigma_1} 
  \ \wedge \ 
    {\WithExtPub {\Mtwo\cdot M} {\sigma\bd}  {\sigma_1'}  {\sigma'} {\sigma_2...\sigma_n} }   \  \ ]
$

\item
\label{six:g23}
$\leadstoBoundedExtPub {\Mtwo\cdot M}    {\sigma}  {\sigma'} $    \ \ \   \ \ \  \ \ \ \   \ \ \ \  $\triangleq$   \ \ 
 $ \exists n\!\in\! \mathbb{N}. \exists \sigma_1,...\sigma_n. \ \WithExtPub {\Mtwo\cdot M} {\sigma}  {\sigma}  {\sigma'} {\sigma_1...\sigma_n} 
$
\end{enumerate}
\end{definition}

\vspace{.1cm}

Note   that 
${\leadstoBoundedThreeStarExt {\Mtwo\cdot M} {\sigma\bd}  {\sigma}  {\sigma'}}$ implies that $\sigma$ is external, but does not
imply that $\sigma'$ is external.
${\leadstoBoundedThreeStarExt {\Mtwo\cdot M} {\sigma}  {\sigma}  {\sigma'}}$. 
On the other hand, $\WithExtPub {\Mtwo\cdot M} {\sigma\bd}  {\sigma}  {\sigma'} {\sigma_1...\sigma_n}$ implies that $\sigma$ and $\sigma'$ are external, and  $\sigma_1$, ... $\sigma_1$  are internal and public.
Finally, note that   in part (\ref{six:g23}) above it is possible that $n=0$, and so 
$\leadstoBoundedExtPub {\Mtwo\cdot M}    {\sigma}  {\sigma'} $  also holds when
Finally, note that the decomposition used in (\ref{four:defg23}) is not unique, but since we only care for the public states this is of no importance.

\vspace{.2cm}

Lemma \ref{lemma:external_breakdown:term} says that\\
\begin{enumerate}
\item
Any terminating execution which starts at an external state ($\sigma$) consists of a number of external states interleaved with another number of terminating calls to public methods.
\item
Any execution execution which starts at an external state ($\sigma$) and reaches another state ($\sigma'$) also consists of a number of external states interleaved with another number of terminating calls to public methods, which may be followed by a call to some public method (at $\sigma_2$), and from where another execution, scoped by $\sigma_2$  reaches $\sigma'$.
\end{enumerate}

 \begin{auxLemma}
\label{lemma:external_breakdown:term}[Summarised Executions]
For   module $M$, modules $\Mtwo$, and states $\sigma$, $\sigma'$:
\\
\\
If $M,\sigma \models \extThis$, then
\begin{enumerate}
\item
\label{lemma:external_breakdown:term:one}
$\leadstoBoundedStarFin {M\cdot \Mtwo}  {\sigma}  {\sigma'}  \ \ \  \ 
\Longrightarrow \ \ \  \ \leadstoBoundedExtPub {\Mtwo\cdot M}    {\sigma}  {\sigma'}$
\item
\label{lemma:external_breakdown:two}
$\leadstoBoundedStar  {M\cdot \Mtwo}  {\sigma}  {\sigma'}  \ \ \  \ \ \  
\Longrightarrow$ 

\begin{enumerate}
\item
$\strut \ \ \ \ \ \ \ \    \leadstoBoundedExtPub {\Mtwo\cdot M}    {\sigma}  {\sigma'}\ \ \ \  \vee$
\item
$\strut \ \ \ \ \ \ \ \    \exists \sigma_c,\sigma_d.[\ 
\leadstoBoundedExtPub {\Mtwo\cdot M}    {\sigma}  {\sigma_c} 
\wedge\ \leadstoBounded  {\Mtwo\cdot M}    {\sigma_c}  {\sigma_d} 
\wedge \ M, \sigma_c \models \pubMeth \wedge \leadstoBoundedStar  {\Mtwo\cdot M}    {\sigma_d}  {\sigma'} \ ]
$
\end{enumerate}
\end{enumerate}
\end{auxLemma}

\begin{auxLemma}
\label{lemma:external_exec_preserves_more}[Preservation of Encapsulated Assertions]
For any module $M$, modules $\Mtwo$,  assertion  $A$, and 
 states $\sigma\bd$, $\sigma$, $\sigma_1$ ... $\sigma_n$, $\sigma_a$, $\sigma_b$ and $\sigma'$:

\noindent
If

\noindent
 $\strut \hspace{.5cm} M \vdash \encaps A \   \wedge   \ fv(A)=\emptyset \  \wedge \ 
\satDAssertFrom M {\sigma} k A \ \wedge \ k \leq \DepthSt {\sigma\bd}$. 

\noindent
Then

\begin{enumerate}
\item
\label{lemma:external_exec_preserves_more:one}
$  M, \sigma  \models \extThis \ \wedge \  \leadstoBoundedThree  {\Mtwo\cdot M} {\sigma} {\sigma\bd}  {\sigma'} 
\ \ \Longrightarrow \ \ \ \satDAssertFrom M {\sigma'} k A$

\item
$   \leadstoBoundedThreeStarExt {\Mtwo\cdot M} {\sigma\bd}  {\sigma}  {\sigma'} 
\ \ \Longrightarrow \ \ \ \satDAssertFrom M {\sigma'} k A$

\item
\label{lemma:external_exec_preserves_more:three}
$ \WithExtPub {\Mtwo\cdot M} {\sigma\bd}  {\sigma}  {\sigma'} {\sigma_1...\sigma_n}\ \ \wedge $\\
 $\strut \ \ \ \  \  \forall i\in [1..n]. \forall \sigma_{f}.[ \ \  \satDAssertFrom M {\sigma_i} k A  \ \wedge \  \leadstoBoundedStarFin {M\cdot \Mtwo}  {\sigma_i}  {\sigma_{f}} \ 
\ \ \Longrightarrow \ \  \satDAssertFrom M {\sigma_f} k A \ ]$\\
$\Longrightarrow $
\\
 $\strut \ \ \ \  \ \satDAssertFrom M {\sigma'} k A $ 
 \\
  $\strut \ \ \ \  \  \wedge $
  \\
 $\strut \ \ \ \  \  \forall i\in [1..n].   \satDAssertFrom M {\sigma_i} k A $
 \\
 $\strut \ \ \ \  \  \wedge $
  \\
 $\strut \ \ \ \  \  \forall i\in [1..n]. \forall \sigma_{f}.[ \ \    \leadstoBoundedStarFin {M\cdot \Mtwo}  {\sigma_i}  {\sigma_{f}} \ 
\ \ \Longrightarrow \ \  \satDAssertFrom M {\sigma_f} k A \ ]$

\end{enumerate}

\end{auxLemma}

\noindent
\textbf{Proof Sketch}

\begin{enumerate}
\item
 is proven by Def. of $\encaps{\_}$ and the fact $\DepthSt {\sigma'} \geq \DepthSt {\sigma\bd}$ and therefore $k\leq  \DepthSt {\sigma'}$.
In particular, the step $\leadstoBoundedThree  {\Mtwo\cdot M} {\sigma} {\sigma\bd}  {\sigma'}$ may push or pop a frame onto $\sigma$.
If it pops a frame, then $\satDAssertFrom M {\sigma'} k A $ holds by definition.
If is pushes a frame, then $M, \sigma' \models A$, by lemma \ref{lem:encap-soundness}; this gives that $\satDAssertFrom M {\sigma'} k A $.

\item   by induction on the number of steps in $ \leadstoBoundedThreeStarExt {\Mtwo\cdot M} {\sigma\bd}  {\sigma}  {\sigma'} $, and using (1).

\item
 by induction on the number of states   appearing in ${\sigma_1...\sigma_n}$, and using (2).
\end{enumerate}

\textbf{End Proof Sketch}

\subsection{Sequences, Sets, Substitutions and Free Variables}

Our system makes heavy use of textual substitution,   textual inequality, and the concept of free variables in assertions. 
 
In this subsection we introduce some notation and some lemmas to deal with these concepts.
These concepts and lemmas are by no means novel; we list them here so as to use them more easily in the subsequent proofs.

\begin{definition}[Sequences,   Disjointness, and Disjoint Concatenation]
For any variables $v$, $w$, and sequences of variables $\overline v$, $\overline w$ we define:
\begin{itemize}
\item
 $v \in \overline w \ \ \triangleq \ \  \exists  \overline {w_1},  \overline {w_1}[\  {\overline w} = \overline {w_1}, v, \overline {w_2} \ ]$
\item
$v \# w \ \ \triangleq \ \ \neg(v \txteq w)$.
\item
$\overline v \subseteq \overline w \ \ \triangleq \ \ \forall v.[\ v \in  \overline v\ \Rightarrow\ v \in  \overline w\ ]$
\item
$\overline v \#  \overline w \ \ \triangleq \ \ \forall v \in  \overline v. \forall w \in  \overline w.  [ \ v \# w\  ]$
\item
$ \overline v \cap \overline w \ \ \triangleq \ \  \overline u, \ \ \ \ \mbox{such that}\   \forall u.[ \ \ u  \in   \overline v \cap \overline w \ \ \Leftrightarrow\ \  [ \ u\in \overline v\ \wedge\ u\in \overline w\  ]$
\item
$ \overline v \setminus \overline w \ \ \triangleq \ \  \overline u, \ \ \ \ \mbox{such that}\   \forall u.[ \ \ u  \in  \overline v \setminus \overline w \ \ \Leftrightarrow\ \  [ \ u\in \overline v\ \wedge\ u\notin \overline w\  ]$
\item
$\overline v; \overline w \ \ \triangleq \ \ \overline v$, $\overline w$ \ \ if $\overline v \#  \overline w $ \ \ \ and  undefined otherwise.
\end{itemize}
\end{definition}

\begin{lemma}[Substitutions and Free Variables]
\label{l:sfs}
For any sequences of variables $\overline x$, $\overline y$, $\overline z$, $\overline v$, $\overline w$, a variable $w$, any assertion $A$, we have
\begin{enumerate}
\item
\label{l:sfs:zero}
$ \overline x[\overline{y/x} ] = \overline y $
\item
\label{l:sfs:zero:one}
$ \overline {x} \# \overline y \ \   \Rightarrow \ \  \overline y[\overline{z / x} ] = \overline y $
\item
\label{l:sfs:one}
$\overline {z} \subseteq \overline y \ \   \Rightarrow \ \  \overline y[\overline{z / x} ] \subseteq \overline y $
\item
\label{l:sfs:two}
$\overline {y} \subseteq \overline z \ \   \Rightarrow \ \  \overline y[\overline{z / x} ] \subseteq \overline z $
\item
\label{l:sfs:three}
 $\overline x \# \overline y \ \ \Rightarrow \ \ {\overline z}[\overline{y / x}]  \# \overline x $ 
 \item
 \label{l:sfs:four}
 $\fv(A[\overline{y / x}] )\, =\, \fv(A)[\overline{y / x}] $
 \item
 \label{l:sfs:five}
 $\fv(A)\, =\, \overline x; \overline v, \ \ \   \fv(A[\overline{y / x}] )\, = \,   \overline y; \overline w 
 \ \ \ \Longrightarrow\ \ \ 
 \overline v \, = \, (\overline {y}\cap\overline{v}); \overline w $
  \item
  \label{l:sfs:sixa}
 $ \overline v \# \overline x   \# \overline y   \# \overline u  \ \ \    \ \ \ \Longrightarrow\ \ \ 
w[ \overline {u/x} ][ \overline {v/y} ]  \txteq w[ \overline {v/y} ][ \overline {u/x} ]  $

 \item
  \label{l:sfs:six}
 $ \overline v \# \overline x   \# \overline y   \# \overline u  \ \ \    \ \ \ \Longrightarrow\ \ \ 
A[ \overline {u/x} ][ \overline {v/y} ]  \txteq A[ \overline {v/y} ][ \overline {u/x} ]  $
\item
  \label{l:sfs:seven}
 $( fv (A[ \overline {y/x} ])\setminus \overline y)\, \# \, \overline x$
  \item
  \end{enumerate}

\end{lemma}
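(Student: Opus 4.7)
}
The plan is to dispatch the eleven parts in order, since later parts build on earlier ones, and to use two uniform inductive schemes: induction on the length of a variable sequence (for parts 1--5 and part 7), and structural induction on assertions $A$ (for parts 6, 9, and 10). Parts 1 and 2 are the base cases and follow by a direct unfolding of sequence substitution: in (1) the $i$-th entry $x_i$ is mapped to $y_i$ componentwise; in (2) no $y_j$ lies in $\overline x$, so every $y_j[\overline{z/x}] \txteq y_j$. Parts 3 and 4 then follow by straightforward induction on $|\overline y|$: each entry $y_j$ is either unchanged or replaced by some $z_k \in \overline z$, and in both cases the result stays inside the respective superset. Part 5 follows similarly: after substitution no entry of $\overline z[\overline{y/x}]$ can be some $x_i$, because either the entry was $x_i$ and got replaced by $y_i \notin \overline x$, or it was already different from every $x_i$.

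Part 6 is the key compositionality property and will be proved by structural induction on $A$ using the grammar of Def.~\ref{def:assert:syntax}. For the base cases $\re$, $\re:C$, $\protectedFrom{\re}{\re'}$, $\inside{\re}$, $\external{\re}$ the free variables are just the variables occurring syntactically in $\re$ (or its subexpressions), and substitution is applied pointwise, so $\fv$ and $[\overline{y/x}]$ commute. Conjunction is by union and distributes over substitution; negation is transparent. The only genuinely interesting case is the universal $\forall w : C.\,A'$: by Barendregt/$\alpha$-renaming we may assume the bound $w$ lies outside $\overline x \cup \overline y$, and then $\fv(\forall w:C.\,A') = \fv(A') \setminus \{w\}$ commutes with the capture-avoiding substitution. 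Part 7 is a direct corollary of (6) combined with (1) and (2), organised by splitting $\fv(A)$ into the part inside $\overline x$ (which gets mapped onto a sub-sequence of $\overline y$) and the part outside (which is fixed).

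Parts 8 and 9 are the commutation-of-substitutions facts. Part 8 (on a single variable $w$) is immediate by case analysis: either $w \in \overline x$, or $w \in \overline y$, or $w$ is in neither; the assumed pairwise disjointness $\overline v \#\overline x \#\overline y \#\overline u$ guarantees that the two orders produce the same result and, crucially, that no ``interference'' arises (e.g.\ a $u_i$ introduced by the first substitution is not subsequently rewritten by the second). Part 9 lifts this from variables to assertions by structural induction on $A$: the base cases reduce to (8) applied to the expression substitutions (and to class tags, which are inert), and the inductive cases distribute trivially, with $\forall$ again handled by $\alpha$-renaming so that the bound variable is fresh for all four sequences. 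Finally, part 10 follows by combining (6) with (5): $\fv(A[\overline{y/x}]) = \fv(A)[\overline{y/x}]$ by (6), and removing $\overline y$ from this sequence leaves entries that were already outside $\overline x$ or that were $x_i$'s replaced by some $y_j$ but then pruned, so by (5) the remainder is disjoint from $\overline x$.

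The main obstacle will be the quantifier case in parts 6 and 9, because our assertion syntax admits nested $\forall$ and capture must be avoided. I will rely on an implicit Barendregt convention so that bound variables are always chosen fresh with respect to the substitutions in play; this is standard, but it does mean the proof is only formally rigorous once we fix a concrete representation of binders (e.g.\ de Bruijn or nominal). Beyond that, everything reduces to bookkeeping on sequences of variables.
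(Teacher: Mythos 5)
Your proposal is correct and follows essentially the same route as the paper's own proof: length induction on the variable sequences for parts (1)--(5), structural induction on $A$ for parts (6) and (9), case analysis on the single variable for part (8), and deriving (7) and (10) by combining (6) with the sequence facts. The only difference is that you explicitly flag the capture-avoidance issue in the quantifier case, which the paper (working with plain textual substitution plus an implicit Barendregt convention) glosses over; this is a reasonable extra precaution rather than a divergence.
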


\noindent 
\begin{proof} 

\begin{enumerate}
\item
 by induction on  the number of elements in $\overline x$ 
\item
 by induction on  the number of elements in $\overline y$ 

\item
 by induction on  the number of elements in $\overline y$ 
\item
 by induction on  the number of elements in $\overline y$ 
\item
 by induction on the structure of $A$ 
 \item
 by induction on the structure of $A$ 
\item
Assume that\\
$(ass1)\ \ \  \fv(A)\, =\, \overline x; \overline v,$
\\
$(ass2)\ \ \  \fv(A[\overline{y / x}] )\, = \,   \overline y; \overline w$\\
We define:
\\
$(a) \ \ \  \overline {y_0} \triangleq \overline v \cap \overline y, \ \ \  \overline {v_2} \triangleq \overline v \setminus \overline y, \ \ \ \overline {y_1} = \overline {y_0}[\overline {x/y}]$
\\
This gives:\\
$(b) \ \ \ \overline {y_0} \#   \overline {v_2}$
\\
$(c)\ \ \ \overline v =  \overline {y_0}; \overline {v_2}$\
\\
$(d) \ \ \  \overline {y_1}  \subseteq \overline y$
\\
$(e) \ \ \ \overline {v_2}[\overline{y / x}] = \overline {v_2}$, \ \ \ from assumption and (a) we have $\overline x \# \overline v_2$ and by Lemma \ref{l:sfs}) part (\ref{l:sfs:zero:one})
\\
We now calculate \\
$\begin{array}{lcll}
\ \ \  \fv(A[\overline{y / x}] )   & = &  (\overline x; \overline v)[\overline{y / x}] & \mbox{ by (ass1), and Lemma \ref{l:sfs} part (\ref{l:sfs:three}).}
\\ 
& = &  (\overline x; \overline {y_0}; \overline {v_2})[\overline{y / x}] & \mbox{ by (c) above }
\\
& = &   \overline x[\overline{y / x}], \, \overline {y_0}[\overline{y / x}], \overline {v_2}[\overline{y / x}] & \mbox{ by distributivity of $[../..]$ }
\\
& = &   \overline y, \, \overline {y_1}, \overline {v_2}  & \mbox{ by Lemma \ref{l:sfs} part (\ref{l:sfs:zero}), (a), and (e). }
\\
& = &   \overline y; \overline {v_2}  & \mbox{ because (d), and  $ \overline y \# \overline {v_2}$ }
\\
\ \ \  \fv(A[\overline{y / x}] )   & = &  \overline y; \overline w  & \mbox{ by (ass2)}
\end{array}
$
\\
The above gives that $\overline {v_2} = \overline {w}$. This, together with (a) and (c) give that $\overline {v} = (\overline {y}\cap\overline{v});\overline{w}$ 
\item
By case analysis on whether $w \in \overline x$ ... etc
\item
By induction on the structure of $A$, and the guarantee from (\ref{l:sfs:sixa}).
\item
We take a variable sequence $\overline z$ such that \\
$(a) \  \  \fv(A  ) \subseteq \overline{x}; \overline z$
\\
This gives that\\
$(b) \  \   \overline{x} \# \overline z$
\\
Part (\ref{l:sfs:four}) of our lemma and (a) give\\
$(c) \  \  \fv(A[\overline{y / x}] ) \subseteq \overline{y}, \overline z$
\\
Therefore
\\
$(d) \  \  \fv(A[\overline{y / x}] ) \setminus {\overline y } \subseteq  \overline z$
\\
The above, together with (b) conclude the proof
 \end{enumerate}
\end{proof} 

\begin{lemma}[Substitutions and Adaptations]
\label{l:sybbs:adapt}
For any sequences of variables $\overline x$, $\overline y$, sequences of expressions $\overline e$, and   any assertion $A$, we have
\begin{itemize}
\item
$ \overline x \# \overline y \ \ \ \Longrightarrow \ \ \  \PushAS {y} {(A[\overline {e/x}])}  \txteq   (\PushAS {y} {A})[\overline{ e/x}] $
\end{itemize}

\end{lemma}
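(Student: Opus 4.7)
The plan is to proceed by structural induction on the assertion $A$, following the cases in Definition \ref{def:push} that defines $\pushSymbolInText$. Most cases are immediate: for the base cases $\re$, $\re:C$, $\external{\re}$, and $\protectedFrom{\re}{\overline u}$, the adaptation $\PushAS y {\cdot}$ is the identity, so both sides simply reduce to the underlying expression (or conjunction of expressions) with $\overline{e/x}$ applied; for the congruence cases $A_1 \wedge A_2$, $\neg A$, and $\forall x':C.A$, both operators distribute over the connective, so the induction hypothesis closes each case directly (for the quantifier I will assume, as usual, that the bound $x'$ is chosen fresh with respect to $\overline x$, $\overline y$, and $\overline e$ in the spirit of Barendregt).

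The only interesting case is $A \txteq \inside{\re}$. Here I will compute the two sides explicitly. On the left, using the substitution clause for $\inside{\cdot}$ and then Def. \ref{def:push}, we obtain
\[
\PushAS y {(\inside{\re}[\overline{e/x}])} \ \txteq \ \PushAS y {\inside{\re[\overline{e/x}]}} \ \txteq\ \protectedFrom{\re[\overline{e/x}]}{\overline y}.
\]
On the right, applying Def. \ref{def:push} first and then the substitution clause for $\protectedFrom{\_}{\_}$,
\[
(\PushAS y {\inside{\re}})[\overline{e/x}] \ \txteq\ (\protectedFrom{\re}{\overline y})[\overline{e/x}] \ \txteq\ \protectedFrom{\re[\overline{e/x}]}{\overline y[\overline{e/x}]}.
\]
Equality of the two sides therefore reduces to showing $\overline y[\overline{e/x}] \txteq \overline y$, which is exactly Lemma \ref{l:sfs}(\ref{l:sfs:zero:one}) applied under the hypothesis $\overline x \# \overline y$.

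I expect no real obstacle: the disjointness precondition $\overline x \# \overline y$ is tailored precisely to kill the single substitution of $\overline y$ that appears when $\pushSymbolInText$ injects the adaptation arguments into a $\protectedFrom{\_}{\_}$ position. The mild subtlety, and the only place where one must be careful, is the quantifier case, where strictly speaking one should first $\alpha$-rename the bound variable so that it neither clashes with $\overline x$, $\overline y$, nor appears free in $\overline e$, but this is standard and does not interact with the adaptation operator since $\pushSymbolInText$ and substitution both descend into the quantifier body unchanged.
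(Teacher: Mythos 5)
Your proposal is correct and matches the paper's own argument: both isolate $\inside{\re}$ as the only non-trivial case, compute both sides to $\protectedFrom{\re[\overline{e/x}]}{\overline y}$ versus $\protectedFrom{\re[\overline{e/x}]}{\overline y[\overline{e/x}]}$, and discharge the difference using $\overline x \# \overline y$ (the paper states this directly where you invoke Lemma \ref{l:sfs}(\ref{l:sfs:zero:one})), with the rest following by induction on $A$ and the length of $\overline x$. Your extra care about $\alpha$-renaming in the quantifier case is a reasonable refinement the paper leaves implicit.
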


\begin{proof}

\noindent 
We first consider $A$ to be $\inside e_0$, and just take one variable. Then, \\
$\strut \ \ \ \ { \PushAS  {y} { (\inside {e_0}[e/x] ) } }
 \ \txteq\  {\protectedFrom {e_0[ {e/x}]} {y}}$, \\
and\\
$\strut \ \ \ \ (\PushAS {y} {\inside {e_0}})[e/x] \  \txteq\  \protectedFrom {e_0[{e/x}]} {y[ {e/x}]}$. \\
When $x \# y$  then the two assertions from above  are textually equal.
The rest follows by induction on the length of $\overline x$ and the structure of $A$.
\end{proof}

\begin{lemma}
 \label{l:sfs:eight}
For   assertion  $A$, variables $\overline {x}$, $\overline v$, $\overline y$, $\overline w$, $\overline {v_1}$,
addresses  $\overline {\alpha_x}$, $\overline {\alpha_y}$, $\overline {\alpha_v}$, and $\overline {\alpha_{v_1}}$

\noindent
If
\begin{enumerate}[a.]
\item
\label{l:sfs:eight:one}
 $\fv(A)\, \txteq \, \overline {x}; \overline v$, \ \ \ $\fv(A[\overline{y / x}] )\, \txteq  \,   \overline y; \overline w$, 
\item
\label{l:sfs:eight:two}
$ \forall x\in {\overline x}.[\ x[\overline{y/x}]  [\overline{\alpha_y/y}]\ =\  x[\overline{\alpha_x/x} ]  \ ]$
\item
\label{l:sfs:eight:three}
$\overline v \txteq  \overline {v_1}; \overline w$, \ \ \ $\overline {v_1}\txteq  \overline y\cap \overline v$, \ \ \ $ \overline {\alpha_{v,1} } = \overline {v_1}[\overline{\alpha_v/v} ]$
\end{enumerate}

\noindent
then

\begin{itemize}
\item
  $ A [\overline{y/x}] [\overline{\alpha_y/y}]
  \txteq A [\overline{\alpha_x/x}] [\overline{\alpha_{v,1}/v_1}] $
\end{itemize}

\end{lemma}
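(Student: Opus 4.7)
The plan is to proceed by structural induction on $A$, since every substitution distributes over the syntactic constructors of the assertion language (Def.~\ref{def:assert:syntax}). The inductive step for $\neg A_0$, $A_1 \wedge A_2$, and $\forall x{:}C.A_0$ reduces immediately to the IH once we push the substitutions inside; for the quantifier case we rename the bound variable to be fresh for all of $\overline x, \overline y, \overline v, \overline w$ using Lemma~\ref{l:sfs}(\ref{l:sfs:seven}). The base cases (values, $x$, $\re.f$, $\re.gf(\overline\re)$, $\re{:}C$, $\external{\re}$, and the two protection forms) all reduce to checking the equation on a single variable occurrence, so the entire argument hinges on the following claim: for every $z \in \fv(A)$, $z[\overline{y/x}][\overline{\alpha_y/y}] \txteq z[\overline{\alpha_x/x}][\overline{\alpha_{v,1}/v_1}]$.

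To prove this claim I would case-split on where $z$ lives in the disjoint decomposition $\fv(A) \txteq \overline x;\overline v$ given by hypothesis (\ref{l:sfs:eight:one}), further decomposing $\overline v \txteq \overline{v_1};\overline w$ from hypothesis (\ref{l:sfs:eight:three}). First, if $z \in \overline x$ then the LHS simplifies by hypothesis (\ref{l:sfs:eight:two}) to $z[\overline{\alpha_x/x}]$; on the RHS the same initial substitution gives $z[\overline{\alpha_x/x}]$, and the trailing $[\overline{\alpha_{v,1}/v_1}]$ is a no-op because $\overline x \# \overline v \supseteq \overline{v_1}$ and the resulting address does not contain any $v_1$. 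Second, if $z \in \overline w$ then $z \notin \overline x \cup \overline y \cup \overline{v_1}$, so both composite substitutions leave $z$ fixed. Third, and most delicately, if $z \in \overline{v_1} = \overline y \cap \overline v$ then on the LHS $z[\overline{y/x}] \txteq z$ (since $z \notin \overline x$) and then $z[\overline{\alpha_y/y}]$ selects the entry of $\overline{\alpha_y}$ at the $\overline y$-index of $z$; on the RHS $z[\overline{\alpha_x/x}] \txteq z$ and then $z[\overline{\alpha_{v,1}/v_1}]$ selects the entry of $\overline{\alpha_{v,1}}$ at the $\overline{v_1}$-index of $z$, which by hypothesis (\ref{l:sfs:eight:three}) is the entry of $\overline{\alpha_v}$ at the $\overline v$-index of $z$.

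The main obstacle is therefore this last case: identifying the address produced by indexing through $\overline{\alpha_y}$ with the one produced by indexing through $\overline{\alpha_v}$ at a variable that sits in both $\overline y$ and $\overline v$. Hypothesis (\ref{l:sfs:eight:two}) only constrains the $\alpha$'s along the $\overline x \to \overline y \to \overline{\alpha_y}$ route; to close the $\overline y \cap \overline v$ overlap we will need the implicit compatibility that $\overline{\alpha_y}$ and $\overline{\alpha_v}$ agree on this overlap. I expect this compatibility to come for free at every call site of the lemma, because in context (see Def.~\ref{def:as} and the uses around Lemma~\ref{lemma:push:ass:state}) all four address sequences arise as $\as \sigma \_$ for a common state $\sigma$, so coincident variables necessarily evaluate to the same address. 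Once this compatibility is in hand, the third sub-case closes and, combined with Lemma~\ref{l:sfs}(\ref{l:sfs:six}) to justify commuting of the two syntactic substitutions in the inductive step, the induction goes through mechanically.
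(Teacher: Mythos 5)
Your proposal is correct and follows essentially the same route as the paper's own proof: reduce the statement to the per-variable claim $z[\overline{y/x}][\overline{\alpha_y/y}] \txteq z[\overline{\alpha_x/x}][\overline{\alpha_{v,1}/v_1}]$ for each $z\in\fv(A)$, split on $z\in\overline x$, $z\in\overline{v_1}$, $z\in\overline w$ using the decomposition $\overline v = (\overline y\cap\overline v);\overline w$ from Lemma~\ref{l:sfs}(\ref{l:sfs:five}), and then close by structural induction on $A$. You are in fact slightly more careful than the paper on the delicate $z\in\overline{v_1}$ sub-case: the paper simply asserts $z[\overline{\alpha_{v,1}/v_1}] = z[\overline{\alpha_y/y}]$ ``because $\overline{v_1}\subseteq\overline y$'', whereas you correctly observe that this needs the address sequences $\overline{\alpha_y}$ and $\overline{\alpha_v}$ to agree on the overlap, a compatibility that holds at every call site because all the addresses are interpretations of the variables in a common state.
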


\begin{proof}
$ ~ $

From Lemma \ref{l:sfs}, part \ref{l:sfs:five}, we obtain\  \ \ $(*)\ \  \overline v \, = \, (\overline {y}\cap\overline{v}); \overline w $

We first prove that \\
$\strut \hspace{3cm} \ \  (**) \ \  \forall z\in \fv(A)[\ z [\overline{y/x}] [\overline{\alpha_y/y}]   \txteq z [\overline{\alpha_x/x}] [\overline{\alpha_{v,1}/v_1}] $.
 
Take any arbitrary $z\in \fv(A)$. \\
Then, by assumptions \ref{l:sfs:eight:one} and \ref{l:sfs:eight:three}, and (*) we have that either $z\in \overline x$, or $z\in \overline {v_1}$, or 
$z\in \overline {w}$.
 
  \begin{description}
 \item
 [  {1st Case}]  
 $z\in \overline x$. 
 Then, there exists some $y_z\in \overline y$, and some $\alpha_z \in \overline {\alpha_y}$, such that    $z[\overline{y/x}]=y_z$  and
  $y_z [\overline{\alpha_y/y}] = \alpha_z$. 
  On the other hand,   by part \ref{l:sfs:eight:two} we obtain, that  $z[\overline{\alpha_x/x}] =\alpha_z$.
  And because $\overline {v_1}\# \overline {\alpha x}$ we also have that $\alpha_z[\overline{\alpha_{v,1}/v_1}] $=$\alpha_z$. 
  This concludes the case.
  \item
   [  {2nd Case}]  
$ z\in \overline {v_1}$, which means that $ z\in \overline {y}\cap\overline{v}$. Then, because $\overline x \# \overline  v$, we have that $z[\overline{y/x}]=z$. 
And because $z\in \overline y$, we obtain that there exists a $\alpha_z$, so that  $z [\overline{\alpha_y/y}] = \alpha_z$.
Similarly, because  $\overline x \# \overline  v$, we also obtain that $z[\overline{\alpha_x/x}]=z$.
And because $\overline {v_1} \subseteq \overline y$, we also obtain that $z[\overline {\alpha_{v,1}/v_1}]$=$z[\overline {\alpha_{y}/y}]$.
This concludes the case.
  \item
     [   {3rd Case}]  
  $z\in \overline {w}$. From part  \ref{l:sfs:eight:one} of the assumptions and from (*) we obtain  $\overline {w} \#  \overline {y}\# \overline {x}$, which implies  that $z [\overline{y/x}] [\overline{\alpha_y/y}]$=$z$.
  Moreover, (*) also gives that $\overline {w} \#  \overline {v_1}$, and this gives that
  $z[\overline{\alpha_x/x}] [\overline{\alpha_{v,1}/v_1}] $=$z$.
  This concludes the case
\end{description}

The lemma follows from (*) and structural induction on $A$.
\end{proof}

\subsection{Reachability, Heap Identity, and their properties}
We consider states with the same heaps ($\sigma \sim \sigma'$) and properties about  reachability of an address from another address ($\Reach {\alpha} {\alpha'} {\sigma}$). 

\begin{definition}
For any state  $\sigma$,  addresses $\alpha$, $\alpha'$, we define

\begin{itemize}
\item
$\Reach {\alpha} {\alpha'} {\sigma} \ \ \triangleq\ \ \exists \overline f. [\ \interpret \sigma {\alpha.\overline f} = \alpha' \ ]$
\item
$\sigma \sim \sigma' \ \ \triangleq\ \ \exists \chi, \overline {\phi_1}, \overline {\phi_2}.[\ \sigma=( \overline {\phi_1}, \chi) \ \wedge \ \sigma'=( \overline {\phi_1}, \chi) \ ]$
\end{itemize}
\end{definition}

\begin{lemma}
\label{l:sim}
For any module $M$, state  $\sigma$,  addresses $\alpha$, $\alpha'$, $\alpha''$

\begin{enumerate}
\item
\label{l:reaches:protected}
$M, \sigma \models {\protectedFrom {\alpha} {\alpha'} } \ \wedge\   \Reach {\alpha'} {\alpha''} {\sigma}\ \   \ \Longrightarrow\ \ \ M, \sigma \models {\protectedFrom {\alpha} {\alpha''}  }$ 

\item
\label{l:sim:reaches}
$\sigma \sim \sigma'  \ \Longrightarrow \ \ [\ \Reach {\alpha} {\alpha'} {\sigma}  \ \Longleftrightarrow \ \Reach {\alpha} {\alpha'} {\sigma'}\ ] $

\item
\label{l:sim:protectedFrom}
$\sigma \sim \sigma'   \ \  \Longrightarrow\ \  [\  M, \sigma \models {\protectedFrom {\alpha} {\alpha''}}  \ \Longleftrightarrow \  M, \sigma' \models {\protectedFrom {\alpha} {\alpha''}} \  ]   $ 
 \item
 \label{l:sim:valid}
$\sigma \sim \sigma'  \   \wedge \ \fv(A)=\emptyset \wedge \ \Stable A \   \  \Longrightarrow\ \  [\  M, \sigma \models A  \ \Longleftrightarrow \  M, \sigma' \models A \  ]   $ 

 \end{enumerate}
\end{lemma}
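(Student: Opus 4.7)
The plan is to prove the four parts in the order (2), (3), (1), (4), since each later part leans on the earlier structural observations. The overall strategy is unfolding/folding of the definitions of $\Reach$, $\Relevant{\_}{\_}$, and $\protectedFrom{\_}{\_}$, together with a structural induction for the last part. None of the parts should require any reasoning about execution, only bookkeeping about how satisfaction decomposes.

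For part (\ref{l:sim:reaches}), I would unfold $\Reach{\alpha}{\alpha'}{\sigma}$ to $\exists \overline f.\ \interpret{\sigma}{\alpha.\overline f} = \alpha'$, and observe that $\interpret{\sigma}{\alpha.\overline f}$ is a pure function of the heap of $\sigma$ (it is an iterated lookup of field values in $\chi$). Since $\sigma \sim \sigma'$ means they share a heap, the same $\overline f$ witnesses reachability in either state. For part (\ref{l:sim:protectedFrom}), I unfold Def.~\ref{def:chainmail-protection-from}(1): the conditions ``$\alpha \neq \alpha''$'', ``$\alpha' \in \Relevant{\alpha''}{\sigma}$'', ``$\external{\alpha'}$'', and ``$\interpret{\sigma}{\alpha'.f} \neq \alpha$'' each depend only on the heap (using part (\ref{l:sim:reaches}) and the fact that $\external{\cdot}$ is determined by an object's class, which lives in the heap), so heap-identity makes both sides of the biconditional unfold identically.

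For part (\ref{l:reaches:protected}), the key observation is transitivity of reachability: if $\Reach{\alpha'}{\alpha''}{\sigma}$ via $\overline f$, then any $\overline g$ reaching some $\beta$ from $\alpha''$ extends to $\overline{f}\,\overline{g}$ from $\alpha'$, giving $\Relevant{\alpha''}{\sigma} \subseteq \Relevant{\alpha'}{\sigma}$. The universally quantified clause of $\protectedFrom{\alpha}{\alpha''}$ therefore follows immediately from the corresponding clause of $\protectedFrom{\alpha}{\alpha'}$. The one subtle step is the disequality clause $\alpha \neq \alpha''$: were $\alpha = \alpha''$, reachability from $\alpha'$ combined with $\alpha \neq \alpha'$ (from the hypothesis) forces a nontrivial path $\alpha'.\overline f = \alpha$ whose penultimate object $\beta = \interpret{\sigma}{\alpha'.\overline{f'}}$ satisfies $\interpret{\sigma}{\beta.f_{last}} = \alpha$; if $\beta$ is external, this directly contradicts the protection clause of $\protectedFrom{\alpha}{\alpha'}$. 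I expect the internal-$\beta$ sub-case to be the main technical obstacle, and suspect it needs either a side condition (e.g.\ $\alpha \neq \alpha''$ as an implicit hypothesis consistent with how the lemma is invoked) or an auxiliary argument using the typing/encapsulation of the module $M$; I would check the usage sites to determine which formulation to adopt.

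Finally, for part (\ref{l:sim:valid}) I would proceed by structural induction on $A$, guided by $\Stable{A}$. Because $\Stable{A}$ rules out $\inside{\_}$, the only ``stack-sensitive'' constructor is excluded, so every remaining case reduces to properties of the heap. For the expression cases $\re$ and $\re:C$, since $\fv(A) = \emptyset$ the expression is closed and its evaluation (including recursive ghost-field unfolding per Fig.~\ref{f:evaluation}) is determined by the heap alone. Negation, conjunction, and $\all{x:C}{A}$ reduce to the inductive hypothesis after substituting addresses for the bound variable (which preserves freshness and statelessness). The $\external{\re}$ case follows because class membership is a heap fact. The $\protectedFrom{\re}{\re'}$ case follows by evaluating the expressions (both heap-determined) and then applying part (\ref{l:sim:protectedFrom}). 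The only care needed is propagating the $\fv(\_) = \emptyset$ condition through the universal quantifier, which is standard since quantification binds at addresses.
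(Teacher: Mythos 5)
Your proposal follows essentially the same route as the paper's proof, which for parts (1)--(3) consists only of the words ``by unfolding and folding the definitions'' and for part (4) of ``structural induction on $A$, using part (3)''. Your treatment of parts (2), (3) and (4) is correct and, if anything, more explicit than the paper's: heap-identity determines iterated field lookup, hence $\Reach{\_}{\_}{\_}$, hence every clause of Def.~\ref{def:chainmail-protection-from}(1), and the $\Stable{\_}$ hypothesis excludes the only stack-sensitive constructor so the induction for (4) goes through with part (3) discharging the $\protectedFrom{\_}{\_}$ case.

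Your worry about part (1) is well-founded, and you should not dismiss it: the sub-case you could not close is a genuine failure of the statement as literally written. Take $\alpha'' = \alpha$ with $\alpha$ reachable from $\alpha'$ only through paths whose last edge leaves an internal object. Then $\protectedFrom{\alpha}{\alpha'}$ holds (this is exactly the configuration that ``protected from'' is designed to permit --- the paper itself says the definition ``ensures that the last element on any path leading from $\alpha_o$ to $\alpha$ [is] an internal object''), and $\Reach{\alpha'}{\alpha}{\sigma}$ holds, yet the conclusion $\protectedFrom{\alpha}{\alpha}$ is false because clause (a) of the definition demands the two addresses differ. So the lemma needs the side condition $\alpha \neq \alpha''$ (or, equivalently, the conclusion must be weakened to the universally quantified clause (b) alone), and the paper's one-line proof silently skips this. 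Your instinct to check the usage sites is the right one; the places the paper invokes part (1) do supply the disequality, but the statement as given does not.
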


\begin{proof} $ ~ $

\begin{enumerate}
\item
By  unfolding/folding the definitions
\item
By  unfolding/folding the definitions
\item
By unfolding/folding definitions.
\item
By structural induction on $A$, and Lemma \ref{l:sim} part \ref{l:sim:protectedFrom}.
\end{enumerate}

\end{proof}

\subsection{Preservation of assertions when pushing or popping frames}

In this section we  discuss the preservation of satisfaction of assertions when calling methods or returning from methods -- \ie when pushing or popping  frames. 
Namely, since  pushing/popping frames  does not modify the heap, these operations should preserve satisfaction of some assertion $A$, up to the fact that a) passing an object as a parameter of a a result might break its protection, and 
b) the bindings of variables change  with pushing/popping frames.
To deal with a)  upon method call, 
we   require that the fame being pushed or the frame to which we return is internal ($M, \sigma' \models \intThis$), or require the adapted version of an assertion (\ie  ${\PushAS  {v} { A}}$ rather than $A$).
To deal with b) we either require that there are no free variables in $A$, or we break the free variables of $A$ into two parts, \ie $\fv(\Ain) =  \overline{v_1};\overline{v_2}$, where the value of $\overline{v_3}$ in the caller is the same as that of  $\overline{v_1}$ in the called frame.
This   is described in  lemmas \ref{l:calls} -  \ref{l:calls:return}.

We have four lemmas: Lemma \ref{l:calls} describes preservation from a caller to an internal called, lemma \ref{l:calls:external}
describes preservation from a caller to any called, Lemma \ref{l:calls:return} describes preservation from an internal called to a caller, and  Lemma \ref{l:calls:return:ext} describes preservation from an any called to a caller, 
These four lemmas are used in the soundness proof for the four Hoare rules   about method calls, as given in Fig. \ref{f:calls}. 

In the rest of this  section we will first introduce some further auxiliary concepts and lemmas, 
and then state, discuss  and  prove Lemmas \ref{l:calls} -  \ref{l:calls:return}.

\vspace{1cm}

\textbf{Plans for next three subsections}
 Lemmas \ref{l:calls}-\ref{l:calls:external} are quite complex, because they deal with substitution of some of the assertions' free variables.
 We therefore approach the proofs gradually: 
 We  first state and prove a  very simplified version of   Lemmas \ref{l:calls}-\ref{l:calls:external}, where the assertion ($\Ain$ or $\Aout$)    is only about protection and only contains addresses; this is the only basic assertion which is not $Stbl$.
 We then state a slightly more general version of  Lemmas \ref{l:calls}-\ref{l:calls:external}, where the assertion ($\Ain$ or $\Aout$)  is variable-free.

\subsection{Preservation of variable-free simple protection when   pushing/popping frames}

We now move to consider preservation of variable-free assertions about protection when pushing/popping frames

\begin{lemma}[From caller to  called - protected, and variable-free]
\label{l:aux:caller:called}

For any address $\alpha$, addresses $\overline \alpha$, states $\sigma$, $\sigma'$,  
and frame $\phi$.

\noindent
If   $\sigma'=\sigma  \pushSymbol \phi $  
\noindent
then

\begin{enumerate}[a.]
\label{l:aux:caller:called:one}
\item
$ \satDAssertFrom M  \sigma k    {\inside \alpha} \ \ \wedge \ \  
  M, \sigma' \models \intThis\ \ \wedge \ \  Rng(\phi) \subseteq  \LRelevantO \sigma
 \hfill \Longrightarrow  \ \ \  \   \ \satDAssertFrom M  {\sigma'} k   {\inside \alpha} $

\item
\label{l:aux:caller:called:two}
$\satDAssertFrom M  \sigma k    {\protectedFrom {\alpha} {\overline {\alpha}}}    \ \ \wedge \ \  Rng(\phi) \subseteq   {\overline \alpha}  
 \hfill \Longrightarrow  \ \ \  \    M,  {\sigma'} \models   {\inside \alpha}$

\item
\label{l:aux:caller:called:three}
$\satDAssertFrom M  \sigma k    {{\inside \alpha} \wedge  {\protectedFrom {\alpha} {\overline {\alpha}}}}    \ \ \wedge \ \  Rng(\phi) \subseteq    {\overline \alpha} 
 \hfill \Longrightarrow  \ \ \  \   \satDAssertFrom M  {\sigma'} k  {\inside \alpha}$

\end{enumerate}

\end{lemma}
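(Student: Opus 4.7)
The plan is to handle all three parts by reducing the deep-satisfaction statement $\satDAssertFrom{M}{\sigma'}{k}{\_}$ to two disjoint cases on $j \in [k, \DepthSt{\sigma'}]$: the lower restrictions $j \leq \DepthSt{\sigma}$, and the fresh top frame $j = \DepthSt{\sigma'} = \DepthSt{\sigma} + 1$. Since the operation $\pushSymbol$ from Def.~\ref{def:push:frame} leaves the heap and all previously existing frames untouched, we have $\RestictTo{\sigma'}{j} = \RestictTo{\sigma}{j}$ for every $j \leq \DepthSt{\sigma}$, so the deep-satisfaction hypothesis on $\sigma$ transfers verbatim to these lower restrictions of $\sigma'$ for whichever assertion the corresponding part asks about. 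This collapses every part to proving that the relevant assertion holds at the top of $\sigma'$ itself.

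For part (a), I would first instantiate $\satDAssertFrom{M}{\sigma}{k}{\inside{\alpha}}$ at $j = \DepthSt{\sigma}$ to obtain $M, \sigma \models \inside{\alpha}$. The key observation is that $Rng(\phi) \subseteq \LRelevantO{\sigma}$, together with Lemma~\ref{lemma:push:N}(\ref{oneLR}), yields $\LRelevantO{\sigma'} \subseteq \LRelevantO{\sigma}$; since $\sigma \sim \sigma'$, no external object locally reachable in $\sigma'$ can have a field pointing to $\alpha$ (it would have had one in $\sigma$ already, contradicting $M, \sigma \models \inside{\alpha}$). The second clause of Def.~\ref{sect:semantics:assert:prt} is discharged directly from $M, \sigma' \models \intThis$.

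For part (b), I would apply Lemma~\ref{l:push:pop:aux}(L1) at the top frame. Its hypothesis $M, \sigma \models \protectedFrom{\alpha}{Rng(\phi)}$ follows by unpacking the conjunction $\bigwedge_i \protectedFrom{\alpha}{\alpha_i}$ (obtained from $\satDAssertFrom{M}{\sigma}{k}{\protectedFrom{\alpha}{\overline{\alpha}}}$ at $j = \DepthSt{\sigma}$) and using $Rng(\phi) \subseteq \overline{\alpha}$ to restrict to those addresses that actually name frame values. The conclusion $M, \sigma' \models \inside{\alpha}$ at the top is then exactly what L1 delivers. Part (c) is a clean combination: the lower-frame argument uses the $\inside{\alpha}$ conjunct of the premise, while the top-frame satisfaction is obtained by applying part (b) to the $\protectedFrom{\alpha}{\overline{\alpha}}$ conjunct.

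The main obstacle will be the bookkeeping around the identification $\RestictTo{\sigma'}{j} = \RestictTo{\sigma}{j}$ for $j \leq \DepthSt{\sigma}$: strictly speaking this holds only up to the continuation of the frame that $\phi$ was pushed onto, but since neither $\inside{\_}$ nor $\protectedFrom{\_}{\_}$ inspects the continuation (by Def.~\ref{def:chainmail-protection-from} and Lemma~\ref{l:assrt:unaffect}), satisfaction agrees on both states. Once this is nailed down, the remaining content is the careful use of $Rng(\phi) \subseteq \LRelevantO{\sigma}$ in part (a) to ensure that pushing cannot enlarge the set of locally reachable external objects, which is precisely what prevents protection from being broken by the call.
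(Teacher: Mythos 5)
Your proposal is correct and follows essentially the same route as the paper: establish the assertion at the new top frame of $\sigma'$ (for (a) via $\LRelevantO{\sigma'}\subseteq\LRelevantO{\sigma}$ and heap identity, for (b) via the $\protectedFrom{\alpha}{Rng(\phi)}$-to-$\inside{\alpha}$ argument of Lemma~\ref{l:push:pop:aux}(L1)), and then combine with the unchanged lower restrictions — which is exactly what the paper packages as Lemma~\ref{l:shallow:scoped:scoped} and you inline directly. The only cosmetic difference is that the paper cites Lemma~\ref{l:sim}(\ref{l:reaches:protected}) where you unfold the reachability argument by hand; the content is identical.
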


\begin{proof} $ ~ $ 

\begin{enumerate}[a.]
 
\item
(1) Take any $\alpha'\in \LRelevantO {\sigma'}$. 
Then, by assumptions, we have   $\alpha'\in \LRelevantO {\sigma}$. 
This gives, again by assumptions, that $M, \sigma \models {\protectedFrom {\alpha} {\alpha'}}$. 
By the construction of $\sigma$, and lemma \ref{l:sim} part \ref{l:reaches:protected}, we obtain that (2)  $M, \sigma' \models {\protectedFrom {\alpha} {\alpha'}}$.  
From (1) and (2) and because $M, \sigma' \models \intThis$ we obtain that $M, \sigma' \models {\inside {\alpha}}$. 
Then apply lemma \ref{l:sim} part \ref{l:shallow:scoped:scoped}, and we are done.
 
\item 
By  unfolding and folding the definitions, and application of Lemma \ref{l:sim} part \ref{l:reaches:protected}.
\item
By part   \ref{l:sim} part \ref{l:aux:caller:called:two} and \ref{l:shallow:scoped:scoped}.
\end{enumerate}

Notice that   part  \ref{l:aux:caller:called:one} requires that the called ($\sigma'$) is internal, but   parts  \ref{l:aux:caller:called:two} and  \ref{l:aux:caller:called:three} do not.

Notice also that the conclusion in part  \ref{l:aux:caller:called:two} is  $ M,  {\sigma'} \models   {\inside \alpha}$  and not $ \satDAssertFrom M  {\sigma'} k   {\inside \alpha}$.
 This is so, because it is possible that $ M, \sigma \models    {\protectedFrom {\alpha} {\overline {\alpha}}}$ but $ M, \sigma \not\models    {\inside {\alpha}}$. 

\end{proof}

\begin{lemma}[From  called to caller -- protected, and variable-free]
\label{l:aux:called:caller}

For any   states $\sigma$, $\sigma'$,  variable  $v$,  address  $\alpha_v$,
addresses  $\overline{\alpha}$,     
and statement $stmt$.

\noindent
 If $\  \sigma'= (\sigma \popSymbol)[v\! \mapsto \alpha_v][\prg{cont}\!\mapsto\! stmt] $,\ 
  
\noindent
then

\begin{enumerate} [a.]
\item
\label{l:aux:called:caller:one}
$\satDAssertFrom M  \sigma k  {\inside {\alpha}} \ \  \wedge \ \ k < \DepthSt {\sigma} \ \  \wedge \ \ M, \sigma \models {\protectedFrom \alpha {\alpha_v}}   
 \hfill \Longrightarrow  \ \ \  \   \satDAssertFrom M  {\sigma'} k    {\inside {\alpha}} $ .

  \item
 \label{l:aux:called:caller:two}
 $M, \sigma \models  {\inside {\alpha}}\   \ \wedge\ \  \overline {\alpha} \subseteq \LRelevantO \sigma
 \hfill \Longrightarrow  \ \ \  \  %
 \satDAssertFrom M  {\sigma'} k   {\protectedFrom {\alpha}  {\overline \alpha}}$.

\end{enumerate}
\end{lemma}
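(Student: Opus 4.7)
}

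The plan is to reduce both parts to the observation that $\sigma$ and $\sigma'$ share a heap and share almost all their frames, so that satisfaction of assertions at restricted depths can be transferred between them, and then to handle the one frame that actually differs (the new top of $\sigma'$) using the hypothesis on $\protectedFrom{\alpha}{\alpha_v}$ (part a) or on $\overline{\alpha} \subseteq \LRelevantO{\sigma}$ (part b). I will first establish three structural facts which will carry most of the weight: (i) $\DepthSt{\sigma'} = \DepthSt{\sigma}-1$ and $\sigma' \sim \RestrictTo{\sigma}{\DepthSt{\sigma}-1}$ up to the binding $v\mapsto \alpha_v$ and the continuation update; (ii) for every $j < \DepthSt{\sigma}-1$, $\RestrictTo{\sigma'}{j}$ and $\RestrictTo{\sigma}{j}$ are literally the same state; and (iii) $\protectedFrom{\_}{\_}$ is insensitive to the top frame once the heap is fixed, which I will get from Lemma~\ref{l:sim} part (\ref{l:sim:protectedFrom}).

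For part (a), I will unfold $\satDAssertFrom M {\sigma'} k  {\inside{\alpha}}$ into the pointwise requirement that $M, \RestrictTo{\sigma'}{j} \models \inside{\alpha}$ for every $j \in [k,\DepthSt{\sigma'}]$. For $j < \DepthSt{\sigma'}$, fact (ii) together with the hypothesis $\satDAssertFrom M \sigma k {\inside{\alpha}}$ gives the conclusion immediately. The only interesting case is $j = \DepthSt{\sigma'}$, i.e.\ $\sigma'$ itself. Here I compare $\sigma'$ with $\RestrictTo{\sigma}{\DepthSt{\sigma}-1}$, which satisfies $\inside{\alpha}$ by the hypothesis (using $k < \DepthSt{\sigma}$). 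The continuation update is handled by Lemma~\ref{l:assrt:unaffect}; the binding $v\mapsto\alpha_v$ can only enlarge the locally reachable set by elements reachable from $\alpha_v$, i.e.\ by $\Relevant{\alpha_v}{\sigma}$. The hypothesis $M,\sigma \models \protectedFrom{\alpha}{\alpha_v}$ states precisely that $\alpha \neq \alpha_v$ and that no external object in $\Relevant{\alpha_v}{\sigma}$ has a field pointing to $\alpha$, which is exactly what is needed to re-establish $\inside{\alpha}$ in $\sigma'$. Finally, since the receiver of $\sigma'$ is the receiver of the caller's frame (already in $\sigma$'s stack), its external/internal character is unchanged, so the ``if $\extThis$ then no parameter equals $\alpha$'' clause of $\inside{\alpha}$ transfers via the same computation.

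For part (b), I will again unfold into the pointwise requirement at each depth $j \in [k,\DepthSt{\sigma'}]$. Because $\protectedFrom{\alpha}{\overline{\alpha}}$ is heap-only (fact iii), and every $\RestrictTo{\sigma'}{j}$ shares the heap of $\sigma$, it is enough to show $M, \sigma \models \protectedFrom{\alpha}{\alpha_i}$ for each $\alpha_i \in \overline{\alpha}$. Fix such an $\alpha_i$: by the reachability hypothesis $\alpha_i \in \LRelevantO{\sigma}$, so $\Relevant{\alpha_i}{\sigma} \subseteq \LRelevantO{\sigma}$, and every external object in $\Relevant{\alpha_i}{\sigma}$ is locally reachable in $\sigma$. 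The assumption $M,\sigma\models \inside{\alpha}$ then rules out any field of such an external object pointing to $\alpha$, yielding the main clause of $\protectedFrom{\alpha}{\alpha_i}$.

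The main obstacle I foresee is the disequality side-condition $\alpha \neq \alpha_i$ in the definition of $\protectedFrom{\_}{\_}$: it does not follow immediately from $\inside{\alpha}$ alone when $\alpha$ itself is locally reachable. I expect to discharge it by case analysis on whether the receiver of $\sigma$ is internal or external: if external, the parameter-disjointness clause of $\inside{\alpha}$ handles the parameters directly; if internal, a supplementary argument (or a mild strengthening of the premise) will be required, which I will flag explicitly in the write-up.
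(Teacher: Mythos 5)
Your plan is correct and follows essentially the same route as the paper's proof: for (a), transfer at the unchanged lower frames plus a case split at the new top frame between objects already locally reachable in the caller's frame of $\sigma$ and objects newly reachable through $\alpha_v$ (the latter covered by $\protectedFrom{\alpha}{\alpha_v}$ and closure of protection under reachability, Lemma \ref{l:sim}); for (b), establish $\protectedFrom{\alpha}{\overline{\alpha}}$ at $\sigma$ and lift it to every depth of $\sigma'$ using the fact that $\protectedFrom{\_}{\_}$ depends only on the heap and is $\Stable{\_}$ (Lemma \ref{l:shallow:scoped}). The disequality side condition $\alpha \neq \alpha_i$ that you flag in part (b) is a genuine subtlety, but the paper's own proof does not discharge it either --- it simply asserts the claim ``from the definitions'' --- so your explicit hedge there does not put you behind the published argument.
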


\begin{proof} $~ $  

\begin{enumerate} [a.]
\item
(1) Take any $i\!\in\![k..\DepthSt {\sigma'})$. 
Then, by definitions  and assumption, we have   $M, {\RestrictTo \sigma i} \models {\inside \alpha}$. 
Take any $\alpha'\!\in\!\LRelevantO {\RestrictTo \sigma i}$. 
We obtain that $M, {\RestrictTo \sigma i} \models {\protectedFrom {\alpha} {\alpha'}}$. 
Therefore, $M, {\RestrictTo \sigma i} \models {\inside {\alpha}}$.
Moreover, ${\RestrictTo \sigma i}$=${\RestrictTo {\sigma'} i}$, and we therefore obtain (2) \ $M, {\RestrictTo {\sigma'}  i} \models {\inside \alpha}$.

\vspace{.05cm} 

(3) Now take a  $\alpha'\!\in\!\LRelevantO {\sigma'}$.\\
Then, we have that either (A):\ $\alpha'\!\in\!\LRelevantO {\RestrictTo \sigma {\DepthSt {\sigma'}}}$, or (B):\  $\Reach {\alpha_r} {\alpha'} {\sigma'}$. 

In the case of (A): Because $k,\DepthSt \sigma = \DepthSt {\sigma'} + 1$, and because $\satDAssertFrom M  \sigma k  {\inside {\alpha}}$
we have  $M, \sigma \models {\protectedFrom {\alpha} {\alpha'}}$.
Because $\sigma \sim \sigma'$ and  Lemma \ref{l:sim} part \ref{l:sim:protectedFrom}, we obtain (A') $M, \sigma'   \models {\protectedFrom {\alpha} {\alpha'}}$ 
 
 In the case of (B): Because $\sigma \sim \sigma'$ and  lemma \ref{l:sim} part \ref{l:sim:reaches}, we obtain  $\Reach {\alpha_r} {\alpha'} {\sigma}$. 
 Then, applying Lemma \ref{l:sim} part \ref{l:sim:protectedFrom} and assumptions, we obtain (B') $M, \sigma'   \models {\protectedFrom {\alpha} {\alpha'}}$.
 
 From (3), (A), (A'), (B) and (B') we obtain: (4)  $M, \sigma'   \models {\inside {\alpha}}$.
 
 \vspace{.05cm} 

 With (1), (2), (4) and Lemma \ref{l:sim} part \ref{l:sim:valid} we are done.
 \item
 From the definitions we obtain that $M, \sigma \models {\protectedFrom {\alpha} {\overline {\alpha}}}$.
 Because $\sigma \sim \sigma'$ and  lemma  \ref{l:sim} part \ref{l:sim:protectedFrom}, we obtain $M, \sigma' \models {\protectedFrom {\alpha} {\overline {\alpha}}}$.
 And because of lemma \ref{l:shallow:scoped}, part \ref{shallow:to:scoped},  we obtain $\satDAssertFrom M  {\sigma'} k   {\protectedFrom {\alpha}  {\overline \alpha}}$.
\end{enumerate}

\end{proof}

\subsection{Preservation of variable-free, $Stbl^+$, assertions when   pushing/popping frames}
We now move consider preservation of variable-free assertions when pushing/popping frames, and generalize the lemmas  \ref{l:aux:caller:called} and
 \ref{l:aux:called:caller}

\begin{lemma}[From caller to  called -  variable-free, and $Stbl^+$]
\label{l:aux:aux:caller:called}

For any assertion $A$,  addresses $\overline \alpha$, states $\sigma$, $\sigma'$,  and frame $\phi$.

\noindent
If \   $\sigma'=\sigma  \pushSymbol \phi $ \  and\  $\Pos A$, \ and \  $\fv(A)=\emptyset$,

\noindent
then

\begin{enumerate}[a.]

\item
\label{l:aux:aux:caller:called:one}
$ \satDAssertFrom M  \sigma k    A \ \ \wedge \ \  
  M, \sigma' \models \intThis\ \ \wedge \ \  Rng(\phi) \subseteq  \LRelevantO \sigma
  \hfill \Longrightarrow  \ \ \  \   \ \satDAssertFrom M  {\sigma'} k   A $

\item
\label{l:aux:aux:caller:called:two}
$\satDAssertFrom M  \sigma k    {\PushASLong  {(\overline {\alpha})} {A}}    \ \ \wedge \ \  Rng(\phi) \subseteq   {\overline \alpha}  
 \hfill \Longrightarrow  \ \ \  \    M,  {\sigma'} \models   {A}$

\item
\label{l:aux:aux:caller:called:three}
$\satDAssertFrom M  \sigma k    {A \wedge   {\PushASLong  {(\overline {\alpha})} {A}} }    \ \ \wedge \ \  Rng(\phi) \subseteq    {\overline \alpha} 
 \hfill \Longrightarrow  \ \ \  \   \satDAssertFrom M  {\sigma'} k  {A}$

\end{enumerate}

\end{lemma}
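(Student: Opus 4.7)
\medskip

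\noindent\textbf{Proof plan.} All three parts will be proven by structural induction on $A$, exploiting the fact that $\fv(A)=\emptyset$ (so satisfaction depends only on the heap and the most recent frames, never on a particular binding of a free variable) together with $\Pos(A)$ (so that $\inside{\_}$ only occurs positively, while $\neg$ is restricted to the $\Stable$ fragment by Def.~\ref{def:Pos}). The novel content is concentrated in the case $A \equiv \inside{\alpha}$; every other base case produces an assertion that is $\Stable$, so preservation under pushing a frame is immediate from Lemma~\ref{l:stbl:preserves} (combined with Lemma~\ref{l:shallow:scoped:scoped} to lift from shallow to \strong satisfaction when needed).

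\medskip

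\noindent\textbf{Part (a).} Induct on $A$. For $A \equiv \inside{\alpha}$, apply Lemma~\ref{l:aux:caller:called}.(\ref{l:aux:caller:called:one}) directly, using the hypotheses $M,\sigma'\models\intThis$ and $Rng(\phi)\subseteq\LRelevantO{\sigma}$. For $A$ equal to $\re$, $\re\!:\!C$, $\external{\re}$, or $\protectedFrom{\re}{\re'}$, each is $\Stable$ and variable-free, so Lemma~\ref{l:stbl:preserves} gives $M,\sigma\models A \Leftrightarrow M,\sigma'\models A$, and \Strong satisfaction is then re-established via Lemma~\ref{l:shallow:scoped:scoped}. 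For $A\equiv A_1\wedge A_2$ we have $\Pos(A_i)$ and apply the IH to each conjunct. For $A\equiv \neg A'$, the definition of $\Pos$ forces $\Stable(A')$, hence $\Stable(A)$ and we again invoke Lemma~\ref{l:stbl:preserves}. For $A\equiv\forall x\!:\!C.\,A'$ we use that $\fv(A)=\emptyset$ implies $\fv(A'[\alpha/x])=\emptyset$ and $\Pos(A'[\alpha/x])$ for every $\alpha\!:\!C$; applying the IH to each instantiation then yields the universally quantified conclusion.

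\medskip

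\noindent\textbf{Part (b).} Again by induction on $A$, using the fact that $\PushASLong{\_}{\_}$ from Def.~\ref{def:push} distributes over $\wedge$, $\forall$, $\neg$ and is the identity on all basic cases \emph{except} $\inside{\re}$, where $\PushASLong{(\overline\alpha)}{\inside{\alpha}} = \protectedFrom{\alpha}{\overline\alpha}$. The interesting base case $A\equiv\inside{\alpha}$ is thus discharged by Lemma~\ref{l:aux:caller:called}.(\ref{l:aux:caller:called:two}). All other basic cases are $\Stable$, so one appeals to Lemma~\ref{l:stbl:preserves}. Conjunction and the universal quantifier again propagate $\PushASLong{}{}$ and distribute over the induction hypothesis. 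The negation case once more reduces to the $\Stable$ fragment.

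\medskip

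\noindent\textbf{Part (c).} This is obtained by combining (a) and (b) without a fresh induction. Because $\sigma'=\sigma\pushSymbol\phi$, we have $\DepthSt{\sigma'}=\DepthSt{\sigma}+1$ and for every $i\in[k,\DepthSt{\sigma}]$, $\RestrictTo{\sigma'}{i}=\RestrictTo{\sigma}{i}$, so $M,\RestrictTo{\sigma'}{i}\models A$ is inherited directly from the assumption $\satDAssertFrom{M}{\sigma}{k}{A}$ (note that $\fv(A)=\emptyset$ so the interpretation of free variables in the definition of $\satDAssertFrom{}{}{}{}$ is vacuous). For the single remaining index $i=\DepthSt{\sigma'}$ we have $\RestrictTo{\sigma'}{i}=\sigma'$, and $M,\sigma'\models A$ follows from part (b) applied to the second conjunct of the precondition.

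\medskip

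\noindent\textbf{Anticipated obstacle.} The subtle point is the quantifier case in part~(b): one must verify that $\PushASLong{(\overline\alpha)}{\forall x\!:\!C.\,A'}=\forall x\!:\!C.\,\PushASLong{(\overline\alpha)}{A'}$ (immediate from Def.~\ref{def:push}, provided the fresh variable $x$ does not clash with $\overline\alpha$, which is assured because $\overline\alpha$ are addresses, not variables). Once this syntactic equation is justified, the IH applies uniformly. The negation case is technically a non-issue because $\Pos(\neg A')$ collapses it to the stable fragment, but one should record that this collapse is precisely the reason we restrict to $\Pos$ rather than arbitrary assertions.
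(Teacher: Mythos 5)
Your proposal is correct and follows essentially the same route as the paper: parts (a) and (b) by structural induction on $A$ (equivalently, on the derivation of $\Pos{A}$), discharging the only non-$\Stable$ base case $\inside{\alpha}$ via the corresponding parts of Lemma~\ref{l:aux:caller:called}, and part (c) by combining the unchanged lower frames with part (b) for the new top frame (the content of Lemma~\ref{l:shallow:scoped:scoped}). The extra detail you supply on the quantifier and negation cases is consistent with the paper's definitions and fills in what the paper leaves implicit.
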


\begin{proof} $ ~ $ 

\begin{enumerate}[a.]
 
\item
By Lemma \ref{l:aux:caller:called}, part \ref{l:aux:caller:called:one}  and structural induction on the definition of $\Pos {\_}$.
\item
By Lemma \ref{l:aux:caller:called}, part \ref{l:aux:caller:called:one}  and structural induction on the definition of $\Pos {\_}$.
\item 
By part   \ref{l:aux:aux:caller:called:two} and Lemma  \ref{l:shallow:scoped}.
\end{enumerate}
 
\end{proof}

\begin{lemma}[From  called to caller -- protected, and variable-free]
\label{l:aux:aux:called:caller}

For any   states $\sigma$, $\sigma'$,  variable  $v$,  address  $\alpha_v$,
addresses  $\overline{\alpha}$,     
and statement $stmt$.

\noindent
 If $\  \sigma'= (\sigma \popSymbol)[v\! \mapsto \alpha_v][\prg{cont}\!\mapsto\! stmt] $,  \  and\  $\Pos A$, \ and \  $\fv(A)=\emptyset$
  
\noindent
then

\begin{enumerate} [a.]
\item
\label{l:aux:aux:called:caller:one}
$\satDAssertFrom M  \sigma k  {A} \ \  \wedge \ \ k < \DepthSt {\sigma} \ \  \wedge \ \ M, \sigma \models   {\PushASLong  {\alpha_v} {A}} 
 \hfill \Longrightarrow  \ \ \  \   \satDAssertFrom M  {\sigma'} k    {A} $ .

 \item
 \label{l:aux:aux:called:caller:two}
 $M, \sigma \models  {A}\   \ \wedge\ \  \overline {\alpha} \subseteq \LRelevantO \sigma
 \hfill \Longrightarrow  \ \ \  \  %
 \satDAssertFrom M  {\sigma'} k   {\PushASLong  {(\overline {\alpha})} {A}}$.

\end{enumerate}
\end{lemma}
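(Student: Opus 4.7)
\textbf{Proof plan for Lemma \ref{l:aux:aux:called:caller}.}
The strategy is a structural induction on the derivation of $\Pos A$, pushing the protection atoms down to the corresponding atomic lemma \ref{l:aux:called:caller} and routing the non-protection atoms through the heap-identity observation. The first step is to note that the construction $\sigma' = (\sigma\popSymbol)[v\mapsto \alpha_v][\prg{cont}\mapsto stmt]$ leaves the heap untouched, so $\sigma \sim \sigma'$, and moreover for every $i<\DepthSt{\sigma'}=\DepthSt{\sigma}-1$ we have $\RestrictTo{\sigma'}{i} \equiv \RestrictTo{\sigma}{i}$ (the pop only removes the top frame, and the update only touches the new top frame). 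Consequently, in both parts the goal $\satDAssertFrom M {\sigma'} k (\cdot)$ reduces, for the frames strictly below the top of $\sigma'$, to the corresponding facts already in hand in $\sigma$; only the single claim at the new top frame $\RestrictTo{\sigma'}{\DepthSt{\sigma'}} = \sigma'$ requires genuine work.

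For Part (a), the induction on $\Pos A$ proceeds as follows. The crucial base case is $A \txteq \inside{\alpha}$: since $\fv(A)=\emptyset$ the expression is a closed address, and Def.~\ref{def:push} gives $\PushASLong{\alpha_v}{\inside\alpha} \txteq \protectedFrom{\alpha}{\alpha_v}$, so Lemma \ref{l:aux:called:caller}(\ref{l:aux:called:caller:one}) discharges the top-frame obligation. The remaining atoms ($\re$, $\re\!:\!C$, $\external{\re}$, $\protectedFrom{\re}{\re'}$) are $\Stable{\_}$, hence $\PushAS$ acts as the identity on them and Lemma \ref{l:sim}(\ref{l:sim:valid}) transfers satisfaction from $\sigma$ to $\sigma'$ via $\sigma \sim \sigma'$; combined with Lemma \ref{l:shallow:scoped}(\ref{shallow:to:scoped}) this yields \Strong satisfaction. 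Conjunction splits because $\PushAS$ distributes over $\wedge$; universal quantification is handled by instantiating each witness $\alpha_x$ of class $C$ — class membership is heap-determined and therefore invariant under $\sim$ — and applying the induction hypothesis to $A_0[\alpha_x/x]$ (note $\fv$ remains empty). In the negation case, $\Pos{\neg A_0}$ forces $\Stable{A_0}$ by Def.~\ref{def:Pos}, so $\neg A_0$ is itself $\Stable{\_}$ and $\PushAS$ is the identity on it; preservation then again reduces to $\sim$.

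Part (b) follows the same scheme with the polarity reversed. The base case $A \txteq \inside{\alpha}$ is exactly Lemma \ref{l:aux:called:caller}(\ref{l:aux:called:caller:two}): each $\alpha_j \in \overline\alpha$ is locally reachable in $\sigma$, so $M,\sigma \models \protectedFrom{\alpha}{\alpha_j}$ by unfolding $\inside{\alpha}$, and $\sim$ carries this to $\sigma'$; their conjunction is the definition of $\PushASLong{(\overline\alpha)}{\inside\alpha}$. Because $\PushASLong{(\overline\alpha)}{A}$ is $\Stable{\_}$ by Lemma \ref{lemma:push:ass:state}(\ref{l:push:stbl}), it suffices to establish the shallow judgement $M, \sigma' \models \PushASLong{(\overline\alpha)}{A}$ and then lift via Lemma \ref{l:shallow:scoped}(\ref{shallow:to:scoped}). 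The other atoms are $\Stable{\_}$ and handled identically by $\sim$; the connective and quantifier cases again use the distributivity of $\PushAS$ in Def.~\ref{def:push} together with the induction hypothesis.

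The main obstacle I expect is the bookkeeping around the variable update $[v\mapsto \alpha_v]$ at the new top frame: while $v$ does not occur free in $A$ (since $\fv(A)=\emptyset$), the update can enlarge $\LRelevantO{\sigma'}$ beyond $\LRelevantO{\sigma \popSymbol}$ by introducing $\alpha_v$ and everything it reaches. For Part (a) this is precisely what the extra hypothesis $M,\sigma \models \PushASLong{\alpha_v}{A}$ accounts for — threading it through the connective cases requires the distributivity facts above plus the observation that for $\Stable{\_}$ subassertions the adaptation is vacuous. For Part (b) the assumption $\overline\alpha \subseteq \LRelevantO{\sigma}$ ensures each $\alpha_j$ is already present in $\sigma$, so the protection facts needed in $\sigma'$ are already certified by $M, \sigma \models A$ and transferred by $\sim$. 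A clean way to keep this bookkeeping manageable is to prove, as a small auxiliary, that $\Stable{B}$ implies $\PushAS y B \equiv B$ syntactically, so that the induction only ever does non-trivial work at the $\inside{\_}$ leaves, where Lemma \ref{l:aux:called:caller} directly applies.
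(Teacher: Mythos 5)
Your proposal is correct and follows essentially the same route as the paper's proof, which for both parts simply invokes structural induction on the definition of $\Pos{\_}$ together with the corresponding part of Lemma~\ref{l:aux:called:caller} for the $\inside{\_}$ base case. Your additional bookkeeping (heap identity $\sigma\sim\sigma'$, identity of the restricted states below the popped frame, and the observation that $\PushAS{}{}$ is the identity on $\Stable{\_}$ subassertions) is exactly the detail the paper leaves implicit.
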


\begin{proof} $~ $  

\begin{enumerate} [a.]
\item
By Lemma \ref{l:aux:called:caller}, part \ref{l:aux:called:caller:one}  and structural induction on the definition of $\Pos {\_}$.
 \item
 By Lemma \ref{l:aux:called:caller}, part \ref{l:aux:called:caller:two}  and structural induction on the definition of $\Pos {\_}$.  
\end{enumerate}

\end{proof}
 
 \subsection{Preservation of assertions when pushing or popping frames -- stated and proven}
\begin{lemma}[From caller to internal called]
\label{l:calls}

For any assertion $\Ain$, states $\sigma$, $\sigma'$,  
variables  $\overline{v_1}$,    $\overline{v_2}$,  $\overline{v_3}$,  $\overline{v_4}$,  $\overline{v_6}$,  
and frame $\phi$.

\noindent
If 
\begin{enumerate}[(i)]
\item 
\label{l:calls:r:one}
$ \Pos \Ain$,  
\item 
\label{l:calls:r:two}
$\fv(\Ain) =  \overline{v_1}; \overline{v_2} $\footnote{As we said earlier. this gives  also that the variable sequences  are pairwise disjoint, \ie $\overline{v_1}\#\overline{v_2}$.},
\ \ \ 
$\fv(\Ain[\overline {v_3/v_1}]) =  \overline{v_3}; \overline{v_4} $, \ \ \ \ 
$ \overline {v_6}\triangleq\overline{v_2}\cap\overline{v_3}; \overline{v_4} $, 
\item
\label{l:calls:r:three}
$\sigma'=\sigma  \pushSymbol \phi  \ \ \ \  \wedge\ \ \  Rng(\phi)= \overline{\interpret {\sigma} {v_3} }$
\item
\label{l:calls:r:four}
 $\overline {\interpret {\sigma'}  {v_1} } = \overline {\interpret {\sigma} {v_3} }$, 

\end{enumerate}

\noindent
then

\begin{enumerate}[a.]
\item
\label{l:calls:callee:one}
$\satDAssertFrom M  \sigma k   \Ain[\overline {v_3/v_1}] 
\ \ \wedge\ \ M, \sigma' \models \intThis
 \hfill \Longrightarrow  \ \ \  \   \ \satDAssertFrom M  {\sigma'} k  {\Ain[\overline { {\interpret {\sigma} {v_6}} / {v_6} } ] } $
 
\item

\label{l:calls:callee:two}
$ \satDAssertFrom M  \sigma k    {\PushASLong  {(\overline {v_3})} {(\Ain[\overline {v_3/v_1}])}} 
\hfill \Longrightarrow  \ \ \  \   \ \ \  \  
M, \sigma' \models   {\Ain\overline { {\interpret {\sigma} {v_6}} / {v_6} } ] }$.

\end{enumerate}

\end{lemma}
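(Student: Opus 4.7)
}
My plan is to reduce both parts to the variable-free case already handled by Lemma \ref{l:aux:aux:caller:called}, by closing the relevant assertions through substitution of each free variable by its interpretation in the appropriate state, and then appealing to the substitution bookkeeping of Lemma \ref{l:sfs:eight}. The underlying intuition is that the parameter passing on the call describes a renaming $\overline{v_1}\mapsto\overline{v_3}$ of free variables, while the ``leftover'' free variables $\overline{v_6}$ evaluate identically in $\sigma$ and $\sigma'$ (using Lemma \ref{l:var:unaffect} together with hypothesis (\ref{l:calls:r:four})).

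For part (a), I would proceed as follows. First, I would rewrite $\satDAssertFrom M \sigma k \Ain[\overline{v_3/v_1}]$ using the $\as{\sigma}{\_}$ operator (Def./Lemma \ref{lemma:addr:expr}) to an equivalent statement about the variable-free assertion $A^\star \triangleq \Ain[\overline{v_3/v_1}][\overline{\interpret{\sigma}{z}/z}]$ where $\overline z = \fv(\Ain[\overline{v_3/v_1}]) = \overline{v_3};\overline{v_4}$ by hypothesis (\ref{l:calls:r:two}). Next, using Lemma \ref{l:sfs:eight} (applied with $\overline x \leftrightarrow \overline{v_1}$, $\overline y \leftrightarrow \overline{v_3}$, $\overline{\alpha_x}\leftrightarrow \overline{\interpret{\sigma'}{v_1}}$, $\overline{\alpha_y}\leftrightarrow\overline{\interpret{\sigma}{v_3}}$) together with hypothesis (\ref{l:calls:r:four}), I obtain that $A^\star$ is textually the same as the variable-free assertion $\Ain[\overline{\interpret{\sigma'}{v_1}/v_1}][\overline{\interpret{\sigma}{v_6}/v_6}]$. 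This last assertion is $\Pos{\_}$ (because $\Ain$ is) and variable-free, so I can apply Lemma \ref{l:aux:aux:caller:called}(\ref{l:aux:aux:caller:called:one}), using hypothesis (\ref{l:calls:r:three}) to get $Rng(\phi)\subseteq\LRelevantO{\sigma}$ and the additional assumption $M,\sigma'\models\intThis$, to conclude $\satDAssertFrom M {\sigma'} k A^\star$. Finally, one more application of Lemma \ref{lemma:addr:expr} in the $\sigma'$-direction (noting that $\overline{v_1}$ interpret in $\sigma'$ to the addresses that $\overline{v_3}$ interpret in $\sigma$, and that $\overline{v_6}$ interpret identically in $\sigma$ and $\sigma'$ by Lemma \ref{l:var:unaffect}) turns $A^\star$ back into $\satDAssertFrom M {\sigma'} k \Ain[\overline{\interpret{\sigma}{v_6}/v_6}]$, which is the goal.

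For part (b), the same strategy applies but with the adaptation operator $\pushSymbolInText$ in place. The key additional ingredient is Lemma \ref{l:sybbs:adapt}, which lets me commute $\pushSymbolInText$ with substitution: since $\overline{v_3}\#\overline{v_1}$, I have $\PushASLong{(\overline{v_3})}{(\Ain[\overline{v_3/v_1}])} = (\PushASLong{(\overline{v_3})}{\Ain})[\overline{v_3/v_1}]$, and after closing by $\as{\sigma}{\_}$ I obtain a variable-free statement of the form $\PushASLong{(\overline{\interpret{\sigma}{v_3}})}{A^{\star\star}}$ where $A^{\star\star}$ is the variable-free version of $\Ain$ obtained by the same renaming/evaluation as in part (a). Then Lemma \ref{l:aux:aux:caller:called}(\ref{l:aux:aux:caller:called:two}) (with $\overline\alpha \triangleq \overline{\interpret{\sigma}{v_3}} = Rng(\phi)$) delivers $M,\sigma'\models A^{\star\star}$, and a final application of Lemma \ref{lemma:addr:expr} re-expresses this as $M,\sigma'\models \Ain[\overline{\interpret{\sigma}{v_6}/v_6}]$.

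The main obstacle will be the substitution bookkeeping: making sure that the triple substitution $[\overline{v_3/v_1}]$ in the source, the closure $[\overline{\interpret{\sigma}{z}/z}]$, and the re-opening at $\sigma'$ all compose to the desired $[\overline{\interpret{\sigma}{v_6}/v_6}]$ on $\Ain$. This is exactly the content of Lemma \ref{l:sfs:eight}, so the obligation reduces to verifying its three hypotheses under the naming conventions of Lemma \ref{l:calls}: hypothesis (\ref{l:calls:r:two}) gives the free-variable decomposition and the definition of $\overline{v_6}$; hypothesis (\ref{l:calls:r:four}) gives the pointwise equality $x[\overline{v_3/v_1}][\overline{\alpha_{v_3}/v_3}] = x[\overline{\alpha_{v_1}/v_1}]$ for $x\in\overline{v_1}$; and Lemma \ref{l:var:unaffect} together with hypothesis (\ref{l:calls:r:three}) gives that the ``kept'' variables $\overline{v_6}$ evaluate identically in $\sigma$ and $\sigma'$. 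Once that bookkeeping is in place, both parts follow purely by invoking the variable-free Lemma \ref{l:aux:aux:caller:called}.
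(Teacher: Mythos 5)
Your proposal is correct and takes essentially the same route as the paper's proof: close $\Ain[\overline{v_3/v_1}]$ to a variable-free assertion via Lemma \ref{l:assrt:unaffect}, rearrange the composed substitutions into the form $\Ain[\overline{\interpret{\sigma'}{v_1}/v_1}][\overline{\interpret{\sigma}{v_6}/v_6}]$ using Lemma \ref{l:sfs:eight} together with hypothesis (\ref{l:calls:r:four}), apply the variable-free Lemma \ref{l:aux:aux:caller:called} (part a.\ for a., part b.\ for b.), and reopen the $\overline{v_1}$-substitution at $\sigma'$. The only cosmetic differences are that the paper stages the closure as two successive substitutions ($\overline{v_3}$ then $\overline{v_2}\setminus\overline{v_3}$) and pushes the closing substitution through the adaptation operator directly rather than via Lemma \ref{l:sybbs:adapt}, and your side remark that $\overline{v_6}$ interpret identically in $\sigma$ and $\sigma'$ is unnecessary (the conclusion keeps their $\sigma$-values as addresses, so no interpretation in $\sigma'$ is required) --- neither point affects correctness.
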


\paragraph{Discussion of Lemma} In lemma \ref{l:calls},   state $\sigma$ is the state right before pushing the new frame on the stack,   while 
state $\sigma'$ is the state right after pushing the frame on the stack.
That is, $\sigma$ is the last state before entering the method body, and $\sigma'$ is the first state after entering the method body.
$\Ain$ stands for the method's precondition, while the variables $\overline {v_1}$ stand for the formal parameters of the method,
and $\overline {v_3}$ stand for the actual parameters of the call.
Therefore, $\overline {v_1}$ is the domain of the new frame, and $\overline {\sigma} {v_3}$ is its range.
The variables $\overline {v_6}$ are the free variables of $\Ain$ which are not in  $\overline {v_1}$ -- \cf Lemma \ref{l:sfs} part
( \ref{l:sfs:five}).
Therefore if (\ref{l:calls:callee:one})  the callee is internal, and 
 $\Ain[\overline {v_3/v_1}]$ holds  at the call point, or
 if (\ref{l:calls:callee:two}) ${\PushASLong  {(\overline {v_3})} {(\Ain[\overline {v_3/v_1}])}}$
 holds  at the call point, 
 then $\Ain[\overline {.../v_61}]$  holds right after pushing $\phi$ onto the stack.
Notice the difference in the conclusion in (\ref{l:calls:callee:one}) and (\ref{l:calls:callee:two}): in the first case we have \strong satisfaction, while in the second case we only have shallow satisfaction.

\begin{proof} $ ~ $ 

We will use $\overline {\alpha_1}$ as short for $\{\interpret {\sigma'} {v_1}$, and $\overline {\alpha_3}$ as short for $\overline {\interpret \sigma {v_3} }$.
\\
We aslo define  $\overline{v_{6,1}} \triangleq \overline{v_2} \cap \overline{v_3}$, \ \ 
   $\overline {\alpha_{6,1}} \triangleq  \overline {v_{6,1}} [\overline {\interpret {\sigma} {v_6}/v_6 }]$
\\
We  establish that\\
$\ \strut \ \  (*) \ \ \  \Ain[\overline {v_3/v_1}][\overline { {\interpret \sigma {v_3} }/v_3 }] \txteq \Ain[\overline  {\alpha_1/v_1}][\overline { \alpha_{6,1}/v_{6,1}}] $\\
This holds by  By Lemma \ref{l:sfs:eight} and assumption \ref{l:calls:r:four} of the current lemma.
\\
And we define    $\overline{v_{6,2}} \triangleq \overline{v_2} \setminus \overline{v_3}$, \ \ 
       $\overline {\alpha_{6,2}} \triangleq  \overline {v_{6_2}} [\overline {\interpret {\sigma} {v_6}/v_6 }]$.

\begin{enumerate}[a.]
\item
Assume  \\
  $\satDAssertFrom M  \sigma k   \Ain[\overline {v_3/v_1}] $. 
      \ \ \ \ \ By Lemma \ref{l:assrt:unaffect}    this  implies that \\
$\satDAssertFrom M  \sigma k   \Ain[\overline {v_3/v_1}][\overline { {\alpha_3}/v_3 }] $   
      \ \ \ \ \ By (*) from above we have\\
$\satDAssertFrom M  \sigma k   \Ain[\overline  {\alpha_1 /v_1}  [\overline { \alpha_{6,1}/v_{6,1}}] $ \\    
$\strut \hspace{2cm}  $ The above, and   Lemma \ref{l:assrt:unaffect}     give that\\
$\satDAssertFrom M  \sigma k   \Ain[\overline { \alpha_1/v_1}]  [\overline { \alpha_{6,1}/v_{6,1}}]  [\overline { \alpha_{6,2}/v_{6,2}}]$ 
\\ The assertion above is variable-free.Therefore,  by Lemma \ref{l:aux:aux:caller:called} part \ref{l:aux:aux:caller:called:one} we also obtain\\ 
$\satDAssertFrom M   {\sigma'} k   {\Ain[\overline {\alpha_1/v_1}]  [\overline { \alpha_{6,1}/v_{6,1}}]  [\overline { \alpha_{6,2}/v_{6,2}}]}$ \\
With  \ref{l:assrt:unaffect}    the above gives \\
$\satDAssertFrom M  {\sigma'}  k  { \Ain[\overline{  \interpret {\sigma'} {v_1}/v_1}]  [\overline {  \interpret {\sigma} {v_6}/v_6 }]}$\\
 By Lemma \ref{l:assrt:unaffect}   , we obtain \\
$\satDAssertFrom M   {\sigma'}  k  { \Ain   [\overline {  \interpret {\sigma} {v_6}/v_6 }]}$
\item
Assume\\
$ \satDAssertFrom M  \sigma k    {\PushASLong  {(\overline {v_3})} {(\Ain[\overline {v_3/v_1}])}}$. 
 \ \ \ \ \ By Lemma \ref{l:assrt:unaffect}    this  implies that \\
 $ \satDAssertFrom M  \sigma k    {(\PushASLong  {(\overline {v_3})} {(\Ain[\overline {v_3/v_1} ])}) [\overline { {\alpha_3}/v_3 }] }$
 \ \ \ \ \ which implies that\\
$ \satDAssertFrom M  \sigma k   { \PushASLong  {(\overline {\alpha_3})} {(\Ain[\overline {v_3/v_1} ][\overline { {\alpha_3}/v_3 }]) }  }$
      \ \ \ \ \ By (*) from above we have\\
$\satDAssertFrom M  \sigma k  { \PushASLong  {(\overline {\alpha_3})}  {( \Ain[\overline  {\alpha_1 /v_1}]  [\overline { \alpha_{6,1}/v_{6,1}}] )}}$
 \\
$\strut \hspace{2cm}  $The above, and   Lemma \ref{l:assrt:unaffect}     give that\\
 $\satDAssertFrom M  \sigma k  {({ \PushASLong  {(\overline {\alpha_3})}  {( \Ain[\overline  {\alpha_1 /v_1} ] [\overline { \alpha_{6,1}/v_{6,1}}] )}}
 ) [\overline  {\alpha_{6,2} /v_{6,2}} ] }$
  \\
$\strut \hspace{2cm}  $ And the above gives\\
 $\satDAssertFrom M  \sigma k  { \PushASLong  {(\overline {\alpha_3})}  {( \Ain[\overline  {\alpha_1 /v_1} ] [\overline { \alpha_{6,1}/v_{6,1}}]  [\overline  {\alpha_{6,2} /v_{6,2}} ] ) }} $
  \\
The assertion above is variable-free.Therefore,  by Lemma \ref{l:aux:aux:caller:called} part \ref{l:aux:aux:caller:called:two} we also obtain\\ 
$ M  \sigma' \models  {   {  \Ain[\overline  {\alpha_1 /v_1} ] [\overline { \alpha_{6,1}/v_{6,1}}]  [\overline  {\alpha_{6,2} /v_{6,2}} ] }} $  
  \\
We apply Lemma  \ref{l:assrt:unaffect}   , and  Lemma \ref{l:assrt:unaffect}   , and obtain \\
$ M,   {\sigma'}  \models  { \Ain   [\overline {  \interpret {\sigma} {v_6}/v_6 }]}$

\end{enumerate}

\end{proof}

\begin{lemma}[From caller to any called]
\label{l:calls:external}

For any assertion $\Ain$, states $\sigma$, $\sigma'$,  
variables   $\overline{v_3}$  
statement $stmt$, and frame $\phi$.

\noindent
If 
\begin{enumerate}[(i)]
\item 
\label{l:calls:ext:re:one}
$ \Pos \Ain$,  
\item 
\label{l:callsext:re:two}
$\fv(\Ain) =  \emptyset$,
\item
\label{l:calls:ext:re:three}
$\sigma'=\sigma  \pushSymbol \phi  \ \ \ \  \wedge\ \ \  Rng(\phi)= \overline{\interpret {\sigma} {v_3} },$
\end{enumerate}

\noindent
then

\begin{enumerate}[a.]
\item
\label{l:calls:callee:three}
$ \satDAssertFrom M  \sigma k    {\PushASLong  {(\overline {v_3})} {\Ain} }  \hfill \Longrightarrow  \ \ \  \   \ \ \  \   M, \sigma' \models   {\Ain} $.
 
\item
\label{l:calls:callee:four}
$\satDAssertFrom M  \sigma k    {(\Ain  \wedge  ({\PushASLong  {(\overline {v_3})} {\Ain} } ) )}    \hfill \Longrightarrow  \ \ \  \   \satDAssertFrom M  {\sigma'} k  {\Ain}$

\end{enumerate}

\end{lemma}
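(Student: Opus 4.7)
The plan is to reduce this lemma to the variable-free case already handled by Lemma~\ref{l:aux:aux:caller:called} (parts~\ref{l:aux:aux:caller:called:two} and~\ref{l:aux:aux:caller:called:three}). Since $\fv(\Ain)=\emptyset$ by hypothesis~(\ref{l:callsext:re:two}), the only free variables appearing in $\PushASLong{(\overline{v_3})}{\Ain}$ come from the adaptation itself, namely $\overline{v_3}$. This is exactly the structural simplification that makes the external-call case easier than Lemma~\ref{l:calls}: there is no need for the bookkeeping around $\overline{v_1},\overline{v_2},\overline{v_4},\overline{v_6}$ and the substitution $[\overline{v_3/v_1}]$.

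First, I would use Lemma~\ref{l:assrt:unaffect} to replace each $v_3 \in \overline{v_3}$ by its interpretation $\interpret{\sigma}{v_3}$ in the given assertion, so that the assumption $\satDAssertFrom{M}{\sigma}{k}{\PushASLong{(\overline{v_3})}{\Ain}}$ becomes $\satDAssertFrom{M}{\sigma}{k}{(\PushASLong{(\overline{v_3})}{\Ain})[\overline{\interpret{\sigma}{v_3}/v_3}]}$. Second, using Lemma~\ref{l:sybbs:adapt} together with $\fv(\Ain)=\emptyset$, the substitution commutes with adaptation and is absorbed into the adaptation argument, giving $\satDAssertFrom{M}{\sigma}{k}{\PushASLong{(\overline{\interpret{\sigma}{v_3}})}{\Ain}}$. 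Now the assertion under \strong satisfaction is variable-free, and by hypothesis~(\ref{l:calls:ext:re:three}) we have $Rng(\phi) = \overline{\interpret{\sigma}{v_3}}$, so the side condition $Rng(\phi)\subseteq\overline{\alpha}$ of Lemma~\ref{l:aux:aux:caller:called}(\ref{l:aux:aux:caller:called:two}) is met with $\overline{\alpha}:=\overline{\interpret{\sigma}{v_3}}$. Applying that lemma yields $M,\sigma'\models\Ain$, which is part~(a). For part~(b), the same sequence of steps rewrites $\Ain\wedge\PushASLong{(\overline{v_3})}{\Ain}$ into a variable-free form and invokes Lemma~\ref{l:aux:aux:caller:called}(\ref{l:aux:aux:caller:called:three}) instead, yielding the required $\satDAssertFrom{M}{\sigma'}{k}{\Ain}$.

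The main obstacle I anticipate is verifying the substitution/adaptation commutation, i.e.\ that $(\PushASLong{(\overline{v_3})}{\Ain})[\overline{\interpret{\sigma}{v_3}/v_3}]$ is literally $\PushASLong{(\overline{\interpret{\sigma}{v_3}})}{\Ain}$. This requires that $\overline{v_3}$ does not clash with free variables of $\Ain$, which is trivially ensured by $\fv(\Ain)=\emptyset$, and it uses the defining equations of $\pushSymbolInText$ from Def.~\ref{def:push} — specifically that $\PushASLong{y}{\inside{\re}}=\protectedFrom{\re}{\overline{y}}$ treats the $\overline{y}$ as a substitutable list of arguments. Apart from this small, essentially syntactic check (discharged by a straightforward structural induction on $\Ain$ using Lemma~\ref{l:sybbs:adapt} as the inductive step), everything else is mechanical unfolding of definitions and appeals to Lemmas~\ref{l:assrt:unaffect} and~\ref{l:aux:aux:caller:called}. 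No new semantic reasoning about protection or reachability is required beyond what is already packaged inside Lemma~\ref{l:aux:aux:caller:called}.
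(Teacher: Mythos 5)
Your proposal is correct and follows essentially the same route as the paper's own proof: apply Lemma~\ref{l:assrt:unaffect} to replace $\overline{v_3}$ by $\overline{\interpret{\sigma}{v_3}}$ inside the adapted assertion, observe the result is variable-free, and invoke Lemma~\ref{l:aux:aux:caller:called} with $\overline{\alpha}=\overline{\interpret{\sigma}{v_3}}=Rng(\phi)$. Your explicit appeal to Lemma~\ref{l:sybbs:adapt} for the substitution/adaptation commutation, and your citation of part~(\ref{l:aux:aux:caller:called:three}) rather than part~(\ref{l:aux:aux:caller:called:two}) for the deep-satisfaction conclusion in case~(b), are in fact slightly more careful than the paper's text.
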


\begin{proof}
\begin{enumerate}[a.]
\item
Assume that \\
$ \satDAssertFrom M  \sigma k    {\PushASLong  {(\overline {v_3})} {\Ain} }$\\
	By Lemma \ref{l:assrt:unaffect}    this  implies that \\
$ \satDAssertFrom M  \sigma k    {\PushASLong  {(\interpret {\sigma} {v_3})} {\Ain} }$.\\
We now have a variable-free assertion, and by Lemma \ref{l:aux:aux:caller:called}, part \ref{l:aux:aux:caller:called:two}, we obtain\\
$ M,  \sigma'    \models \Ain$.
\item
Assume that \\
$ \satDAssertFrom M  \sigma k    {\Ain \wedge \PushASLong  {(\overline {v_3})} {\Ain} }$\\
	By Lemma \ref{l:assrt:unaffect}    this  implies that \\
$ \satDAssertFrom M  \sigma k    {\Ain \wedge \PushASLong  {(\interpret {\sigma} {v_3})} {\Ain} }$.\\
We now have a variable-free assertion, and by Lemma \ref{l:aux:aux:caller:called}, part \ref{l:aux:aux:caller:called:two}, we obtain\\
$ \satDAssertFrom M  {\sigma'} k  \models \Ain$.
\end{enumerate}

\end{proof}

\paragraph{Discussion of Lemma \ref{l:calls:external}} In this lemma,  as in  lemma \ref{l:calls}, 
  $\sigma$ stands for the last state before entering the method body, and $\sigma'$ for the first state after entering the method body.
$\Ain$ stands for a module invariant in which all free variables have been substituted by addresses.
The lemma is intended for external calls, and therefore we have no knowledge of the method's formal parameters.
The variables   $\overline {v_3}$ stand for the actual parameters of the call, and therefore 
 $\overline {\interpret {\sigma} {v_3}}$ is the range of the new frame.
Therefore if (\ref{l:calls:callee:three})   the adapted version,
 ${\PushASLong  {(\overline {v_3})} {\Ain} }$, holds  at the call point,
 then the unadapted version, $\Ain$  holds right after pushing $\phi$ onto the stack.
 Notice that  even though the premise of (\ref{l:calls:callee:three}) requires \strong satisfaction, the conclusion promises
 only weak satisfaction.
 Moreover, if (\ref{l:calls:callee:four})   the adapted as well as the unadapted version,
 $\Ain \wedge {\PushASLong  {(\overline {v_3})} {\Ain} }$
 holds  at the call point,
 then the unadapted version, $\Ain$  holds right after pushing $\phi$ onto the stack.
Notice the difference in the conclusion in (\ref{l:calls:callee:three}) and (\ref{l:calls:callee:four}): in the first case we have  shallow satisfaction, while in the second case we only have \strong satisfaction.


\begin{lemma}[From internal called to caller]
\label{l:calls:return}
\label{l:calls:return:deep}

For any assertion $\Aout$, states $\sigma$, $\sigma'$, variables $res$, $u$ variable sequences  $\overline{v_1}$, $\overline{v_3}$,   $\overline{v_5}$,  and statement $stmt$.

\noindent
 If 
 
\begin{enumerate}[(i)]
\item 
\label{l:calls:retrun:one}
$ \Pos \Aout$,  
\item 
\label{l:callers:r:two}
$\fv(\Aout) \subseteq  \overline{v_1} $,
\item
$  \overline {\interpret {\sigma'} {v_5}}, {\interpret {\sigma} {res}} \subseteq \LRelevantO \sigma  \ \ \wedge \ \  M, \sigma' \models \intThis$.
 \item
\label{l:callers:three}
\label{l:calls:return:three}
$\sigma'= (\sigma \popSymbol)[ u  \mapsto  {\interpret {\sigma} {res}}][\prg{cont}\!\mapsto\! stmt]
\ \ \wedge \ \   \overline {\interpret {\sigma} {v_1} }\, =\,  \overline {\interpret {\sigma'}  {v_3} }$.
  \end{enumerate}
  
\noindent
then

\begin{enumerate} [a.]
\item
\label{l:calls:caller:one}
$\satDAssertFrom M  \sigma k   {\Aout \, \wedge\,  {(\PushASLong  {res} {\Aout})} }\ \  \wedge\ \ \DepthSt {\sigma'} \geq k  \ 
 \hfill \Longrightarrow  \ \ \  \   \satDAssertFrom M  {\sigma'} k   {\Aout[\overline {v_3 / v_1}]  }$ .
 \item
 \label{l:calls:caller:two}
 $M, \sigma \models  {\Aout}\ 
 \hfill \Longrightarrow  \ \ \  \  %
 \satDAssertFrom M  {\sigma'} k   {\PushASLong  {\overline {v_5}}    {(\Aout[\overline { v_3/v_1}])} }$. 
\end{enumerate}

\end{lemma}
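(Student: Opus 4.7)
/ \ref{l:calls:return:deep}}

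The plan is to mirror the structure used in the proof of Lemma \ref{l:calls}: first eliminate all free variables by replacing them with their values in $\sigma$ (via Lemma \ref{l:assrt:unaffect}), then transfer the resulting variable-free $Pos^+$ assertion across the pop-step using Lemma \ref{l:aux:aux:called:caller}, and finally reintroduce the free variables in the shape demanded by the conclusion, exploiting the correspondence $\overline{\interpret{\sigma}{v_1}} = \overline{\interpret{\sigma'}{v_3}}$ given in hypothesis (\ref{l:calls:return:three}). Let me abbreviate $\overline{\alpha_1} \triangleq \overline{\interpret{\sigma}{v_1}}$, $\overline{\alpha_3} \triangleq \overline{\interpret{\sigma'}{v_3}}$, $\overline{\alpha_5} \triangleq \overline{\interpret{\sigma'}{v_5}}$ and $\alpha_r \triangleq \interpret{\sigma}{res}$; by (iv) we have $\overline{\alpha_1} = \overline{\alpha_3}$.

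For part (a): from $\satDAssertFrom{M}{\sigma}{k}{\Aout \wedge \PushASLong{res}{\Aout}}$ together with Lemma \ref{l:assrt:unaffect} I obtain $\satDAssertFrom{M}{\sigma}{k}{\Aout[\overline{\alpha_1/v_1}] \wedge \PushASLong{\alpha_r}{\Aout[\overline{\alpha_1/v_1}]}}$, which is now variable-free. Since $\Pos\Aout$ implies $\Pos{\Aout[\overline{\alpha_1/v_1}]}$, and since condition (iii) gives $\alpha_r \in \LRelevantO{\sigma}$ and $M,\sigma'\models\intThis$ together with the pop structure (\ref{l:callers:three}), I can apply Lemma \ref{l:aux:aux:called:caller}(a) to obtain $\satDAssertFrom{M}{\sigma'}{k}{\Aout[\overline{\alpha_1/v_1}]}$. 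Using $\overline{\alpha_1}=\overline{\alpha_3}$ this equals $\satDAssertFrom{M}{\sigma'}{k}{(\Aout[\overline{v_3/v_1}])[\overline{\alpha_3/v_3}]}$, and Lemma \ref{l:assrt:unaffect} (applied in $\sigma'$ this time, since $\overline{\alpha_3}=\overline{\interpret{\sigma'}{v_3}}$) yields the desired $\satDAssertFrom{M}{\sigma'}{k}{\Aout[\overline{v_3/v_1}]}$.

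For part (b): from $M,\sigma\models\Aout$ and Lemma \ref{l:assrt:unaffect} I obtain $M,\sigma\models\Aout[\overline{\alpha_1/v_1}]$, a variable-free $Pos^+$ assertion. Condition (iii) further gives $\overline{\alpha_5}\subseteq\LRelevantO{\sigma}$, so Lemma \ref{l:aux:aux:called:caller}(b) produces $\satDAssertFrom{M}{\sigma'}{k}{\PushASLong{\overline{\alpha_5}}{\Aout[\overline{\alpha_1/v_1}]}}$. Then, using $\overline{\alpha_1}=\overline{\alpha_3}$ and Lemma \ref{l:sybbs:adapt} (which requires $\overline{v_3}\#\overline{v_5}$, a standard Barendregt-style side condition obtainable by renaming via Def. \ref{d:promises}), I push the substitution inside the adaptation to get $\satDAssertFrom{M}{\sigma'}{k}{(\PushASLong{\overline{v_5}}{\Aout[\overline{v_3/v_1}]})[\overline{\alpha_5/v_5}][\overline{\alpha_3/v_3}]}$, and one final application of Lemma \ref{l:assrt:unaffect} (using the values of $\overline{v_5}$ and $\overline{v_3}$ in $\sigma'$) strips the substitutions and yields the claim.

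The main obstacle is the substitution bookkeeping, just as in Lemma \ref{l:calls}: rewriting $\Aout[\overline{\alpha_1/v_1}]$ as $(\Aout[\overline{v_3/v_1}])[\overline{\alpha_3/v_3}]$ requires that $\overline{v_3}$ does not capture free variables of $\Aout$ other than the intended ones, which amounts to invoking Lemma \ref{l:sfs:eight} with $\overline{x}\triangleq\overline{v_1}$, $\overline{y}\triangleq\overline{v_3}$ and the evaluations in $\sigma,\sigma'$. A second delicate point is the commutation in part (b) of the $\pushSymbolInText$-operator with substitution, which needs the disjointness hypothesis of Lemma \ref{l:sybbs:adapt}; without a safe-renaming discipline this can fail, so the proof should explicitly invoke an $\alpha$-renaming of $\Aout$ to ensure $\overline{v_5}$ is fresh relative to $\overline{v_1},\overline{v_3}$ before applying \ref{l:sybbs:adapt}. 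Beyond these two points, the remaining steps are direct applications of the already established auxiliary lemmas.
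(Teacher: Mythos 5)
Your proposal is correct and follows essentially the same route as the paper's proof: strip free variables via Lemma \ref{l:assrt:unaffect}, transfer the resulting variable-free $Pos$ assertion across the pop using Lemma \ref{l:aux:aux:called:caller} (part a for the conjunction with the adapted form, part b for the adapted conclusion), and reintroduce variables using $\overline{\interpret{\sigma}{v_1}} = \overline{\interpret{\sigma'}{v_3}}$. Your extra care about the substitution bookkeeping (Lemma \ref{l:sfs:eight}) and the commutation of adaptation with substitution (Lemma \ref{l:sybbs:adapt}) only makes explicit steps the paper leaves implicit.
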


\paragraph{Discussion of Lemma \ref{l:calls:return}}  
 State  $\sigma$ stands for the last state in the method body, and $\sigma'$ for the first state after exiting  the method call.
$\Aout$ stands for a method postcondition.
The lemma is intended for internal calls, and therefore we know the method's formal parameters.
The variables   $\overline {v_1}$ stand for the formal  parameters of the method, and  $\overline {v_3}$ stand for the actual parameters of the call.
Therefore the   formal parameters of the called have the same values as the actual parameters in the caller
  $  \overline {\interpret {\sigma} {v_1} }\, =\,  \overline {\interpret {\sigma'}  {v_3} }$.
Therefore   ( \ref{l:calls:caller:one})  and  (\ref{l:calls:caller:two})  
promise that if the postcondition $\Aout$ holds before popping the frame, then it also holds after popping frame after replacing the 
the formal parameters by the actual parameters $\Aout[\overline{v_3/v_1}]$.
As in earlier lemmas, there is an important difference between  (\ref{l:calls:caller:one}) and (\ref{l:calls:caller:two}):
In (\ref{l:calls:caller:one}), we require \emph{deep satisfaction at the called}, 
and obtain at the deep satisfaction of the \emph{unadapted} version ($\Aout[\overline{v_3/v_1}]$) at the return point;
while in (\ref{l:calls:caller:two}), we only require \emph{shallow satisfaction at the called}, 
and obtain deep satisfaction of the \emph{adapted} version (${\PushASLong  {\overline {v_5}}    {(\Aout[\overline { v_3/v_1}])} }$),
at the return point.

\begin{proof}$ ~ $

We use the following short hands: $\alpha$ as   $\overline {\interpret {\sigma} {res}}$, 
\ \ \ \ 
$\overline {\alpha_1}$   for $\overline {\interpret {\sigma} {v_1}}$, \ \ \ 
$\overline {\alpha_5}$ as short for $\overline {\interpret {\sigma'} {v_5}}$.

\newcommand{\substOne}{[\overline{\alpha_1/v_1}]}
\newcommand{\substFive}{[\overline{\alpha_5/v_5}]}

\begin{enumerate}[a.]
\item
Assume that \\
$\satDAssertFrom M  \sigma k   {\Aout  \wedge  {\PushASLong  {res} {\Aout} } } $\\
	By Lemma \ref{l:assrt:unaffect}    this  implies that \\
$ \satDAssertFrom M  \sigma k   {\Aout\substOne  \wedge  {\PushASLong  {\alpha} {(\Aout\substOne)} } }$.\\
We now have a variable-free assertion,
and by Lemma \ref{l:aux:aux:called:caller}, part \ref{l:aux:aux:called:caller:one}, we obtain\\
$ \satDAssertFrom M  \sigma k   {\Aout\substOne} $.\\
By Lemma \ref{l:assrt:unaffect}   , and because $ \overline {\interpret {\sigma} {v_1} }\, =\,  \overline {\interpret {\sigma'}  {v_3} }$  this  implies that \\
$ \satDAssertFrom M  \sigma k   {\Aout[\overline { v_3/v_1}] } $.
\item
Assume that \\
$M,  \sigma\models \Aout  $\\
By Lemma \ref{l:assrt:unaffect}    this  implies that \\
$M,  \sigma\models \Aout\substOne$\\
We now have a variable-free assertion,
and by Lemma \ref{l:aux:aux:called:caller}, part \ref{l:aux:aux:called:caller:two}, we obtain\\
$ \satDAssertFrom M  {\sigma'} k    {\PushASLong {\overline \alpha_5} {\Aout\substOne}}$\\
By Lemma \ref{l:assrt:unaffect}   , and because $ \overline {\interpret {\sigma} {v_1} }\, =\,  \overline {\interpret {\sigma'}  {v_3} }$  and $\alpha_5 = \overline {\interpret {\sigma'} {v_5}}$, we obtain \\
$ \satDAssertFrom M  {\sigma'} k    {\PushASLong {\overline v_5} {\Aout[\overline {v_3/v_1}]}}$
\end{enumerate}

\end{proof}

\begin{lemma}[From any called to caller]
\label{l:calls:return:ext}

For any  assertion $\Aout$, states $\sigma$, $\sigma'$, variables $res$, $u$ variable sequence   $\overline{v_5}$,  and statement $stmt$.

\noindent
 If 
 
\begin{enumerate}[(i)]
\item 
$ \Pos \Aout$,  
\item 
 $\fv(\Aout) = \emptyset$,
\item
$  \overline {\interpret {\sigma'} {v_5}}, {\interpret {\sigma} {res}} \subseteq \LRelevantO \sigma $.
 \item
$\sigma'= (\sigma \popSymbol)[ u  \mapsto  {\interpret {\sigma} {res}}][\prg{cont}\!\mapsto\! stmt]$.
  \end{enumerate}
 \noindent
then

\begin{enumerate}[a.]
\item
 \label{l:ext:return:one}
 $M, \sigma \models  {\Aout}\  
 \hfill \Longrightarrow  \ \ \  \  \satDAssertFrom M  {\sigma'} k  {\PushASLong  {(\overline {v_5})}    {\Aout}}$.

\item
\label{l:ext:return:two}
$\satDAssertFrom M  \sigma k   \Aout  \ \  \wedge\ \ \DepthSt {\sigma'} \geq k  \ 
 \hfill \Longrightarrow  \ \ \  \   \satDAssertFrom M  {\sigma'} k   {\Aout \, \wedge\, {\PushASLong  {(\overline {v_5})}    {\Aout}}} $

\end{enumerate}

\end{lemma}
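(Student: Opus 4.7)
}
The plan is to prove each part by reducing to the variable-free auxiliary lemma \ref{l:aux:aux:called:caller}, exactly as was done for the internal case in Lemma \ref{l:calls:return}, but exploiting the simpler hypothesis $\fv(\Aout)=\emptyset$ so that no substitution bookkeeping is required.

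For Part (a), start from $M,\sigma\models\Aout$. Since $\Aout$ is already variable-free and positive, Lemma \ref{l:aux:aux:called:caller}(b) applies directly with $\overline{\alpha}\triangleq\overline{\interpret{\sigma'}{v_5}}$; the required side-condition $\overline{\alpha}\subseteq \LRelevantO{\sigma}$ is exactly premise~(iii). This yields $\satDAssertFrom{M}{\sigma'}{k}{\PushASLong{(\overline{\alpha})}{\Aout}}$. A final appeal to Lemma \ref{l:assrt:unaffect} replaces the addresses $\overline{\alpha}$ by the variables $\overline{v_5}$ (whose interpretation in $\sigma'$ is precisely $\overline{\alpha}$), delivering the conclusion.

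For Part (b), split the conjunction. The second conjunct $\PushASLong{(\overline{v_5})}{\Aout}$ follows from Part (a) applied to the shallow satisfaction of $\Aout$ in $\sigma$ obtained by instantiating $\satDAssertFrom{M}{\sigma}{k}{\Aout}$ at the top frame. For the first conjunct $\satDAssertFrom{M}{\sigma'}{k}{\Aout}$, argue frame-by-frame: let $\DepthSt{\sigma}=n$, so $\DepthSt{\sigma'}=n-1\geq k$. For every $j$ with $k\leq j\leq n-2$, $\RestrictTo{\sigma}{j}=\RestrictTo{\sigma'}{j}$ and the corresponding conjunct is inherited for free from $\satDAssertFrom{M}{\sigma}{k}{\Aout}$. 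The delicate case is $j=n-1$, where the top frame of $\RestrictTo{\sigma'}{n-1}$ is the caller's frame $\phi_{n-1}$ extended with the binding $u\mapsto\interpret{\sigma}{\prg{res}}$ and rebound continuation.

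The hard part will be showing preservation at this last frame. Because $\fv(\Aout)=\emptyset$, the continuation change is invisible (Lemma \ref{l:assrt:unaffect}) and the new variable binding can only influence satisfaction through $\inside{\_}$ sub-assertions, which depend exclusively on the heap (unchanged) and on $\LRelevantO{\RestrictTo{\sigma'}{n-1}}$. I plan to show
\[
\LRelevantO{\RestrictTo{\sigma'}{n-1}} \;=\; \LRelevantO{\RestrictTo{\sigma}{n-1}},
\]
by observing that $\mathrm{Rng}(\phi_{n-1}')=\mathrm{Rng}(\phi_{n-1})\cup\{\interpret{\sigma}{\prg{res}}\}$ and, using premise~(iii) together with well-formedness (Def.~\ref{def:wf:state}) and Lemma~\ref{rel:smaller}, that $\interpret{\sigma}{\prg{res}}\in\LRelevantO{\sigma}\subseteq\LRelevantO{\RestrictTo{\sigma}{n-1}}$, so the added binding contributes no new reachable objects. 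With this reachability equality in hand, $\Aout$ transfers from $\RestrictTo{\sigma}{n-1}$ to $\RestrictTo{\sigma'}{n-1}$ by a structural induction on the $\Pos{\_}$ definition (essentially the same case analysis that underpins Lemma~\ref{l:aux:caller:called}). This reachability lemma is not stated in the paper in exactly this form, so I expect it will need to be extracted as a small auxiliary observation; once established, conjoining the two results closes Part~(b).
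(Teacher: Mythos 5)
Your Part (a) is exactly the paper's proof: since $\fv(\Aout)=\emptyset$, you feed $\Aout$ straight into the variable-free auxiliary lemma (Lemma \ref{l:aux:aux:called:caller}, the clause concluding $\satDAssertFrom{M}{\sigma'}{k}{\PushASLong{(\overline{\alpha})}{\Aout}}$), discharge the side condition with premise (iii), and finish with Lemma \ref{l:assrt:unaffect} to restore the variables $\overline{v_5}$. Nothing to add there.

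For Part (b) the paper merely says ``similar argument to Lemma \ref{l:calls:return}(b)'', i.e.\ it routes the unadapted conjunct through Lemma \ref{l:aux:aux:called:caller}, part \ref{l:aux:aux:called:caller:one}, whose bridge hypothesis is $M,\sigma\models\PushASLong{\interpret{\sigma}{\prg{res}}}{\Aout}$. You instead argue the unadapted conjunct directly, frame by frame, from the observation that the binding $u\mapsto\interpret{\sigma}{\prg{res}}$ does not enlarge $\LRelevantO{\,\cdot\,}$ at the caller frame. That reachability equality is fine (Lemma \ref{rel:smaller} plus transitivity of reachability), but the step you build on it has a hole: satisfaction of $\inside{\alpha}$ is \emph{not} a function of the heap and the locally reachable set alone. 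Definition \ref{def:chainmail-protection}(2)(a) additionally requires, when the receiver of the top frame is external, that $\alpha$ is not the value of \emph{any variable} of that frame. Adding the binding $u\mapsto\interpret{\sigma}{\prg{res}}$ can therefore falsify $\inside{\alpha}$ at the caller even though the heap and reachable set are unchanged, namely when $\interpret{\sigma}{\prg{res}}=\alpha$ and the caller's receiver is external; your structural induction on $\Pos{\_}$ gets stuck at exactly this clause. The paper's bridge condition avoids the problem because $\PushASLong{\interpret{\sigma}{\prg{res}}}{\inside{\alpha}}=\protectedFrom{\alpha}{\interpret{\sigma}{\prg{res}}}$ carries the disequality $\alpha\neq\interpret{\sigma}{\prg{res}}$ explicitly (Def.~\ref{def:chainmail-protection-from}(1)(a)). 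To repair your route you must separately establish that disequality from $M,\sigma\models\inside{\alpha}$ --- in the intended external-callee use it follows from clause (2)(a) applied at $\sigma$ itself, since $\prg{res}\in dom(\sigma)$ --- or simply derive $M,\sigma\models\PushASLong{\interpret{\sigma}{\prg{res}}}{\Aout}$ first and fall back on Lemma \ref{l:aux:aux:called:caller}, part \ref{l:aux:aux:called:caller:one}, as the paper does.
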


\begin{proof}$ ~ $

\begin{enumerate}[a.]
\item
Assume that \\
$\strut \ \ \ M, \sigma \models  {\Aout}$\\
Since $\Aout$ is a variable-free assertion,  by   Lemma \ref{l:aux:aux:called:caller}, part \ref{l:aux:aux:called:caller:one}, we obtain\\  
$\strut \ \ \ \satDAssertFrom M  {\sigma'} k   {\PushASLong  {(\overline {\interpret {\sigma'} {v_5}})}    {\Aout}}$.\\
By Lemma \ref{l:assrt:unaffect}   ,  we obtain \\
$\strut \ \ \ \satDAssertFrom M  {\sigma'} k  {\PushASLong  {(\overline {v_5})}    {\Aout}}$
\item
Similar argument to the proof of Lemma \ref{l:calls:return}, part (b).
\end{enumerate}

\end{proof}

\paragraph{Discussion of lemma  \ref{l:calls:return:ext}}, Similarly to  lemma \ref{l:calls:return},  
in this lemma, state  $\sigma$ stands for the last state in the method body, and $\sigma'$ for the first state after exiting  the method call.
$\Aout$ stands for a method postcondition.
The lemma is meant to apply to external calls, and therefore, we do not know the method's formal parameters, 
$\Aout$ is meant to stand for a module invariant where all the free variables have been substituted by addresses --
\ie $\Aout$ has no free variables.
The variables $\overline {v_3}$ stand for the actual parameters of the call.
Parts    (\ref{l:ext:return:one})  and  (\ref{l:ext:return:two})
promise that if the postcondition $\Aout$ holds before popping the frame, then it its adapted version 
also holds after popping frame (${\PushASLong  {\overline {v_5}}    {\Aout}}$).
 As in earlier lemmas, there is an important difference between   (\ref{l:ext:return:one})  and  (\ref{l:ext:return:two})
In  (\ref{l:ext:return:one}),  we require \emph{shallow satisfaction at the called}, 
and obtain   deep satisfaction of the \emph{adapted} version (${\PushASLong  {\overline {v_5}}    {\Aout}}$) at the return point;
while in (\ref{l:ext:return:two}), we  require \emph{deep satisfaction at the called}, 
and obtain deep satisfaction of the   \emph{conjuction}  of the \emph{unadapted} with the \emph{adapted} version (${\Aout \, \wedge\, {\PushASLong  {\overline {v_5}}    {\Aout}}}$),
at the return point.

\subsection{\textbf{Use of Lemmas \ref{l:calls}-\ref{l:calls:external} }}

As we said earlier, Lemmas \ref{l:calls}-\ref{l:calls:external} are used to prove the soundness of the Hoare logic rules for method calls.

In the proof of soundness of {\sc{Call\_Int}}. we will use Lemma \ref{l:calls} part (\ref{l:calls:callee:one}) and Lemma \ref{l:calls:return} part (\ref{l:calls:caller:one}).  
In the proof of soundness of {\sc{Call\_Int\_Adapt}} we will use  Lemma \ref{l:calls} part (\ref{l:calls:callee:two}) and Lemma \ref{l:calls:return} part (\ref{l:calls:caller:two}).
In the proof of soundness of {\sc{Call\_Ext\_Adapt}} we will use  Lemma \ref{l:calls:external} part (\ref{l:calls:callee:three}) and Lemma \ref{l:calls:return:ext} part (\ref{l:ext:return:one}).
And finally, in the proof of soundness of {\sc{Call\_Ext\_Adapt\_Strong}} we will use  Lemma \ref{l:calls:external} part (\ref{l:calls:callee:four}) and Lemma \ref{l:calls:return:ext} part (\ref{l:ext:return:two}).

\newcommand{\SPSF}{\SPS \ \ \ \ \ \ \ }
\newcommand{\Apr}{A_{pr}}

\subsection{Proof of Theorem \ref{t:quadruple:sound} -- part (A) }
\label{s:app:proof:sketch;quadruples}
\noindent
\vspace{.2cm}
  {\textbf{Begin Proof}}

\noindent
Take any $M$, $\Mtwo$, with\\ 
$\strut \ \ \hspace{2.3cm} \ \ (1) \ \ \vdash M $.
\\
We will prove that\\
$\strut \ \ \hspace{2.3cm} \ \ (*)\ \ \forall \sigma, A, A', A''.$\\
$\strut \ \ \hspace{2.3cm} \ \ \ \ \  \ \ [ \ M\ \vdash\  \quadruple {A} {\sigma.\prg{cont}} {A'} {A''}  \ \ \Longrightarrow \ \    M\ \modelsD\  \quadruple {A} {stmt} {A'} {A''}\ ]$.\\
by induction on the well-founded ordering  $\_ \ll_{M,\Mtwo}  \_$.
\\
Take $\sigma$, $A$, $A'$, $A''$, $\overline z$, $\overline w$, $\overline \alpha$, $\sigma'$, $\sigma''$  arbitrary. Assume that\\
$\strut \ \ \hspace{2.3cm} \ \ (2) \ \ M\ \vdash\  \quadruple {A} {\sigma.\prg{cont}} {A'} {A''}$\\
$\strut \ \ \hspace{2.3cm} \ \ (3) \ \ \overline {w}=\fv(A)\cap dom(\sigma), \  \ \ \  \overline z = {\fv(A)\setminus dom(\sigma)}$\footnote{Remember that $dom(\sigma)$ is the set of variables defined in the top frame of $\sigma$} \\
$\strut \ \ \hspace{2.3cm} \ \ (4) \ \ \satDAssertFrom M  \sigma k   {A{[{\overline { \alpha/z}]}}}$\\
To show\\
$\strut \ \ \hspace{2.3cm} \ \ (**)\ \ \  \ \    \leadstoBoundedStarFin {\Mtwo\cdot M}  {\sigma}  {\sigma'}\ \ \Longrightarrow\ \     \satDAssertFrom M  {\sigma'} k   {{A'}{[{\overline {\alpha/z}]}}}$\\
$\strut \ \ \hspace{2.3cm} \ \ (*\!*\!*) \ \ \,    \leadstoBoundedStar  {\Mtwo\cdot M}  {\sigma}  {\sigma''}\ \, \ \ \Longrightarrow\ \     \satDAssertFrom M  {\sigma''}  k  \extThis \rightarrow {{A''[{\overline {\alpha/z}}][\overline{\interpret \sigma {w}/w}]}}$
 
 \vspace{.2cm}
\noindent
We proceed by case analysis on the  rule applied in the last step of the proof of (2). We only describe some cases.

\begin{description} 
 
 \item[{\sc{mid}}] 
 
 By Theorem \ref{l:triples:sound}.

\newcommand{\SPS}{\strut \ \ \hspace{0.5cm} \ \ }
 
\item[{\sc{sequ}}] 
Therefore, there exist statements $stmt_1$ and $stmt_2$, and assertions  $A_1$, $A_2$ and $A''$, so that $A_1\txteq A$, and $A_2 \txteq A'$, and $\sigma.\prg{cont}\txteq  stmt_1; stmt_2$,.
We apply lemma \ref{lemma:subexp}, and obtain that there exists an intermediate state $\sigma_1$. 
The proofs for  $stmt_1$ and $stmt_2$, and the intermediate state $\sigma_1$ are in the $\ll$ relation. 
Therefore, we can apply the inductive hypothesis.
 
  \item[{\sc{combine}}]  by induction hypothesis, and unfolding and folding the definitions
  
 \item[{\sc{consequ}}]  using Lemma \ref{l:shallow:scoped} part \ref{fourSD}  and axiom \ref{lemma:axiom:enc:assert:ul}

\renewcommand{\Apr}{A_{pr}}
\newcommand{\Apra}{A_{pra}}
\newcommand{\Apoa}{A_{poa}}
\item[{\sc{Call\_Int}}]
 
 Therefore, there exist $u$, $y_o$, $C$, $\overline y$,  $A_{pre}$, $A_{post}$, and $A_{mid}$, such that \\
 $\SPS (5) \ \ \sigma.\prg{cont}\txteq u:=y_0.m(\overline y)$,\\
$\SPS (6) \  \ \promises  M {\mprepostN {A_{pre}}{D}{m}{x}{D}{A_{post}} {A_{mid}}}$, \\
$\SPS (7) \  \ A \txteq y_0:D ,\overline {y:D}\ \wedge \  A_{pre}[y_0,\overline y/\prg{this},\overline x],$\\
$\SPSF  A'  \txteq A_{post}[y_0,\overline y, u/\prg{this},\overline x, \prg{res}],$\\ 
$\SPSF  A'' \txteq  A_{mid}$. 
\\
Also, \\
$\SPS (8) \ \ \leadstoBounded  {\Mtwo\cdot M}  {\sigma}  {\sigma_1}$, \\
 where \\
$\SPS (8a) \ \ \ \sigma_1\triangleq (\PushSLong { (\prg{this}\mapsto {\interpret{\sigma} {y_0}},{\overline{x \mapsto {\interpret{\sigma} {y}}}})}\sigma [\prg{cont}\mapsto stmt_m]$, \\ 
$\SPS (8b) \ \ \   \prg{mBody}(m,D,M)=\overline{y:D}\{\    stmt_m\ \}$ .\\
We define the shorthands:\\
$\SPS (9) \ \  \Apr \triangleq  \prg{this}:D,\overline{x:D}\, \wedge\, A_{pre}$.
\\
$\SPS (9a) \ \  \Apra \triangleq  \prg{this}:D,\overline{x:D}\, \wedge\, A_{pre}\, \wedge\, {\PushASLong  {(y_0,\overline {y})}  {A_{pre}} } $.
\\
$\SPS (9b) \ \  \Apoa \triangleq  A_{post}\, \wedge\, {\PushASLong {\prg{res}}  {A_{post}}} $.
\\
By (1), (6), (7), (9),  and definition of $\vdash M$ in Section \ref{sect:wf}  rule {\sc{Method}} and we obtain\\
$\SPS (10) \ \  M \vdash  \quadruple { \ \Apra \  } {\ stmt_m } {\ \Apoa\ } {A_{mid}}$.\\
From (8) we obtain\\  
$\SPS (11) \ \ (\Apra,\sigma_1,\Apoa, A_{mid}) \ll_{M,\Mtwo} (A,\sigma,A', A'')$
\\
In order to be able to apply the induction hypothesis, we need to prove something of the form $... \sigma_1 \models \Apr[../fv(\Apr)\setminus dom(\sigma_1)]$. To that aim we will apply Lemma   \ref{l:calls} part \ref{l:calls:callee:one} on (4), (8a) and (9). For this, we take
\\
$\SPS (12) \ \ \ \overline {v_1} \triangleq \prg{this},\overline x$, \ \ \ 
$\overline {v_2} \triangleq  \fv(\Apr)\setminus \overline {v_1}$, \ \ \ 
$\overline {v_3} \triangleq  y_0,\overline y$,\ \ \ 
$\overline {v_4} \triangleq  \fv(A)\setminus \overline {v_3}$
\\
These definitions give that
\\
$\SPS (12a) \ \  A \txteq \Apr[\overline {v_3/v_1}]$,
\\
$\SPS (12b)  \ \  \fv(\Apr)\ =\overline{v_1}; \overline {v_2}$.
\\ 
$\SPS (12c)  \ \  \fv(A)\ =\overline {v_3};\overline {v_4} $.
\\
With (12a), (12b), (12c), (  and Lemma \ref{l:sfs}  part (\ref{l:sfs:five}), we obtain that \\ 
$\SPS (12d)\ \    \overline {v_2} =   \overline{y_r}; \overline{v_4}$, \ \ \ \ where $\overline{y_r}\triangleq {\overline {v_2}}\cap  \overline {v_3}$
\\
Furthermore, (8a), and (12) give that:\\
$\SPS (12e) \ \  \overline { {\interpret {\sigma_1} {v_1}} = {\interpret {\sigma} {v_3}} }$
\\
Then, (4),  (12a),   (12c) and (12f)  give that\\
$\SPS (13) \ \ \satDAssertFrom M  {\sigma} {k}   {\Apr[\overline{v_3/v_1}][\overline{\alpha/z}]}$\\
Moreover, we have that $\overline z \# \overline {v_3}$. From   Lemma \ref{l:sfs}  part (\ref{l:sfs:seven}) we obtain $\overline z \# \overline {v_1}$. 
And, because $\overline \alpha$ are addresses wile $\overline {v_1}$ are variables, we also have that $\overline \alpha\#\overline {v_1}$.
These facts, together with   Lemma \ref{l:sfs}  part (\ref{l:sfs:six}) give that\\
$\SPS (13a) \ \    \Apr[\overline{v_3/v_1}][\overline{\alpha/z}] \txteq   \Apr[\overline{\alpha/z}]  [\overline{v_3/v_1}] $\\
From (13a) and (13), we obtain
\\
$\SPS (13b) \ \     \satDAssertFrom M  {\sigma} {k}   { \Apr[\overline{\alpha/z}]  [\overline{v_3/v_1}] }$
\\
From (4), (8a), (12a)-(12e)  we see that the requirements  
 of Lemma \ref{l:calls}  part  \ref{l:calls:callee:one} are satisfied where we take  $\Ain$ to be $\Apr[\overline{\alpha/z}]$.
 We use the definition of $y_r$ in (12d), and define\\
$\SPS (13c) \ \ \overline {v_6} \triangleq  y_r;  (\overline {v_4}\setminus \overline{z})$\ \ \ \ which, with (12d) also gives:\ $ \overline {v_2} = \overline{v_6}; \overline{z}$
\\
We apply   Lemma   \ref{l:calls} part \ref{l:calls:callee:one} on (13b), (13c)  and  obtain\\
$\SPS (14a) \ \  \satDAssertFrom M  {\sigma_1} {k}   { \Apr[\overline{\alpha/z}][ \overline {  \interpret{\sigma} {v_6}/v_6}] } $. \\
Moreover, we have the $M, \sigma_1 \models \intThis$. We apply lemma \ref{l:calls:return}.(\ref{l:calls:return}), and obtain\\
$\SPS (14b) \ \  \satDAssertFrom M  {\sigma_1} {k}   
      { \Apr[\overline{\alpha/z}][ \overline {  \interpret{\sigma} {v_6}/v_6}] 
          \, \wedge\,  
          { \PushASLong {(\prg{this}, \overline y)} {\Apr}   
       } } $. \\
 With similar re-orderings to earlier, we obain\\
$\SPS (14b) \ \  \satDAssertFrom M  {\sigma_1} {k}   {\Apra[\overline{\alpha/z}][\overline{\interpret{\sigma}{v_6}/v_6}]}$. 

\vspace{.1cm}

For the proof of $(**)$ as well as for the proof of $(*\!*\!*)$, we will want to apply the inductive hypothesis.   For this, we need to determine the value of  
$\fv(\Apr)  \setminus dom(\sigma_1)$, as well as the value of $\fv(\Apr)  \cap dom(\sigma_1)$. 
  This is what we do next. From (8a) we have that\\
$\SPS (15a) \ \  dom(\sigma_1)=\{ \prg{this}, \overline x \}$.\\
This, with (12)  and (12b)  gives  that\\
$\SPS (15b) \ \ \fv(\Apra)\cap dom(\sigma_1)= \overline {v_1}$.\\
$\SPS (15c) \ \  \fv(\Apra)  \setminus dom(\sigma_1) = \overline {v_2}$. \\
Moreover, (12d) and (13d) give that\\
$\SPS (15d) \ \  \fv(\Apra)  \setminus dom(\sigma_1) = \overline{z_2} = \overline {z}; \overline{v_6}$.

 \vspace{.3cm}
Proving $(**)$. Assume that   $\leadstoBoundedStarFin  {\Mtwo\cdot M}  {\sigma}  {\sigma'}$. Then, by the operational semantics, we obtain that 
there exists state $\sigma_1'$, such that \\
$\SPS (16) \ \ \leadstoBoundedStarFin  {\Mtwo\cdot M}  {\sigma_1}  {\sigma_1'}$ \\
$\SPS (17) \ \ \sigma'=(\sigma_1'\popSymbol)[u \mapsto {\interpret {\sigma_1'} {\prg{res}}}][\prg{cont}\mapsto \epsilon]$.
\\
We now apply the induction hypothesis on (14), (16), (15d), and obtain
\\ 
$\SPS (18) \ \  \satDAssertFrom M  {\sigma_1'} {k}   {(A_{post})[\overline{\alpha/z}][\overline{\interpret {\sigma} {v_6} /v_6}]}$.
\\
We now want to obtain something of the form $...\sigma' \models ...A'$. We now want to be able to apply   Lemma \ref{l:calls:return},  part \ref{l:calls:caller:one} on (18). Therefore, we  define
\\
$\SPS (18a) \ \  \Aout \triangleq  \Apoa[\overline{\alpha/z}][\overline{\interpret {\sigma} {v_6} /v_6}]$
\\
$\SPS (18b) \ \  \overline {v_{1,a}} \triangleq  \overline {v_1}, \prg{res}$,   \ \ \ \ 
$\overline {v_{3,a}} \triangleq  \overline {v_3}, u$.
\\
The wellformedness condition for specifications requires that $\fv (A_{post}) \subseteq  \fv (\Apr) \cup \{ \prg{res} \}$. 
This, together with  (9), (12d) and (18b) give  \\
$\SPS (19a) \ \  \fv(\Aout) \subseteq  \overline {v_{1,a}}$
\\
Also, by (18b), and (17), we have that   
\\
$\SPS (19b) \ \ \overline {\interpret {\sigma'} {v_{3,a}}  =  \interpret {\sigma_1'} {v_{1,a}} }$.
\\
From (4) we obtain that  $ k \leq \DepthSt {\sigma}$. From  (8a) we obtain that $ \DepthSt {\sigma_1} = \DepthSt {\sigma} +1$. From (16) we obtain that  $ \DepthSt {\sigma_1'} = \DepthSt {\sigma_1}$, and from (17) we obtain that $ \DepthSt {\sigma'} = \DepthSt {\sigma_1'} -1$. All this gives that:\\
$\SPS (19c) \ \ k \leq \DepthSt {\sigma'}$
\\
We now apply  Lemma \ref{l:calls:return},  part \ref{l:calls:caller:one}, and obtain  
\\
$\SPS (20) \ \  \satDAssertFrom M  {\sigma'} {k}   {\Aout[\overline{v_{3,a}/v_{1,a}}]}$.
\\
We expand the definition from (18a), and re-order the substitutions by a similar argument as in in step (13a), using Lemma    part (\ref{l:sfs:six}), and obtain\\
$\SPS (20a) \ \  \satDAssertFrom M  {\sigma'} {k}   {\Apoa[\overline{v_{3,a}/v_{1,a}}][\overline{\alpha/z}][\overline{\interpret {\sigma} {v_6} /v_6}]}$.
\\
By  (20a),   (18b),  and because  by Lemma \ref{l:var:unaffect} we have that $\overline{\interpret {\sigma} {v_6}}$=$\overline{\interpret {\sigma'} {v_6}}$, we  obtain
\\
$\SPS (21) \ \ \satDAssertFrom M  {\sigma'} {k}   {(\Apoa)[y_0,\overline y, u/\prg{this},\overline x, \prg{res}][\overline{\alpha/z}]}. $. 
\\
With (7) we conclude.\\
~ \\

 \vspace{.3cm}
Proving $(*\!*\!*)$. Take a $\sigma''$. Assume that\\
$\SPS (15) \ \ \leadstoBoundedStar   {\Mtwo\cdot M}  {\sigma}  {\sigma''}$\\
$\SPS (16) \ \ \ {\Mtwo\cdot M}, \sigma'' \models \extThis$.\\
Then, from (8) and (15) we also obtain that\\
$\SPS (15) \ \ \leadstoBoundedStar   {\Mtwo\cdot M}  {\sigma_1}  {\sigma''}$\\
By (10), (11) and application of the induction hypothesis on (13),  (14c), and (15), we obtain that\\
$\SPS (\beta') \ \  \satDAssertFrom M  {\sigma''} {k}   {A_{mid}[\overline{\alpha/z}][\overline{\interpret \sigma {w}/w}]}$.\\
and using (7) we are done.
\\
~ \\

\newcommand{\Ainv}{A_{inv}}
\newcommand{\Ainva}{A_{inva}}

\item[{\sc{Call\_Ext\_Adapt}}] is  in some parts, similar to {\sc{Call\_Int}}. 
We highlight the differences in \npgreen{green}.

Therefore, there exist $u$, $y_o$, $\overline C$, $D$, $\overline y$, ands $\Ainv$, such that \\
 $\SPS (5) \ \ \sigma.\prg{cont}\txteq u:=y_0.m(\overline y)$,\\
$\SPS (6) \  \ \npgreen{\promises  M {\TwoStatesN {\overline {x:C}} {\Ainv} } }$, \\
$\SPS (7) \  \ A \txteq y_0:\prg{external} ,\overline {x:C}\ \wedge \  \PushASLongG {({y_0, \overline y})} {\npgreen{\Ainv}} ,$ \\
$\SPS \ \ \ \ \ \ \ A'  \txteq \PushASLongG {(y_0, \overline y)} {\npgreen{\Ainv}}$,\\
$\SPS\ \ \ \ \ \  \  A'' \txteq \npgreen{\Ainv}$. 
\\
Also, \\
$\SPS (8) \ \ \leadstoBounded  {\Mtwo\cdot M}  {\sigma}  {\sigma_a}$, \\
where \\
$\SPS (8a) \ \ \ \sigma_a\triangleq (\PushSLong { (\prg{this}\mapsto {\interpret{\sigma} {y_0}},{\overline{p \mapsto {\interpret{\sigma} {y}}}})}\sigma [\prg{cont}\mapsto stmt_m]$, \\ 
$\SPS (8b) \ \ \   \prg{mBody}(m,D,\overline {M})=\overline{p:D}\{\    stmt_m\ \}$ .\\
$\SPS (8c) \ \ \  \npgreen{D\ \mbox{is the class of}\, \interpret  {\sigma} {y_0}, \ \mbox{and $D$ is external.}}$
\\
By (7), and well-formedness of module invariants, we obtain\\
$\SPS (9a) \ \ \ \npgreen{\fv(\Ainv)\subseteq  \overline x}$,\\
$\SPS (9a) \ \ \ \npgreen{\fv(A)=y_0,\overline y, \overline x}$\\
\npgreen{By Barendregt, we also obtain that}\\
$\SPS (10) \ \ \ \npgreen{dom(\sigma)\ \#  \overline{x}}$
\\
This, together with (3) gives that
\\
$\SPS (10) \ \ \ \npgreen{\overline z = \overline x}$
\\
From (4), (7) and the definition of satisfaction we obtain\\
$\SPS (10) \ \ \ \npgreen{\satDAssertFrom M  {\sigma} k {(\overline{x:C} \wedge {\PushSLong {y_0,\overline y} {\Ainv}})[\overline{\alpha/z}]}}$.\\
The above gives that\\
$\SPS (10a) \ \ \ \npgreen{\satDAssertFrom M  {\sigma} k {\PushSLong {y_0,\overline y} {(\ (\overline{x:C})[\overline{\alpha/z}]\, \wedge\,  ( {\Ainv[\overline{\alpha/z}]}) \ ) } } }$  .
\\
We take $\Ain$ to be $\ (\overline{x:C})[\overline{\alpha/z}]\, \wedge\,  ( {\Ainv[\overline{\alpha/z}]})$, and apply Lemma \ref{l:calls:external},
part \ref{l:calls:callee:one}.
This gives that
\\
$\SPS (11)\ \ \npgreen{ M,  {\sigma_a} \models   {(\overline{x:C})[\overline{\alpha/z}]\, \wedge\,    {\Ainv[\overline{\alpha/z}]}  } }$

\vspace{.3cm}

Proving $(**)$. We shall use the short hand\\
$\SPS (12) \ \  \npgreen{\Ao \triangleq  \overline {\alpha:C} \wedge {\Ainv[\overline{\alpha/z}]}}$. 
\\
Assume that   $\leadstoBoundedStarFin  {\Mtwo\cdot M}  {\sigma}  {\sigma'}$. Then, by the operational semantics, we obtain that 
there exists state $\sigma_b'$, such that \\
$\SPS (16) \ \ \leadstoBoundedStarFin  {\Mtwo\cdot M}  {\sigma_a}  {\sigma_b}$ \\
$\SPS (17) \ \ \sigma'=(\sigma_b\popSymbol)[u \mapsto {\interpret {\sigma_1'} {\prg{res}}}][\prg{cont}\mapsto \epsilon]$.
\\
\npgreen{By Lemma \ref{lemma:external_breakdown:term} part \ref{lemma:external_breakdown:term:one}, and Def. \ref{def:exec:sum}, we obtain that there exists a sequence of states $\sigma_1$, ... $\sigma_n$, such that }\\
$\SPS (17) \ \ \npgreen{\WithExtPub {\Mtwo\cdot M} {\sigma_a}  {\sigma_a}  {\sigma_b} {\sigma_1...\sigma_n}}$
\\
\npgreen{By Def.   \ref{def:exec:sum}, the states $\sigma_1$, ... $\sigma_n$ are all public, and  correspond to the execution of a public method.
Therefore, by rule {\sc{Invariant}} for well-formed modules, we obtain that}\\
\newcommand{\Ainvr}{A_{inv,r}}
$\SPS (18) \ \ \npgreen{\forall i\in 1..n. }$\\
$ \npgreen{\strut\ \ \ \  \ \ \ \ \ \ \ \ 
    [ \  \hprovesN {M}  
               {   \prg{this}: {D_i}, \overline{p_i:D_i},  \overline{x:C} \, \wedge\,  \Ainv  }   
  	      { \sigma_i.\prg{cont}  }   
	      {  \Ainvr  }
	      { \Ainv  }
	     \  ] 
	     }
	     $ 
\\
\npgreen{where $D_i$ is the class of the receiver, $\overline{p_i}$ are the formal parameters, and $\overline {D_i}$ are the types of the formal parameters of the $i$-th public method, and where we use the shorthand $\Ainvr \triangleq  \PushASLong {\prg{res}} { \Ainv } $.}
\\ 
\npgreen{Moreover, (17) gives that}\\
$\SPS (19) \ \ \npgreen{\forall i\in 1..n. [ \ \  \ \leadstoBoundedStar {M\cdot\Mtwo} {\sigma} {\sigma_i} \ \ \ ]}$
\\
\npgreen{From (18) and (19) we obtain}\\
 $\SPS (20) \ \ \npgreen{\forall i\in [1..n]. }$\\
 ${\npgreen{
 \strut \ \ \ \ \ \ \ \ \ \ \ \ \ \ \ \ \ \ \ \ \ \ [ \ \     (  \prg{this}:D_i, \overline{p_i:D_i},\overline{x:C} \, \wedge\,  \Ainv,\sigma_i,\Ainvr, \Ainv ) }}$\\
 ${\npgreen{
 \strut \ \ \ \ \ \ \ \ \ \ \ \ \ \ \ \ \ \ \ \ \  \ \ \ \ \ \ \ \ \    \ll_{M,\Mtwo} }}$\\
 ${\npgreen{
 \strut \ \ \ \ \ \ \ \ \ \ \ \ \ \ \ \ \ \   \ \ \ \ \ \ \ \ \   ( A,\sigma,A',A'') \ \
                                        ]}}
                                      $
 \\
We take \\
$\SPS (21) \ \ k=\DepthSt {\sigma_a}$
 By application of the induction hypothesis on (20) we obtain that
\\
 $\SPS (22) \ \ \npgreen{\forall i\in [1..n]. \forall \sigma_{f}.[ \ \  \satDAssertFrom M {\sigma_i} k \Ao  \ \wedge \  \leadstoBoundedStarFin {M\cdot \Mtwo}  {\sigma_i}  {\sigma_{f}} \ 
\ \ \Longrightarrow \ \  \satDAssertFrom M {\sigma_f} k \Ao \ ]}$ 
\\
\npgreen{We can now apply Lemma \ref{lemma:external_exec_preserves_more}, part \ref{lemma:external_exec_preserves_more:three}, and because  $\DepthSt {\sigma_a}=\DepthSt {\sigma_b}$, we obtain that}
\\
 $\SPS (23) \ \ \npgreen{M, \sigma_b \models \Ainv[\overline{\alpha/x} ]} $ 
 \\
We apply Lemma  \ref{l:calls:return:ext} part \ref{l:ext:return:one}, and obtain
 \\
  $\SPS (24) \ \  M, \sigma' \models {\PushASLong {y_0,\overline y} {\Ainv[\overline{\alpha/x} ] }} $
  \\
  And since  ${\PushASLong {y_0,\overline y} {\Ainv[\overline{\alpha/x} ] }}$ is stable, and by rearranging, and applying (10), we obtain
 \\
   $\SPS (25) \ \   \satDAssertFrom M  {\sigma'} k   { (\PushASLong {y_0,\overline y} {\Ainv})[\overline{\alpha/z}] }$
\\  
Apply (7), and  we are done.
 
 \vspace{.3cm}

Proving   $(*\!*\!*)$. Take a $\sigma''$. Assume that\\
$\SPS (12) \ \ \leadstoBoundedStar   {\Mtwo\cdot M}  {\sigma}  {\sigma''}$\\
$\SPS (13) \ \ \ {\Mtwo\cdot M}, \sigma'' \models \extThis$.\\
We apply lemma \ref{lemma:external_breakdown:term:one}, part \ref{lemma:external_breakdown:two}
on (12) and see that there are two cases\\
\textbf{1st Case} $\leadstoBoundedExtPub {\Mtwo\cdot M}    {\sigma_a}  {\sigma''}$\\
That is, the execution from $\sigma_a$ to $\sigma''$ goes only through external states. 
We use (11), and that 
 $\Ainv$ is encapsulated, and are done with lemma \ref{lemma:external_exec_preserves_more}, part 
 \ref{lemma:external_exec_preserves_more:one}.
\\
\textbf{2nd Case} for some  $\sigma_c$, $\sigma_d.$ we have\\
$
\strut \ \ \ \ \ \leadstoBoundedExtPub {\Mtwo\cdot M}    {\sigma_a}  {\sigma_c} 
\wedge\ \leadstoBounded  {\Mtwo\cdot M}    {\sigma_c}  {\sigma_d} 
\wedge \ M, \sigma_d \models \pubMeth \wedge \leadstoBoundedStar  {\Mtwo\cdot M}    {\sigma_d}  {\sigma'}$\\
We apply similar arguments as in steps (17)-(23) and obtain \\
$\SPS (14) \ \ \ \ {M, \sigma_c \models \Ainv[\overline{\alpha/x} ]} $ 
\\
State $\sigma_c$ is a public, internal state;  therefore there exists a Hoare proof that it preserves the invariant.
By applying the inductive hypothesis, and the fact that $\overline z = \overline x$, we obtain:
\\
$\SPS (14) \ \ \ \ {M, \sigma'' \models \Ainv[\overline{\alpha/z} ]} $
\\
~ \\

\item[{\sc{Call\_Ext\_Adapt\_Strong}}] is  very similar to {\sc{Call\_Ext\_Adaprt}}.
We will summarize the similar steps, and highlight the differences in \npgreen{green}.

Therefore, there exist $u$, $y_o$, $\overline C$, $D$, $\overline y$, ands $\Ainv$, such that \\
 $\SPS (5) \ \ \sigma.\prg{cont}\txteq u:=y_0.m(\overline y)$,\\
$\SPS (6) \  \ \npgreen{\promises  M {\TwoStatesN {\overline {x:C}} {\Ainv} } }$, \\
$\SPS (7) \  \ A \txteq y_0:\prg{external} ,\overline {x:C}\ \wedge \ \npgreen{\Ainv}\  \wedge \PushASLongG {({y_0, \overline y})} {{\Ainv}} ,$ \\
$\SPS \ \ \ \ \ \ \ A'  \txteq  \npgreen{\Ainv}\  \wedge \PushASLong {(y_0, \overline y)} {\npgreen{\Ainv}}$,\\
$\SPS\ \ \ \ \ \  \  A'' \txteq  {\Ainv}$. 
\\
Also, \\
$\SPS (8) \ \ \leadstoBounded  {\Mtwo\cdot M}  {\sigma}  {\sigma_a}$, \\
By similar steps to (8a)-(10) from the previous case, we obtain\\
$\SPS (10a) \ \ \ \   \satDAssertFrom M  {\sigma} k {\npgreen{{\Ainv[\overline{\alpha/z}]} \wedge \ \PushSLong {y_0,\overline y} {(\ (\overline{x:C})[\overline{\alpha/z}]\, \wedge\,  ( {\Ainv[\overline{\alpha/z}]}) \ ) } } }$.\\
We now apply lemma  apply Lemma \ref{l:calls:external},
part \ref{l:calls:callee:two}.
This gives that
\\
$\SPS (11)\ \   \satDAssertFrom M  {\sigma_a} {\npgreen{k}} { (\ \ (\overline{x:C})[\overline{\alpha/z}] \, \wedge\,    \npgreen{\Ainv[\overline{\alpha/z}]} \wedge \ \PushSLong {y_0,\overline y} {(\ (\overline{x:C})[\overline{\alpha/z}] )} \ \ ) }$.
\\
the rest is similar to earlier cases
 
\vspace{.3cm}

\end{description}
\noindent
\vspace{.1cm}
  {\textbf{End Proof}} 

\subsection{Proof Sketch of Theorem  \ref{thm:soundness} -- part (B)}

\label{s:app:proof:sketch;overall}
 {\textbf{Proof Sketch}}
 By induction on the  cases for the specification $S$. If it is a method spec, then the theorem follows from \ref{t:quadruple:sound}. If it is a conjunction, then by inductive hypothesis.
 \\
 The interesting case is $S \txteq {\TwoStatesN {\overline {x:C}} {A}}$.
 \\
 Assume that 
 $ \satDAssertFrom M  {\sigma} k A[\overline {\alpha/x} ]$, that  $M,\sigma \models \extThis$,
 that $\leadstoBoundedStar  {M\cdot \Mtwo}  {\sigma}  {\sigma'}$, and that $M,\sigma \models \extThis$,
 \\ 
 We want to show that $ \satDAssertFrom M  {\sigma'} k A[\overline {\alpha/x} ]$.
 \\
 Then, by lemma
 \ref{lemma:external_breakdown:term}, we obtain that either \\
$\strut \ \ \ \ \ $ (1)\  $\leadstoBoundedExtPub {\Mtwo\cdot M}    {\sigma}  {\sigma'}$, or\\
$\strut \ \ \ \ \ $  (2)\  $\exists \sigma_1,\sigma_2.[\
\leadstoBoundedExtPub {\Mtwo\cdot M}    {\sigma}  {\sigma_1}
\wedge\ \leadstoBounded  {\Mtwo\cdot M}    {\sigma_1}  {\sigma_2}
\wedge \ M, \sigma_2 \models \pubMeth \wedge \leadstoBoundedStar  {\Mtwo\cdot M}    {\sigma_2}  {\sigma'} \ ]$
\\
In Case (1), we apply  \ref{lemma:external_exec_preserves_more}, part (3).  In order to fulfill the second premise of Lemma  \ref{lemma:external_exec_preserves_more}, part (3), we make use of the fact that $\vdash M$,   apply the rule {\sc{Method}}, and theorem \ref{t:quadruple:sound}.
This gives us $ \satDAssertFrom M  {\sigma'} k A[\overline {\alpha/x} ]$
\\
In Case (2), we proceed as in (1) and obtain that $ \satDAssertFrom M  {\sigma_1} k A[\overline {\alpha/x} ]$. Because $M \vdash \encaps A$, we also obtain that 
$ \satDAssertFrom M  {\sigma_2} k A[\overline {\alpha/x} ]$.
Since we are now executing a public method, and because $\vdash M$, we can apply {\sc{Invariant}}, and theorem \ref{t:quadruple:sound}, and obtain $ \satDAssertFrom M  {\sigma'} k A[\overline {\alpha/x} ]$\\
 \vspace{.1cm}
  {\textbf{End Proof Sketch}} 

\clearpage

\newcommand{\STwo}{\ensuremath{S_2}}
\newcommand{\STwoStrong}{\ensuremath{S_{2,strong}}}
\newcommand{\SPT}{~ \strut \hspace{.9cm}}
\newcommand{\Alocals}{\prg{A}_{buy}}
\newcommand{\Alocalsb}{\prg{A}_{1}}
\newcommand{\Ids}{\prg{Ids}_{buy}}
\newcommand{\Alocalstr}{\prg{A}_{trns}}
\newcommand{\Idstr}{\prg{Ids}_{trns}}
\newcommand{\Alocalsset}{\prg{A}_{set}}
\newcommand{\Alocalssets}{\prg{A}_{2}}
\newcommand{\Idsset}{\prg{Ids}_{set}}
\newcommand{\stmtsP}{\prg{stmts}_{10,11,12}}
\newcommand{\step}[1]{ \vspace{.1cm} \noindent {\textbf{#1}}}

\section{  {Proving Limited Effects for the Shop/Account Example}}

\label{s:app:example}

In Section \ref{s:outline} we introduced a \verb|Shop| that allows clients to make purchases through the
\verb|buy| method.
The body if this method  includes a method call to an unknown external object (\verb|buyer.pay(...)|).

In this section  we use our Hoare logic from Section \ref{sect:proofSystem} to {outline the proof} that the \verb|buy| method
does not expose the \verb|Shop|'s  \verb|Account|, its \verb|Key|, or allow the \verb|Account|'s balance to be illicitly modified. 

We {outline the proof} that $M_{good} \vdash \STwo$, and that $M_{fine} \vdash \STwo$.
{We  also show why $M_{bad} \not\vdash \STwo$.}

{We   rewrite the code of $M_{good}$ and so $M_{fine}$
so that it adheres to the syntax as defined in Fig. \ref{f:loo-syntax} (\S \ref{s:app:syntax:transform}). 
We  extend the specification $\STwo$, so that is also makes a specification for the private method \prg{set} (\S \ref{s:extend:spec}). 
After that, we outline the proofs  that $M_{good} \vdash \STwo$, and that $M_{fine} \vdash \STwo$ (in \S \ref{s:app:example:proofs}),
and that  $M_{good} \vdash \SThree$, and that $M_{fine} \vdash \SThree$ (\S \ref{s:app:example:proofs:S3}).
These proofs have been mechanized in Coq, and the source code will be
submitted as an artefact. 
We also discuss why $M_{bad} \not\vdash \STwo$ (\S \ref{s:bad:not:S2}).}

\subsection{Expressing the \prg{Shop} example in the syntax from Fig. \ref{f:loo-syntax}}
\label{s:app:syntax:transform}

{
We now express our example in the syntax of Fig. \ref{f:loo-syntax}. 
For this, we  add a return type to each of the methods; 
We turn all local variables to parameter; We add an explicit assignment to the variable \prg{res}: and We   add a temporary variable \prg{tmp} to which we assign the result of our \prg{void} methods.
For simplicity, we allow 
the shorthands \prg{+=} and \prg{-=}.
And we also allow definition of local variables, \eg  \prg{int price := ..} }

\begin{lstlisting}[mathescape=true, language=Chainmail, frame=lines]
module M$_{good}$
  ...   
  class Shop
    field accnt : Account, 
    field invntry : Inventory, 
    field clients: ..
  
    public method buy(buyer:external, anItem:Item, price: int, 
            myAccnt: Account, oldBalance: int, newBalance: int, tmp:int) : int
      price := anItem.price;
      myAccnt := this.accnt;
      oldBalance := myAccnt.blnce;
      tmp := buyer.pay(myAccnt, price)     // $\red{\mbox{external\ call}}!$
      newBalance := myAccnt.blnce;
      if (newBalance == oldBalance+price) then
          tmp := this.send(buyer,anItem)
      else
         tmp := buyer.tell("you have not paid me") ; 
      res := 0
     
      private method send(buyer:external, anItem:Item) : int
       ... 
  class Account
    field blnce : int 
    field key : Key
    
    public method transfer(dest:Account, key':Key, amt:nat) :int
      if (this.key==key') then
        this.blnce-=amt;
        dest.blnce+=amt
      else
        res := 0
      res := 0
	  
     public method set(key':Key) : int
      if (this.key==null)  then
      		this.key:=key'
      else 
        res := 0
      res := 0
\end{lstlisting}

\noindent
Remember that $M_{fine}$ is identical to $M_{good}$, except for the method \prg{set}. We describe the module below.

\begin{lstlisting}[mathescape=true, language=Chainmail, frame=lines]
module M$_{fine}$
  ...   
  class Shop
     ...  $ as\  in\  M_{good}$
  class Account
    field blnce : int 
    field key : Key
    
    public method transfer(dest:Account, key':Key, amt:nat) :int
       ...  $as\ in\ M_{good}$
	  
     public method set(key':Key, k'':Key) : int
      if (this.key==key')  then
      		this.key:=key''
      else 
        res := 0
      res := 0
\end{lstlisting}

\subsection{Proving that $M_{good}$ and $M_{fine}$ satisfy  $\STwo$}
\label{s:extend:spec}

We redefine $\STwo$ so that it also describes the behaviour of method \prg{send}. and have:
\\
{\sprepost
		{\strut \ \ \ \ \ \ \ \ \ S_{2a}} 
		{ \prg{a}:\prg{Account} \wedge  \prg{e}:\prg{external}  \wedge \protectedFrom{\prg{a.key}} {\prg{e} } } 
		 {\prg{private Shop}}
		 {\prg{send}}
		 {\prg{buyer}:\prg{external},\prg{anItem}:\prg{Item} }
		 { \protectedFrom{\prg{a.key}} {e} }
		 { \protectedFrom{\prg{a.key}} {e} }
}
\\
{\sprepost
		{\strut \ \ \ \ \ \ \ \ \ S_{2b}} 
		{ \prg{a}:\prg{Account} \wedge \prg{a.blnce} =\prg{b} }
		 {\prg{private Shop}}
		 {\prg{send}}
		 {\prg{buyer}:\prg{external}, \prg{anItem}:\prg{Item} }
		 { \prg{a.blnce} =\prg{b} }
		{   \prg{a.blnce} =\prg{b} }
}
\\
$\strut  \SPSP  \STwoStrong \ \  \triangleq \ \ \ \STwo \ \wedge \ S_{2a} \ \wedge \ S_{2b} $


 \label{s:app:example:proofs}

For brevity we only show that \verb|buy| satisfies our scoped invariants, as the all other methods of 
the \verb|M|$_{good}$ interface are relatively simple, and do not make any external calls. 

{ To write our proofs more succinctly, we will use \prg{ClassId}::\prg{methId}.\prg{body} as a shorthand for the method body of \prg{methId} defined in \prg{ClassId}.}

\begin{lemma}[$M_{good}$ satisfies $\STwoStrong$]
\label{lemma:exampleKeyProtect}
\label{l:Mgood:S2}
$M_{good} \vdash \STwoStrong$
\end{lemma}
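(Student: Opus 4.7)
\medskip

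The plan is to apply rule \textsc{WellFrm\_Mod}, reducing to showing $M_{good} \vdash \STwoStrong$, and then use \textsc{Comb\_Spec} to split the conjunction $\STwo \wedge S_{2a} \wedge S_{2b}$ into three independent obligations. The two method specifications $S_{2a}$ and $S_{2b}$ for \prg{send} are the easier obligations: since \prg{send} contains no external calls (the elided body performs only internal bookkeeping), I would discharge them by \textsc{Method} followed by repeated use of \textsc{Mid} and the underlying Hoare logic, with the protection assertion preserved via \textsc{Prot-1} (no external frame pushed, no method call). The substructural rules (\textsc{Sequ}, \textsc{Consequ}, and \textsc{Combine}) glue the per-statement triples together.

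The bulk of the work is the scoped invariant $\STwo = \TwoStatesN{\prg{a}{:}\prg{Account}}{\inside{\prg{a.key}}}$, discharged by \textsc{Invariant}. This requires, for every public method $m$ of $M_{good}$, establishing the quadruple whose precondition is strengthened by the adapted form $\PushASLong{(\prg{this},\overline{y})}{\inside{\prg{a.key}}}$ and whose pre- and post-assertions carry $\inside{\prg{a.key}} \wedge \PushASLong{\prg{res}}{\inside{\prg{a.key}}}$. I would handle the public methods \prg{Account::transfer} and \prg{Account::set} first: \prg{transfer} only writes balance fields and so preserves $\inside{\prg{a.key}}$ via \textsc{Prot-1}; \prg{set} in $M_{good}$ only updates \prg{this.key} when it was previously \prg{null}, and a case split on whether $\prg{this}=\prg{a}$ (using \textsc{Cases} and the rule of consequence) shows that either the assignment is vacuous or it cannot happen when $\inside{\prg{a.key}}$ is already installed with a non-null key, yielding the result via \textsc{Absurd} or \textsc{Prot-1} as appropriate.

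The main obstacle, and the reason this lemma is nontrivial, is \prg{Shop::buy}, which contains the external call \prg{buyer.pay(myAccnt,price)}. Following the outline in \S\ref{sec:howThird}, at the call site the adapted precondition gives $\protectedFrom{\prg{a.key}}{\prg{buyer}}$, and the plan is to apply \textsc{Call\_Ext\_Adapt} with $\STwo$ as the promised invariant. To do so, I must establish the adapted form $\PushASLong{(\prg{buyer},\prg{myAccnt},\prg{price})}{\inside{\prg{a.key}}}$, which by Def.~\ref{def:push} unfolds to the conjunction $\protectedFrom{\prg{a.key}}{\prg{buyer}} \wedge \protectedFrom{\prg{a.key}}{\prg{myAccnt}} \wedge \protectedFrom{\prg{a.key}}{\prg{price}}$. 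The first conjunct comes from the adapted precondition, the third from \textsc{Prot-Int} since \prg{price:int}, and the middle one from \textsc{Prot-Intl} using the type invariant that every field transitively reachable from an \prg{Account} is scalar or internal. The rule of consequence then yields the required quadruple across the external call.

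After the external call, the remaining statements are internal straight-line code and an \prg{if}; \textsc{Prot-1} preserves $\inside{\prg{a.key}}$ through the subsequent reads, writes, and the internal call to \prg{send} (which I discharge by \textsc{Call\_Int} using $S_{2a}$ just established). Sequencing everything by \textsc{Sequ} produces the required pre-to-post-condition quadruple for the full body of \prg{buy}, which completes the last case of \textsc{Invariant} and therefore the lemma. The Coq artefact referenced in \S\ref{sect:example:proof:short} mechanises exactly this skeleton; the one subtlety I would flag is that the adaptation step must be carried out before \textsc{Call\_Ext\_Adapt} is applied, since the unadapted $\inside{\prg{a.key}}$ does not in general hold at the caller site (as Fig.~\ref{fig:ProtectedBoth} illustrates for $\sigma_2$).
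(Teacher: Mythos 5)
Your proposal is correct and follows essentially the same route as the paper's proof: decomposition via \textsc{Comb\_Spec} and \textsc{Invariant} into per-public-method quadruples, the external call in \prg{buy} discharged by \textsc{Call\_Ext\_Adapt} after establishing $\protectedFrom{\prg{a.key}}{(\prg{buyer},\prg{myAccnt},\prg{price})}$ from the adapted precondition together with \textsc{Prot-Int}/\textsc{Prot-Intl} and type information, \textsc{Prot-1} plus the underlying logic for the internal straight-line code, and a case split on $\prg{this}=\prg{a}$ for \prg{set}. The only cosmetic slip is invoking \textsc{WellFrm\_Mod} at the top (that rule derives $\vdash M$, not $M\vdash S$); the real entry points are \textsc{Comb\_Spec}, \textsc{Invariant}, and \textsc{Method}, which you in fact use.
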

\begin{proofO}
In order to prove that 
$$M_{good} \vdash \TwoStatesN {\prg{a}:\prg{Account}}  {\inside{\prg{\prg{a.key}}}}$$
we have to apply  \textsc{Invariant} from Fig. \ref{f:wf}.
 That is, for each  class $C$  defined in $M_{good}$, and for each public method $m$ in $C$, with parameters $\overline{y:D}$, we have to prove that
 \small
\begin{align*}
M_{good}\ \vdash \ \ &   \{ \ \prg{this}:\prg{C},\, \overline{y:D},\, \prg{a}:\prg{Account}\, \wedge\,
		             {\inside{\prg{a.key}}}\ \wedge\       \protectedFrom {\prg{a.key}} {(\prg{this},\overline y)} \  \} \\
		& \SPT  \prg{C}::\prg{m}.\prg{body}\  \\
		&
                   \{\ {\inside{\prg{a.key}}}\ \wedge\ \ \protectedFrom {\prg{a.key}} {\prg{res}}   \ \}\ ||\ \{\ {\inside{\prg{a.key} } } \ \} \\
\end{align*}

\normalsize
Thus, we need to prove  three Hoare quadruples: one for \prg{Shop::buy}, one for  \prg{Account::transfer}, and one for  \prg{Account::set}.  That is, we have to prove that
 \small
\begin{align*}
\text{(1?)}  \ \ \ \ M_{good}\ \vdash  \  \ 
		&	\{  \ \Alocals, \, \prg{a}:\prg{Account} \, \wedge\, {\inside{\prg{a.key}}} \, \wedge \, \protectedFrom {\prg{a.key}} {\Ids}  \  \} \\
		& \SPT \prg{Shop}::\prg{buy}.\prg{body}\ \\  
		& \{ {\inside{\prg{a.key}}}\ \wedge\ {\PushASLong {\prg{res}} {\inside{\prg{a.key}}}}  \} \ \ \  || \ \ \ 
		   \{ {\inside{\prg{a.key}}} \}
\\
\text{(2?)}  \ \ \ \ M_{good} \vdash \ 
		&	\{  \ \Alocalstr, \, \prg{a}:\prg{Account}\, \wedge\,  {\inside{\prg{a.key}}} \, \wedge \, \protectedFrom {\prg{a.key}} {\Idstr}  \  \} \\
		& \SPT \prg{Account}::\prg{transfer}.\prg{body}\ \\  
		& \{ {\inside{\prg{a.key}}}\ \wedge\ {\PushASLong {\prg{res}} {\inside{\prg{a.key}}}}  \} \ \ \  || \ \ \ 
		   \{ {\inside{\prg{a.key}}} \}
\\
\text{(3?)}  \ \ \ \ M_{good} \vdash \ 
		&	\{  \ \Alocalsset, \, \prg{a}:\prg{Account}\, \wedge\,  {\inside{\prg{a.key}}} \, \wedge \, \protectedFrom {\prg{a.key}} {\Idsset}  \  \} \\
		& \SPT \prg{Account}::\prg{set}.\prg{body}\ \\  
		& \{ {\inside{\prg{a.key}}}\ \wedge\ {\PushASLong {\prg{res}} {\inside{\prg{a.key}}}}  \} \ \ \  || \ \ \ 
		   \{ {\inside{\prg{a.key}}} \}
\end{align*}
\normalsize
where we are using ? to indicate that this needs to be proven, and 
where we are using the shorthands\\
\small
$
\begin{array}{c}
\begin{array}{lcl}
 \Alocals\ &  \triangleq   \  &   \prg{this}:\prg{Shop}, \prg{buyer} : \prg{external}, \prg{anItem} : \prg{Item},\, \prg{price} : \prg{int},  \\
&  & \prg{myAccnt} : \prg{Account},\, \prg{oldBalance}:  \prg{int},  \prg{newBalance}:  \prg{int},  \prg{tmp}:  \prg{int}.\\
  \Ids\ &   \triangleq   & \prg{this},  \prg{buyer}, \prg{anItem}, \prg{price}, \prg{myAccnt},\, \prg{oldBalance},\  \prg{newBalance},  \prg{tmp}.\\ 
\Alocalstr\  & \triangleq \  &   \prg{this}:\prg{Account}, \prg{dest} : \prg{Account}, \prg{key'} : \prg{Key},\prg{amt}:\prg{nat} \\
  \Idstr\  &  \triangleq  & \prg{this},\, \prg{dest} ,\, \prg{key'},\, \prg{amt} \\
  \Alocalsset\  & \triangleq \  &   \prg{this}:\prg{Account}, \, \prg{key'} : \prg{Key},\, \prg{key''} : \prg{Key}.\\
    \Idsset\  &  \triangleq  & \prg{this},\, \prg{key'} ,\, \prg{key''}. \\
\end{array}
\end{array}
$
\normalsize

We will also need to prove that \prg{Send} satisfies specifications $S_{2a}$ and $S_{2b}$.

We outline the proof of (1?) in Lemma \ref{l:buy:sat},
and the proof of (2) in Lemma \ref{l:transfer:S2}.
We do not prove (3), but will prove that \prg{set} from $M_{fine}$
satisfies $S_2$; shown in  Lemma \ref{l:set:sat} -- ie for module $M_{fine}$.

\end{proofO}

We also want to prove that $M_{fine}$ satisfies the specification $\STwoStrong$.

\begin{lemma} [$M_{fine}$ satisfies $\STwoStrong$]
\label{l:Mfine:S2}

$M_{fine} \vdash \STwoStrong$
\end{lemma}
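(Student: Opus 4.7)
\begin{proofO}
The plan is to follow the same top-level structure as the proof of Lemma \ref{l:Mgood:S2}: apply {\sc{Comb\_Spec}} to split $\STwoStrong$ into its three conjuncts $\STwo$, $S_{2a}$, and $S_{2b}$; apply {\sc{Invariant}} from Fig.~\ref{f:wf} to the scoped invariant $\STwo$, reducing it to a Hoare quadruple obligation for each \emph{public} method of $M_{fine}$ (namely $\prg{Shop}::\prg{buy}$, $\prg{Account}::\prg{transfer}$, and $\prg{Account}::\prg{set}$); and apply {\sc{Method}} to $S_{2a}$ and $S_{2b}$, reducing those to Hoare quadruple obligations on the body of the private method $\prg{Shop}::\prg{send}$.

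The first observation is that $M_{fine}$ and $M_{good}$ differ \emph{only} in the body of $\prg{Account}::\prg{set}$. In particular, the bodies of $\prg{buy}$, $\prg{send}$, and $\prg{transfer}$ are textually identical in the two modules. Moreover, the proof outlines for those three methods (Lemma \ref{l:buy:sat}, Lemma \ref{l:transfer:S2}, and the proofs for $S_{2a}$ and $S_{2b}$) never unfold the body of $\prg{set}$: they interact with $\prg{set}$ only through its \emph{specification} (\ie through the invariant $\inside{\prg{a.key}}$ that $\prg{set}$ must preserve, invoked via the rule {\sc{Call\_Int}}). Hence those proofs transfer verbatim to $M_{fine}$, and the only new obligation is the Hoare quadruple for $\prg{Account}::\prg{set}$ in $M_{fine}$.

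The proof for this new obligation follows the outline already sketched in Example \ref{e:public} (Lemma \ref{l:set:sat}), which is explicitly formulated for $M_{fine}$. After applying {\sc{If\_Rule}}, the interesting branch is $\prg{this.key}:=\prg{key''}$ under the guard $\prg{this.key}=\prg{key'}$. We apply {\sc{Cases}} on whether $\prg{this}=\prg{a}$. In the $\prg{this}=\prg{a}$ case we get $\prg{a.key}=\prg{key'}$, but the precondition includes $\protectedFrom{\prg{a.key}}{\prg{key'}}$ (obtained by adaptation since $\prg{key'}\in\Idsset$), and {\sc{Prot-Neq}} in Fig.~\ref{f:protection:conseq:ext} yields $\prg{a.key}\neq \prg{key'}$; the branch is discharged by {\sc{Absurd}}. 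In the $\prg{this}\neq \prg{a}$ case, the underlying Hoare logic certifies that the assignment does not change the value of $\prg{a.key}$ (fields of distinct objects are not aliases), so {\sc{Prot-1}} preserves $\inside{\prg{a.key}}$, and $\PushASLong{\prg{res}}{\inside{\prg{a.key}}}$ follows because $\prg{res}$ is a scalar, using the protection rules for scalars ({\sc{Prot-Int}}) together with Def.~\ref{def:push}.

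The main obstacle is really the same as in Lemma \ref{l:Mgood:S2}: the external call \prg{buyer.pay(...)} in \prg{buy}, which requires the external-call rule {\sc{Call\_Ext\_Adapt}} together with the adaptation machinery (Lemma \ref{lemma:push:ass:state}) and the observation that $\prg{this.accnt.key}$ is protected from each of the actuals $\prg{buyer}$, $\prg{this.accnt}$, $\prg{price}$. This step, however, is already discharged by Lemma \ref{l:buy:sat} and is completely insensitive to the definition of \prg{set} used; so for the present lemma the genuinely new work is the local, case-split proof for $\prg{set}$ described above.
\end{proofO}
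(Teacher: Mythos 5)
Your proposal is correct and follows essentially the same route as the paper: reduce $\STwoStrong$ to per-method Hoare quadruples via {\sc{Invariant}} and {\sc{Method}}, observe that the obligations for \prg{buy}, \prg{transfer} and \prg{send} are discharged exactly as for $M_{good}$ because their code is unchanged, and prove the one genuinely new obligation for \prg{set} by the {\sc{If\_Rule}}/{\sc{Cases}}/{\sc{Prot-Neq}}/{\sc{Absurd}}/{\sc{Prot-1}} argument, which is precisely the paper's Lemma \ref{l:set:sat}. (One minor slip: \prg{buy}'s derivation depends on the module invariant through {\sc{Call\_Ext\_Adapt}} at the external call to \prg{pay}, not through {\sc{Call\_Int}} calls to \prg{set}, but this does not affect your argument that the derivation is insensitive to \prg{set}'s body.)
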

\begin{proofO}
The proof of
$$M_{fine} \vdash \TwoStatesN {\prg{a}:\prg{Account}}  {\inside{\prg{\prg{a.key}}}}$$
goes along similar lines to the proof of lemma \ref{l:Mgood:S2}.
Thus, we need to prove the following  three Hoare quadruples: 
 \small
\begin{align*}
\text{(4?)}  \ \ \ \ M_{fine}\ \vdash  \  \ 
		&	\{  \ \Alocals, \, \prg{a}:\prg{Account} \, \wedge\, {\inside{\prg{a.key}}} \, \wedge \, \protectedFrom {\prg{a.key}} {\Ids}  \  \} \\
		& \SPT \prg{Shop}::\prg{buy}.\prg{body}\ \\  
		& \{ {\inside{\prg{a.key}}}\ \wedge\ {\PushASLong {\prg{res}} {\inside{\prg{a.key}}}}  \} \ \ \  || \ \ \ 
		   \{ {\inside{\prg{a.key}}} \}
\\
\text{(5?)}  \ \ \ \ M_{fine} \vdash \ 
		&	\{  \ \Alocalstr, \, \prg{a}:\prg{Account}\, \wedge\,  {\inside{\prg{a.key}}} \, \wedge \, \protectedFrom {\prg{a.key}} {\Idstr}  \  \} \\
		& \SPT \prg{Account}::\prg{transfer}.\prg{body}\ \\  
		& \{ {\inside{\prg{a.key}}}\ \wedge\ {\PushASLong {\prg{res}} {\inside{\prg{a.key}}}}  \} \ \ \  || \ \ \ 
		   \{ {\inside{\prg{a.key}}} \}
\\
\text{(6?)}  \ \ \ \ M_{fine} \vdash \ 
		&	\{  \ \Alocalsset, \, \prg{a}:\prg{Account}\, \wedge\,  {\inside{\prg{a.key}}} \, \wedge \, \protectedFrom {\prg{a.key}} {\Idsset}  \  \} \\
		& \SPT \prg{Account}::\prg{set}.\prg{body}\ \\  
		& \{ {\inside{\prg{a.key}}}\ \wedge\ {\PushASLong {\prg{res}} {\inside{\prg{a.key}}}}  \} \ \ \  || \ \ \ 
		   \{ {\inside{\prg{a.key}}} \}
\end{align*}

\normalsize

The proof of (4?) is identical to that of (1?); the proof of (5?) is identical to that of (2?). 
We outline the proof (6?)    in Lemma \ref{l:set:sat} in \S \ref{s:set:sat}.

\end{proofO}

\label{s:buy:sat}

\begin{lemma}[\prg{Shop::buy} satisfies $S_2$]
\label{l:buy:sat}
 
\begin{align*}
\text{(1)}  \ \ \ \ M_{good} \vdash 
		&	\{  \ \Alocals\,\prg{a}:\prg{Account}\ \wedge\, {\inside{\prg{a.key}}} \, \wedge \, \protectedFrom {\prg{a.key}} {\Ids}  \  \} \\
		& \SPT \prg{Shop}::\prg{buy}.\prg{body}\ \\  
		& \{ {\inside{\prg{a.key}}}\ \wedge\ {\PushASLong {\prg{res}} {\inside{\prg{a.key}}}}  \} \ \ \  || \ \ \ 
		   \{ {\inside{\prg{a.key}}} \}
\end{align*}

\end{lemma}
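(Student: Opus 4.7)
\begin{proofO}[Proposal]
The plan is to use the \textsc{Sequ} rule to decompose \prg{Shop::buy.body} into its constituent statements, verifying the Hoare quadruple for each, then recombine. For the bulk of the statements we simply need to preserve $\inside{\prg{a.key}}$ together with $\protectedFrom{\prg{a.key}}{v}$ for each local $v$ that is still ``live'' as a potential argument to a subsequent external call. The mid-condition $\inside{\prg{a.key}}$ only needs to be checked at the two external call sites (\prg{buyer.pay} and \prg{buyer.tell}), which are precisely the external frames reachable from within \prg{buy}.

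First I would handle the three opening assignments \prg{price := anItem.price}, \prg{myAccnt := this.accnt}, \prg{oldBalance := myAccnt.blnce}. These contain no method calls, so \textsc{embed\_ul} and \textsc{Mid} apply. To propagate the protection assertions we use \textsc{Prot-1} (since $\prg{a.key}$ is not assigned) for $\inside{\prg{a.key}}$, and a combination of \textsc{Prot-2} and \textsc{Prot-3} for $\protectedFrom{\prg{a.key}}{v}$ of the incoming variables; for the freshly assigned locals we discharge the goal at the rule of consequence by appealing to \textsc{Prot-Intl} (since $\prg{myAccnt}:\prg{Account}$ and all fields transitively reachable from an \prg{Account} are internal or scalar) and \textsc{Prot-Int} (for \prg{price} and \prg{oldBalance}).

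The main obstacle is the external call \prg{tmp := buyer.pay(myAccnt, price)}. I would apply \textsc{Call\_Ext\_Adapt} with invariant $A \triangleq \inside{\prg{a.key}}$ from $S_2$, receiver $\prg{buyer}$ and arguments $\prg{myAccnt}, \prg{price}$. The required precondition is $\PushASLong{(\prg{buyer},\prg{myAccnt},\prg{price})}{\inside{\prg{a.key}}}$, which unfolds by Def.\ \ref{def:push} to $\protectedFrom{\prg{a.key}}{\prg{buyer}} \wedge \protectedFrom{\prg{a.key}}{\prg{myAccnt}} \wedge \protectedFrom{\prg{a.key}}{\prg{price}}$. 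The first conjunct follows from the incoming precondition $\protectedFrom{\prg{a.key}}{\Ids}$, the second from \textsc{Prot-Intl} plus type information about \prg{Account}, and the third from \textsc{Prot-Int}. The conclusion of \textsc{Call\_Ext\_Adapt} gives back the same adapted form, from which $\inside{\prg{a.key}}$ is recovered at the next step using \textsc{Prot-Intl} on \prg{myAccnt} again (or simply carried through the read \prg{newBalance := myAccnt.blnce}).

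The subsequent \prg{newBalance := myAccnt.blnce} is handled as the earlier reads. For the conditional, I would apply \textsc{If\_Rule}, producing two sub-goals. The \prg{then} branch contains the internal call \prg{this.send(buyer,anItem)}: this is discharged by \textsc{Call\_Int} against the specification $S_{2a}$ (which must be established separately in the overarching proof of Lemma~\ref{l:Mgood:S2}), after substitution of actuals for formals. The \prg{else} branch contains the external call \prg{buyer.tell(...)}, for which I would reuse \textsc{Call\_Ext\_Adapt} with $S_2$ exactly as for \prg{pay}; the required protected-from facts for the string argument follow from \textsc{Prot-Str}. Finally, \prg{res := 0} is handled by \textsc{embed\_ul} together with \textsc{Prot-Int} to obtain $\protectedFrom{\prg{a.key}}{\prg{res}}$, which together with $\inside{\prg{a.key}}$ yields $\PushASLong{\prg{res}}{\inside{\prg{a.key}}}$ and completes the postcondition. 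Threading the mid-condition $\inside{\prg{a.key}}$ through \textsc{Sequ} is automatic once each segment is shown to preserve it, since the only intermediate external states arise inside the two external calls, where the mid-condition is precisely what \textsc{Call\_Ext\_Adapt} delivers.
\end{proofO}
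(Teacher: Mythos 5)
Your overall decomposition matches the paper's: the three opening reads are handled via \textsc{embed\_ul}/\textsc{Mid} plus \textsc{Prot-1}/\textsc{Prot-2}, the \prg{pay} call is discharged against the scoped invariant $S_2$ with the adapted precondition $\protectedFrom{\prg{a.key}}{(\prg{buyer},\prg{myAccnt},\prg{price})}$ obtained from the incoming $\protectedFrom{\prg{a.key}}{\Ids}$ together with \textsc{Prot-Intl}/\textsc{Prot-Int} and type information, and the tail (conditional, internal \prg{send} via its own specification, external \prg{tell}, \prg{res := 0}) is handled exactly as the paper does in the analogous third step of its $S_3$ proof.

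There is, however, one step that would fail as written: you apply plain \textsc{Call\_Ext\_Adapt}, whose postcondition is only the \emph{adapted} form $\PushASLong{(\prg{buyer},\prg{myAccnt},\prg{price})}{\inside{\prg{a.key}}}$, i.e.\ a conjunction of $\protectedFrom{\prg{a.key}}{\cdot}$ facts, and then claim to ``recover'' $\inside{\prg{a.key}}$ afterwards using \textsc{Prot-Intl} on \prg{myAccnt}. No rule (and no semantic fact) lets you go from protection \emph{from} a finite set of objects back to absolute protection $\inside{\prg{a.key}}$ --- \textsc{Prot-Intl} only ever derives further $\protectedFrom{\cdot}{\cdot}$ facts, and an object can be protected from \prg{buyer}, \prg{myAccnt} and \prg{price} while still being exposed to some other reachable external object. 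Since the method postcondition and the mid-condition both require $\inside{\prg{a.key}}$ itself, the call must be discharged with \textsc{Call\_Ext\_Adapt\_Strong}, which takes \emph{both} $\inside{\prg{a.key}}$ and its adapted version as precondition and returns both; this is what the paper's step (14) implicitly uses, and it is available to you because $\inside{\prg{a.key}}$ is in the method precondition and is carried through the opening assignments by \textsc{Prot-1}. The same correction applies to the \prg{buyer.tell} call in the else branch. With that substitution the argument goes through.
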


\begin{proofO}
We will use the shorthand $\Alocalsb \triangleq \Alocals, \,\prg{a}:\prg{Account}$.
We will split the proof into 1) proving that statements 10, 11, 12 preserve the protection of \prg{a.key} from the \prg{buyer}, 2) proving that the external call 

\step{1st Step: proving statements 10, 11, 12}

We apply the underlying Hoare logic and prove that the statements on lines 10, 11, 12 do not affect the value of \prg{a.key}, ie that for a $z\notin \{ \prg{price}, \prg{myAccnt}, \prg{oldBalance} \}$, we have 

\begin{align*}
\text{(10)}  \ \ \ \ {M_{good} \vdash_{ul}} 
		&	\{  \ \Alocalsb\  \wedge\ z=\prg{a.key} \} \\
		&   \SPT \prg{price:=anItem.price}; \\  
		&   \SPT \prg{myAccnt:=this.accnt}; \\  
                 &   \SPT \prg{oldBalance := myAccnt.blnce};\\
		& \{ z=\prg{a.key} \}
\end{align*}

We then apply {\sc{Embed\_UL}}, {\sc{Prot-1}} and {\sc{Prot-2}} and {\sc{Combine}} and and {\sc{Types-2}} on (10) and use the shorthand $\stmtsP$ for the statements on lines 10, 11 and 12, and obtain: 
\\
\begin{align*}
\text{(11)}  \ \ \ \ M_{good} \vdash 
		&	\{  \ \Alocalsb\  \wedge\ {\inside{\prg{a.key}}} \, \wedge\, \protectedFrom{\prg{buyer}} {\prg{a.key}}  \} \\
		& \SPT \stmtsP\ \\  
		& \{ \ {\inside{\prg{a.key}}}  \, \wedge\, \protectedFrom{\prg{buyer}} {\prg{a.key}}   \}
\end{align*}

We apply  {\sc{Mid}}  on (11) and obtain 
\begin{align*}
\text{(12)}  \ \ \ \ M_{good} \vdash 
		&	\{  \ \Alocalsb\, \wedge\, \protectedFrom {\prg{a.key}} {\prg{buyer}}\  \} \\
		& \SPT \stmtsP\ \\  
		& \{ \ \Alocalsb\, \wedge \  {\inside{\prg{a.key}}} \, \wedge\, \protectedFrom{\prg{buyer}} {\prg{a.key}}  \ \} \ \ || \\
		& \{ \ {\inside{\prg{a.key}}}\  \}
\end{align*}

\step{2nd Step: Proving the External Call}

We now need to prove that the external method call \prg{buyer.pay(this.accnt, price)} protects the \prg{key}. i.e.
\begin{align*}
\text{(13?)} \ \ \ M_{good} \vdash & \{ \ \Alocalsb \   \wedge\    {\inside{\prg{a.key}}},\, \wedge\, \protectedFrom {\prg{a.key}} {\prg{buyer}}  \} \\
		  		& \SPT  \prg{tmp := buyer.pay(myAccnt, price)}\ \\  
		  		& \{ \ \ \ \Alocalsb \ \wedge\ {\inside{\prg{a.key}}} \, \wedge\, \protectedFrom{\prg{buyer}} {\prg{a.key}} \} \ \ \  || \ \ \  \\
		  		&   \{ \   {\inside{\prg{a.key}}}\  \}
\end{align*}
\normalsize

We use that $M \vdash \TwoStatesN  {\prg{a}:\prg{Account}}  {\inside{\prg{a.key}}}$
 and  obtain
 \\
 \small
\begin{align*}
\text{(14)} \ \ \ M_{good} \vdash & \{ \ \prg{buyer}:\prg{external},\,  {\inside{\prg{a.key}}} \, \wedge\, 
\protectedFrom {\prg{a.key}} {(\prg{buyer},\prg{myAccnt},\prg{price})} \  \} \\
		  		& \SPT  \prg{tmp := buyer.pay(myAccnt, price)}\ \\  
		  		& \{ \ \inside{\prg{a.key}} \, \wedge\, 
\protectedFrom {\prg{a.key}} {(\prg{buyer},\prg{myAccnt},\prg{price})}\ \} \ \ \  || \ \ \  \\
		  		&   \{ \   {\inside{\prg{a.key}}}\  \}
\end{align*}
\normalsize

 Moreover by the type declarations and the type rules, we obtain that all objects directly or indirectly accessible accessible from \prg{myAccnt} are internal or scalar.
 This, together with  \textsc{Prot-Intl}, gives that
\\
$
\begin{array}{llll}
& (15) & & M_{good} \vdash \Alocalsb   \longrightarrow \protectedFrom {\prg{a.key}} {\prg{myAccnt}} 
\end{array}
$
\\
Similarly, because \prg{price} is a \prg{nat}, and because of \textsc{Prot-Int}$_1$, we obtain 
\\
$
\begin{array}{llll}
& (16) & & M_{good} \vdash \Alocalsb \   \longrightarrow \protectedFrom {\prg{a.key}} {\prg{price}} 
\end{array}
$

We apply {\textsc{Consequ}} on (15), (16) and (14) and obtain (13)!

\normalsize

\end{proofO}

\begin{lemma} [ \prg{transfer} satisfies $S_2$]
\label{l:transfer:S2}
\small
\begin{align*}
\text{(2)}  \ \ \ \ M_{good} \vdash \ 
		&	\{  \ \Alocalstr, \, \prg{a}:\prg{Account}\, \wedge\,  {\inside{\prg{a.key}}} \, \wedge \, \protectedFrom {\prg{a.key}} {\Idstr}  \  \} \\
		& \SPT \prg{Account}::\prg{transfer}.\prg{body}\ \\  
		& \{ {\inside{\prg{a.key}}}\ \wedge\ {\PushASLong {\prg{res}} {\inside{\prg{a.key}}}}  \} \ \ \  || \ \ \ 
		   \{ {\inside{\prg{a.key}}} \}
\end{align*}
\normalsize

\end{lemma}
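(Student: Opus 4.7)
\begin{proofO}
The plan is to show that the body of \prg{Account::transfer} contains no method calls, so the quadruple reduces (via rule {\sc{Mid}}) to a triple, and then to argue that none of the field writes in the body can touch \prg{key}-fields, so both $\inside{\prg{a.key}}$ and $\protectedFrom{\prg{a.key}}{\Idstr}$ are preserved.

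First, I would split the conditional using rule {\sc{If\_Rule}} and consider the two branches separately. In the \prg{true}-branch the body is the sequence $\prg{this.blnce -= amt};\ \prg{dest.blnce += amt};\ \prg{res := 0}$; in the \prg{false}-branch it is just $\prg{res := 0}$. Neither branch contains a method call, so after the triples are established, rule {\sc{Mid}} yields the quadruple with mid-condition $\inside{\prg{a.key}}$ for free (lemma \ref{l:no:meth:calls} applies).

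Next, for each branch, I would invoke the underlying Hoare logic to derive, for a fresh ghost variable $z$ with $z \notin \{\prg{this},\prg{dest},\prg{key'},\prg{amt},\prg{res}\}$,
\begin{align*}
M_{good} \vdash_{ul}\ & \{\ \Alocalstr \wedge z = \prg{a.key}\ \}\\
& \SPT \prg{this.blnce -= amt};\ \prg{dest.blnce += amt};\ \prg{res := 0}\ \\
& \{\ z = \prg{a.key}\ \}
\end{align*}
This holds because the two field writes target \prg{blnce} fields, and by the standard type-based framing assumptions from App. \ref{s:expectations}, a \prg{blnce} field cannot alias a \prg{key} field. From this triple, rules {\sc{Prot-1}} and {\sc{Prot-2}} (together with {\sc{Combine}} and {\sc{Types-2}}) give
\[
M_{good} \vdash \{\ \Alocalstr \wedge \prg{a}:\prg{Account} \wedge \inside{\prg{a.key}} \wedge \protectedFrom{\prg{a.key}}{\Idstr}\ \}\ \prg{body}\ \{\ \inside{\prg{a.key}} \wedge \protectedFrom{\prg{a.key}}{\Idstr}\ \}.
\]
The analogous (simpler) triple for the \prg{false}-branch is immediate from {\sc{Embed\_ul}} and {\sc{Prot-1}}, {\sc{Prot-2}}.

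Finally, I would discharge the $\protectedFrom{\prg{a.key}}{\prg{res}}$ conjunct of the post-condition. Since \prg{res} is assigned the scalar value $0$ in both branches, $\prg{res}:\prg{int}$ holds at exit, and rule {\sc{Prot-Int}} from Fig. \ref{f:protection:conseq:ext} gives $\protectedFrom{\prg{a.key}}{\prg{res}}$, which by definition of $\pushSymbolInText$ is exactly $\PushASLong{\prg{res}}{\inside{\prg{a.key}}}$. Combining the two branches with {\sc{If\_Rule}}, applying {\sc{Mid}} to obtain the mid-condition $\inside{\prg{a.key}}$, and then using the rule of consequence to absorb the unused preconditions concludes the proof. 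The main obstacle is really just bookkeeping: the argument is conceptually trivial because \prg{transfer} contains no external call, so neither adaptation nor the module invariant of $S_2$ need to be invoked --- we rely entirely on the fact that \prg{key}-fields are not aliased by \prg{blnce}-fields, which is an assumption of the underlying logic and type system.
\end{proofO}
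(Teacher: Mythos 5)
Your proposal is correct and follows essentially the same route as the paper's proof: use the underlying Hoare logic to show the body leaves \prg{a.key} unchanged, lift that to preservation of $\inside{\prg{a.key}}$ via {\sc{Prot-1}} and {\sc{Embed\_ul}}, obtain $\PushASLong{\prg{res}}{\inside{\prg{a.key}}} = \protectedFrom{\prg{a.key}}{\prg{res}}$ from $\prg{res}:\prg{int}$ and {\sc{Prot-Int}}, and get the mid-condition from {\sc{Mid}} since the body makes no calls (the paper treats the whole conditional at once rather than splitting with {\sc{If\_Rule}}, but that is cosmetic). One minor remark: carrying $\protectedFrom{\prg{a.key}}{\Idstr}$ through the body is unnecessary for the stated postcondition, and {\sc{Prot-2}} does not actually cover the field-write statements $x.f:=y$ ({\sc{Prot-4}} would be the relevant rule), so it is best to simply drop that conjunct as the paper does.
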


\begin{proofO}

To prove (2), we will need to prove that

\small
\begin{align*}
\text{(21?)}  \ \ \ \ M_{good} \vdash \ 
		&	\{  \ \Alocalstr, \, \prg{a}:\prg{Account}\, \wedge\,  {\inside{\prg{a.key}}} \, \wedge \, \protectedFrom {\prg{a.key}} {\Idstr}  \  \} \\
		&  \SPT   \prg{if (this.key==key') then }\\
		& \SPT \SPT   \SPT\SPT  \prg{this.\balance:=this.\balance-amt} \\
	        & \SPT \SPT   \SPT\SPT  \prg{dest.\balance:=dest.\balance+amt} \\
		& \SPT   \prg{else }\\
		& \SPT\SPT   \SPT\SPT  \prg{res:=0} \\
		& \SPT \prg{res:=0} \\
		& \{ {\inside{\prg{a.key}}}\ \wedge\ {\PushASLong {\prg{res}} {\inside{\prg{a.key}}}}  \} \ \ \  || \ \ \ 
		   \{ {\inside{\prg{a.key}}} \}
\end{align*}
\normalsize

Using the underlying Hoare logic we can prove that no account's \prg{key} gets modified, namely

\small
\begin{align*}
\text{(22)}  \ \ \ \ M_{good}\ \vdash_{ul} \ 
		&	\{  \ \Alocalstr, \, \prg{a}:\prg{Account}\, \wedge\,   {\inside{\prg{a.key}}} \\
		&  \SPT   \prg{if (this.key==key') then }\\
		& \SPT \SPT   \SPT\SPT  \prg{this.\balance:=this.\balance-amt} \\
	        & \SPT \SPT   \SPT\SPT  \prg{dest.\balance:=dest.\balance+amt} \\
		& \SPT   \prg{else }\\
		& \SPT\SPT   \SPT\SPT  \prg{res:=0} \\
		& \SPT \prg{res:=0} \\
		& \{    \inside{\prg{a.key}}   \} \ \ \ 
\end{align*}
\normalsize

Using (22) and {\sc{[Prot-1]}},  we obtain

\small
\begin{align*}
\text{(23)}  \ \ \ \ M_{good}\ \vdash  \ 
		&	\{  \ \Alocalstr, \, \prg{a}:\prg{Account}\, \wedge\,  z=\prg{a.key} \} \\
		&   \SPT   \prg{if (this.key==key') then }\\
		& \SPT \SPT   \SPT\SPT  \prg{this.\balance:=this.\balance-amt} \\
	        & \SPT \SPT   \SPT\SPT  \prg{dest.\balance:=dest.\balance+amt} \\
		& \SPT   \prg{else }\\
		& \SPT\SPT   \SPT\SPT  \prg{res:=0} \\
		& \SPT \prg{res:=0} \\
		& \{  z=\prg{a.key} \} \ \ \ 
\end{align*}
\normalsize

Using (23) and   {\sc{[Embed-UL]}}, we obtain 

\small
\begin{align*}
\text{(24)}  \ \ \ \ M_{good}\ \vdash \ 
		&	\{  \ \Alocalstr, \, \prg{a}:\prg{Account}\, \wedge\,  z=\prg{a.key} \} \\
		&  \SPT   \prg{if (this.key==key') then }\\
		& \SPT \SPT   \SPT\SPT  \prg{this.\balance:=this.\balance-amt} \\
	        & \SPT \SPT   \SPT\SPT  \prg{dest.\balance:=dest.\balance+amt} \\
		& \SPT   \prg{else }\\
		& \SPT\SPT   \SPT\SPT  \prg{res:=0} \\
		& \SPT \prg{res:=0} \\
		& \{  z=\prg{a.key} \}  \ \ \  || \ \ \  \{  z=\prg{a.key} \} 
\end{align*}
\normalsize

 {\sc{[Prot\_Int]}} and the fact that $z$ is an \prg{int}  gives us that ${\PushASLong {\prg{res}} {\inside{\prg{a.key}}}}$.
Using  {\sc{[Types]}},  and  {\sc{[Prot\_Int]}} and   {\sc{[Consequ]}}   on (24) we obtain (21?).

\end{proofO}

\label{s:set:sat}

We want to prove that this public method satisfies the specification  $\STwoStrong$, namely

\begin{lemma}[$\prg{set}$ satisfies $\STwo$]
\label{l:set:sat}
\label{l:satisfies:Mfine:pec2}
 
\begin{align*}
\text{(6)}  \ \ \ \ M_{fine} \vdash 
		&	\{  \ \Alocalsset\ \wedge\  {\inside{\prg{a.key}}} \, \wedge \, \protectedFrom {\prg{a.key}} {\Idsset}  \  \} \\
		& \SPT   \prg{if (this.key==key') then }\\
		& \SPT \SPT   \SPT\SPT  \prg{this.key:=key''} \\
	        & \SPT   \prg{else }\\
		& \SPT\SPT   \SPT\SPT  \prg{res:=0} \\
		& \SPT \prg{res:=0} \\
& \{ {\inside{\prg{a.key}}}\ \wedge\ {\PushASLong {\prg{res}} {\inside{\prg{a.key}}}}  \} \ \ \  || \ \ \ 
	   \{ {\inside{\prg{a.key}}} \}
\end{align*}

\end{lemma}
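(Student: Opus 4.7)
\begin{proofO}
The proof follows the pattern already sketched in Example~\ref{e:public}. The plan is to apply \textsc{If\_Rule} from Fig.~\ref{f:substructural:app} to reduce the proof to two sub-goals, one per branch, plus a trailing \prg{res:=0}. The else-branch is immediate since its body does not touch any field: using \textsc{Embed\_UL} (for preservation of \prg{a.key}) together with \textsc{Prot-1} (for preservation of $\inside{\prg{a.key}}$), followed by \textsc{Mid}, discharges it. The trailing \prg{res:=0} is handled by observing that \prg{res} has type \prg{int}, so \textsc{Prot-Int} from Fig.~\ref{f:protection:conseq:ext} gives $\protectedFrom{\prg{a.key}}{\prg{res}}$, i.e.\ $\PushASLong{\prg{res}}{\inside{\prg{a.key}}}$, and again \textsc{Prot-1} preserves $\inside{\prg{a.key}}$.

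The main work is the then-branch, i.e.\ proving the quadruple
\[
\{\ \Alocalsset \wedge \inside{\prg{a.key}} \wedge \protectedFrom{\prg{a.key}}{\Idsset} \wedge \prg{this.key}=\prg{key'}\ \}\ \prg{this.key:=key''}\ \{\ \inside{\prg{a.key}}\ \}\ ||\ \{\ \inside{\prg{a.key}}\ \}.
\]
For this, I would use \textsc{Cases} to split on whether $\prg{this}=\prg{a}$ or $\prg{this}\neq\prg{a}$. In the first sub-case, from $\prg{this}=\prg{a}$ and $\prg{this.key}=\prg{key'}$ we obtain $\prg{a.key}=\prg{key'}$. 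On the other hand, the precondition yields $\protectedFrom{\prg{a.key}}{\prg{key'}}$ (one conjunct of $\protectedFrom{\prg{a.key}}{\Idsset}$), and since \prg{a.key} is internal by typing, \textsc{Prot-Neq} from Fig.~\ref{f:protection:conseq:ext} gives $\prg{a.key}\neq\prg{key'}$. This is a contradiction, so by \textsc{Absurd} the sub-case is discharged.

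For the second sub-case, $\prg{this}\neq\prg{a}$, the underlying Hoare logic proves that the value of the field \prg{a.key} is unchanged by \prg{this.key:=key''}; formally, for a fresh $z$ we have $M_{fine}\vdash_{ul}\{\prg{this}\neq\prg{a}\wedge z=\prg{a.key}\}\ \prg{this.key:=key''}\ \{z=\prg{a.key}\}$. Embedding this via \textsc{Embed\_UL} and combining with \textsc{Prot-1} (whose premise is exactly such a triple witnessing that $z=\prg{a.key}$ is preserved) yields preservation of $\inside{\prg{a.key}}$. Using \textsc{Mid} promotes the resulting triple to a quadruple whose mid-condition is $\inside{\prg{a.key}}$, since the statement contains no method calls.

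The main obstacle is ensuring that the case analysis on $\prg{this}=\prg{a}$ is properly handled in our logic: we need the propagation of $\prg{this.key}=\prg{key'}$ into the then-branch (from \textsc{If\_Rule}'s conditional strengthening), and we need the interaction between $\protectedFrom{\_}{\_}$ and equality that is captured by \textsc{Prot-Neq}. Once these are in place, combining the two sub-cases via \textsc{Cases}, sequencing with the trailing \prg{res:=0} via \textsc{Sequ}, and closing under \textsc{Consequ} (to reintroduce the $\prg{this.key}=\prg{key'}$ or $\prg{this.key}\neq\prg{key'}$ case distinction from the raw \textsc{If\_Rule} output) yields (6). In contrast, the analogous quadruple for $M_{bad}$'s \prg{set} cannot be proved, because its body unconditionally executes \prg{this.key:=key'} and we lack the equation $\prg{this.key}=\prg{key'}$ that makes the $\prg{this}=\prg{a}$ sub-case contradictory.
\end{proofO}
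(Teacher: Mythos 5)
Your proof is correct and takes essentially the same approach as the paper's: \textsc{Sequ}/\textsc{If\_Rule} decomposition, \textsc{Absurd} via \textsc{Prot-Neq} in the aliasing sub-case, and \textsc{Embed\_UL} plus \textsc{Prot-1} in the non-aliasing sub-case. The only (immaterial) difference is that the paper's appendix proof case-splits on $\prg{this.key}=\prg{a.key}$ and then derives $\prg{this}\neq\prg{a}$ in the second branch, whereas you split directly on $\prg{this}=\prg{a}$, exactly as in the paper's own Example~\ref{e:public}.
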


\begin{proofO}
We will be  using the shorthand 
 $\SPT  \Alocalssets\ \triangleq \  \prg{a}:\prg{Account},\  \Alocalsset$.\\

To prove (6), we will use the  {\sc{Sequence}} rule, and we want to prove
\\
\begin{align*}
\text{(61?)}  \ \ \ \ M_{fine} \vdash 
		&	\{  \ \Alocalssets\ \wedge\  {\inside{\prg{a.key}}} \, \wedge \, \protectedFrom {\prg{a.key}} {\Idsset} \  \} \\
		& \SPT   \prg{if (this.key==key') then }\\
		& \SPT \SPT   \SPT\SPT  \prg{this.key:=key''} \\
	        & \SPT   \prg{else }\\
		& \SPT\SPT   \SPT\SPT  \prg{res:=0} \\
		& \{ \ \Alocalssets\,\wedge\ {\inside{\prg{a.key}}}\    \} \ \ \  || \ \ \ 
		   \{ {\inside{\prg{a.key}}} \}
\end{align*}
and that
\begin{align*}
\text{(62?)}  \ \ \ \ M_{fine} \vdash
          &  \{ \ \Alocalssets\,\wedge \inside{\prg{a.key}} \  \}  \\
		& \SPT\SPT   \SPT\SPT  \prg{res:=0} \\
		& \{ {\inside{\prg{a.key}}}\ \wedge\ {\PushASLong {\prg{res}} {\inside{\prg{a.key}}}}  \} \ \ \  || \ \ \ 
		   \{ {\inside{\prg{a.key}}} \}
\end{align*}

(62?) follows   from the types, and {\sc{Prot-Int}}$_1$, the fact that \prg{a.key} did not change, and  \sdN{ {\sc{Prot-1}}}.

\vspace{.5cm}
We now  want to  prove (61?). For this, will apply the {\sc{If-Rule}}. That is, we need to prove that

\begin{align*}
\text{(63?)}  \ \ \ \ M_{fine} \vdash 
		&	\{  \ \Alocalssets\,\wedge\, {\inside{\prg{a.key}}} \, \wedge \, \protectedFrom {\prg{a.key}} {\Idsset} \, \wedge  \,  \prg{this.key}=\prg{key'}\  \} \\
		& \SPT \SPT   \SPT\SPT  \prg{this.key:=key''} \\
		& \{ {\inside{\prg{a.key}}}  \} \ \ \  || \ \ \ 
		   \{ {\inside{\prg{a.key}}} \}
\end{align*}
 
and that
 
\begin{align*}
\text{(64?)}  \ \ \ \ M_{fine} \vdash 
		&	\{  \ \Alocalssets\,\wedge\, {\inside{\prg{a.key}}} \, \wedge \, \protectedFrom {\prg{a.key}} {\Idsset} \, \wedge  \,  \prg{this.key}\neq\prg{key'}\  \} \\
		& \SPT\SPT   \SPT\SPT  \prg{res:=0} \\
		& \{ {\inside{\prg{a.key}}}\   \} \ \ \  || \ \ \ 
		   \{ {\inside{\prg{a.key}}} \}
\end{align*}

(64?) follows easily from  the fact that \prg{a.key} did not change, and  {\sc{Prot-1}}.

\vspace{.5cm}
We look at the proof of (63?).  We will apply the {\sc{Cases}} rule, and distinguish on whether \prg{a.key}=\prg{this.key}. That is, we want to prove that\\
\small{
\begin{align*}
\text{(65?)}  \ \ \ \ M_{fine} \vdash 
		&	\{  \ \Alocalssets\,\wedge\, {\inside{\prg{a.key}}} \, \wedge \, \protectedFrom {\prg{a.key}} {\Idsset} \, \wedge  \,  \prg{this.key}=\prg{key'}\ \wedge\ \prg{this.key}=\prg{a.key}  \} \\
			& \SPT \SPT   \SPT\SPT  \prg{this.key:=key''} \\
	       	& \{ {\inside{\prg{a.key}}}\   \} \ \ \  || \ \ \ 
		   \{ {\inside{\prg{a.key}}} \}
\end{align*}
}
\\
and that
\\
\small{
\begin{align*}
\text{(66?)}  \ \ \ \ M_{fine} \vdash 
		&	\{  \ \Alocalssets\,\wedge\, {\inside{\prg{a.key}}} \, \wedge \, \protectedFrom {\prg{a.key}} {\Idsset} \, \wedge  \,   \,  \prg{this.key}=\prg{key'}\  \wedge \prg{this.key}\neq\prg{a.key'}\  \} \\
		& \SPT \SPT   \SPT\SPT  \prg{this.key:=key''} \\
		& \{ {\inside{\prg{a.key}}}\   \ \ \  || \ \ \ 
		   \{ {\inside{\prg{a.key}}} \}
\end{align*}
}
\vspace{.2cm}
\normalsize
We can prove (65?) through application of {\sc{Absurd}}, {\sc{ProtNeq}}, and {\sc{Consequ}}, as follows

\begin{align*}
\text{(61c)}  \ \ \ \ M_{fine} \vdash 
		&	\{  \ false  \} \\
		& \SPT \SPT   \SPT\SPT  \prg{this.key:=key''} \\
		& \{ {\inside{\prg{a.key}}}\   \} \ \ \  || \ \ \ 
		   \{ {\inside{\prg{a.key}}} \}
\end{align*}

By  {\sc{ProtNeq}}, we have $M_{fine} \vdash  \protectedFrom {\prg{a.key}} {\prg{key'}} \, \longrightarrow\, {\prg{a.key}}\neq {\prg{key'}}$, and therefore obtain

\begin{align*}
\text{(61d)}  \ \ \ \ M_{fine} \vdash  ... \wedge \, \protectedFrom {\prg{a.key}} {\Idsset} \, \wedge  \, \prg{this.key}=\prg{a.key}\, \wedge\,  \prg{this.key}=\prg{key'}\ \longrightarrow \ false 
\end{align*}

We apply  {\sc{Consequ}} on (61c) and (61d) and obtain (61aa?).

\vspace{.5cm}
We can prove (66?) by proving that \prg{this.key}$\neq$\prg{a.key} implies that $\prg{this}\neq \prg{a}$ (by the underlying Hoare logic), which again implies that the assignment \prg{this.key := ... } leaves the value of \prg{a.key} unmodified. We apply {\sc{Prot-1}}, and are done.

\end{proofO}

\subsection{Showing that $M_{bad}$ does not satisfy $S_2$ nor $S_3$}

\subsubsection{$M_{bad}$ does not satisfy $S_2$}
$M_{bad}$ does not satisfy $S_2$. We can argue this semantically (as in \S \ref{s:bad:not:S2}), and also in terms of the proof system (as in \ref{s:bad:not:S2:proof}).

\subsubsection{$M_{bad}\nvDash S_2$}
\label{s:bad:not:S2}
 The reason is that $M_{bad}$ exports the public method \prg{set}, which updates the key without any checks. 
So, it could start in a state where the key of the account was protected, and then update it to something not protected.

In more detail: Take any state $\sigma$, where $M_{bad},\sigma \models a_0:\prg{Account}, k_0:\prg{Key} \wedge \inside{a_0.\prg{key}}$. 
Assume also that $M_{bad},\sigma \models \extThis$.  
Finally, assume that the continuation in $\sigma$ consists of $a_0.\prg{set}(k_0)$.
Then we obtain that $M_{bad}, \sigma \leadsto^* \sigma'$, where $\sigma'=\sigma[a_0.\prg{key}\mapsto k_0]$.
We also  have that $M_{bad},\sigma' \models \extThis$, and because $k_0$ is a local variable, we also have that $M_{bad},\sigma' \nvDash \inside{k_0}$.
Moreover, $M_{bad}, \sigma' \models a_0.\prg{krey}=k_0 $.
Therefore, $M_{bad},\sigma' \nvDash \inside{a_0.\prg{key}}$.

\subsubsection{$M_{bad}  \nvdash S_2$}
\label{s:bad:not:S2:proof}

In order to prove that $M_{bad}  \vdash S_2$, we would have needed to prove, among other things,  that \prg{set} satisfied $S_2$, which means proving that

\small{
\begin{align*}
\text{(ERR\_1?)}  \ \ \ \ M_{bad}\ \vdash \ 
		&	\{  \ \prg{this}:\prg{Account}, \prg{k'}:\prg{Key}, a:\prg{Account}\, \wedge\, \inside{a.\prg{key}}  \, \wedge \, \protectedFrom {\prg{a.key}} { \{\prg{this},\prg{k'}\} }\   \} \\
			& \SPT \SPT   \SPT\SPT  \prg{this.key:=k'}; \\
			& \SPT \SPT   \SPT\SPT \prg{res}:=0 \\ 
	       	& \{ \  \inside{a.\prg{key}}  \, \wedge \, \protectedFrom {a.\prg{key}} {\prg{res}} \   \} \ \ \  || \ \ \ 
		   \{ ... \}
\end{align*}
}

However, we cannot  establish $\text{(ERR\_1?)}$.
Namely, when we  take the case where $\prg{this}=a$,  we would need to establish, that

\small{
\begin{align*}
\text{(ERR\_2?)}  \ \ \ \ M_{bad}\ \vdash \ 
		&	\{  \ \prg{this}:\prg{Account}, \prg{k'}:\prg{Key}\, \wedge\, \inside{\prg{this}.\prg{key}}  \, \wedge \, \protectedFrom {\prg{this.key}} { \{\prg{this},\prg{k'}\} }\   \} \\
			& \SPT \SPT   \SPT\SPT  \prg{this.key:=k'}  \\
	       	& \{ \ \inside{\prg{this}.\prg{key}}  \   \} \ \ \  || \ \ \ 
		   \{ ... \}
\end{align*}
} 

And there is no way to prove $\text{(ERR\_2?)}$. In fact, $\text{(ERR\_2?)}$  is not sound, for the reasons outlined in \S \ref{s:bad:not:S2}.

\subsubsection{$M_{bad}$ does not satisfy $S_3$}

We have already argued in Examples \ref{e:versions} and \ref{e:public} that $M_{bad}$ does not satisfy $S_3$, by giving a semantic argument -- ie we are in state where $ \inside{a_0.\prg{key}}$, and execute $\prg{a}_0.\prg{set(k1)}; \prg{a}_0.\prg{transfer}(...\prg{k1})$. 

Moreover, if we attempted to prove that \prg{set} satisfies $S_3$, we would have to show that

\small{
\begin{align*}
\text{(ERR\_3?)}  \ \ \ \ M_{bad}\ \vdash \ 
		&	\{  \ \prg{this}:\prg{Account},\ \prg{k'}:\prg{Key}, a:\prg{Account},\, b:\prg{int}\ \wedge \\
  	 &	  \ \ \     \inside{a.\prg{key}}  \, \wedge \,    \protectedFrom {\prg{a.key}} { \{\prg{this},\prg{k'} \} }   
 	  \ \wedge\ a.\prg{\balance}\geq b \ \}  \\
 					& \SPT \SPT   \SPT\SPT  \prg{this.key:=k'}; \\ 
 			& \SPT \SPT   \SPT\SPT \prg{res}:=0 \\  
 	       	& \{ \  \inside{a.\prg{key}}  \, \wedge \, \protectedFrom {a.\prg{key}} {\prg{res}} \ \wedge\  a.\prg{\balance}\geq b \ \}  \ \ \  || \ \ \ 
		   \{ ... \}
\end{align*}
}

which, in the case of $a=\prg{this}$ would imply that

\small{
\begin{align*}
\text{(ERR\_4?)}  \ \ \ \ M_{bad}\ \vdash \ 
		&	\{  \ \prg{this}:\prg{Account},\ \prg{k'}:\prg{Key}, \, b:\prg{int}\ \wedge \\
  	 &	  \ \ \     \inside{\prg{this}.\prg{key}}  \, \wedge \,    \protectedFrom {\prg{\prg{this}.key}} { \{\prg{this},\prg{k'} \} }   
 	  \ \wedge\ \prg{this}.\prg{\balance}\geq b \ \}  \\
 					& \SPT \SPT   \SPT\SPT  \prg{this.key:=k'} \\
 	& \{ \  \inside{\prg{this}.\prg{key}}     \ \}  \ \ \  || \ \ \ 
		   \{ ... \}
\end{align*}
}

And  $\text{(ERR\_4?)}$ cannot be proven and does not hold.

\subsection{Demonstrating that $M_{good} \vdash \SThree$, and that $M_{fine} \vdash \SThree$}
 \label{s:app:example:proofs:S3}

 \subsection{Extending the specification $\SThree$}
\label{s:extend:spec:three}

As in \S \ref{s:extend:spec}, we redefine $\SThree$ so that it also describes the behaviour of method \prg{send}. and have:
\\
\\
$\strut  \SPSP  \SThreeStrong \ \  \triangleq \ \ \ \SThree \ \wedge \ S_{2a} \ \wedge \ S_{2b} $


\begin{lemma}[module $M_{good}$  satisfies $\SThreeStrong$]
\label{lemma:S3}
\label{l:Mgood:S3}
$M_{good} \vdash \SThreeStrong$
\end{lemma}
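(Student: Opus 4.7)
\begin{proofO}
The plan is to apply rule \textsc{Invariant} from Fig.~\ref{f:wf} to $S_3$, producing one Hoare quadruple for each of the public methods of $M_{good}$ (\prg{Shop::buy}, \prg{Account::transfer}, \prg{Account::set}), plus the two method specifications $S_{2a}, S_{2b}$ for \prg{Shop::send} handled via rule \textsc{Method}. Concretely, for every public $m$ with parameters $\overline{y:D}$ and body $stmts_m$ it suffices to show
\begin{align*}
M_{good} \vdash\ & \{ \prg{this}\!:\!C,\overline{y\!:\!D},\prg{a}\!:\!\prg{Account},\prg{b}\!:\!\prg{int}\,\wedge\,\inside{\prg{a.key}}\,\wedge\,\prg{a.\balance}\!\geq\!\prg{b} \\
                 & \phantom{\{\ } \wedge\,\protectedFrom{\prg{a.key}}{(\prg{this},\overline{y})}\,\wedge\,\protectedFrom{\prg{a.\balance}\!\geq\!\prg{b}\,\text{facts}}{\cdot}\,\} \\
                 & \ \ \ stmts_m \\
                 & \{\inside{\prg{a.key}}\wedge\prg{a.\balance}\!\geq\!\prg{b}\wedge\PushASLong{\prg{res}}{(\inside{\prg{a.key}}\wedge\prg{a.\balance}\!\geq\!\prg{b})}\}\ ||\ \{\inside{\prg{a.key}}\wedge\prg{a.\balance}\!\geq\!\prg{b}\}.
\end{align*}

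First I will handle \prg{Shop::buy}, which is the most interesting case. Using the underlying Hoare logic combined with \textsc{embed\_ul}, \textsc{Prot-1} and \textsc{Prot-2}, I propagate $\inside{\prg{a.key}} \wedge \prg{a.\balance} = b$ together with the relative protection $\protectedFrom{\prg{a.key}}{\prg{buyer}}$ through the assignments on lines 10--12 (these only read fields of internal objects and assign to scalars/internal references). For the critical external call \prg{buyer.pay(myAccnt,price)} I apply \textsc{Call\_Ext\_Adapt} with the invariant $S_3$ itself (which is in $\SpecOf{M_{good}}$ by assumption). Adaptation requires establishing
\[
  \protectedFrom{\prg{a.key}}{(\prg{buyer},\prg{myAccnt},\prg{price})},
\]
which I get exactly as in Lemma~\ref{l:buy:sat}: from the precondition for \prg{buyer}, from \textsc{Prot-Int} for \prg{price}, and from \textsc{Prot-Intl} for \prg{myAccnt} (all fields transitively reachable from an internal \prg{Account} are scalar or internal). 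The remaining code after the call (reading the new balance, the conditional, and the internal call to \prg{send}) is handled by the underlying logic plus the method spec $S_{2a}\wedge S_{2b}$ for \prg{send}, combined via \textsc{sequ}, \textsc{If\_Rule}, \textsc{combine} and \textsc{consequ}.

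Next, for \prg{Account::transfer}, I apply \textsc{If\_Rule} on \prg{this.key==key'}. In the \texttt{else} branch nothing changes and preservation is immediate via \textsc{Prot-1} and the underlying logic. In the \texttt{then} branch I case-split on whether $\prg{a}=\prg{this}$. If $\prg{a}\neq\prg{this}$, the underlying logic shows $\prg{a.\balance}$ and $\prg{a.key}$ are untouched and preservation again follows from \textsc{Prot-1}. The delicate case is $\prg{a}=\prg{this}$: here the assignment decreases \prg{a.\balance}, which would falsify the invariant. But the precondition contains $\protectedFrom{\prg{a.key}}{(\prg{this},\prg{dest},\prg{key'},\prg{amt})}$, so in particular $\protectedFrom{\prg{a.key}}{\prg{key'}}$, and rule \textsc{Prot-Neq} (Fig.~\ref{f:protection:conseq:ext}) gives $\prg{a.key}\neq\prg{key'}$; combined with the branch guard $\prg{this.key}=\prg{key'}$ and $\prg{a}=\prg{this}$ this yields \prg{false}, and \textsc{Absurd} closes the case. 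For \prg{Account::set} the proof is essentially Lemma~\ref{l:satisfies:Mfine:pec2} with the added conjunct $\prg{a.\balance}\geq\prg{b}$ trivially preserved (the body does not touch \prg{\balance}); the same \textsc{Prot-Neq} plus \textsc{Absurd} argument discharges the case $\prg{a}=\prg{this}$.

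Finally, $S_{2a}$ and $S_{2b}$ for \prg{send} follow directly via rule \textsc{Method} from the underlying Hoare logic combined with \textsc{Prot-1}/\textsc{Prot-2}, since \prg{send} is private, makes only innocuous internal or outgoing calls to \prg{anItem}/\prg{buyer} whose protection/balance obligations reduce in the same way as above. Well-formedness of $S_{3,strong}$ is immediate from Def.~\ref{f:holistic-wff}: the assertion is $Stb^+$ (no negative $\inside{\_}$, no $\protectedFrom{\_}{\_}$), its free variables are exactly $\overline{x}$, and $M_{good}\vdash\encaps{\overline{x:C}\wedge\inside{\prg{a.key}}\wedge\prg{a.\balance}\geq\prg{b}}$ because \prg{balance} and \prg{key} are fields of the internal class \prg{Account}. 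The main obstacle is the $\prg{a}=\prg{this}$ case of \prg{transfer}; everything else either reuses the $S_2$-proof pattern of Lemmas~\ref{l:buy:sat}--\ref{l:transfer:S2} or is routine underlying-logic reasoning.
\end{proofO}
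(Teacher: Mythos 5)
Your proposal is correct and follows essentially the same route as the paper: \textsc{Invariant} reduces the claim to one quadruple per public method, the external call in \prg{buy} is discharged by \textsc{Call\_Ext\_Adapt} against $S_3$ with adaptation established via \textsc{Prot-Int}/\textsc{Prot-Intl}, and the dangerous branch of \prg{transfer} is eliminated by \textsc{Prot-Neq} (the paper routes the resulting disjunction $\prg{a}\neq\prg{this}\vee\prg{this.key}\neq\prg{key'}$ through \textsc{Consequ} into the underlying-logic precondition rather than through \textsc{Cases}+\textsc{Absurd}, but the idea is identical). Two cosmetic points: when $\prg{a}=\prg{dest}\neq\prg{this}$ the balance is not ``untouched'' but increases, which still preserves $\prg{a.\balance}\geq\prg{b}$; and the \texttt{else}-branch of \prg{buy} contains a second external call (\prg{buyer.tell}) that needs its own application of \textsc{Call\_Ext\_Adapt} together with \textsc{Prot-Str}, exactly as the paper does.
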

\begin{proofO}
In order to prove that 
$$M_{good} \vdash \TwoStatesN {\prg{a}:\prg{Account}, b:\prg{int} }{\  \inside{\prg{\prg{a.key}}} \wedge \prg{a.\balance}\geq b\ }$$
we have to apply  \textsc{Invariant} from Fig. \ref{f:wf}.
 That is, for each  class $C$  defined in $M_{good}$, and for each public method $m$ in $C$, with parameters $\overline{y:D}$, we have to prove that they satisfy the corresponding quadruples.


\normalsize
Thus, we need to prove  three Hoare quadruples: one for \prg{Shop::buy}, one for  \prg{Account::transfer}, and one for  \prg{Account::set}.  That is, we have to prove that
 \small
\begin{align*}
\text{(31?)}  \ \ \ \ M_{good}\ \vdash  \  \ 
		&	\{  \ \Alocals, \, \prg{a}:\prg{Account}, b:\prg{int} \, \wedge\, {\inside{\prg{a.key}}} \, \wedge \, \protectedFrom {\prg{a.key}} {\Ids} \wedge \prg{a.\balance}\geq b \  \} \\
		& \SPT \prg{Shop}::\prg{buy}.\prg{body}\ \\  
		& \{ {\inside{\prg{a.key}}}\ \wedge\ {\PushASLong {\prg{res}} {\inside{\prg{a.key}}}}  \wedge \prg{a.\balance}\geq b \} \ \ \  || \ \ \ 
		   \{ {\inside{\prg{a.key}}}  \wedge \prg{a.\balance}\geq b \}
\\
\text{(32?)}  \ \ \ \ M_{good} \vdash \ 
		&	\{  \ \Alocalstr, \, \prg{a}:\prg{Account}\, , b:\prg{int} \, \wedge\,  {\inside{\prg{a.key}}} \, \wedge \, \protectedFrom {\prg{a.key}} {\Idstr}  \ \wedge \prg{a.\balance}\geq b \  \} \\
		& \SPT \prg{Account}::\prg{transfer}.\prg{body}\ \\  
		& \{ {\inside{\prg{a.key}}}\ \wedge\ {\PushASLong {\prg{res}} {\inside{\prg{a.key}}} } \wedge \prg{a.\balance}\geq b \  \} \ \ \  || \ \ \ 
		   \{ {\inside{\prg{a.key}}} \wedge \prg{a.\balance}\geq b \  \}
\\
\text{(33?)}  \ \ \ \ M_{good} \vdash \ 
		&	\{  \ \Alocalsset, \, \prg{a}:\prg{Account}\, , b:\prg{int} \, \wedge\,  {\inside{\prg{a.key}}} \, \wedge \, \protectedFrom {\prg{a.key}} {\Idsset}  \ \wedge \prg{a.\balance}\geq b \  \} \\
		& \SPT \prg{Account}::\prg{set}.\prg{body}\ \\  
 		& \{ {\inside{\prg{a.key}}}  \wedge\ {\PushASLong {\prg{res}} {\inside{\prg{a.key}}}}   \wedge \prg{a.\balance}\geq b \  \}\ \ \  || \ \ \ 
		   \{ {\inside{\prg{a.key}}} \wedge \prg{a.\balance}\geq b \  \} 
\end{align*}
\normalsize
where we are using ? to indicate that this needs to be proven, and 
where we are using the shorthands $\Alocals$,   $\Ids$, $\Alocalstr$, $\Idstr$, $\Alocalsset$ as defined earlier.

 \end{proofO}
 
The proofs for $M_{fine}$ are similar.

 We outline the proof of (31?) in Lemma \ref{l:buy:sat:S3}. 
 We outline the proof of (32?) in  Lemma \ref{l:transfer:sat:S3}.

\subsubsection{Proving that \prg{Shop::buy} from $M_{good}$ satisfies $\SThreeStrong$ and also $S_4$}
\label{s:buy:sat:S3}

\begin{lemma}[function $M_{good}::\prg{Shop}::\prg{buy}$  satisfies $\SThreeStrong$ and also $S_4$]
\label{l:buy:sat:S3}
 
\begin{align*}
\text{(31)}  \ \ \ \ M_{good}\ \vdash  \  \ 
		&	\{  \ \Alocals, \, \prg{a}:\prg{Account}, b:\prg{int}, \, \wedge\, {\inside{\prg{a.key}}} \, \wedge \, \protectedFrom {\prg{a.key}} {\Ids} \wedge \prg{a.\balance}\geq b \  \} \\
		& \SPT \prg{Shop}::\prg{buy}.\prg{body}\ \\  
		& \{ {\inside{\prg{a.key}}}\ \wedge\ {\PushASLong {\prg{res}} {\inside{\prg{a.key}}}}  \wedge \prg{a.\balance}\geq b \} \ \ \  || \ \ \ 
		   \{ {\inside{\prg{a.key}}}  \wedge \prg{a.\balance}\geq b \}
\end{align*}

\end{lemma}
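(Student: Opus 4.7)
\begin{proofO}
The strategy is to mirror the proof of Lemma \ref{l:buy:sat}, but to thread the additional balance clause $\prg{a.\balance}\geq b$ through every step. Crucially, $S_3$ bundles both $\inside{\prg{a.key}}$ and $\prg{a.\balance}\geq b$ into a single scoped invariant, so the same application of {\sc{Call\_Ext\_Adapt}} that handled $S_2$ at the \prg{pay} call site will now discharge $S_3$, provided the stronger adapted precondition can be established.

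First I will decompose $\prg{Shop}::\prg{buy}.\prg{body}$ using {\sc{Sequ}} and {\sc{If\_Rule}}. The initial block $\stmtsP$ (statements on lines 10--12) consists of pure reads, so via the underlying Hoare logic and {\sc{Embed\_UL}}, together with {\sc{Prot-1}}, {\sc{Prot-2}}, it preserves both $\inside{\prg{a.key}}$, the relevant protected-from clauses, and $\prg{a.\balance}\geq b$ --- exactly as in steps (10)--(12) of Lemma \ref{l:buy:sat}, augmented by a trivial frame for $b$. The statement $\prg{newBalance := myAccnt.blnce}$ is treated the same way. The final $\prg{res:=0}$ is discharged by {\sc{Embed\_UL}} plus {\sc{Prot-Int}}, which delivers ${\PushASLong{\prg{res}}{\inside{\prg{a.key}}}}$ since $0$ is scalar.

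Second, and this is the heart of the proof, I will handle the external call $\prg{buyer.pay(myAccnt,\,price)}$ by applying {\sc{Call\_Ext\_Adapt}} with $A \triangleq \inside{\prg{a.key}}\wedge \prg{a.\balance}\geq b$ from $S_3$ (taking $\overline{x{:}C}\triangleq \prg{a}{:}\prg{Account},b{:}\prg{int}$). By Def.~\ref{def:push}, the adapted precondition $\PushASLong{(\prg{buyer},\prg{myAccnt},\prg{price})}{A}$ reduces to $\protectedFrom{\prg{a.key}}{(\prg{buyer},\prg{myAccnt},\prg{price})}\wedge \prg{a.\balance}\geq b$, since $\pushSymbolInText$ leaves scalar conjuncts and $\protectedFrom{\_}{\_}$-conjuncts unchanged. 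The three protected-from clauses are exactly those in the companion proof of $S_2$: $\protectedFrom{\prg{a.key}}{\prg{buyer}}$ is in the hypothesis, $\protectedFrom{\prg{a.key}}{\prg{price}}$ follows from {\sc{Prot-Int}}, and $\protectedFrom{\prg{a.key}}{\prg{myAccnt}}$ follows from {\sc{Prot-Intl}} together with the type information that every field transitively reachable from an \prg{Account} is internal or scalar. The symmetric argument discharges the second external call $\prg{buyer.tell(\ldots)}$ in the \prg{else}-branch.

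Finally, the \prg{then}-branch contains the internal call $\prg{this.send(buyer,\,anItem)}$. This is handled by {\sc{Call\_Int}} applied to the conjunction of $S_{2a}$ and $S_{2b}$ (glued via {\sc{Combine}}), after renaming the bound $\prg{a}$ and $\prg{b}$ through $\promises {M_{good}} {\_}$, and re-using the protected-from instantiations above to discharge $S_{2a}$'s precondition while $S_{2b}$ transports $\prg{a.\balance}\geq b$ unchanged. The two branches are then glued by {\sc{If\_Rule}} and {\sc{Consequ}}. The main obstacle is bookkeeping around the mid-condition: $S_3$ must hold at \emph{every} intermediate external state, including those reached through nested calls inside \prg{send} and inside \prg{pay}. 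For the external calls this is exactly what the $A''$ slot of {\sc{Call\_Ext\_Adapt}} provides (using that $A$ is encapsulated, which follows from well-formedness of $S_3$); for \prg{send} it is precisely the purpose of attaching mid-conditions to $S_{2a}$ and $S_{2b}$, which is why we strengthened $\SThreeStrong$ to include them. The $S_4$ conclusion follows immediately by weakening the postcondition of (31) via {\sc{Consequ}}, dropping the $\inside{\prg{a.key}}$ conjuncts.
\end{proofO}
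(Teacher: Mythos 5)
Your proposal is correct and follows essentially the same route as the paper's proof: the same sequential/conditional decomposition, the same framing of the balance clause through lines 10--12 via the underlying logic and {\sc{Prot-1}}/{\sc{Prot-2}}, the same application of {\sc{Call\_Ext\_Adapt}} with $S_3$ (discharging the adapted precondition via {\sc{Prot-Int}} and {\sc{Prot-Intl}}) at the \prg{pay} and \prg{tell} call sites, and the same use of {\sc{Call\_Int}} with $S_{2a}\wedge S_{2b}$ for \prg{send}. The only cosmetic difference is that the paper obtains the $S_4$ obligation directly as quadruple (31) via {\sc{Method}}, whereas you derive it by weakening with {\sc{Consequ}}; both are fine.
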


\begin{proofO}
Note that (31) is a proof that $M_{good}::\prg{Shop}::\prg{buy}$  satisfies $\SThreeStrong$ and also  hat $M_{good}::\prg{Shop}::\prg{buy}$  satisfies $S_4$. This is so, because application of {\sc{[Method]}} on $S_4$ gives us exactly the proof obligation from (31).

This proof is similar to the proof of lemma \ref{l:buy:sat}, with the extra requirement here that we need to argue about balances not decreasing.
To do this, we will leverage the assertion about balances given in $S_3$.

We will use the shorthand $\Alocalsb \triangleq \Alocals, \,\prg{a}:\prg{Account}, b:\prg{int}$.
We will split the proof into 1) proving that statements 10, 11, 12 preserve the protection of \prg{a.key} from the \prg{buyer}, 2) proving that the external call 

\step{1st Step: proving statements 10, 11, 12}

We apply the underlying Hoare logic and prove that the statements on lines 10, 11, 12 do not affect the value of \prg{a.key} nor that of \prg{a.\balance}.  Therefore, for a $z,z'\notin \{ \prg{price}, \prg{myAccnt}, \prg{oldBalance} \}$, we have 

\begin{align*}
\text{(40)}  \ \ \ \ {M_{good} \vdash_{ul}} 
		&	\{  \ \Alocalsb\  \wedge\ z=\prg{a.key} \wedge\ z'=\prg{a.\balance}  \} \\
		&   \SPT \prg{price:=anItem.price}; \\  
		&   \SPT \prg{myAccnt:=this.accnt}; \\  
                 &   \SPT \prg{oldBalance := myAccnt.blnce};\\
		& \{ z=\prg{a.key} \wedge\ z'=\prg{a.\balance} \}
\end{align*}

We then apply {\sc{Embed\_UL}}, {\sc{Prot-1}} and {\sc{Prot-2}} and {\sc{Combine}} and and {\sc{Types-2}} on (10) and use the shorthand $\stmtsP$ for the statements on lines 10, 11 and 12, and obtain: 
\\
\begin{align*}
\text{(41)}  \ \ \ \ M_{good} \vdash 
		&	\{  \ \Alocalsb\  \wedge\ {\inside{\prg{a.key}}} \, \wedge\, \protectedFrom{\prg{buyer}} {\prg{a.key}} \wedge\ z'=\prg{a.\balance}  \} \\
		& \SPT \stmtsP\ \\  
		& \{ \ {\inside{\prg{a.key}}}  \, \wedge\, \protectedFrom{\prg{buyer}} {\prg{a.key}} \wedge\ z'=\prg{a.\balance}   \}
\end{align*}

We apply  {\sc{Mid}}  on (11) and obtain 
\begin{align*}
\text{(42)}  \ \ \ \ M_{good} \vdash 
		&	\{  \ \Alocalsb\, \wedge\, \protectedFrom {\prg{a.key}} {\prg{buyer}}\ \wedge\ z'=\prg{a.\balance}  \} \\
		& \SPT \stmtsP\ \\  
		& \{ \ \Alocalsb\, \wedge \  {\inside{\prg{a.key}}} \, \wedge\, \protectedFrom{\prg{buyer}} {\prg{a.key}}  \ \wedge\ z'=\prg{a.\balance}  \} \ \ || \\
		& \{ \ {\inside{\prg{a.key}}}\  \wedge\ z'=\prg{a.\balance} \}
\end{align*}

\step{2nd Step: Proving the External Call}

We now need to prove that the external method call \prg{buyer.pay(this.accnt, price)} protects the \prg{key}, and does nit decrease the balance, i.e.
\small
\begin{align*}
\text{(43?)} \ \ \ M_{good} \vdash & \{ \ \Alocalsb \   \wedge\    {\inside{\prg{a.key}}} \, \wedge\, \protectedFrom {\prg{a.key}} {\prg{buyer}} \wedge\ z'= \prg{a.\balance}  \} \\
		  		& \SPT  \prg{tmp := buyer.pay(myAccnt, price)}\ \\  
		  		& \{ \ \ \ \Alocalsb \ \wedge\ {\inside{\prg{a.key}}} \, \wedge\, \protectedFrom{\prg{buyer}} {\prg{a.key}} \wedge \ \prg{a.\balance}\geq z'\  \} \ \ \  || \ \ \  \\
		  		&   \{ \   {\inside{\prg{a.key}}}\ \wedge \  \prg{a.\balance}\geq z'  \}
\end{align*}
\normalsize

We use that $M \vdash \TwoStatesN  {\prg{a}:\prg{Account},\prg{b}:\prg{int}, }  {\inside{\prg{a.key}} \wedge \prg{a.\balance}\geq z'}   $
 and  obtain
 \\
 \small
\begin{align*}
\text{(44)} \ \ \ M_{good} \vdash & \{ \ \prg{buyer}:\prg{external},\,  {\inside{\prg{a.key}}} \, \wedge\, 
\protectedFrom {\prg{a.key}} {(\prg{buyer},\prg{myAccnt},\prg{price})} \  \wedge\ z'\geq \prg{a.\balance}  \} \\
		  		& \SPT  \prg{tmp := buyer.pay(myAccnt, price)}\ \\  
		  		& \{ \ \inside{\prg{a.key}} \, \wedge\, 
\protectedFrom {\prg{a.key}} {(\prg{buyer},\prg{myAccnt},\prg{price})}\ \wedge\ z'\geq \prg{a.\balance}  \} \ \ \  || \ \ \  \\
		  		&   \{ \   {\inside{\prg{a.key}}}\  \wedge\ z'\geq \prg{a.\balance}  \}
\end{align*}
\normalsize 
 
In order to obtain (43?) out of (44), we use the type system and type declarations and obtain that all objects transitively reachable from
\prg{myAccnt} or \prg{price} are scalar or internal. Thus, we 
apply \textsc{Prot-Intl}, 
and obtain\\
$
\begin{array}{llll}
& (45) & & M_{good} \vdash \Alocalsb \wedge  {\inside{\prg{a.key}}}\  \longrightarrow\ \protectedFrom {\prg{a.key}} {\prg{myAccnt}} 
\\
& (46) & & M_{good} \vdash \Alocalsb \wedge  {\inside{\prg{a.key}}}\  \longrightarrow\ \protectedFrom {\prg{a.key}} {\prg{price}} 
\\
& (47) & & M_{good} \vdash \Alocalsb \wedge  z'= \prg{a.\balance}\   \longrightarrow\  z'\geq \prg{a.\balance} 
\end{array}
$

We apply {\textsc{Consequ}} on (44), (45), (46) and (47) and obtain (43)!

\normalsize

\step{3nd Step: Proving the Remainder of the Body}

 We now need to prove that lines 15-19 of the method preserve the protection of \prg{a.key}, and do not decrease \prg{a.balance}.
 We outline the  remaining proof in less detail.
 
 We prove the internal call on line 16, using the method specification for \prg{send}, using $S_{2a}$ and $S_{2b}$, and applying rule {\sc{[Call\_Int]}}, and obtain

 \small
\begin{align*}
\text{(48)} \ \ \ M_{good}\  \vdash \ & \{ \ \prg{buyer}:\prg{external},\, \prg{item}:\prg{Intem} \wedge {\inside{\prg{a.key}}} \, \wedge\, 
\protectedFrom {\prg{a.key}} {(\prg{buyer}}  \  \wedge\ z'= \prg{a.\balance}  \} \\
		  		& \SPT  \prg{tmp := this.send(buyer,Item)}\ \\  
		  		& \{ \ \inside{\prg{a.key}} \, \wedge\, 
\protectedFrom {\prg{a.key}} {\prg{buyer}} \ \wedge\ z'= \prg{a.\balance}  \} \ \ \  || \ \ \  \\
		  		&   \{ \   {\inside{\prg{a.key}}}\  \wedge\ z'= \prg{a.\balance}  \}
\end{align*}
\normalsize

We now need to prove that the external method call \prg{buyer.tell("You have not paid me")} also protects the \prg{key}, and does nit decrease the balance. We can do this by applying the rule about protection from strings  {\sc{[Pror\_Str]}}, the fact that $M_{good} \vdash S_{3}$, and rule  {\sc{[Call\_Extl\_Adapt]}} and obtain:

 \small
\begin{align*}
\text{(49)} \ \ \ M_{good}\  \vdash \ & \{ \ \prg{buyer}:\prg{external},\, \prg{item}:\prg{Intem} \wedge {\inside{\prg{a.key}}} \, \wedge\, 
\protectedFrom {\prg{a.key}} {(\prg{buyer}}  \  \wedge\ z'/geq  \prg{a.\balance}  \} \\
		  		& \SPT  \prg{tmp:=buyer.tell("You have not paid me")}\ \\  
		  		& \{ \ \inside{\prg{a.key}} \, \wedge\, 
\protectedFrom {\prg{a.key}} {\prg{buyer}} \ \wedge\ z'\geq \prg{a.\balance}  \} \ \ \  || \ \ \  \\
		  		&   \{ \   {\inside{\prg{a.key}}}\  \wedge\ z'\geq  \prg{a.\balance}  \}
\end{align*}
\normalsize 

We can now apply  {\sc{[If\_Rule}}, and {\sc{[Conseq}} on (49) and (50),  and obtain

 \small
\begin{align*}
\text{(50)} \ \ \ M_{good}\  \vdash \ & \{ \ \prg{buyer}:\prg{external},\, \prg{item}:\prg{Intem} \wedge {\inside{\prg{a.key}}} \, \wedge\, 
\protectedFrom {\prg{a.key}} {(\prg{buyer}}  \  \wedge\ z'\geq  \prg{a.\balance}  \} \\
		  		& \SPT  \prg{if} ... \prg{then}\\
				& \SPT \SPT  \prg{tmp:=this.send(buyer,anItem)}\ \\  
				& \SPT  \prg{else}\\
				& \SPT \SPT  \prg{tmp:=buyer.tell("You have not paid me")}\ \\  
		  		& \{ \ \inside{\prg{a.key}} \, \wedge\, 
\protectedFrom {\prg{a.key}} {\prg{buyer}} \ \wedge\ z'\geq \prg{a.\balance}  \} \ \ \  || \ \ \  \\
		  		&   \{ \   {\inside{\prg{a.key}}}\  \wedge\ z'\geq  \prg{a.\balance}  \}
\end{align*}
\normalsize 

The rest follows through application of {\sc{[Prot\_Int}}, and {\sc{[Seq]}}.

\end{proofO}

\begin{lemma}[function $M_{good}::\prg{Account}::\prg{transfer}$ satisfies $S_3$]
\label{l:transfer:sat:S3}
 \small
 \begin{align*}
\text{(32)}  \ \ \ \ M_{good} \vdash \ 
		&	\{  \ \Alocalstr, \, \prg{a}:\prg{Account}\, , b:\prg{int} \, \wedge\,  {\inside{\prg{a.key}}} \, \wedge \, \protectedFrom {\prg{a.key}} {\Idstr}  \ \wedge \prg{a.\balance}\geq b \  \} \\
		& \SPT \prg{Account}::\prg{transfer}.\prg{body}\ \\  
		& \{ {\inside{\prg{a.key}}}\ \wedge\ {\PushASLong {\prg{res}} {\inside{\prg{a.key}}} } \wedge \prg{a.\balance}\geq b \  \} \ \ \  || \ \ \ 
		   \{ {\inside{\prg{a.key}}} \wedge \prg{a.\balance}\geq b \  \}
\end{align*}
\normalsize

\end{lemma}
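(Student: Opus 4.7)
\begin{proofO}[Plan]
The proof will follow the same skeleton as Lemma \ref{l:transfer:S2}, extending it to also track $\prg{a.\balance}\geq b$. The plan is to (i) establish a triple in the underlying Hoare logic showing that the body neither modifies \prg{a.key} nor decreases \prg{a.\balance}, (ii) lift this to a protection-aware triple using {\sc{Prot-1}} and {\sc{Embed\_UL}}, (iii) add the mid-condition via {\sc{Mid}}, and (iv) conclude with {\sc{Types-2}}, {\sc{Prot-Int}} (to discharge $\PushASLong{\prg{res}}{\inside{\prg{a.key}}}$, since \prg{res} is an \prg{int}), and {\sc{Consequ}}.

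The key difficulty, and the only real departure from the proof of Lemma \ref{l:transfer:S2}, is establishing in the underlying logic that
\[
\{\, \prg{a}\!:\!\prg{Account},\, \prg{a.\balance}\!\geq\!b \,\wedge\, \protectedFrom{\prg{a.key}}{\Idstr} \,\}\ \prg{transfer.body}\ \{\, \prg{a.\balance}\!\geq\!b \,\}.
\]
I would prove this by applying {\sc{If\_Rule}}. The \prg{else} branch only executes $\prg{res}:=0$ and is trivial. For the \prg{then} branch, which executes $\prg{this.\balance} \mathrel{-}\!= \prg{amt};\ \prg{dest.\balance} \mathrel{+}\!= \prg{amt}$, I apply {\sc{Cases}} splitting on $\prg{this}\!=\!\prg{a}$ vs.\ $\prg{this}\!\neq\!\prg{a}$, and orthogonally on $\prg{dest}\!=\!\prg{a}$ vs.\ $\prg{dest}\!\neq\!\prg{a}$.

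The subtle case is $\prg{this}\!=\!\prg{a}$. Here, the branch guard gives $\prg{this.key}\!=\!\prg{key'}$, hence $\prg{a.key}\!=\!\prg{key'}$. But the precondition includes $\protectedFrom{\prg{a.key}}{\prg{key'}}$, and by Definition \ref{def:chainmail-protection-from}.(\ref{cProtectedNew}) this forces $\prg{a.key}\!\neq\!\prg{key'}$ (equivalently, by rule {\sc{Prot-Neq}} from Fig.~\ref{f:protection:conseq:ext}, since \prg{a.key} is internal). This yields a contradiction, so I discharge the subcase via {\sc{Absurd}} and {\sc{Consequ}}, exactly as in steps (61c)--(61d) of the proof of Lemma \ref{l:set:sat}. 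In the remaining subcases ($\prg{this}\!\neq\!\prg{a}$), the decrement on $\prg{this.\balance}$ leaves $\prg{a.\balance}$ unchanged (by the type-soundness assumption that distinct \prg{Account} objects have distinct \prg{blnce} fields); the increment on $\prg{dest.\balance}$ either leaves $\prg{a.\balance}$ unchanged (if $\prg{dest}\!\neq\!\prg{a}$) or strictly increases it (if $\prg{dest}\!=\!\prg{a}$), preserving $\prg{a.\balance}\!\geq\!b$ in both cases.

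The non-decrease of \prg{a.key} is established exactly as in steps (22)--(23) of Lemma \ref{l:transfer:S2}: the only assignments in the body are to \prg{blnce} and \prg{res}, never to \prg{key}, so $z\!=\!\prg{a.key}$ is trivially preserved in $\vdash_{ul}$ and lifted to $\inside{\prg{a.key}}$ via {\sc{Prot-1}}. Combining the key-preservation triple with the balance-preservation triple via {\sc{Combine}}, then applying {\sc{Mid}} (since the body contains no external calls, the mid-condition is vacuously satisfied as the only reachable intermediate states are internal), {\sc{Types-2}}, and {\sc{Consequ}} yields (32). I expect the Coq mechanisation to reuse nearly all lemmas from the \prg{set}/\prg{transfer} proofs for $S_2$, with the balance-preservation sub-lemma being the main new content.
\end{proofO}
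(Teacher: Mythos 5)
Your proof is correct and follows essentially the same route as the paper: key-preservation is lifted from the $S_2$ proof of \prg{transfer}, and the balance clause reduces to the observation that $\protectedFrom{\prg{a.key}}{\prg{key'}}$ forces $\prg{a.key}\neq\prg{key'}$ (via {\sc{Prot-Neq}}), so either $\prg{this}\neq\prg{a}$ or the guard fails and \prg{a}'s balance cannot decrease. One presentational slip: the triple you display as an \emph{underlying-logic} fact has $\protectedFrom{\prg{a.key}}{\Idstr}$ in its precondition, but $\vdash_{ul}$ assertions may not mention protection; the paper avoids this by proving the pure triple with the disjunctive precondition $(\prg{this}\neq\prg{a}\vee\prg{this.key}\neq\prg{key'})$ in $\vdash_{ul}$ and then strengthening to the protection assertion with {\sc{Consequ}}, whereas you perform the equivalent case split with {\sc{Cases}}/{\sc{Absurd}} at the quadruple level --- either organization works, but the explicit $\vdash_{ul}$ triple should be stated without the protection conjunct.
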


\begin{proofO}
We will use   the shorthand $stmts_{28-33}$ for the statements in the body of \prg{transfer}. 
We will prove   the preservation of protection, separately from the balance not decreasing when the key is protcted.

For the former, applying the steps in the proof of Lemma \ref{l:transfer:S2},  we obtain

\small
 \begin{align*}
\text{(21)}  \ \ \ \ M_{good} \vdash \ 
		&	\{  \ \Alocalstr, \, \prg{a}:\prg{Account}\, \wedge\,  {\inside{\prg{a.key}}} \, \wedge \, \protectedFrom {\prg{a.key}} {\Idstr}  \  \} \\
		&  \SPT   stmts_{28-33} \\
		& \{ {\inside{\prg{a.key}}}\ \wedge\ {\PushASLong {\prg{res}} {\inside{\prg{a.key}}}}  \} \ \ \  || \ \ \ 
		   \{ {\inside{\prg{a.key}}} \}
\end{align*}
\normalsize

For the latter, we  rely on the underlying Hoare logic to ensure that no balance decreases, except perhaps that of the receiver, in which case its key was not protected.
Namely, we have that

\small
 \begin{align*}
 \text{(71)}  \ \ \ \ M_{good} \vdash_ul \ 
		&	\{  \ \Alocalstr, \, \prg{a}:\prg{Account}\, \wedge\,  \prg{a.\balance}=b\ \wedge \ (\prg{this}\neq \prg{a} \vee prg{this}.\prg{key}\neq \prg{key}' ) \  \} \\
		&  \SPT   stmts_{28-33} \\
		& \{ \prg{a.\balance}\geq b \}
\end{align*}
\normalsize

We apply rules {\sc{Embed\_UL}} and   {\sc{Mid}} on (71), and obtain

\small
 \begin{align*}
 \text{(72)}  \ \ \ \ M_{good} \vdash  \ 
		&	\{  \ \Alocalstr, \, \prg{a}:\prg{Account}\, \wedge\,  \prg{a.\balance}=b\ \wedge \ (\prg{this}\neq \prg{a} \vee prg{this}.\prg{key}\neq \prg{key}' ) \  \} \\
		&  \SPT   stmts_{28-33} \\
		& \{ \prg{a.\balance}\geq b \} \ \ \  || \ \ \  \{ \prg{a.\balance}\geq b \}
\end{align*}
\normalsize

Moreover, we have 

\small
$\begin{array}{llll}
 \text{(73)} \ \ \  & M_{good}\  & \vdash  \ & \protectedFrom {\prg{a.key}} {\Idstr}\ \  \rightarrow \ \ \protectedFrom {\prg{a.key}} {\prg{key'}} \\
 \text{(74)} &  M_{good}  & \vdash  & \protectedFrom {\prg{a.key}} {\prg{key'}} \ \  \rightarrow\ \  \prg{a.key} \neq \prg{key'}  \\
  \text{(75)} &  M_{good}  & \vdash  &    \prg{a.key} \neq  \prg{key'}  \ \ \rightarrow \ \ \prg{a}\neq \prg{this} \vee \prg{this.key}\neq \prg{key}'
\end{array}
$

normalsize

Applying (73), (74), (75) and {\sc{Conseq}} on (72) we obtain:

\small
 \begin{align*}
 \text{(76)}  \ \ \ \ M_{good} \vdash  \ 
		&	\{  \ \Alocalstr, \, \prg{a}:\prg{Account}\, \wedge\,  \prg{a.\balance}=b\ \wedge \ \protectedFrom {\prg{a.key}} {\Idstr}\ \  \} \\
		&  \SPT   stmts_{28-33} \\
		& \{ \prg{a.\balance}\geq b \} \ \ \  || \ \ \  \{ \prg{a.\balance}\geq b \}
\end{align*}
\normalsize

We combine  (72) and (76) through {\sc{Combine}} and obtain (32).

\end{proofO}

\subsection{Dealing with polymorphic function calls}
\label{app:polymorphic}

The case split rules together with the rule of consequence allow our Hoare logic to formally reason about polymorphic calls, where the receiver may be internal or external.

We demonstrate this through an example where we may have an external receiver, or a receiver from a class $C$. Assume we had a module $M$ with a scoped invariant (as in A), and an internal method specification as in (B). 

$\begin{array}{lclcl}
& (A)& M & \vdash & \TwoStatesN{y_1:D}{A} \\
& (B) & M & \vdash & \mprepostLong {A_1}{ \prg{p}\ C} {m}   {y_1: D} {A_2} { \parallel \{A_3\}}  \\
\\
  \end{array}
$
\\
Here \prg{p} may be \prg{private} or \prg{ublic}; the argument apples either way.

\vspace{.1cm}

\noindent 
Assume also implications as in (C)-(H)

$\begin{array}{lclcl}
&  (C) & M & \vdash & A_0 \ \rightarrow \ \PushASLong {(y_0,y_1)}{A}\\
&  (D) & M & \vdash &  \PushASLong {(y_0,y_1)}{A} \rightarrow A_4\\
 & (E) &  M & \vdash & A \rightarrow A_5\\
  & (F) &  M & \vdash & A_0 \rightarrow A_1[y_0/\prg{this}]\\
  & (G) &  M & \vdash & A_2[y_0,u/\prg{this},res] \rightarrow A_4 \\
    & (H) &  M & \vdash & A_3  \rightarrow A_5 
\\
  \end{array}
$

Then, by application of {\sc{Call\_Ext\_Adapt}}  on (A) we obtain (I)

$
\SPT (I)\ \ {  \hprovesN {M} 
						{ \  y_0:external, y_1: D \wedge  {\PushASLong {(y_0,y_1)} {A}}\ }  
						 { \ u:=y_0.m(y_1)\    }
					         { \   \PushASLong {(y_0,y_1)}{A} \ } 
						{  \ A \ }  }
						\\
						$
						
By application of the rule of consequence on (I) and (C), (D), and (E), we obtain

$
\SPT (J)\ \ {  \hprovesN {M} 
						{ \  y_0:external, y_1: D \wedge  A_0\ }  
						 { \ u:=y_0.m(y_1)\    }
					         { \   A_4 \ } 
						{  \ A_5\ }  }
						\\
						$

Then, by application of {\sc{[Call\_Intl]}}  on (B) we obtain (K)

$
\SPT (K)\ \ {  \hprovesN {M} 
						{ \  y_0:C, y_1: D \wedge  A_1[y_0/\prg{this}]\ }  
						 { \ u:=y_0.m(y_1)\    }
					         { \  A_2[y_0,u/\prg{this},res] \ } 
						{  \ A_3 \ }  }
						\\
						$
						
By application of the rule of consequence on (K) and (F), (G), and (H), we obtain

$
\SPT (L)\ \ {  \hprovesN {M} 
						{ \  y_0:C, y_1: D \wedge  A_0\ }  
						 { \ u:=y_0.m(y_1)\    }
					         { \  A_4 \ } 
						{  \ A_5 \ }  }
						\\
						$

By case split, {\sc{[Cases]}}, on (J) and (L), we obtain

$
\SPT (polymoprhic)\ \ {  \hprovesN {M} 
						{ \  (y_0: external \vee y_0:C), y_1: D \wedge  A_0\ }  
						 { \ u:=y_0.m(y_1)\    }
					         { \  A_4 \ } 
						{  \ A_5 \ }  }
						\\
						$

\bibliography{Case,more,Response1}   
\clearpage

\end{document}